\newtheorem{theorem}{Theorem}
\newtheorem{lemma}{Lemma}
\newtheorem{proposition}{Proposition}
\newtheorem{corollary}{Corollary}
\newtheorem{definition}{Definition}
\newtheorem{example}{Example}
\newtheorem{remark}{Remark}
\newcommand{\prob}{\ensuremath{\mathbb{P}}}
\newcommand{\Reals}{\ensuremath{\mathbb{R}}}
\newcommand{\set}{\ensuremath{\mathcal}}
\newcommand{\PU}{\ensuremath{P_{\mathtt{1}}}}
\newcommand{\PZ}{\ensuremath{P_{\mathtt{0}}}}
\newcommand{\dint}{\displaystyle\int}
\newcommand{\overbar}[1]{\mkern 1.5mu\overline{\mkern-1.5mu#1\mkern-1.5mu}\mkern 1.5mu}
\DeclareMathOperator*{\esssup}{ess\,sup}
\DeclareMathOperator*{\essinf}{ess\,inf}
\begin{document}
\title{$f$-Divergence Inequalities}

\markboth{SASON and VERD\'{U}: $\MakeLowercase{f}$-divergence inequalities}{}

\author{Igal Sason, and Sergio Verd\'{u}
\thanks{
I. Sason is  with the Department of Electrical Engineering, Technion--Israel
Institute of Technology, Haifa 32000, Israel (e-mail: sason@ee.technion.ac.il).}
\thanks{
S. Verd\'{u} is with the Department of Electrical Engineering, Princeton University,
Princeton, New Jersey 08544, USA (e-mail: verdu@princeton.edu).}
\thanks{
This manuscript is identical to a journal paper to appear in
the {\em IEEE Transactions on Information Theory}, vol.~62, 2016 \cite{SV-IT16},
apart of some additional material which includes Sections~\ref{subsec: RE-Delta-TV functional domination}
and~\ref{subsec:C/Delta}, and three technical proofs.}
\thanks{
Parts of this work have been presented at the {\em 2014 Workshop on Information
Theory and Applications}, San-Diego, Feb.~2014,  at
the {\em 2015 IEEE Information Theory Workshop}, Jeju Island, Korea, Oct.~2015, and the
{\em 2016 IEEE International Conference on the Science of Electrical Engineering},
Eilat, Israel, Nov.~2016.}
\thanks{
This work has been supported by the Israeli Science Foundation (ISF) under
Grant 12/12, by NSF Grant CCF-1016625, by the Center for Science of Information,
an NSF Science and Technology Center under Grant CCF-0939370, and by ARO under
MURI Grant W911NF-15-1-0479.}}

\maketitle

\begin{abstract}
This paper develops systematic approaches to obtain $f$-divergence inequalities,
dealing with pairs of probability measures defined on arbitrary alphabets. Functional
domination is one such approach, where special emphasis is placed on finding the
best possible constant upper bounding a ratio of $f$-divergences. Another approach
used for the derivation of bounds among $f$-divergences relies on moment inequalities
and the logarithmic-convexity property, which results in tight bounds on the relative
entropy and Bhattacharyya distance in terms of $\chi^2$ divergences.
A rich variety of bounds are shown to hold under boundedness assumptions on the
relative information. Special attention is devoted to the total variation distance and
its relation to the relative information and relative entropy, including
``reverse Pinsker inequalities," as well as on the $E_\gamma$ divergence,
which generalizes the total variation distance. Pinsker's inequality is extended
for this type of $f$-divergence, a result which leads to an inequality linking
the relative entropy and relative information spectrum.
Integral expressions of the R\'{e}nyi divergence in terms of the relative information
spectrum are derived, leading to bounds on the R\'{e}nyi divergence in terms
of either the variational distance or relative entropy.
\end{abstract}

{\bf{Keywords}}:
relative entropy,
total variation distance,
$f$-divergence,
R\'{e}nyi divergence,
Pinsker's inequality,
relative information.

\section{Introduction}
Throughout their development, information theory, and more generally, probability theory,
have benefitted from non-negative measures of dissimilarity, or loosely speaking,
distances, between pairs of probability measures defined on the same measurable space
(see, e.g., \cite{GibbsSu02,LieseV_book87,Vajda_1989}).
Notable among those measures are (see Section~\ref{sec:preliminaries} for definitions):
\begin{itemize}
\item
total variation distance $| P - Q|$;
\item
relative entropy $D(P \|Q)$;
\item
$\chi^2$-divergence $\chi^2 (P\|Q )$;
\item
Hellinger divergence $\mathscr{H}_{\alpha}(P \| Q)$;
\item
R\'enyi divergence $D_\alpha (P \| Q)$.
\end{itemize}
It is useful, particularly in proving convergence results,
to give bounds of one measure of dissimilarity in terms of another.
The most celebrated among those bounds is Pinsker's inequality:\footnote{The
folklore in information theory is that \eqref{eq: Pinsker} is due to Pinsker
\cite{Pinsker60}, albeit with a suboptimal constant. As explained in
\cite{Verdu_ITA14}, although no such inequality appears in \cite{Pinsker60},
it is possible to put together two of Pinsker's bounds to conclude that
$ \tfrac1{408} |P-Q|^2 \log e \leq {D(P\|Q)} $. }
\begin{align}  \label{eq: Pinsker}
\tfrac12  |P-Q|^2 \log e \leq D(P \| Q)
\end{align}
proved by  Csisz\'{a}r\footnote{Csisz\'ar derived \eqref{eq: Pinsker}
in \cite[Theorem~4.1]{Csiszar67a} after publishing a weaker version in
\cite[Corollary~1]{Csiszar66} a year earlier.} \cite{Csiszar67a}
and Kullback \cite{kullbackTV67}, with Kemperman \cite{kemperman} independently a bit later.
Improved and generalized versions of Pinsker's inequality have been studied, among others,
in \cite{FedotovHT_IT03}, \cite{Gilardoni06}, \cite{Gilardoni10}, \cite{OrdentlichW_IT2005},
\cite{ReidW11}, \cite{Sason_IT2016}, \cite{Vajda_IT1970}.

Relationships among measures of distances between probability measures have long
been a focus of interest in probability theory and statistics (e.g., for studying
the rate of convergence of measures). The reader is referred to surveys in
\cite[Section~3]{GibbsSu02}, \cite[Chapter~2]{LieseV_book87}, \cite{ReidW11} and
\cite[Appendix~3]{Reiss_book1989}, which provide several relationships among useful
$f$-divergences and other measures of dissimilarity between probability measures.
Some notable existing bounds among $f$-divergences include, in addition to \eqref{eq: Pinsker}:
\begin{itemize}
\item\cite[Lemma 1]{lecam1973convergence}, \cite[p. 25]{lecamyoung}
\begin{align} \label{eq: lecam73}
\mathscr{H}_{\frac12}^2(P \| Q) & \leq | P - Q |^2  \\
& \leq \mathscr{H}_{\frac12}(P \| Q) \, \bigl(4 - \mathscr{H}_{\frac12}(P \| Q) \bigr);
\end{align}

\item\cite[(2.2)]{BretagnolleH79}
\begin{align}  \label{eq: BretagnolleH79}
\tfrac14 |P-Q|^2 \leq 1-\exp\bigl(-D(P\|Q)\bigr);
\end{align}

\item \cite[Theorem~5]{GibbsSu02}, \cite[Theorem~4]{Dragomir00b}, \cite{suthesis}
\begin{align} \label{grout425 - introduction}
D ( P \| Q) &\leq \log \bigl( 1 + \chi^2 ( P \| Q ) \bigr);
\end{align}

\item \cite[Corollary~5.6]{GSS_IT14}
For all $\alpha \geq 2$
\begin{align} \label{eq: power divergence}
\chi^2(P \| Q) \leq \Bigl(1 + (\alpha-1) \, \mathscr{H}_{\alpha}(P \| Q) \Bigr)^{\frac{1}{\alpha-1}} - 1;
\end{align}
the inequality in \eqref{eq: power divergence} is reversed if $\alpha \in (0,1) \cup (1,2]$, and
it holds with equality if $\alpha=2$.

\item \cite{Gilardoni06,Gilardoni06-cor}, \cite[(58)]{ReidW11}
\begin{align} \label{eq: lb chi-square - TV}
\chi^2(P\|Q) \geq \left\{
\begin{array}{ll}
|P-Q|^2, & \quad \mbox{$|P-Q| \in \bigl[0, 1]$} \\[0.2cm]
\frac{|P-Q|}{2-|P-Q|}, & \quad \mbox{$|P-Q| \in \bigl(1, 2)$.}
\end{array}
\right.
\end{align}

\item \cite{Simic08}
\begin{align}
\label{eq1: Simic08}
\hspace*{-0.4cm} \frac{D^2(P\|Q)}{D(Q\|P)} & \leq \tfrac12 \, \chi^2(P \| Q) \, \log e; \\[0.2cm]
\hspace*{-0.4cm} 4\, \mathscr{H}^2_{\frac12}(P \| Q) \, \log^2 e
\label{eq2: Simic08}
& \leq D(P \| Q) \; D(Q \| P) \\
\label{eq3: Simic08}
\hspace*{-0.4cm} & \leq \tfrac14 \, \chi^2(P\|Q) \; \chi^2(Q\|P) \, \log^2 e; \\[0.2cm]
\hspace*{-0.4cm} 4\, \mathscr{H}_{\frac12}(P \| Q) \, \log e
\label{eq4: Simic08}
& \leq D(P \| Q) + D(Q \| P) \\
\label{eq5: Simic08}
\hspace*{-0.4cm} & \leq \tfrac12 \, \bigl(\chi^2(P\|Q) + \chi^2(Q\|P)\bigr) \, \log e;
\end{align}

\item \cite[(2.8)]{DiaconisS96}
\begin{align} \label{eq: DiaconisS96 - introduction}
D(P \| Q) \leq \tfrac12 \left(|P-Q| + \chi^2(P \| Q)\right) \log e \, ;
\end{align}

\item \cite{Gilardoni10}, \cite[Corollary~32]{ReidW11}, \cite{Sason_IT2015}
\begin{align}
& D(P \| Q) + D(Q \| P)
\label{eq1: symmetric f-div}
\geq |P-Q| \; \log\left( \frac{2+|P-Q|}{2-|P-Q|} \right), \\[0.2cm]
\label{eq2: symmetric f-div}
& \chi^2( P \| Q ) + \chi^2( Q \| P ) \geq \frac{8 \, |P-Q|^2}{4 - |P - Q|^2};
\end{align}

\item \cite[p.~711]{HoeffdingW58} (cf. a generalized form in \cite[Lemma~A.3.5]{Reiss_book1989})
\begin{align} \label{eq: HoeffdingW58}
\mathscr{H}_{\frac12} (P \| Q) \, \log e \leq  D (P \| Q),
\end{align}
generalized in \cite[Proposition~2.15]{LieseV_book87}:
\begin{align}
\label{eq1: Hel-RE-RD}
& \mathscr{H}_{\alpha}(P\|Q) \, \log e \leq D_{\alpha}(P\|Q) \leq D(P\|Q),
\end{align}
for $\alpha \in (0,1)$, and
\begin{align}
\label{eq2: Hel-RE-RD}
& \mathscr{H}_{\alpha}(P\|Q) \, \log e \geq D_{\alpha}(P\|Q) \geq D(P\|Q)
\end{align}
for $\alpha \in (1,\infty)$.

\item \cite[Proposition~2.35]{LieseV_book87}
If $\alpha \in (0,1)$, $\beta \triangleq \max\{\alpha, 1-\alpha\}$, then
\begin{align}
& 1- \left(1+\tfrac12 \, |P-Q| \right)^\beta \left(1-\tfrac12 \, |P-Q|\right)^{1-\beta} \nonumber \\
& \leq (1-\alpha) \, \mathscr{H}_\alpha(P\|Q)  \\
& \leq \tfrac12 \, |P-Q|.
\end{align}
\item
\cite[Theorems 3 and 16]{ErvenH14}
\begin{itemize}
\item
$D_\alpha (P \| Q)$ is monotonically increasing in $\alpha > 0$;
\item
$ \left( \frac1\alpha - 1 \right) D_\alpha (P \| Q)$ is monotonically
decreasing in $\alpha \in (0,1]$;
\item \cite[Proposition~2.7]{LieseV_book87} the same monotonicity properties
hold for $\mathscr{H}_{\alpha}(P \| Q)$.
\end{itemize}
\item \cite{Gilardoni10}
If  $\alpha \in (0,1]$, then
\begin{align}  \label{eq: Pinsker-Gillardoni}
\frac{\alpha}{2} \, |P-Q|^2  \log e  \leq D_\alpha (P \| Q);
\end{align}

\item
An inequality for $\mathscr{H}_\alpha(P\|Q)$ \cite[Lemma~1]{KumarS16}
becomes (at $\alpha=1$), the parallelogram identity \cite[(2.2)]{Csiszar75}
\begin{align}
D(P_0 \| Q) + D(P_1 \| Q)
= D(P_0 \| P_{\frac12}) + D(P_1 \| P_{\frac12}) + 2 D(P_{\frac12} \| Q).
\label{eq: parallelogram}
\end{align}
with $P_{\frac12} = \tfrac12 (P_0 + P_1)$, and
 extends a result for
the relative entropy in \cite[Theorem~2.1]{Csiszar75}.
\item
A ``reverse Pinsker inequality", providing an upper bound on the relative
entropy in terms of the total variation distance, does not exist in general
since we can find distributions which are arbitrarily close in total variation
but with arbitrarily high relative entropy.
Nevertheless, it is possible to introduce constraints under which such
reverse Pinsker inequalities hold.
In the special case of a finite alphabet $\set{A}$, Csisz\'ar and Talata
\cite[p.~1012]{CsiszarT_IT06} showed that
\begin{align}  \label{eq: CsTa}
D(P \| Q) \leq \left(\frac{\log e}{Q_{\min}} \right) \cdot |P-Q|^2
\end{align}
when $Q_{\min} \triangleq \min_{a \in \set{A}} Q(a)$ is positive.\footnote{Recent
applications of \eqref{eq: CsTa} can be found in \cite[Appendix~D]{KostinaV15} and
\cite[Lemma~7]{TomamichelT_IT13} for the analysis of the third-order asymptotics
of the discrete memoryless channel with or without cost constraints.}

\item
\cite[Theorem~3.1]{Csiszar72} if $f \colon (0, \infty) \to \Reals$ is
a strictly convex function, then there exists a real-valued function $\psi_f$
such that $\lim_{x \downarrow 0} \psi_f(x) = 0$
and\footnote{Eq.~\eqref{eq: Csiszar72} follows as a special case of
\cite[Theorem~3.1]{Csiszar72} with $m=1$ and $w_m=1$.}
\begin{align} \label{eq: Csiszar72}
|P-Q| \leq \psi_f \bigl( D_f(P \| Q) \bigr).
\end{align}
which implies
\begin{align} \label{eq2: Csiszar72}
\lim_{n \to \infty} D_f(P_n  \| Q_n) = 0 \; \Rightarrow \; \lim_{n \to \infty} |P_n  - Q_n| = 0.
\end{align}
\end{itemize}

The numerical optimization of an $f$-divergence subject to simultaneous
constraints on $f_i$-divergences $(i=1, \ldots , L)$
was recently studied in \cite{GSS_IT14}, which showed that for that purpose
it is enough to restrict attention to alphabets of cardinality $L+2$.
Earlier, \cite{HarremoesV_2011} showed that if $L=1$, then either the solution
is obtained by a pair $(P,Q)$ on a binary alphabet, or it is a deflated version
of such a point. Therefore, from a purely numerical standpoint, the minimization
of $D_f (P \| Q)$ such that $D_g(P \| Q) \geq d$ can be accomplished by a grid
search on $[0,1]^2$. Occasionally, as in the case where $D_f(P\|Q)=D(P\|Q)$ and
$D_g(P\|Q) = |P-Q|$, it is actually possible to determine analytically the locus
of $(D_f (P \| Q), D_g (P \| Q) )$ (see \cite{FedotovHT_IT03}).
In fact, as shown in \cite[(22)]{Vajda_1972}, a binary alphabet suffices
if the single constraint is on the total variation distance. The same conclusion
holds when minimizing the R\'enyi divergence \cite{Sason_IT2016}.

In this work, we find relationships among the various divergence measures
outlined above as well as a number of other measures of dissimilarity between
probability measures.
The framework of $f$-divergences, which encompasses the foregoing measures
(R\'enyi divergence is a one-to-one transformation of the Hellinger divergence)
serves as a convenient playground.

The rest of the paper is structured as follows:
\par
Section~\ref{sec:preliminaries} introduces the basic definitions needed
and in particular the various measures of dissimilarity between probability
measures used throughout.
\par
Based on \textit{functional domination}, Section~\ref{sec: functional domination}
provides a basic tool for the derivation of bounds among $f$-divergences.
Under mild regularity conditions, this approach further enables to prove
the optimality of constants in those bounds. In addition, we show instances where
such optimality can be shown in the absence of regularity conditions.
The basic tool used in Section~\ref{sec: functional domination} is
exemplified in obtaining relationships among important $f$-divergences
such as relative entropy, Hellinger divergence and total variation distance. This approach is also
useful in strengthening and providing an alternative proof of Samson's inequality
\cite{samson2000concentration} (a counterpart to Pinsker's inequality using Marton's divergence, useful
in proving certain concentration of measure results \cite{boucheron2013concentration}),
whose constant we show cannot be improved. In addition, we show several
new results in
Section~\ref{subsec: f-divergence and total variation distance}  on
the maximal ratios of  various $f$-divergences to total variation distance.

\par
Section~\ref{sec:bounded} provides an approach for bounding ratios of
$f$-divergences, assuming that the
relative information (see Definition~\ref{def:RI}) is lower
and/or upper bounded with probability one.
The approach is
exemplified in bounding ratios of relative entropy to various $f$-divergences,
and analyzing the local behavior of $f$-divergence ratios when the reference
measure is fixed. We also show that bounded relative information leads to a
strengthened version of Jensen's inequality, which, in turn, results
in upper and lower bounds on the ratio of the non-negative difference
$\log \left( 1 + \chi^2 ( P \| Q) \right) - D( P\|Q)$ to $D(Q\| P)$.
A new reverse version of Samson's inequality is another byproduct of the main
tool in this section.
\par
The rich structure of the total variation
distance as well as its importance in both fundamentals and applications merits placing
special attention on bounding the rest of the distance measures in terms of $|P-Q|$.
Section~\ref{sec: TV-RI} gives several useful identities linking
the total variation distance with the relative information spectrum, which
result in a number of upper and lower bounds on $|P-Q|$, some of which are
tighter than Pinsker's inequality in various ranges of the parameters.
It also provides refined bounds on $D(P \| Q)$ as a
function of $\chi^2$-divergences and the total variation distance.

\par
Section~\ref{sec:reverseP} is devoted to proving ``reverse Pinsker inequalities,"
namely, lower bounds on $|P - Q|$ as a function of $D(P\|Q)$ involving either
(a) bounds on the relative information,
(b) Lipschitz constants, or
(c) the minimum mass of the reference measure (in the finite alphabet case).
In the latter case, we also examine the relationship between entropy and the
total variation distance from the equiprobable distribution, as well as the
exponential decay of the probability that an independent identically distributed
sequence is not strongly typical.
\par
Section~\ref{sec:EG} focuses on the $E_\gamma$ divergence. This $f$-divergence
generalizes the total variation distance, and its utility in information theory has
been exemplified in \cite{CZK98,LiuCV1_IT15,LiuCV1_ISIT15,LiuCV2_ISIT15,PPV10,PV-Allerton10,PW15}.
Based on the operational interpretation
of the DeGroot statistical information \cite{DeGroot62} and the integral representation
of $f$-divergences as a function of DeGroot's measure,
Section~\ref{sec:EG} provides an integral representation of
$f$-divergences as a function of the $E_\gamma$ divergence; this representation
shows that $\{(E_\gamma( P \| Q ), E_\gamma(Q \| P)), \gamma \geq 1\}$ uniquely
determines $D( P \| Q )$ and $\mathscr{H}_{\alpha}(P \| Q)$, as well as any other
$f$-divergence with twice differentiable $f$. Accordingly, bounds on the $E_\gamma$
divergence directly translate into bounds on other important $f$-divergences.
In addition, we show an extension of Pinsker's inequality \eqref{eq: Pinsker} to
$E_\gamma$ divergence, which leads to a relationship between the
relative information spectrum and relative entropy.
\par
The R\'{e}nyi divergence, which has found a plethora of information-theoretic
applications, is the focus of Section~\ref{sec:RD}. Expressions of the R\'{e}nyi
divergence are derived in Section~\ref{subsec: RD-RIS} as a function of the relative
information spectrum. These expressions lead in Section~\ref{subsec: RD-TV} to the
derivation of bounds on the R\'{e}nyi divergence as a function of
the variational distance under the boundedness assumption of the relative information.
Bounds on the R\'enyi divergence of an arbitrary order are derived in Section~\ref{subsec: RD-RE}
as a function of the relative entropy when the relative information is bounded.

\section{Basic Definitions}
\label{sec:preliminaries}
\subsection{Relative Information and Relative Entropy}
We assume throughout that the probability measures $P$ and $Q$ are
defined on a common measurable space $(\set{A}, \mathscr{F})$, and $P \ll Q$
denotes that $P$ is {\em absolutely continuous} with respect to $Q$, namely
there is no event $\set{F} \in \mathscr{F}$ such that $P(\set{F}) > 0 = Q(\set{F})$.

\begin{definition} \label{def:RI}
If $P \ll Q$, the {\em relative information} provided by $a \in \set{A}$
according to $(P,Q)$ is given by\footnote{$\frac{\text{d}P}{\text{d}Q}$ denotes the
Radon-Nikodym derivative (or density) of $P$ with respect to $Q$. Logarithms
have an arbitrary common base, and $\exp(\cdot)$ indicates the inverse function
of the logarithm with that base.}
\begin{align}  \label{eq:RI}
\imath_{P\|Q}(a) \triangleq \log \frac{\text{d}P}{\text{d}Q} \, (a).
\end{align}
\end{definition}

When the argument of the relative information is distributed according to
$P$, the resulting real-valued random variable is of particular interest.
Its cumulative distribution function and expected value are known as follows.

\begin{definition} \label{def:RIS}
If $P \ll Q$,
the {\em relative information spectrum} is the cumulative distribution function
\begin{align} \label{eq:RIS}
\mathds{F}_{P \| Q}(x) = \prob\bigl[\imath_{P\|Q}(X) \leq x \bigr],
\end{align}
with\footnote{$X\sim P$ means that $\mathbb{P} [ X \in \set{F} ] = P (\set{F}) $
for any event $\set{F} \in \mathscr{F}$.} $X \sim P$.
The \textit{relative entropy} of $P$ with respect to $Q$ is
\begin{align}
\label{eq1:RE1}
D(P \| Q) &= \mathbb{E} \bigl[ \imath_{P \| Q}(X) \bigr] \\
\label{eq2:RE1}
&= \mathbb{E} \bigl[ \imath_{P \| Q}(Y) \, \exp\bigl(\imath_{P \| Q}(Y) \bigr) \bigr],
\end{align}
where $Y \sim Q$.
\end{definition}

\subsection{$f$-Divergences} \label{section:fD}
Introduced by Ali-Silvey \cite{AliS} and Csisz\'ar \cite{Csiszar63, Csiszar67a},
a useful generalization of the relative entropy, which retains some of its
major properties (and, in particular, the data processing inequality
\cite{ZakaiZiv75}), is the class of $f$-divergences. A general
definition of an $f$-divergence is given in \cite[p.~4398]{LieseV_IT2006},
specialized next to the case where $P \ll Q$.

\begin{definition} \label{def:fD}
Let $f \colon (0, \infty) \to \Reals$ be a convex function,
and suppose that $P \ll Q$. The {\em $f$-divergence} from $P$ to $Q$ is given
by
\begin{align} \label{eq:fD}
D_f(P\|Q) &= \int f \left(\frac{\text{d}P}{\text{d}Q}\right) \, \text{d}Q \\
&= \mathbb{E} \bigl[f(Z) \bigr]
\end{align}
where
\begin{align} \label{eq: Z}
Z = \exp\bigl(\imath_{P\|Q}(Y)\bigr), \quad Y \sim Q.
\end{align}
and in \eqref{eq:fD}, we took the continuous extension\footnote{The
convexity of $f \colon (0, \infty) \to \Reals$ implies its continuity on $(0, \infty)$.}
\begin{align} \label{eq: f at 0}
f(0) = \lim_{t \downarrow 0} f(t) \in (-\infty, +\infty].
\end{align}
\end{definition}

We can also define $D_f(P\|Q)$ without requiring $P \ll Q$.  Let
$f \colon (0, \infty) \to \Reals$ be a convex function with $f(1)=0$, and let
$f^\star \colon (0, \infty) \to \Reals$ be given by
\begin{align} \label{eq: fstar}
f^\star(t) = t \, f\left(\tfrac1t\right)
\end{align}
for all $t > 0$. Note that $f^\star$ is also convex (see, e.g., \cite[Section~3.2.6]{BoydV}),
$f^\star(1)=0$, and $D_f(P \| Q) = D_{f^\star}(Q \| P)$ if $P \ll \gg Q$. By definition, we take
\begin{align}
\label{eq: fstar at 0}
f^\star(0) = \lim_{t \downarrow 0} f^\star(t) = \lim_{u \to \infty} \frac{f(u)}{u}.
\end{align}
If $p$ and $q$ denote, respectively, the densities of $P$ and $Q$ with respect to a
$\sigma$-finite measure $\mu$ (i.e., $p = \frac{\text{d}P}{\text{d}\mu}$,
$q=\frac{\text{d}Q}{\text{d}\mu}$), then we can write \eqref{eq:fD} as
\begin{align} \label{eq:fD2}
D_f(P\|Q) &= \int q \; f\left(\frac{p}{q}\right) \, \text{d}\mu \\
& = \int_{\{pq > 0\}} q \, f \left( \frac{p}{q} \right) \,\mathrm{d} \mu + f(0) \, Q( p = 0 ) + f^\star(0) P (q = 0).
\label{dpqverygralalterf}
\end{align}

\begin{remark} \label{remark: equivalence-fD}
Different functions may lead to the same $f$-divergence for all $(P,Q)$: if for an
arbitrary $b \in \Reals$, we have
\begin{align}  \label{eq: fD3}
f_b(t) = f_0(t) + b \, (t-1), \quad t \geq 0
\end{align}
then
\begin{align}  \label{eq: fD4}
D_{f_0}(P\|Q) = D_{f_b}(P\|Q).
\end{align}
\end{remark}

\par
The following key property of $f$-divergences follows from Jensen's inequality.

\begin{proposition} \label{prop: fGibbs}
If $f \colon (0, \infty) \to \Reals$ is convex and $f(1)=0$,
$P \ll Q$, then
\begin{align}  \label{eq: fGibbs}
D_f(P\|Q) \geq 0.
\end{align}
If, furthermore, $f$ is strictly convex at $t=1$, then equality in \eqref{eq: fGibbs}
holds if and only if $P=Q$.
\end{proposition}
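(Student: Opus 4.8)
The plan is to exploit Jensen's inequality applied to the random variable $Z = \frac{\text{d}P}{\text{d}Q}$ under the reference measure $Q$, and then to analyze the equality case carefully. First I would note that since $Y \sim Q$ gives $\mathbb{E}[Z] = \mathbb{E}\bigl[\exp(\imath_{P\|Q}(Y))\bigr] = \int \frac{\text{d}P}{\text{d}Q}\,\text{d}Q = P(\set{A}) = 1$. Because $f$ is convex on $(0,\infty)$ and we use its continuous extension $f(0)=\lim_{t\downarrow 0}f(t)$ from \eqref{eq: f at 0}, Jensen's inequality yields
\begin{align}
D_f(P\|Q) = \mathbb{E}\bigl[f(Z)\bigr] \geq f\bigl(\mathbb{E}[Z]\bigr) = f(1) = 0,
\end{align}
which is exactly \eqref{eq: fGibbs}. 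This establishes the inequality with essentially no obstacle beyond verifying that $\mathbb{E}[Z]=1$ and that Jensen's inequality applies to the (possibly extended-real-valued) convex $f$.

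The more delicate part is the equality condition under the extra hypothesis that $f$ is strictly convex at $t=1$. By strict convexity at $t=1$ I mean that the graph of $f$ lies strictly above its supporting line at $1$ except at the point $t=1$ itself: there is a subgradient $c$ with $f(t) > f(1) + c\,(t-1)$ for all $t \neq 1$, while $f(t) \geq f(1)+c\,(t-1)$ everywhere. The plan is to rewrite the gap in Jensen's inequality in terms of this supporting line. Define $g(t) = f(t) - f(1) - c\,(t-1) = f(t) - c\,(t-1)$; then $g \geq 0$ on $(0,\infty)$ with $g(t)=0$ if and only if $t=1$, and since $\mathbb{E}[Z]=1$ we have
\begin{align}
D_f(P\|Q) = \mathbb{E}\bigl[f(Z)\bigr] = \mathbb{E}\bigl[g(Z)\bigr] + c\bigl(\mathbb{E}[Z]-1\bigr) = \mathbb{E}\bigl[g(Z)\bigr].
\end{align}
(This uses Remark~\ref{remark: equivalence-fD}, which guarantees the affine shift by $c\,(t-1)$ does not change the $f$-divergence.) Thus $D_f(P\|Q)=0$ holds if and only if $\mathbb{E}\bigl[g(Z)\bigr]=0$, and since $g(Z)\geq 0$ this forces $g(Z)=0$ almost surely under $Q$, i.e.\ $Z=1$ $Q$-a.s., which says $\frac{\text{d}P}{\text{d}Q}=1$ $Q$-a.s.\ and hence $P=Q$. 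Conversely, $P=Q$ trivially gives $D_f(P\|Q)=f(1)=0$.

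The main obstacle I anticipate is purely technical rather than conceptual: handling the extended-real-valued case cleanly. If $f(0)=+\infty$ and $Q(Z=0)>0$, then $D_f(P\|Q)=+\infty$ and the inequality holds trivially, so one must treat this degenerate case separately before invoking the finite-valued argument. One must also confirm that a finite subgradient $c$ exists at $t=1$; this is standard because $1$ is an interior point of the domain $(0,\infty)$, where a proper convex function always admits a supporting line. Once these edge cases are dispatched, the strict-convexity-at-one hypothesis does all the work in pinning down the equality condition through the strict positivity of $g$ away from $1$.
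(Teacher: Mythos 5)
Your proof is correct and follows exactly the route the paper has in mind: the paper offers no written proof beyond the remark that the proposition ``follows from Jensen's inequality,'' and your argument --- $\mathbb{E}[Z]=1$, Jensen applied to the convex (possibly extended-real-valued) $f$, and the supporting-line decomposition $f(t)=g(t)+c\,(t-1)$ with $g\geq 0$ vanishing only at $t=1$ to settle the equality case --- is the standard way to make that remark precise. The only caveat worth noting is that your supporting-line notion of ``strictly convex at $t=1$'' is slightly stronger than the three-point (chord) definition; under the weaker definition $g$ could vanish on a one-sided interval such as $[1,t_2]$, but the conclusion survives because $Z\geq 1$ $Q$-a.s.\ together with $\mathbb{E}[Z]=1$ still forces $Z=1$ $Q$-a.s.
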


Surveys on general properties
of $f$-divergences can be found in \cite{LieseV_book87,Vajda_1989,Vajda_2009}.

The assumptions of Proposition~\ref{prop: fGibbs} are satisfied by many interesting
measures of dissimilarity between probability measures. In particular,
the following examples  receive particular attention in this paper.
As per Definition~\ref{def:fD}, in each case the function $f$ is defined on $(0, \infty)$.
\begin{enumerate}[1)]
\item {\em Relative entropy} \cite{SKRAL51}:  $f(t) = t\, \log t$,
\begin{align} \label{eq: 1st KL divergence}
D(P\|Q) &= D_f(P\|Q) \\
&= D_r(P \| Q) \label{eq: r-divergence}
\end{align}
with
$r \colon (0, \infty) \to [0, \infty)$ defined as
\begin{align}  \label{eq: r}
r(t) \triangleq t \log t + (1-t) \, \log e .
\end{align}

\item {\em Relative entropy}: ($P\ll \gg Q$)  $f(t) = -\log t$,
\begin{align} \label{eq: 2nd KL divergence}
D(Q\|P) &= D_f(P\|Q);
\end{align}

\item {\em Jeffrey's divergence} \cite{jeffreys46}: ($ P \ll \gg Q$) $f(t) = (t-1) \, \log t$,
\begin{align} \label{jeffreys}
D( P \| Q) + D(Q\|P) &= D_f(P\|Q);
\end{align}

\item {\em $\chi^2$-divergence} \cite{Pearson1900x}:
$f(t) = (t-1)^2$ or $f(t) = t^2-1$,
\begin{align}
\label{eq: chi-square 1}
\chi^2(P\|Q) &= D_f(P\|Q) \\
\label{eq: chi-square 1b}
\chi^2(P\|Q) & = \int \left(\frac{\text{d}P}{\text{d}Q} - 1 \right)^2 \, \text{d}Q \\
\label{eq: chi-square 2}
& = \int \left(\frac{\text{d}P}{\text{d}Q}\right)^2 \, \text{d}Q - 1 \\
\label{eq: chi-square 3}
& = \mathbb{E} \bigl[ \exp\bigl(2 \imath_{P\|Q}(Y)\bigr) \bigr] - 1 \\
\label{eq: chi-square 4}
& = \mathbb{E} \bigl[ \exp\bigl(\imath_{P\|Q}(X)\bigr) \bigr] - 1
\end{align}
with $X \sim P$ and $Y \sim Q$. Note that if $P \ll \gg Q$, then from the right side of
\eqref{eq: chi-square 2}, we obtain
\begin{align}
\label{eq: chi-square 5}
\chi^2(Q\|P) = D_g(P\|Q)
\end{align}
with $g(t) = \frac{1}{t}-t = f^\star (t)$, and $f(t) = t^2-1$.

\item {\em Hellinger divergence of order $\alpha \in (0,1) \cup (1, \infty)$}
\cite{jeffreys46}, \cite[Definition~2.10]{LieseV_book87}:
\begin{align} \label{eq: Hel-divergence}
\mathscr{H}_{\alpha}(P \| Q) = D_{f_\alpha}(P \| Q)
\end{align}
with
\begin{align} \label{eq: H as fD}
f_\alpha(t) = \frac{t^\alpha-1}{\alpha-1}.
\end{align}
The $\chi^2$-divergence is the Hellinger divergence of order~2, while
$\tfrac12 \mathscr{H}_{\frac12}(P \| Q)$ is usually referred to as the
\textit{squared Hellinger distance}.
The analytic extension of $\mathscr{H}_{\alpha}(P \| Q)$ at $\alpha=1$ yields
\begin{align} \label{eq: Hel-divergence order 1}
\mathscr{H}_1(P \| Q) \, \log e = D(P \| Q).
\end{align}

\item {\em Total variation distance}: Setting
\begin{align} \label{eq: TV as fD}
f(t)=|t-1|
\end{align}
results in
\begin{align}
\label{eq1: TV distance}
|P-Q| &=  D_f(P\|Q) \\[0.1cm]
\label{eq2: TV distance}
&= \int \left|\frac{\text{d}P}{\text{d}Q} - 1 \right| \, \text{d}Q \\[0.1cm]
\label{eq3: TV distance}
& = 2 \, \sup_{\set{F} \in \mathscr{F}} \bigl(P(\set{F}) - Q(\set{F})\bigr).
\end{align}

\item {\em Triangular Discrimination} \cite{LeCam86,Vincze81} (a.k.a. Vincze-Le Cam distance):
\begin{align} \label{eq:delta}
\Delta(P\|Q) = D_f(P\|Q)
\end{align}
with
\begin{align} \label{eq:tridiv}
f(t) = \frac{(t-1)^2}{t+1}.
\end{align}
Note that
\begin{align}
\label{eq1: Delta and chi^2}
\tfrac12 \, \Delta  (P \| Q) & = \chi^2 (P \, \| \, \tfrac12 P + \tfrac12 Q) \\
\label{eq2: Delta and chi^2}
& = \chi^2 (Q \, \| \, \tfrac12 P + \tfrac12 Q).
\end{align}

\item {\em Jensen-Shannon divergence} \cite{Lin91} (a.k.a. capacitory discrimination):
\begin{align} \label{eq:js1}
\mathrm{JS}(P\|Q) &= D\left(P \, \| \, \tfrac12 P + \tfrac12 Q \right)
+ D\left(Q \, \| \, \tfrac12 P + \tfrac12 Q \right) \\
\label{eq:js11}
& = D_f(P\|Q)
\end{align}
with
\begin{align}
\label{eq:js2}
f(t) = t \log t - (1+t) \log\left(\frac{1+t}{2}\right).
\end{align}

\item {\em $E_\gamma$ divergence} (see, e.g., \cite[p.~2314]{PPV10}): For $\gamma \geq 1$,
\begin{align} \label{eq:Eg f-div}
E_{\gamma}(P \| Q) = D_{f_\gamma}(P\|Q)
\end{align}
with
\begin{align} \label{eq: f for EG}
f_\gamma(t) = (t-\gamma)^+
\end{align}
where $(x)^+ \triangleq \max\{x,0\}$.
$E_\gamma$ is sometimes called  ``hockey-stick
divergence" because of the shape of $f_\gamma$.
If $\gamma = 1$, then
\begin{align} \label{eq:EG-TV}
E_1(P\|Q) = \tfrac12 \, |P-Q|.
\end{align}

\item {\em DeGroot statistical information} \cite{DeGroot62}:
For $p \in (0,1)$,
\begin{align} \label{eq:DG f-div}
\mathcal{I}_p(P\|Q) = D_{\phi_p}(P\|Q)
\end{align}
with
\begin{align} \label{eq: f for DG}
\phi_p(t) = \min \{p, 1-p\} - \min \{p, 1-pt\}.
\end{align}
Invoking \eqref{eq:Eg f-div}--\eqref{eq: f for DG}, we get (cf. \cite[(77)]{LieseV_IT2006})
\begin{align} \label{eq3: DG-TV}
\mathcal{I}_{\frac12}(P\|Q) = \tfrac12 E_1(P\|Q) = \tfrac14 \, |P-Q|.
\end{align}

This measure was first proposed by DeGroot \cite{DeGroot62} due to its
operational meaning in Bayesian statistical hypothesis testing (see Section~\ref{saens}),
and it was later identified as an $f$-divergence (see \cite[Theorem~10]{LieseV_IT2006}).

\item {\em Marton's divergence} \cite[pp.~558--559]{Marton96}:
\begin{align} \label{smarton}
d_2^2(P,Q)  &= \min \mathbb{E} \left[ \prob^2[ X \neq Y \, | \, Y ] \right] \\
&= \label{smarton as f-div}
D_s( P\| Q)
\end{align}
where the minimum is over all probability measures $P_{XY}$ with respective
marginals $P_X=P$ and $P_Y=Q$, and
\begin{align} \label{eq: s}
s(t) = (t - 1)^2 \; 1\{ t < 1 \}.
\end{align}
Note that Marton's divergence satisfies the triangle inequality \cite[Lemma~3.1]{Marton96},
and $d_2(P,Q)=0$ implies $P=Q$; however, due to its asymmetry, it is not a distance measure.
\end{enumerate}

\subsection{R\'{e}nyi Divergence}
Another generalization of relative entropy was introduced by R\'enyi \cite{Renyientropy}
in the special case of finite alphabets. The general definition (assuming\footnote{R\'{e}nyi
divergence can also be defined without requiring absolute continuity, e.g., \cite[Definition~2]{ErvenH14}.}
$P \ll Q$) is the following.

\begin{definition}  \label{def:RD}
Let $P \ll Q$. The {\em R\'{e}nyi divergence of order $\alpha \geq 0$} from $P$ to $Q$ is
given as follows:
\begin{itemize}
\item
If $\alpha \in (0,1) \cup (1, \infty) $, then
\begin{align}
D_{\alpha}(P\|Q)
\label{eq:RD1}
&= \frac1{\alpha-1} \; \log \Bigl( \mathbb{E}\bigl[\exp\bigl(\alpha
\, \imath_{P\|Q}(Y)\bigr)\bigr] \Bigr) \\[0.1cm]
\label{eq:RD3}
& = \frac1{\alpha-1} \; \log \Bigl( \mathbb{E} \bigl[ \exp \bigl((\alpha-1)
\, \imath_{P\|Q}(X)\bigr)\bigr] \Bigr)
\end{align}
with $X \sim P$ and $Y \sim Q$.
\item If $\alpha = 0$,  then\footnote{The function in \eqref{eq:RD1} is, in general, right-discontinuous
at $\alpha=0$. R\'{e}nyi \cite{Renyientropy} defined $D_0(P\|Q)=0$, while we have followed \cite{ErvenH14}
defining it instead as $\lim_{\alpha \downarrow 0} D_{\alpha}(P\|Q)$.}
\begin{align} \label{eq: d0}
D_0 (P \| Q ) = \max_{\set{F} \in \mathscr{F}\colon P(\set{F}) = 1} \log \left(\frac1{Q (\set{F})}\right).
\end{align}
\item If $\alpha =1$, then
\begin{align} \label{eq: d1}
D_1(P\|Q) = D(P\|Q)
\end{align}
which is the analytic extension of $D_{\alpha}(P \| Q)$ at $\alpha=1$.
If $D(P\|Q) < \infty$, it can be verified by L'H\^{o}pital's rule that
$D(P\|Q) = \lim_{\alpha \uparrow 1} D_{\alpha}(P \| Q)$.
\item If $\alpha = +\infty$ then
\begin{align} \label{def:dinf}
D_{\infty}(P\|Q) = \log \left(\esssup \frac{\text{d}P}{\text{d}Q} \, (Y)\right)
\end{align}
with $Y \sim Q$. If $P \not \ll Q$, we take $D_{\infty}(P\|Q)  = \infty$.
\end{itemize}
\end{definition}

R\'enyi divergence is a one-to-one transformation of Hellinger divergence
of the same order $\alpha \in (0,1) \cup (1,\infty)$:
\begin{align}\label{renyimeetshellinger}
D_\alpha(P \| Q ) = \frac1{\alpha -1} \; \log \left( 1 + (\alpha - 1)
\, \mathscr{H}_\alpha(P \| Q) \right)
\end{align}
which, when particularized to order~2 becomes
\begin{align} \label{pasavenida}
D_2 ( P \| Q) = \log \left( 1 + \chi^2 ( P \| Q ) \right).
\end{align}
Note that \eqref{eq: power divergence}, \eqref{eq1: Hel-RE-RD}, \eqref{eq2: Hel-RE-RD}
follow from \eqref{renyimeetshellinger} and the monotonicity of the R\'{e}nyi divergence
in its order, which in turn yields \eqref{eq: HoeffdingW58}.

Introduced in \cite{Bhattachryya}, the Bhattacharyya distance was popularized
in the engineering literature in \cite{Kailath67}.
\begin{definition}  \label{definition: B distance}
The \textit{Bhattacharyya distance} between $P$ and $Q$, denoted by $B(P \| Q)$, is given by
\begin{align}
\label{eq1: B distance}
B(P \| Q) & = \tfrac12 \, D_{\frac12}(P\|Q) \\
\label{eq2: B distance}
& = \log \biggl( \frac1{1-\tfrac12 \mathscr{H}_{\frac12}(P \| Q)} \biggr).
\end{align}
\end{definition}
Note that, if $P \ll \gg Q$, then $B(P\|Q) = B(Q\|P)$ and $B(P\|Q)=0$ if and
only if $P=Q$, though $B(P\|Q)$ does not satisfy the triangle inequality.

\section{Functional Domination} \label{sec: functional domination}
Let $f$ and $g$ be convex functions on $(0, \infty)$ with $f(1)=g(1)=0$, and let $P$ and $Q$
be probability measures defined on a measurable space $(\set{A}, \mathscr{F})$.
If there exists $\alpha > 0$ such that $f(t) \leq \alpha g(t)$ for all $t \in (0, \infty)$,
then it follows from Definition~\ref{def:fD} that
\begin{align} \label{000}
D_f(P \|Q) \leq \alpha\, D_g(P \| Q).
\end{align}
This simple observation leads to a proof of, for example, \eqref{eq: HoeffdingW58}
and the left inequality in \eqref{eq: lecam73} with the aid of Remark~\ref{remark: equivalence-fD}.

\subsection{Basic Tool} \label{subsec: basic tool of functional domination}

\begin{theorem} \label{theorem: tight bound}
Let $P \ll Q$, and assume
\begin{itemize}
\item $f$ is convex  on
$(0, \infty)$ with $f(1)=0$;
\item
$g$ is convex  on
$(0, \infty)$ with $g(1)=0$;
\item
$g(t) > 0$ for all $t \in (0,1) \cup (1, \infty)$.
\end{itemize}
 Denote the function $\kappa\colon (0,1) \cup (1, \infty) \to \Reals$
\begin{align} \label{kappadef-1}
\kappa(t) &= \frac{f(t)}{g(t)}, \quad t \in (0,1) \cup (1, \infty)
\end{align}
and
\begin{align}
\label{barkdef}
\bar{\kappa} &= \sup_{t  \in (0,1) \cup (1, \infty)} \kappa(t).
\end{align}
Then,
\begin{enumerate}[a)]
\item \label{theorem: tight bound: parta}
\begin{align}  \label{eq: tight bound--0}
D_f(P \| Q) \leq  \bar{\kappa} \, D_g(P \| Q).
\end{align}
\item \label{theorem: tight bound: partb}
If, in addition, $f'(1)=g'(1)=0$, then
\begin{align}  \label{eq: tight bound}
\sup_{P \neq Q} \frac{D_f(P \| Q)}{D_g(P \| Q)} = \bar{\kappa}.
\end{align}
\end{enumerate}
\end{theorem}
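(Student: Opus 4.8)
The plan is to handle the two parts separately, with part (b) carrying essentially all the difficulty.

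For the bound \eqref{eq: tight bound--0}, I would argue by pointwise domination followed by integration, which is exactly the elementary principle recorded in \eqref{000}. By the definition of $\bar\kappa$ in \eqref{barkdef} together with the positivity of $g$ off $t=1$, one has $f(t) \leq \bar\kappa\, g(t)$ for every $t \in (0,1)\cup(1,\infty)$; at $t=1$ both sides vanish, and letting $t \downarrow 0$ propagates the inequality to the continuous extensions, giving $f(0) \leq \bar\kappa\, g(0)$. Thus $f(Z) \leq \bar\kappa\, g(Z)$ holds pointwise with $Z = \exp\bigl(\imath_{P\|Q}(Y)\bigr)$, $Y \sim Q$, and taking expectations through Definition~\ref{def:fD} yields \eqref{eq: tight bound--0}. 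The extreme cases are immediate: $\bar\kappa = +\infty$ makes the bound vacuous, while $\bar\kappa \leq 0$ forces $f \leq 0$ on all of $(0,\infty)$ with $f(1)=0$ at an interior maximum, so $f$ is constant and $D_f(P\|Q)=0$.

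For \eqref{eq: tight bound}, the bound just proved already gives $\sup_{P\ne Q} D_f(P\|Q)/D_g(P\|Q) \leq \bar\kappa$, where $D_g(P\|Q) > 0$ whenever $P \neq Q$ since $g \geq 0$ vanishes only at $1$. The work is to exhibit pairs whose ratio approaches $\bar\kappa$. I would fix an arbitrary $t \in (0,1)\cup(1,\infty)$ and build a two-point model: let $Q$ place masses $1-\delta$ and $\delta$ on the two atoms, and choose $P$ so that $\mathrm{d}P/\mathrm{d}Q$ equals $t$ on the second atom, which forces $\mathrm{d}P/\mathrm{d}Q = s(\delta) := (1-t\delta)/(1-\delta)$ on the first. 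Then
\begin{align}
D_f(P\|Q) = (1-\delta)\, f\bigl(s(\delta)\bigr) + \delta\, f(t),
\end{align}
and likewise for $g$. The crux is the limit $\delta \downarrow 0$. Since $s(\delta) - 1 = \delta(1-t)/(1-\delta) = O(\delta)$, I would control the first atom's contribution using the hypothesis $f'(1)=0$: because $f(1)=0$ and the derivative at the interior point vanishes, $f(s)/(s-1) \to 0$ as $s \to 1$, so $f\bigl(s(\delta)\bigr) = o(\delta)$ and that term is negligible to first order. Hence $D_f(P\|Q) = \delta\, f(t) + o(\delta)$ and, symmetrically, $D_g(P\|Q) = \delta\, g(t) + o(\delta)$ with $g(t) > 0$, so the ratio tends to $f(t)/g(t) = \kappa(t)$. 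Choosing $t = t_n$ along a sequence with $\kappa(t_n) \to \bar\kappa$ and, for each $n$, a small enough $\delta$, produces pairs $P \neq Q$ whose ratio converges to $\bar\kappa$ (the same construction covers $\bar\kappa = +\infty$ verbatim), which together with the upper bound establishes \eqref{eq: tight bound}.

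I expect the asymptotic step to be the main obstacle, and the conceptual point worth stressing is the precise role of the normalization $f'(1)=g'(1)=0$. By Remark~\ref{remark: equivalence-fD} one may add any $b\,(t-1)$ to $f$ without changing $D_f(P\|Q)$, which shifts $f'(1)$ and thereby rescales $\bar\kappa$; the condition $f'(1)=g'(1)=0$ pins down exactly the representatives for which the first-order (linear) contribution of the near-$1$ atom cancels, so that $\bar\kappa$ computed from these particular $f$ and $g$ coincides with the true supremum of the ratio. I would also verify the routine admissibility constraints of the construction, namely $0 < t\delta < 1$ and $s(\delta) > 0$ for $\delta$ small, which guarantee that $P$ is a valid probability measure with $P \ll Q$ and $P \neq Q$.
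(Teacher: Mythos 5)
Your proof is correct and follows essentially the same route as the paper: part a) by pointwise domination $f \leq \bar{\kappa}\, g$, and part b) via the identical two-point construction $P=(1-t\delta,\, t\delta)$, $Q=(1-\delta,\,\delta)$, with the near-$1$ atom's contribution killed to first order by $f'(1)=g'(1)=0$. The only cosmetic difference is that you evaluate the $\delta\downarrow 0$ limit by a direct $o(\delta)$ expansion where the paper uses a change of variables, and your closing remark on how \eqref{eq: fD3} normalizes the derivative at $1$ matches the paper's own remark following the theorem.
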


\medskip

\begin{proof}
\begin{enumerate}[a)]
\item
The bound in \eqref{eq: tight bound--0} follows from \eqref{000} and
$f(t) \leq \bar{\kappa} \, g(t)$ for all $t>0$.
\item
Since $g$ is positive except at $t=1$,
$D_g(P\|Q) > 0$ if $P \neq Q$. The convexity of $f, g$ on $(0, \infty)$
implies their continuity; and since $g(t)>0$ for all $t \in (0,1) \cup (1, \infty)$,
$\kappa (\cdot)$ is continuous on both  $(0,1)$ and $(1, \infty)$.
\par
To show \eqref{eq: tight bound}, we fix an arbitrary $\nu \in (0, 1) \cup (1,\infty)$
and construct a sequence of pairs of probability measures whose ratio of $f$-divergence
to $g$-divergence converges to $ \kappa(\nu)$.
To that end, for sufficiently small $\varepsilon > 0$, let $P_{\varepsilon}$ and
$Q_{\varepsilon}$ be parametric probability measures defined on the set
$\set{A} = \{0, 1\}$ with $P_{ \varepsilon}(0) = \nu \, \varepsilon$ and
$Q_{\varepsilon}(0) = \varepsilon$. Then,
\begin{align}
\lim_{\varepsilon \to 0} \, \frac{D_f(P_{\varepsilon} \|
Q_\varepsilon)}{D_g(P_{\varepsilon} \| Q_{\varepsilon})}
& =  \lim_{\varepsilon \to 0} \, \frac{\varepsilon \, f(\nu) + (1-\varepsilon) \,
f\left(\frac{1-\nu \varepsilon}{1-\varepsilon}\right)}{\varepsilon \, g(\nu) + (1-\varepsilon) \,
g\left(\frac{1-\nu \varepsilon}{1-\varepsilon}\right)} \\[0.1cm]
&= \lim_{\alpha \to 0}
\frac{f(\nu) + \frac{\nu -1}{\alpha} \; f(1 - \alpha)}{g(\nu) +
\frac{\nu -1}{\alpha} \; g(1-\alpha)} \label{arlington} \\[0.1cm]
& = \kappa(\nu) \label{cun}
\end{align}
where \eqref{arlington} holds
by change of variable $\varepsilon = \alpha / ( \nu - 1 + \alpha )$,
and \eqref{cun} holds by the assumption on the derivatives of $f$ and $g$ at 1,
the assumption that $f(1)=g(1)=0$, and the continuity of $\kappa (\cdot)$ at $\nu$.
If $\bar{\kappa} = \kappa ( \nu)$ we are done.
If the supremum in \eqref{barkdef} is not attained on $(0, 1) \cup (1,\infty)$,
then the right side of \eqref{cun} can be made
arbitrarily close to $\bar{\kappa}$ by an appropriate choice of $\nu$.
\end{enumerate}
\end{proof}

\begin{remark}
Beyond the restrictions in Theorem~\ref{theorem: tight bound}\ref{theorem: tight bound: parta}),
the only operative restriction imposed by
Theorem~\ref{theorem: tight bound}\ref{theorem: tight bound: partb}) is the differentiability
of the functions $f$ and $g$ at $t=1$. Indeed, we can invoke Remark~\ref{remark: equivalence-fD}
and add $f'(1) \, (1-t)$ to $f(t)$, without changing $D_f$ (and likewise with $g$) and thereby
satisfying the condition in Theorem~\ref{theorem: tight bound}\ref{theorem: tight bound: partb});
the stationary point at~1 must be a minimum of both $f$ and $g$ because of the assumed
convexity, which implies their non-negativity on $(0, \infty)$.
\end{remark}

\begin{remark}
It is useful to generalize Theorem~\ref{theorem: tight bound}\ref{theorem: tight bound: partb})
by dropping the assumption on the existence of the derivatives at~1. To that end, note that
the inverse transformation used for the transition to \eqref{arlington} is given by
$\nu = 1 + \alpha \left(\tfrac1\varepsilon-1\right)$ where $\varepsilon > 0$ is sufficiently
small, so if $\nu > 1$ (resp. $\nu < 1$), then $\alpha > 0$ (resp. $\alpha < 0$).
Consequently, it is easy to see from \eqref{arlington} that if $\bar{\kappa} = \sup_{t>1} \kappa(t)$,
the construction in the proof can restrict to $\nu > 1$, in which case it is enough to require that
the left derivatives of $f$ and $g$ at~1 be equal to $0$. Analogously, if
$\bar{\kappa} = \sup_{0<t<1} \kappa(t)$, it is enough to require that the right derivatives
of $f$ and $g$ at~1 be equal to $0$. When neither left nor right derivatives at $1$ are $0$, then
\eqref{eq: tight bound} need not hold as the following example shows.
\end{remark}

\begin{example}\label{example:counter}
Let $f(t) = |t-1|$ and
\begin{align}
g(t) &= 2f(t) + 1-t.
\end{align}
Then, $\bar{\kappa} = 1$, while in view of \eqref{eq: fD4}
and \eqref{eq1: TV distance} for all $(P,Q)$,
\begin{align}
D_g (P\|Q) = 2 D_f (P\|Q) = 2 |P-Q|.
\end{align}
\end{example}

\subsection{Relationships Among $D(P \| Q)$, $\chi^2(P \| Q)$ and $|P-Q|$}
\label{subsec: RE-chi2-TV functional domination}

Since the R\'{e}nyi divergence of order $\alpha > 0 $ is monotonically increasing in $\alpha$,
\eqref{pasavenida} yields
\begin{align} \label{grout425}
D ( P \| Q) &\leq \log \left( 1 + \chi^2 ( P \| Q ) \right) \\
\label{eq: CsiszarT06}
&\leq \chi^2(P \| Q) \log e.
\end{align}
Inequality \eqref{grout425}, which can be found in \cite{suthesis} and \cite[Theorem~5]{GibbsSu02},
is sharpened in Theorem~\ref{thm:d-chi} under the assumption of bounded relative information.
In view of \eqref{renyimeetshellinger}, an alternative way to sharpen \eqref{grout425} is
\begin{align}
D ( P \| Q) &\leq \frac1{\alpha-1} \, \log \left( 1 + (\alpha -1) \mathscr{H}_\alpha ( P \| Q ) \right)
\end{align}
for $\alpha \in (1,2)$, which is tight as $\alpha \to 1$.

\par
Relationships between the relative entropy, total variation distance and $\chi^2$ divergence are derived next.

\begin{theorem}
\begin{enumerate}[a)]
\item
If $P \ll Q$ and $c_1, c_2 \geq 0$, then
\begin{align} \label{eq: Improved DiaconisS96}
D(P \| Q) \leq \left(c_1 \, |P-Q| + c_2 \, \chi^2(P \| Q) \right) \log e
\end{align}
holds if $(c_1, c_2) = (0,1)$ and $(c_1, c_2) = \bigl(\tfrac14, \tfrac12\bigr)$. Furthermore,
if $c_1=0$ then $c_2=1$ is optimal, and if $c_2 = \tfrac12$ then $c_1 = \tfrac14$ is optimal.
\item
\begin{align}
\label{eq: symmetrized RE-chi^2}
\sup \frac{D(P \| Q) + D(Q \| P)}{\chi^2(P \| Q) + \chi^2(Q \| P)} = \tfrac12 \log e
\end{align}
where the supremum is over $P \ll \gg Q$ and $P \neq Q$.
\end{enumerate}
\end{theorem}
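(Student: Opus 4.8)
The plan is to express every quantity as an $f$-divergence and reduce each assertion to a one-dimensional inequality in the likelihood ratio $t=\tfrac{\mathrm{d}P}{\mathrm{d}Q}$, using Remark~\ref{remark: equivalence-fD} together with the machinery of Theorem~\ref{theorem: tight bound}. Throughout I represent $D(P\|Q)=D_r(P\|Q)$ with $r$ as in \eqref{eq: r}, so that $r(1)=r'(1)=0$, and I represent $\chi^2$ and $|P-Q|$ by $(t-1)^2$ and $|t-1|$ respectively.

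For part a), the pair $(c_1,c_2)=(0,1)$ is just \eqref{eq: CsiszarT06}; equivalently, functional domination applies since $r(t)\le(t-1)^2\log e$ on $(0,\infty)$. The pair $(\tfrac14,\tfrac12)$ (which sharpens \eqref{eq: DiaconisS96 - introduction}) is the delicate point: the naive pointwise bound $r(t)\le\bigl(\tfrac14|t-1|+\tfrac12(t-1)^2\bigr)\log e$ \emph{fails} as $t\downarrow0$, where the left side tends to $\log e$ but the right side only to $\tfrac34\log e$, so functional domination cannot be applied to $r$ as is. The key idea is that adding $b(t-1)$ to $r$ leaves $D(P\|Q)$ unchanged by Remark~\ref{remark: equivalence-fD}, so it suffices to produce a single $b$ with $r(t)+b(t-1)\le\bigl(\tfrac14|t-1|+\tfrac12(t-1)^2\bigr)\log e$ for all $t>0$. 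Writing the gap in natural-log units, the one-sided derivatives at the kink $t=1$ confine $b$ to $[-\tfrac14\log e,\tfrac14\log e]$, while requiring the limit of the gap at $t\downarrow0$ to be nonpositive forces $b\ge\tfrac14\log e$; the unique admissible value is therefore $b=\tfrac14\log e$, and with this choice I would check that the gap is monotone on each side of $1$ and vanishes only at $t=1$ and in the limit $t\downarrow0$. This calibration of $b$ is where I expect the main obstacle to lie, since it is the one place the freedom in Remark~\ref{remark: equivalence-fD} must be tuned exactly.

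For the two optimality claims I would use the explicit family $P=(0,1)$, $Q=(q,1-q)$ on $\{0,1\}$ with $q\downarrow0$, for which $D(P\|Q)=-\log(1-q)$, $\chi^2(P\|Q)=\tfrac{q}{1-q}$, and $|P-Q|=2q$. With $c_1=0$ this gives $\tfrac{D}{\chi^2\log e}\to1$, so with the already-proved inequality the supremum is exactly $1$ and $c_2=1$ is optimal (alternatively, Theorem~\ref{theorem: tight bound}\ref{theorem: tight bound: partb}) with $f=r$, $g=(t-1)^2$ gives $\bar\kappa=\log e$, approached as $t\downarrow0$). With $c_2=\tfrac12$, optimality of $c_1$ is the statement $\sup_{P\neq Q}\tfrac{D-\tfrac12\chi^2\log e}{|P-Q|\log e}=\tfrac14$; the upper bound is the inequality just established, and the same family furnishes a ratio tending to $\tfrac14$ (from below), giving the matching lower bound. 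It is worth flagging that the extremal sequence here is \emph{not} the local limit $P\to Q$, where this ratio tends to $0$, but the boundary family above.

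For part b) I would identify the numerator as Jeffrey's divergence $D_{f_J}$ with $f_J(t)=(t-1)\log t$ (cf. \eqref{jeffreys}) and the denominator as $D_{f_\chi}$ with $f_\chi(t)=(t-1)^2+\tfrac{(t-1)^2}{t}=\tfrac{(t-1)^2(t+1)}{t}$, where $\tfrac{(t-1)^2}{t}$ is the conjugate function representing $\chi^2(Q\|P)$ (cf. \eqref{eq: chi-square 5}). Both $f_J$ and $f_\chi$ are convex, vanish together with their derivatives at $t=1$, and are strictly positive off $t=1$, so Theorem~\ref{theorem: tight bound}\ref{theorem: tight bound: partb}) applies and the supremum equals
\begin{align}
\bar\kappa=\sup_{t\neq1}\frac{f_J(t)}{f_\chi(t)}=\sup_{t\neq1}\frac{t\log t}{t^2-1}.
\end{align}
It then remains to show $\tfrac{t\ln t}{t^2-1}\le\tfrac12$, with value $\tfrac12$ attained in the limit $t\to1$; this reduces to the sign of $w(t)=2t\ln t-t^2+1$, and since $w'(t)=2(\ln t+1-t)\le0$ the function $w$ is decreasing with $w(1)=0$, yielding precisely the required inequality. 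Hence $\bar\kappa=\tfrac12\log e$. For the restriction to $P\ll\gg Q$, I note that the upper bound holds for all $P\ll Q$ by functional domination, while the binary construction of Theorem~\ref{theorem: tight bound}\ref{theorem: tight bound: partb}) with $\nu\to1$ already lies in the class $P\ll\gg Q$, so the restricted supremum is still $\tfrac12\log e$.
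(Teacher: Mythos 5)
Your proposal is correct and follows essentially the same route as the paper: your calibration $b=\tfrac14\log e$ via Remark~\ref{remark: equivalence-fD} reproduces exactly the paper's pointwise inequality \eqref{SV} (since $\tfrac12(1-t)^+=\tfrac14|t-1|+\tfrac14(1-t)$), your function $f_\chi(t)=(t-1)^2(t+1)/t$ is the paper's $g(t)=t^2-t-1+1/t$, and your binary families for the optimality claims differ from the paper's ($P_\varepsilon(1)=\varepsilon^2$, $Q_\varepsilon(1)=\varepsilon$) only in the choice of extremal sequence, both being driven to the $t\downarrow 0$ endpoint where $\kappa$ peaks. One small slip in your described verification: on $(0,1)$ the gap between the two sides of \eqref{SV} vanishes at \emph{both} endpoints $t\downarrow 0$ and $t=1$ and is strictly negative in between, so it cannot be monotone there; the correct elementary check is that the difference has a derivative which is monotone (its second derivative $1-1/t$ has constant sign on $(0,1)$), hence a single interior critical point, which together with the boundary values gives the inequality. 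On $(1,\infty)$ your monotonicity claim does hold, and the rest of the argument, including the reduction of part b) to $2t\ln t\le t^2-1$ for $t>1$, is exactly the paper's.
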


\begin{proof}
\begin{enumerate}[a)]
\item
The satisfiability of \eqref{eq3: Improved DiaconisS96} with $(c_1, c_2) = (0,1)$ is equivalent
to \eqref{eq: CsiszarT06}.

\par
Let $P \ll Q$, and $Y \sim Q$. Then,
\begin{align}
D(P \| Q)
\label{eq1: Improved DiaconisS96}
& = \mathbb{E} \left[ r \left(\frac{\text{d}P}{\text{d}Q} \, (Y) \right) \right] \\
\label{eq2: Improved DiaconisS96}
& \leq \tfrac12 \, \mathbb{E}
\left[ \left( 1 - \frac{\text{d}P}{\text{d}Q} \, (Y) \right)^+ +
\left( \frac{\text{d}P}{\text{d}Q} \, (Y) - 1 \right)^2 \right] \, \log e \\
\label{eq3: Improved DiaconisS96}
& = \left( \tfrac14 \, |P-Q| + \tfrac12 \, \chi^2(P\|Q) \right) \, \log e
\end{align}
where \eqref{eq1: Improved DiaconisS96} follows from the definition of relative entropy
with the function $r \colon (0, \infty) \to \Reals$ defined in \eqref{eq: r};
\eqref{eq2: Improved DiaconisS96} holds since for $t \in (0, \infty)$
\begin{align} \label{SV}
r(t) \leq \tfrac12 \left[ (1-t)^+ + (t-1)^2 \right] \, \log e
\end{align}
and \eqref{eq3: Improved DiaconisS96} follows from \eqref{eq: chi-square 1b}, \eqref{eq2: TV distance},
and the identity $$(1-t)^+ = \tfrac12 \, \bigl[|1-t| + (1-t) \bigr].$$ This proves \eqref{eq: Improved DiaconisS96}
with $(c_1, c_2) = \bigl(\tfrac14, \tfrac12\bigr)$.

\par
Next, we show that if $c_1=0$ then $c_2 = 1$ is the best possible constant in \eqref{eq: Improved DiaconisS96}.
To that end, let $f_2(t) = (t-1)^2$, and let $\kappa(t)$ be the continuous extension of $\frac{r(t)}{f_2(t)}$.
It can be verified that the function $\kappa$ is monotonically decreasing on $(0, \infty)$, so
\begin{align}
\bar{\kappa} = \lim_{t \downarrow 0} \kappa (t) = \log e.
\end{align}
Since $D_{r}(P\|Q) = D(P\|Q)$ and $D_{f_2}(P\|Q) = \chi^2(P\|Q)$, and
$r'(1) = f_2'(1) = 0$, the desired result follows from
Theorem~\ref{theorem: tight bound}\ref{theorem: tight bound: partb}).

\par
To show that $c_1 = \tfrac14$ is the best possible constant in
\eqref{eq: Improved DiaconisS96} if $c_2 = \tfrac12$, we let
$g_2(t) = \tfrac12 \left[ (1-t)^+ + (t-1)^2 \right]$.
Theorem~\ref{theorem: tight bound}\ref{theorem: tight bound: partb})
does not apply here since $g_2$ is not differentiable at $t=1$.
However, we can still construct probability measures for proving
the optimality of the point $(c_1, c_2) = (\tfrac14, \tfrac12)$. To that end,
let $\varepsilon \in (0,1)$, and define probability measures $P_\varepsilon$
and $Q_\varepsilon$ on the set $\set{A} = \{0,1\}$ with
$P_\varepsilon(1) = \varepsilon^2$ and $Q_\varepsilon(1)=\varepsilon$.
Since $D_r(P\|Q) = D(P\|Q)$ and $D_{g_2}(P\|Q) = \tfrac14 \, |P-Q| + \tfrac12 \, \chi^2(P \| Q)$,
\begin{align}
& \lim_{\varepsilon \downarrow 0} \frac{D(P_\varepsilon \| Q_\varepsilon)}{\tfrac14 \,
|P_\varepsilon-Q_\varepsilon| + \tfrac12 \, \chi^2(P_\varepsilon \| Q_\varepsilon)} \nonumber \\[0.2cm]
& = \lim_{\varepsilon \downarrow 0} \frac{(1-\varepsilon) \, r(1+\varepsilon) +
\varepsilon \, r(\varepsilon)}{(1-\varepsilon) \, g_2(1+\varepsilon) + \varepsilon \, g_2(\varepsilon)} \\[0.2cm]
\label{eq:limitratio2}
& = \log e
\end{align}
where \eqref{eq:limitratio2} holds since we can write numerator and denominator in the right side as
\begin{align}
(1-\varepsilon) \, r(1+\varepsilon) + \varepsilon \, r(\varepsilon)
\label{eq: calculation1}
& = (1-\varepsilon^2) \, \log(1+\varepsilon) + \varepsilon^2 \, \log \varepsilon  \\
& = \varepsilon \, \log e + o(\varepsilon) ,\\[0.1cm]
\label{eq: calculation2}
(1-\varepsilon) \, g_2(1+\varepsilon) + \varepsilon \, g_2(\varepsilon)
& = \varepsilon - \varepsilon^2.
\end{align}
\item
We have
\begin{align}
\label{eq4: symmetrized RE-chi^2}
D_{f}(P \| Q) &= D(P \| Q) + D(Q \| P), \\
\label{eq5: symmetrized RE-chi^2}
D_{g}(P \| Q)  &=  \chi^2(P \| Q) + \chi^2(Q \| P)
\end{align}
with
\begin{align}
\label{eq1: symmetrized RE-chi^2}
& f(t) \triangleq (t-1) \, \log t \\
\label{eq2: symmetrized RE-chi^2}
& g(t) \triangleq t^2-t-1+\frac1{t} \\
& \kappa(t) = \frac{f(t)}{g(t)} =  \frac{t \log t}{t^2-1} , \quad t\in (0,1) \cup (1, \infty) \\
& \lim_{t\to 1} \kappa(t) = \bar{\kappa} = \tfrac12 \log e \label{man}
\end{align}
where \eqref{man} is easy to verify since $\kappa$ is monotonically increasing
on $(0,1)$, and monotonically decreasing on $(1, \infty)$.
The desired result follows since the conditions of
Theorem~\ref{theorem: tight bound}\ref{theorem: tight bound: partb}) apply.
\end{enumerate}
\end{proof}

\begin{remark}
Inequality~\eqref{eq: Improved DiaconisS96} strengthens the bound claimed in \cite[(2.8)]{DiaconisS96},
\begin{align} \label{eq: DiaconisS96}
D(P \| Q) \leq \tfrac12 \left(|P-Q| + \chi^2(P \| Q)\right) \log e ,
\end{align}
although the short outline of the suggested proof in \cite[p.~710]{DiaconisS96} leads to the weaker upper bound
$|P-Q| + \tfrac12 \, \chi^2(P \| Q)$ nats.
\end{remark}

\begin{remark}
Note that \eqref{eq: CsiszarT06} implies a looser result where the
constant in the right side of \eqref{eq: symmetrized RE-chi^2} is doubled. Furthermore,
 \eqref{eq: Pinsker} and \eqref{eq: symmetrized RE-chi^2} result in the
bound $\chi^2(P \| Q) + \chi^2(Q \| P) \geq 2 \, |P-Q|^2$ which, although weaker than
\eqref{eq2: symmetric f-div}, has the same behavior for small values of $|P-Q|$.
\end{remark}

\subsection{Relationships Among $D(P \| Q)$, $\Delta(P \| Q)$ and $|P-Q|$}
\label{subsec: RE-Delta-TV functional domination}

The next result shows three upper bounds on the Vincze-Le Cam distance in terms of
the relative entropy (and total variation distance).
\begin{theorem} \
\label{thm: RE-Delta-TV}
Let $r \colon (0, \infty) \to \Reals$ be the function defined in \eqref{eq: r},
and let $f_{\Delta}$  denote the function $f \colon (0, \infty) \to \Reals$
defined in \eqref{eq:tridiv}.
\begin{enumerate}[a)]
\item If $P \ll Q$, then
\begin{align}
\label{eq1: RE-Delta}
\Delta(P \| Q) \, \log e \leq c_1 \, D(P \| Q)
\end{align}
where $c_1 = 1.11591 = \bar{\kappa}$, computed with
\begin{align} \label{eq2: RE-Delta}
\kappa(t) = \frac{f_{\Delta}(t) \, \log e}{r(t)}.
\end{align}
Furthermore, the constant $c_1$ in \eqref{eq1: RE-Delta} is the best possible.
\item If $P \ll Q$, then
\begin{align}
\label{eq2: RE-Delta-TV}
\Delta(P \| Q) \, \log e \leq D(P \| Q) + c_2 \, |P-Q| \, \log e
\end{align}
where $c_2 = 0.0374250$ is defined by
\begin{align}
\label{eq: c2 RE-Delta-TV}
c_2 &= \min \bigl\{c \geq 0 \colon \, \kappa_c(t) \leq 1, \; \forall \, t \in (0,1) \bigr\} \\
\label{eq2.5: RE-Delta-TV}
\kappa_c(t) &= \frac{f_{\Delta}(t) \, \log e}{r(t) + 2c \, (1-t)^+ \, \log e}.
\end{align}
Furthermore, the constant $c_2$ in \eqref{eq2: RE-Delta-TV} is the best possible.
\item If $P \ll \gg Q$, then
\begin{align}
\label{eq3: RE-Delta}
\Delta(P \| Q) \, \log e \leq \tfrac12 \, D(P \| Q) + \tfrac12 \, D(Q \| P)
\end{align}
and the bound in \eqref{eq3: RE-Delta} is tight.
\end{enumerate}
\end{theorem}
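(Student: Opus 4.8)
The plan is to handle all three parts through the functional‑domination machinery of Theorem~\ref{theorem: tight bound}, after rewriting each right‑hand side as a single $f$‑divergence. Throughout I set $f=f_{\Delta}\,\log e$ and recall from \eqref{eq: r} that $D_r(P\|Q)=D(P\|Q)$. A computation I would do once at the outset is that $r'(t)=\log t$, so $r'(1)=0$, and $f_{\Delta}'(1)=0$; the vanishing of these derivatives at $t=1$ is what activates Theorem~\ref{theorem: tight bound}\ref{theorem: tight bound: partb}). For part a) the hypotheses hold with $g=r$: both functions are convex, vanish at $1$, and $r(t)>0$ for $t\neq1$. Part~\ref{theorem: tight bound: parta}) then gives \eqref{eq1: RE-Delta} with $c_1=\bar\kappa$, and since $f_{\Delta}'(1)=r'(1)=0$, part~\ref{theorem: tight bound: partb}) certifies optimality. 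What remains is the one‑variable evaluation of $\bar\kappa=\sup_t\kappa(t)$ in \eqref{eq2: RE-Delta}: I would record the boundary behavior ($\kappa\to1$ as $t\downarrow0$ and as $t\to1$, $\kappa\to0$ as $t\to\infty$), conclude the supremum is an interior maximum on $(0,1)$, and locate it numerically by solving $\kappa'(t)=0$ to obtain $\bar\kappa=1.11591$.

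For part b), the key observation is that $2(1-t)^+$ differs from $|t-1|$ by the linear term $(1-t)$, so by Remark~\ref{remark: equivalence-fD} it generates the same divergence, $D_{2c(1-t)^+\log e}(P\|Q)=c\,|P-Q|\,\log e$. Hence \eqref{eq2: RE-Delta-TV} is exactly the functional domination $f_{\Delta}(t)\,\log e\le g_c(t)$ with $g_c(t)=r(t)+2c(1-t)^+\log e$, i.e. $\kappa_c(t)\le1$ on $(0,\infty)$. On $[1,\infty)$ the correction vanishes, so $\kappa_c=\kappa$ there and the inequality holds automatically by the $t\ge1$ portion of part a) (where $\kappa<1$); thus the binding region is $(0,1)$, which is precisely the definition \eqref{eq: c2 RE-Delta-TV}. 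A short monotonicity argument shows the valid set of $c$ is $[c_2,\infty)$ with a genuine interior binding point $t^\ast\in(0,1)$ at which $\kappa_{c_2}(t^\ast)=1$. For optimality I cannot use Theorem~\ref{theorem: tight bound}\ref{theorem: tight bound: partb}) since $g_c$ is not differentiable at $1$; instead I take the binary construction $P_\varepsilon(0)=\nu\varepsilon$, $Q_\varepsilon(0)=\varepsilon$ with $\nu=t^\ast$. Because $f_{\Delta}(1)=f_{\Delta}'(1)=r(1)=r'(1)=0$ while $|P_\varepsilon-Q_\varepsilon|=2(1-\nu)\varepsilon$ exactly for $\nu<1$, each divergence equals $\varepsilon$ times its value at $\nu$ to leading order, so
\[
\lim_{\varepsilon\downarrow0}\frac{\Delta(P_\varepsilon\|Q_\varepsilon)\log e - D(P_\varepsilon\|Q_\varepsilon)}{|P_\varepsilon-Q_\varepsilon|\,\log e}=\frac{f_{\Delta}(t^\ast)\log e - r(t^\ast)}{2(1-t^\ast)\log e}=c_2,
\]
which shows $c_2$ cannot be reduced.

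For part c), I would write the right‑hand side as $\tfrac12$ of Jeffrey's divergence \eqref{jeffreys}, i.e. $D_g(P\|Q)$ with $g(t)=\tfrac12(t-1)\log t$, which is convex, vanishes at $1$, is positive for $t\neq1$, and satisfies $g'(1)=0$. The bound \eqref{eq3: RE-Delta} then reduces to $\kappa(t)\le1$ for $\kappa(t)=\dfrac{2(t-1)\log e}{(t+1)\log t}$, which after clearing denominators is the classical comparison $\ln t\ge\tfrac{2(t-1)}{t+1}$ for $t\ge1$ (with the reverse inequality for $t\in(0,1)$); both cases yield $\kappa(t)\le1$. Since $\kappa(t)\to1$ as $t\to1$, $\bar\kappa=1$, and because $f_{\Delta}'(1)=g'(1)=0$, Theorem~\ref{theorem: tight bound}\ref{theorem: tight bound: partb}) delivers the claimed tightness.

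The main obstacle is not conceptual but lies in the one‑dimensional analysis feeding the three applications. In part a) it is the numerical determination and justification of the interior maximum $\bar\kappa=1.11591$. In part b) it is the verification that $\kappa\le1$ on $[1,\infty)$, so that the constraint genuinely localizes to $(0,1)$, together with the careful first‑order expansion in the optimality construction: unlike $\Delta$ and $D$, the total variation distance contributes at order $\varepsilon$ rather than $\varepsilon^2$, precisely because $|t-1|$ fails to be differentiable at $1$, and this is what makes $2(1-t)^+$ (not $|t-1|$) the correct representative for allocating the entire ``total‑variation budget'' to $(0,1)$.
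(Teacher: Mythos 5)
Your proposal is correct and follows essentially the same route as the paper: all three parts are instances of Theorem~\ref{theorem: tight bound} with $f=f_{\Delta}\log e$ and $g$ equal to $r$, to $r+2c(1-t)^+\log e$ (identified with $c\,|P-Q|\log e$ via Remark~\ref{remark: equivalence-fD}), and to a multiple of $(t-1)\log t$, respectively, with the numerical maximization of $\kappa$ and, for part~b), an explicit binary construction at the binding point $t^\ast$ to certify optimality where part~b) of Theorem~\ref{theorem: tight bound} fails due to non-differentiability at $t=1$. Your optimality limit in part~b), phrased as $\lim(\Delta\log e-D)/(|P-Q|\log e)=c_2$, is algebraically equivalent to the paper's $\lim \Delta\log e/(D+c_2|P-Q|\log e)=\kappa_{c_2}(t^\ast)=1$.
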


\begin{proof}
\begin{enumerate}[a)]
\item
Let $f = f_{\Delta} \, \log e$ and $g=r$. These functions satisfy the conditions in
Theorem~\ref{theorem: tight bound}\ref{theorem: tight bound: partb}), and result in
the function $\kappa$ defined in \eqref{eq2: RE-Delta}. The function $\kappa$
is maximal at $t^\star = 0.223379$ with $c_1 \triangleq \kappa(t^\star) = 1.11591$. Since
$D_f(P \| Q) = \Delta(P \| Q) \, \log e$ and $D_g(P\|Q) = D(P\|Q)$,
Theorem~\ref{theorem: tight bound}\ref{theorem: tight bound: partb})
results in \eqref{eq1: RE-Delta} with the optimality of its constant $c_1$.

\item By the definition of $c_2$ in \eqref{eq: c2 RE-Delta-TV}, it follows that for all $t > 0$
\begin{align}
\label{eq2a: RE-Delta-TV}
f_{\Delta}(t) \, \log e \leq r(t) + 2c_2 \, (1-t)^+ \log e;
\end{align}
note that \eqref{eq2a: RE-Delta-TV} holds for all $t \in (0,\infty)$ (i.e., not only in $(0,1)$
as in \eqref{eq: c2 RE-Delta-TV}) since, for $t \geq 1$, \eqref{eq2a: RE-Delta-TV} is equivalent
to $f_{\Delta}(t) \, \log e \leq r(t)$ which indeed holds for all $t \in [1,\infty)$.
Hence, the bound in \eqref{eq2: RE-Delta-TV} follows from \eqref{eq: r-divergence}, \eqref{eq:delta},
\eqref{eq:Eg f-div} and \eqref{eq:EG-TV}. It can be verified from \eqref{eq: c2 RE-Delta-TV} that
$c_2 = 0.0374250$, the maximal value of $\kappa_{c_2}$ on $(0,\infty)$ is attained at
$t^\star = 0.122463$ and $\kappa_{c_2}(t^\star)=1$. The optimality of the constant $c_2$ in
\eqref{eq2: RE-Delta-TV} is shown next (note that Theorem~\ref{theorem: tight bound}\ref{theorem: tight bound: partb})
cannot be used here, since the right side of \eqref{eq2a: RE-Delta-TV} is not differentiable at~$t=1$).
Let $P_{\varepsilon}$ and $Q_{\varepsilon}$ be probability measures defined on $\set{A} = \{0, 1\}$
with $P_{\varepsilon}(0) = t^\star \varepsilon$ and $Q_{\varepsilon}(0) = \varepsilon$ for
$\varepsilon \in (0,1)$. Then, it follows that
\begin{align}
& \lim_{\varepsilon \to 0} \frac{\Delta(P_\varepsilon \| Q_\varepsilon) \, \log e}{D(P_\varepsilon
\| Q_\varepsilon) + c_2 \, \bigl|P_\varepsilon - Q_\varepsilon\bigr| \, \log e} \nonumber \\[0.1cm]
\label{eq d2: RE-Delta-TV}
& = \lim_{\varepsilon \to 0} \frac{f_{\Delta}(t^\star) \, \varepsilon \, \log e + o(\varepsilon)}{\bigl[
r(t^\star) + 2c_2 \, (1-t^\star) \, \log e \bigr] \, \varepsilon + o(\varepsilon)} \\
\label{eq d3: RE-Delta-TV}
& = \kappa_{c_2}(t^\star) = 1
\end{align}
where \eqref{eq d2: RE-Delta-TV} follows by the construction of $P_{\varepsilon}$ and
$Q_{\varepsilon}$, and the use of Taylor series expansions;
\eqref{eq d3: RE-Delta-TV} follows from \eqref{eq2.5: RE-Delta-TV} (note that $t^\star < 1$),
which yields the required optimality result in \eqref{eq2: RE-Delta-TV}.

\item
Let $f = f_{\Delta} \, \log e$ and $g \colon (0,\infty) \to \Reals$ be defined by
$g(t)=(t-1) \, \log t$ for all $t > 0$. These functions, which satisfy the conditions
in Theorem~\ref{theorem: tight bound}\ref{theorem: tight bound: partb}), yield
$D_f(P \| Q) = \Delta(P\|Q) \, \log e$ and $D_g(P \| Q) = D(P\|Q) + D(Q\|P)$.
Theorem~\ref{theorem: tight bound}\ref{theorem: tight bound: partb}) yields
the desired result with
\begin{align}
\kappa(t) &= \frac{t-1}{(t+1) \log_e t}, \quad t \in (0,1) \cup (1, \infty) \\
\kappa(1) &= \frac1{2} = \bar{\kappa}.
\end{align}
\end{enumerate}
\end{proof}

\begin{remark}
We can generalize \eqref{eq1: RE-Delta} and \eqref{eq3: RE-Delta} to
\begin{align} \label{eqref: generalized bound}
\Delta(P \| Q) \log e \leq c_1 \, D(P \| Q) + c_2 \, D(Q \| P)
\end{align}
where $c_1, c_2 \geq 0$ and $P \ll \gg Q$. In view of Theorem~\ref{theorem: tight bound},
the locus of the allowable constant pairs $(c_1, c_2)$ in \eqref{eqref: generalized bound} can be evaluated.
To that end, Theorem~\ref{theorem: tight bound}\ref{theorem: tight bound: partb}) can be used with
$f(t) = f_{\Delta}(t) \, \log e$, and $g(t) = c_1 r(t) + c_2 \, t \, r\left(\frac1{t}\right)$ for all $t>0$.
This results in a convex region, which is symmetric with respect to the straight line $c_1 = c_2$ (since
$\Delta(P\|Q) = \Delta(Q\|P)$).
Note that \eqref{eq1: RE-Delta}, \eqref{eq3: RE-Delta} and the symmetry property of this
region identify, respectively, the points $(1.11591,0)$, $(\tfrac12, \tfrac12)$ and $(0,1.11591)$ on its
boundary; since the sum of their coordinates is nearly constant, the boundary of this subset of the
positive quadrant is nearly a straight line of slope $-1$.
\end{remark}

\subsection{An Alternative Proof of Samson's Inequality}
\label{subsec: Samson's Inequality}
An analog of Pinsker's inequality, which comes in handy for the proof of Marton's conditional
transportation inequality \cite[Lemma~8.4]{boucheron2013concentration}, is the following bound
due to Samson \cite[Lemma~2]{samson2000concentration}:
\begin{theorem} \label{thm: Samson}
If $P \ll Q$, then
\begin{align} \label{eq: Samson}
d_2^2(P,Q) + d_2^2(Q,P) \leq \tfrac2{\log e} \; D(P\|Q)
\end{align}
where $d_2(P,Q)$ is the distance measure defined in \eqref{smarton}.
\end{theorem}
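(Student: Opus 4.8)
The plan is to express the entire left-hand side of \eqref{eq: Samson} as a single $f$-divergence from $P$ to $Q$, and then dominate it pointwise by a multiple of the integrand of the relative entropy, so that the elementary functional-domination bound \eqref{000} applies. By \eqref{smarton as f-div} and \eqref{eq: s}, the forward term is $d_2^2(P,Q) = D_s(P\|Q)$ with $s(t) = (t-1)^2 \, 1\{t < 1\}$. For the reverse term I would pass to the conjugate generator $s^\star(t) = t\,s\!\left(\tfrac1t\right)$ of \eqref{eq: fstar}, which here is
\begin{align}
s^\star(t) = \frac{(t-1)^2}{t} \; 1\{t > 1\},
\end{align}
and identify $d_2^2(Q,P) = D_s(Q\|P) = D_{s^\star}(P\|Q)$. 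The equality $D_s(Q\|P) = D_{s^\star}(P\|Q)$ holds outright when $P \ll\gg Q$; under the weaker hypothesis $P \ll Q$ of the theorem I would instead invoke the general representation \eqref{dpqverygralalterf}, observing that $P(q = 0) = 0$ and that $s^\star(0) = \lim_{u \to \infty} s(u)/u = 0$, so that the two boundary terms that distinguish $D_s(Q\|P)$ from $D_{s^\star}(P\|Q)$ both vanish. Hence the left side of \eqref{eq: Samson} equals $D_h(P\|Q)$ with $h \triangleq s + s^\star$, namely
\begin{align}
h(t) = (t-1)^2 \, 1\{t < 1\} + \frac{(t-1)^2}{t} \, 1\{t > 1\}.
\end{align}

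Next I would write the right side of \eqref{eq: Samson} as $D_g(P\|Q)$, where, using $D(P\|Q) = D_r(P\|Q)$ from \eqref{eq: r-divergence}--\eqref{eq: r} and the linearity of $D_f$ in $f$,
\begin{align}
g(t) \triangleq \frac{2}{\log e} \, r(t) = 2\bigl(t \log_e t + 1 - t\bigr).
\end{align}
Both $h$ and $g$ are convex with $h(1) = g(1) = 0$ and $g(t) > 0$ for $t \neq 1$, so by \eqref{000} the claim \eqref{eq: Samson} reduces to the pointwise inequality $h(t) \leq g(t)$ for all $t > 0$. I would verify this on the two half-lines where $h$ is a single smooth expression. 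On $(0,1)$, where $h(t) = (t-1)^2$, let $\phi = g - h$; then $\phi(1) = \phi'(1) = 0$ and $\phi''(t) = \tfrac{2(1-t)}{t} > 0$, so $\phi$ is convex with a stationary point at $t=1$ and is therefore nonnegative on $(0,1)$. On $(1,\infty)$, where $h(t) = (t-1)^2/t$, let $\psi = g - h$; then likewise $\psi(1) = \psi'(1) = 0$ and $\psi''(t) = \tfrac{2(t^2-1)}{t^3} > 0$, so $\psi \geq 0$ on $(1,\infty)$. Together with $h(1) = g(1)$ this gives $h \leq g$ throughout $(0,\infty)$ and hence \eqref{eq: Samson}.

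The pointwise comparison is entirely routine; the single step requiring genuine care is the representation of the reverse Marton divergence $d_2^2(Q,P)$ as an $f$-divergence based at $(P,Q)$ under the one-sided assumption $P \ll Q$, since $D_s(Q\|P) = D_{s^\star}(P\|Q)$ is guaranteed automatically only under two-sided absolute continuity and otherwise relies on the boundary-term accounting in \eqref{dpqverygralalterf}. Finally, I would note that this argument is a pure functional domination with $\sup_{t \neq 1} h(t)/g(t) = 1$ approached as $t \to 1$; since $h$ and $g$ are differentiable at $t=1$ with $h'(1) = g'(1) = 0$, Theorem~\ref{theorem: tight bound}\ref{theorem: tight bound: partb}) applied to $h$ and $g$ yields $\sup_{P \neq Q} D_h(P\|Q)/D_g(P\|Q) = 1$, which shows that the constant $\tfrac{2}{\log e}$ in \eqref{eq: Samson} cannot be improved.
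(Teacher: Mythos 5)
Your proof is correct and follows essentially the same route as the paper's: both decompose $d_2^2(P,Q)+d_2^2(Q,P)$ as the single $f$-divergence $D_{s+s^\star}(P\|Q)$ with $s+s^\star(t)=\frac{(t-1)^2}{\max\{1,t\}}$ and compare it against $D_r(P\|Q)=D(P\|Q)$ via functional domination, concluding optimality of the constant from Theorem~\ref{theorem: tight bound}\ref{theorem: tight bound: partb}). The only (immaterial) difference is that you certify the pointwise bound by showing the difference $\tfrac{2}{\log e}\,r - (s+s^\star)$ is convex with a stationary zero at $t=1$ on each half-line, whereas the paper establishes $\bar\kappa=\tfrac{2}{\log e}$ through the monotonicity of the ratio $\kappa(t)=\frac{(t-1)^2}{r(t)\max\{1,t\}}$ on $(0,1)$ and $(1,\infty)$.
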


We provide next an alternative proof of Theorem~\ref{thm: Samson}, in view of
Theorem~\ref{theorem: tight bound}\ref{theorem: tight bound: partb}), with the following advantages:
\begin{enumerate}[a)]
\item This proof yields the optimality of the constant in \eqref{eq: Samson}, i.e., we prove that
\begin{align}  \label{eq: Samson - tight}
\sup \frac{d_2^2(P,Q) + d_2^2(Q,P)}{D(P \| Q)} = \frac2{\log e}
\end{align}
where the supremum is over all probability measures $P,Q$ such that $P \neq Q$ and $P \ll \gg Q$.
\item A simple adaptation of this proof enables to derive a reverse inequality to \eqref{eq: Samson}, which holds
under the boundedness assumption of the relative information (see Section~\ref{subsec: Reverse Samson's Inequality}).
\end{enumerate}

\begin{proof}
\begin{align}
\label{eq1: sum of squared d_2}
d_2^2(P,Q) + d_2^2(Q,P) & = D_s(P\|Q) + D_{s^\star}(P\|Q) \\
\label{eq2: sum of squared d_2}
& = D_f(P\|Q)
\end{align}
where, from \eqref{eq: s}, $s^\star \colon (0, \infty) \to [0, \infty)$
is the convex function
\begin{align}
\label{eq: s_star}
s^\star(t) = t \, s\bigl(\tfrac1{t}\bigr) = \frac{(t - 1)^2 \; 1\{ t > 1 \}}{t}
\end{align}
and, from \eqref{eq: s} and \eqref{eq: s_star}, the non-negative and convex function
$f \colon (0, \infty) \to \Reals$ in \eqref{eq2: sum of squared d_2} is given by
\begin{align}
\label{eq: f - Samson}
f(t) = s(t) + s^{\star}(t)
= \frac{(t-1)^2}{\max\{1,t\}}
\end{align}
for all $t > 0$. Let $r$ be the non-negative and convex function
$r \colon (0, \infty) \to \Reals$ defined in \eqref{eq: r}, which
yields $D_r(P\|Q) = D(P \| Q)$. Note that $f(1)=r(1)=0$, and the
functions $f$ and $r$ are both differentiable at $t=1$ with $f'(1)=r'(1)=0$.
The desired result follows from
Theorem~\ref{theorem: tight bound}\ref{theorem: tight bound: partb}) since
in this case
\begin{align} \label{eq: kappa - Samson}
\kappa(t) &= \frac{(t-1)^2 }{r(t) \, \max\{1,t\}} , \quad t \in (0,1) \cup (1, \infty) \\
\lim_{t \to 1} \kappa(t) &= \tfrac2{\log e} = \bar{\kappa}, \label{eq: kappa1 - Samson}
\end{align}
as can be verified from the monotonicity of $\kappa$ on $(0,1)$ (increasing)
and $(1, \infty)$ (decreasing).
\end{proof}

\begin{remark}
As mentioned in \cite[p.~438]{samson2000concentration}, Samson's inequality \eqref{eq: Samson}
strengthens the Pinsker-type inequality in \cite[Lemma~3.2]{Marton96}:
\begin{align} \label{eq: Marton96 - Lamma 3.2}
& d_2^2(P,Q) \leq \tfrac2{\log e} \; \min\bigl\{D(P\|Q), \, D(Q\|P) \bigr\}
\end{align}
Nevertheless, similarly to our alternative proof of Theorem~\ref{thm: Samson},
one can verify that Theorem~\ref{theorem: tight bound}\ref{theorem: tight bound: partb})
yields the optimality of the constant in \eqref{eq: Marton96 - Lamma 3.2}.
\end{remark}

\subsection{Ratio of $f$-Divergence to Total Variation Distance}
\label{subsec: f-divergence and total variation distance}

Vajda \cite[Theorem~2]{Vajda_1972} showed that the range of an $f$-divergence
is given by (see \eqref{eq: fstar at 0})
\begin{align} \label{eq: Vajda72}
0 \leq D_f(P \| Q) \leq f(0) + f^{\star}(0)
\end{align}
where every value in this range is attainable by a suitable pair of probability
measures $P \ll Q$. Recalling  Remark~\ref{remark: equivalence-fD}, note that
$f_b(0) + f_b^{\star}(0) = f(0) + f^{\star}(0)$ with $f_b(\cdot)$ defined in \eqref{eq: fD3}.
Basu \textit{et al.} \cite[Lemma~11.1]{BasuSP} strengthened \eqref{eq: Vajda72},
showing that
\begin{align}
\label{eq: BasuSP inequality}
D_f(P \| Q) \leq \tfrac12 \left( f(0) +  f^\star(0)\right) \, |P-Q|.
\end{align}
Note that, provided $f(0)$ and $f^\star(0)$ are finite,  \eqref{eq: BasuSP inequality}
yields a counterpart to \eqref{eq: Csiszar72}.
Next, we show that the constant in \eqref{eq: BasuSP inequality} cannot be improved.

\begin{theorem} \label{theorem: strengthened BasuSP}
If $f \colon (0, \infty) \to \Reals$ is convex with $f(1)=0$, then
\begin{align}
\label{eq: BasuSP tightness}
\sup \frac{D_f(P \| Q)}{|P-Q|} &= \tfrac12 \left( f(0) +  f^\star(0) \right)
\end{align}
where the supremum is over all probability measures $P, Q$ such that $P \ll Q$ and $P \neq Q$.
\end{theorem}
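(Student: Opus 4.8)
The plan is to prove the two inequalities in \eqref{eq: BasuSP tightness} separately. The upper bound $\sup \frac{D_f(P\|Q)}{|P-Q|} \leq \tfrac12\bigl(f(0)+f^\star(0)\bigr)$ is immediate: the inequality \eqref{eq: BasuSP inequality} of Basu \textit{et al.} holds for every admissible pair, and dividing by $|P-Q|>0$ (which is positive precisely because $P\neq Q$) gives the claim. What genuinely requires work is the reverse inequality, i.e.\ that the constant cannot be lowered. I would stress at the outset that one cannot simply invoke Theorem~\ref{theorem: tight bound}\ref{theorem: tight bound: partb}) with $g(t)=|t-1|$: that function fails to be differentiable at $t=1$, so part~b) does not apply; and even part~a) only yields $D_f \leq \bar{\kappa}\,|P-Q|$ with $\bar{\kappa}=\sup_{t}\frac{f(t)}{|t-1|}$, whose two tail limits are $f(0)$ and $f^\star(0)$, so generically $\bar{\kappa}\geq\max\{f(0),f^\star(0)\}\geq \tfrac12\bigl(f(0)+f^\star(0)\bigr)$. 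The factor $\tfrac12$ reflects the simultaneous activation of \emph{both} tails of $f$ against the total variation distance, and must be produced by an explicit construction.

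For the matching family I would work on the binary alphabet $\set{A}=\{0,1\}$ and, for $\delta\in(0,1)$, take $P(0)=0$, $P(1)=1$ and $Q(0)=1-\delta$, $Q(1)=\delta$. Then $P\ll Q$ and $P\neq Q$; the density $\frac{\mathrm{d}P}{\mathrm{d}Q}$ equals $0$ at the symbol $0$ and $\tfrac1\delta$ at the symbol $1$. Consequently \eqref{eq:fD2} gives
\begin{align}
D_f(P\|Q) = (1-\delta)\,f(0) + \delta\, f\!\left(\tfrac1\delta\right),
\end{align}
while \eqref{eq2: TV distance} gives $|P-Q| = (1-\delta)+(1-\delta)=2(1-\delta)$.

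The decisive step is the limit $\delta\downarrow 0$. The first term tends to $f(0)$ by the continuous extension \eqref{eq: f at 0}; the second tends to $f^\star(0)$, since with $u=\tfrac1\delta\to\infty$ one has $\delta\,f(\tfrac1\delta)=\frac{f(u)}{u}\to f^\star(0)$ by \eqref{eq: fstar at 0}. Hence
\begin{align}
\lim_{\delta\downarrow 0}\frac{D_f(P\|Q)}{|P-Q|} = \frac{f(0)+f^\star(0)}{2},
\end{align}
which combined with the upper bound establishes \eqref{eq: BasuSP tightness}.

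The step I expect to need the most care is the degenerate case where $f(0)=+\infty$ or $f^\star(0)=+\infty$, in which \eqref{eq: BasuSP tightness} asserts the supremum is $+\infty$. Here I would first note that, $f$ being real-valued and convex on $(0,\infty)$, both $f(0)$ and $f^\star(0)$ lie in $(-\infty,+\infty]$, so no indeterminate $\infty-\infty$ arises, and $f(0)+f^\star(0)\geq 0$ by Vajda's range result \eqref{eq: Vajda72}. If $f(0)=+\infty$, then $D_f(P\|Q)=+\infty$ for every fixed $\delta\in(0,1)$ while $|P-Q|=2(1-\delta)$ is finite and positive, so the ratio is already $+\infty$; if instead $f^\star(0)=+\infty$, the term $\delta\,f(\tfrac1\delta)$ diverges as $\delta\downarrow 0$, again driving the ratio to $+\infty$. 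Thus the same binary family attains the claimed constant in all cases, completing the argument.
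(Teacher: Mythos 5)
Your proof is correct, and the achievability half follows the same strategy as the paper's: exhibit an explicit parametric family that simultaneously drives one tail of $\frac{\text{d}P}{\text{d}Q}$ to $0$ and the other to $\infty$, so that $f(0)$ and $f^\star(0)$ each contribute exactly half against $|P-Q|$. The concrete constructions differ: the paper works on the ternary alphabet $\{0,1,2\}$ with $P_\varepsilon(0)=Q_\varepsilon(1)=\varepsilon$ and $P_\varepsilon(1)=Q_\varepsilon(0)=\varepsilon^2$ (both measures fully supported, converging to a common point mass at $2$), whereas you fix $P$ as a point mass on a binary alphabet and let $Q\to P$. Your version is slightly more economical: the $f(0)$ term enters the divergence exactly through \eqref{dpqverygralalterf} rather than as a limit $f(\varepsilon)\to f(0)$, only the single limit $\delta f(1/\delta)\to f^\star(0)$ needs \eqref{eq: fstar at 0}, and the cases $f(0)=+\infty$ or $f^\star(0)=+\infty$ are handled transparently (correctly noting that no $\infty-\infty$ can occur since $f$ is real-valued and convex). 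For the upper-bound half you cite \eqref{eq: BasuSP inequality} directly, which is legitimate since it is stated and attributed before the theorem; the paper instead re-derives it via the pointwise inequality \eqref{paris}, using the offset freedom of Remark~\ref{remark: equivalence-fD} and the convexity of $f$ and $f^\star$ on $(0,1)$ and $[1,\infty)$ respectively. Your preliminary remark on why Theorem~\ref{theorem: tight bound} alone cannot produce the factor $\tfrac12$ (non-differentiability of $|t-1|$ at $1$, and $\bar{\kappa}\geq\max\{f(0),f^\star(0)\}$) is also accurate.
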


\begin{proof}
As the first step, we give a simplified proof of \eqref{eq: BasuSP inequality}
(cf. \cite[pp.~344-345]{BasuSP}).
In view of Remark~\ref{remark: equivalence-fD}, it is sufficient to show that for all $t>0$,
\begin{align}\label{paris}
f(t) + \tfrac12 \left( f(0) - f^\star(0) \right) (t-1)
\leq \tfrac12 \left( f(0) +  f^\star(0) \right) |t-1|.
\end{align}
If $t\in (0,1)$, \eqref{paris} reduces to $f(t) \leq (1-t) f(0)$, which holds in view of the
convexity of $f$ and $f(1) = 0$. If $t \geq 1$, we can readily check, with the aid of \eqref{eq: fstar},
that \eqref{paris} reduces to $f^\star (\tfrac1t) \leq (1-\tfrac1t) f^\star(0)$, which, in turn,
holds because $f^\star$ is convex and $f^\star(1) = 0$.

For the second part of the proof of \eqref{eq: BasuSP tightness}, we construct a pair
of probability measures $P_\varepsilon$ and $Q_\varepsilon$ such
that, for a sufficiently small $\varepsilon > 0$,
$\frac{D_f(P_\varepsilon\|Q_\varepsilon)}{|P_\varepsilon-Q_\varepsilon|}$
can be made arbitrarily close to the right side of \eqref{eq: BasuSP tightness}.
To that end, let $\varepsilon \in (0, \tfrac12 \, (\sqrt{5}-1)]$, and let
$P_\varepsilon$ and $Q_\varepsilon$ be defined on the set
$\set{A} = \{0,1,2\}$ with $P_\varepsilon(0) = Q_\varepsilon(1) = \varepsilon$
and $P_\varepsilon(1) = Q_\varepsilon(0) = \varepsilon^2$. Then,
\begin{align}
\lim_{\varepsilon \to 0} \frac{D_f(P_\varepsilon \| Q_\varepsilon)}{|P_\varepsilon - Q_\varepsilon|}
\label{eq1: achievability BasuSP}
& = \lim_{\varepsilon \to 0} \frac{\varepsilon \, f(\varepsilon)
+ \varepsilon^2 \, f\bigl(\tfrac1\varepsilon\bigr)}{2 \varepsilon \, (1-\varepsilon)} \\
\label{eq2: achievability BasuSP}
& = \tfrac12 \, f(0) + \tfrac12 \, f^{\star}(0)
\end{align}
where \eqref{eq1: achievability BasuSP} holds since $P_\varepsilon(2) = Q_\varepsilon(2)$,
and \eqref{eq2: achievability BasuSP} follows from \eqref{eq: f at 0} and \eqref{eq: fstar at 0}.
\end{proof}

\begin{remark}
Csisz\'ar \cite[Theorem~2]{Csiszar67b} showed that if $f(0)$ and $f^\star(0)$ are finite
and $P \ll Q$, then there exists a constant $C_f > 0$ which depends only on $f$ such that
\begin{align} \label{eq1: Csiszar67}
D_f(P \| Q) \leq C_f \, \sqrt{|P-Q|}.
\end{align}
If $|P-Q| < 1$, then \eqref{eq1: Csiszar67} is superseded by \eqref{eq: BasuSP inequality}
where the constant is not only explicit but is the best possible according to
Theorem~\ref{theorem: strengthened BasuSP}.
\end{remark}

A direct application of Theorem~\ref{theorem: strengthened BasuSP} yields
\begin{corollary}
\begin{align}
\label{eq: sup1}
& \sup_{P \neq Q} \frac{\mathscr{H}_{\alpha}(P \| Q)}{|P-Q|} = \frac1{2(1-\alpha)},
\quad \forall \, \alpha \in (0,1) \\[0.1cm]
\label{eq: sup2}
& \sup_{P \neq Q} \frac{\Delta(P \| Q)}{|P-Q|} = 1, \\[0.1cm]
\label{eq: sup3}
& \sup_{P \neq Q} \frac{\mathrm{JS}(P \| Q)}{|P-Q|} = \log 2, \\[0.1cm]
\label{eq: sup4}
& \sup_{P \neq Q} \frac{d_2^2(P,Q)}{|P-Q|} = \frac12, \\[0.1cm]
\label{eq: sup5}
& \sup_{P \neq Q} \frac{d_2^2(P,Q) + d_2^2(Q,P)}{|P-Q|} = 1
\end{align}
where the suprema in \eqref{eq: sup1}--\eqref{eq: sup4} are over all $P \ll Q$ with $P \neq Q$, and
the supremum in \eqref{eq: sup5} is over all $P \ll \gg Q$ with $P \neq Q$.
\end{corollary}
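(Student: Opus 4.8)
The plan is to obtain each of \eqref{eq: sup1}--\eqref{eq: sup5} as an immediate specialization of Theorem~\ref{theorem: strengthened BasuSP}. For every one of the five divergences I would identify the convex generating function $f$ (already normalized so that $f(1)=0$), confirm that it meets the hypotheses of the theorem, and then simply evaluate the right-hand side $\tfrac12\bigl(f(0)+f^\star(0)\bigr)$, reading $f(0)$ off \eqref{eq: f at 0} as $\lim_{t\downarrow 0}f(t)$ and $f^\star(0)$ off \eqref{eq: fstar at 0} as $\lim_{t\to\infty}f(t)/t$. No case requires the affine correction of Remark~\ref{remark: equivalence-fD}, since each tabulated $f$ already vanishes at $t=1$; I note in passing that such a correction would in any event leave the sum $f(0)+f^\star(0)$ unchanged, so nothing is lost either way.

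Carrying this out case by case: for \eqref{eq: sup1} with $f_\alpha(t)=\tfrac{t^\alpha-1}{\alpha-1}$ from \eqref{eq: H as fD} and $\alpha\in(0,1)$, the limit $t^\alpha\to 0$ gives $f_\alpha(0)=\tfrac1{1-\alpha}$, while $t^{\alpha-1}\to 0$ gives $f_\alpha^\star(0)=0$, so the bound is $\tfrac1{2(1-\alpha)}$. For \eqref{eq: sup2} with $f(t)=\tfrac{(t-1)^2}{t+1}$ from \eqref{eq:tridiv} one finds $f(0)=1$ and $f^\star(0)=\lim_{t\to\infty}\tfrac{(t-1)^2}{t(t+1)}=1$, hence the bound $1$. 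For \eqref{eq: sup3} with $f(t)=t\log t-(1+t)\log\tfrac{1+t}{2}$ from \eqref{eq:js2}, the term $t\log t$ vanishes at $0$ while $(1+t)\log\tfrac{1+t}{2}\to-\log 2$, so $f(0)=\log 2$; writing $\log\tfrac{1+t}{2}=\log(1+t)-\log 2$ and expanding as $t\to\infty$ shows $f^\star(0)=\log 2$ as well, giving the bound $\log 2$. For \eqref{eq: sup4} with $s(t)=(t-1)^2\,1\{t<1\}$ from \eqref{eq: s}, one has $s(0)=1$ and, since $s\equiv 0$ on $[1,\infty)$, $s^\star(0)=0$, so the bound is $\tfrac12$. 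Finally \eqref{eq: sup5} uses the function $f(t)=\tfrac{(t-1)^2}{\max\{1,t\}}$ of \eqref{eq: f - Samson}, for which $f(0)=1$ and $f^\star(0)=\lim_{t\to\infty}\tfrac{(t-1)^2}{t^2}=1$, yielding the bound $1$.

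The only point demanding care, and the one I expect to be the main obstacle, is the domain of the supremum in \eqref{eq: sup5}. Because $d_2^2(Q,P)=D_{s^\star}(P\|Q)$ is meaningful only when $Q\ll P$, the feasible set there is $\{P\ll\gg Q\}$ rather than the larger set $\{P\ll Q\}$ to which Theorem~\ref{theorem: strengthened BasuSP} literally applies, so the theorem does not transfer verbatim. I would resolve this in two steps. The upper bound comes for free: inequality \eqref{eq: BasuSP inequality} underlying the theorem holds for every $P\ll Q$, hence a fortiori on the subset $\{P\ll\gg Q\}$, so the supremum there cannot exceed $1$. For the matching lower bound I would reuse the extremal pairs $(P_\varepsilon,Q_\varepsilon)$ constructed in the proof of Theorem~\ref{theorem: strengthened BasuSP}: since they are supported on all of $\{0,1,2\}$ they satisfy $P_\varepsilon\ll\gg Q_\varepsilon$, already lying in the stricter feasible set, and they drive the ratio to $1$ in the limit. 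The same observation confirms achievability in \eqref{eq: sup1}--\eqref{eq: sup4} over $\{P\ll Q\}$, which would complete the argument.
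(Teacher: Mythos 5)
Your proposal is correct and follows exactly the paper's route: the corollary is stated as a direct application of Theorem~\ref{theorem: strengthened BasuSP}, and your evaluations of $\tfrac12\bigl(f(0)+f^\star(0)\bigr)$ for the five generating functions all check out. Your extra care about the domain $P \ll \gg Q$ in \eqref{eq: sup5} — upper bound inherited from the larger set, achievability via the fully supported extremal pair on $\{0,1,2\}$ — is a correct and welcome filling-in of a detail the paper leaves implicit.
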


\begin{remark}
The results in \eqref{eq: sup1}, \eqref{eq: sup2} and \eqref{eq: sup3} strengthen, respectively,
the inequalities in \cite[Proposition~2.35]{LieseV_book87}, \cite[(11)]{Topsoe_IT00} and
\cite[Theorem~2]{Topsoe_IT00}. The results in \eqref{eq: sup4} and \eqref{eq: sup5} form
counterparts of \eqref{eq: Samson - tight}.
\end{remark}

\section{Bounded Relative Information} \label{sec:bounded}

In this section we show that it is possible to find bounds among $f$-divergences
without requiring a strong condition of functional domination (see Section~\ref{sec: functional domination})
as long as the relative information is upper and/or lower bounded almost surely.

\subsection{Definition of $\beta_1$ and $ \beta_2$.}
The following notation is used throughout the rest of the paper. Given a pair of probability measures $(P,Q)$
on the same measurable space, denote $\beta_1, \beta_2 \in [0,1]$ by
\begin{align}
\label{eq: beta1}
\beta_1
&= \exp\bigl(- D_\infty (P\|Q)\bigr), \\[0.1cm]
\label{eq: beta2}
\beta_2
&= \exp\bigl(-D_\infty (Q\|P)\bigr)
\end{align}
with the convention that if $D_\infty (P\|Q) = \infty$, then $\beta_1 = 0$,
and if $D_\infty (Q\|P) = \infty$, then $\beta_2 = 0$. Note that if $\beta_1 >0$, then $P \ll Q$,
while $\beta_2 >0$ implies $Q \ll P$. Furthermore, if $P \ll \gg Q$, then with $Y \sim Q$,
\begin{align}
\label{eq: beta1-alt}
\beta_1 &= \essinf \frac{\text{d}Q}{\text{d}P} \, (Y) = \left( \esssup \frac{\text{d}P}{\text{d}Q} \, (Y) \right)^{-1},\\
\label{eq: beta2-alt}
\beta_2 &= \essinf \frac{\text{d}P}{\text{d}Q} \, (Y) = \left( \esssup \frac{\text{d}Q}{\text{d}P} \, (Y) \right)^{-1}.
\end{align}
The following examples illustrate important cases in which $\beta_1$ and
$\beta_2$ are positive.

\begin{example} (\textit{Gaussian distributions.}) \label{example: Gaussians}
Let $P$ and $Q$ be Gaussian probability measures with equal means,
and variances $\sigma_0^2$ and $\sigma_1^2$ respectively. Then,
\begin{align}
\beta_1 &= \frac{\sigma_0}{\sigma_1} 1\{ {\sigma_0} \leq {\sigma_1} \}, \\
\beta_2 &= \frac{\sigma_1}{\sigma_0} 1\{ {\sigma_1} \leq {\sigma_0} \}.
\end{align}
\end{example}

\begin{example} (\textit{Shifted Laplace distributions.}) \label{example: two Laplacians}
Let $P$ and $Q$ be the probability measures whose probability density functions
are, respectively, given by $f_{\lambda} ( \cdot - a_0 )$
and $f_{\lambda}( \cdot - a_1 )$ with
\begin{align} \label{eq: two Laplacians}
f_{\lambda} (x) =  \tfrac{\lambda}{2} \, \exp(-\lambda |x|), \quad x \in \Reals
\end{align}
 where $\lambda > 0$. In this case,
\eqref{eq: two Laplacians} gives
\begin{align}
\frac{\text{d}P}{\text{d}Q} \, (x)
 = \exp\bigl( \lambda (|x-a_1| - |x-a_0|) \bigr), \quad x \in \Reals
\end{align}
which yields
\begin{align}  \label{eq: beta - 2 Laplacians}
\beta_1 = \beta_2 = \exp\bigl(-\lambda \, |a_1-a_0| \bigr) \in (0,1].
\end{align}
\end{example}

\begin{example} (\textit{Cram\'{e}r distributions.}) \label{example: two Cramer distributions}
Suppose that $P$ and $Q$ have Cram\'{e}r probability density functions $f_{\theta_1, m_1}$ and
$f_{\theta_0, m_0}$, respectively, with
\begin{align}  \label{eq: Cramer dist.}
f_{\theta, m}(x) = \frac{\theta}{2(1+\theta |x-m|)^2}, \quad x \in \Reals
\end{align}
where $\theta > 0$ and $m \in \Reals$. In this case, we have $\beta_1, \beta_2 \in (0,1)$
since the ratio of the probability density functions, $\frac{f_{\theta_1, m_1}}{f_{\theta_0, m_0}}$,
tends to $\frac{\theta_0}{\theta_1} < \infty$ in the limit $x \to \pm \infty$.
In the special case where $m_0 = m_1 = m \in \Reals$, the ratio of these
probability density functions is $\frac{\theta_1}{\theta_0}$ at $x=m$;
due also to the symmetry of the probability density functions around $m$,
it can be verified that in this special case
\begin{align}  \label{eq: beta - 2 Cramer dist.}
\beta_1 = \beta_2 = \min \left\{\frac{\theta_0}{\theta_1}, \frac{\theta_1}{\theta_0}\right\}.
\end{align}
\end{example}

\begin{example} (\textit{Cauchy distributions.}) \label{example: two Cauchy distributions}
Suppose that $P$ and $Q$ have Cauchy probability density functions $g_{\gamma_1, m_1}$ and
$g_{\gamma_0, m_0}$, respectively, $\gamma_0 \neq \gamma_1$ and
\begin{align}  \label{eq: Cauchy dist.}
g_{\gamma, m}(x) = \frac1{\pi \gamma} \left[1 + \left(\frac{x-m}{\gamma}\right)^2\right]^{-1},
\quad x \in \Reals
\end{align}
where $\gamma > 0$. In this case, we also have $\beta_1, \beta_2 \in (0,1)$
since the ratio of the probability density functions tends to
$\frac{\gamma_0}{\gamma_1} < \infty$ in the limit $x \to \pm \infty$.
In the special case where $m_0 = m_1$,
\begin{align}  \label{eq: beta - 2 Cauchy dist.}
\beta_1 = \beta_2 = \min \left\{ \frac{\gamma_1}{\gamma_0},
\, \frac{\gamma_0}{\gamma_1} \right\}.
\end{align}
\end{example}

\subsection{Basic Tool} \label{subsec:bounds among fD}
Since $\beta_1 =1 \Leftrightarrow \beta_2 =1 \Leftrightarrow P = Q$, it
is advisable to avoid trivialities by excluding that case.
\begin{theorem} \label{thm: fD1}
Let $f$ and $g$ satisfy the assumptions in Theorem~\ref{theorem: tight bound},
and assume that $(\beta_1, \beta_2) \in [0,1)^2$. Then,
\begin{align} \label{eq:fD bound2}
D_f(P \| Q)
&\leq \kappa^\star \; D_g(P \| Q)
\end{align}
where
\begin{align}
\kappa^\star = \sup_{\beta \in (\beta_2, 1) \cup (1,\beta_1^{-1})} \kappa(\beta)
\end{align}
and $\kappa(\cdot)$ is defined in \eqref{kappadef-1}.
\end{theorem}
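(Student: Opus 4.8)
The plan is to reduce the claimed divergence inequality to a pointwise inequality between $f$ and $g$, exploiting that bounded relative information confines the essential range of the likelihood ratio to a subinterval of $(0,\infty)$ on which $\kappa$ is controlled by $\kappa^\star$. Write $Z = \exp\bigl(\imath_{P\|Q}(Y)\bigr) = \frac{\text{d}P}{\text{d}Q}(Y)$ with $Y \sim Q$ as in \eqref{eq: Z}, so that $D_f(P\|Q) = \mathbb{E}[f(Z)]$ and $D_g(P\|Q) = \mathbb{E}[g(Z)]$ by Definition~\ref{def:fD}. First I would record, from the definitions \eqref{eq: beta1}, \eqref{eq: beta2} of $\beta_1,\beta_2$ (see also \eqref{eq: beta1-alt} and \eqref{eq: beta2-alt}), that $\esssup Z = \beta_1^{-1}$ and $\essinf Z = \beta_2$, whence
\[
\beta_2 \le Z \le \beta_1^{-1} \quad Q\text{-almost surely.}
\]
Here $\beta_1 = 0$ simply means $\beta_1^{-1} = +\infty$, so no upper constraint, while $\beta_2 = 0$ (the case $Q \not\ll P$) means $Z$ may vanish on a set of positive $Q$-measure.

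The crux is the pointwise claim that $f(t) \le \kappa^\star\, g(t)$ for every $t$ in the essential range $[\beta_2,\beta_1^{-1}]$ of $Z$ (including $t=0$ when $\beta_2=0$). On the open set $(\beta_2,1)\cup(1,\beta_1^{-1})$ this is immediate: there $g(t)>0$ and $\kappa(t)=f(t)/g(t)\le\kappa^\star$ by the definition of $\kappa^\star$, so $f(t)=\kappa(t)\,g(t)\le\kappa^\star g(t)$; and at $t=1$ both sides vanish since $f(1)=g(1)=0$. The endpoints I handle by continuity: since $f$ and $g$ are convex, hence continuous on $(0,\infty)$, and $g>0$ off $t=1$, the ratio $\kappa$ is continuous on each of $(0,1)$ and $(1,\infty)$. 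For a finite left endpoint $\beta_2\in(0,1)$ I would write $\kappa(\beta_2)=\lim_{t\downarrow\beta_2}\kappa(t)\le\sup_{t\in(\beta_2,1)}\kappa(t)\le\kappa^\star$ and multiply by $g(\beta_2)>0$ to obtain $f(\beta_2)\le\kappa^\star g(\beta_2)$; the endpoint $\beta_1^{-1}\in(1,\infty)$ is symmetric. When $\beta_2=0$, the term $f(0)\,Q(Z=0)$ enters $D_f$, and passing to the limit $t\downarrow0$ in $f(t)\le\kappa^\star g(t)$ together with the continuous extensions \eqref{eq: f at 0} and \eqref{eq: fstar at 0} gives $f(0)\le\kappa^\star g(0)$ (if $g(0)=+\infty$ and $Q(Z=0)>0$ then $D_g(P\|Q)=+\infty$ and \eqref{eq:fD bound2} is trivial).

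With the pointwise bound established on the entire essential range, integrating against the law of $Z$ yields $\mathbb{E}[f(Z)]\le\kappa^\star\,\mathbb{E}[g(Z)]$, which is exactly \eqref{eq:fD bound2}. The separation of the expectation is legitimate because $g\ge0$ makes $D_g(P\|Q)\in[0,\infty]$ well defined and $D_f(P\|Q)\ge0$ by Proposition~\ref{prop: fGibbs}, so no indeterminate $\infty-\infty$ arises; the cases $D_g(P\|Q)=+\infty$ or $\kappa^\star=+\infty$ render the inequality immediate. The main obstacle I anticipate is not the core argument but this boundary bookkeeping: verifying that the supremum over the \emph{open} interval $(\beta_2,1)\cup(1,\beta_1^{-1})$ dominates the values of $\kappa$ at the closed endpoints (requiring the continuity of the convex functions and a one-sided limit), and disposing of the degenerate regime $\beta_2=0$, where $Z$ charges $0$ and $f(0),g(0)$ may be infinite. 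Everything else follows directly from the representation $D_f(P\|Q)=\mathbb{E}[f(Z)]$ and the confinement of $Z$ to $[\beta_2,\beta_1^{-1}]$.
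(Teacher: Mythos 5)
Your core argument---confining the likelihood ratio $Z=\frac{\mathrm{d}P}{\mathrm{d}Q}(Y)$, $Y\sim Q$, to its essential range $[\beta_2,\beta_1^{-1}]$, establishing the pointwise bound $f(t)\le\kappa^\star g(t)$ on that range via the continuity of $\kappa$ on $(0,1)$ and $(1,\infty)$, and then integrating---is essentially the paper's own proof, including the limiting step $f(0)\le\kappa^\star g(0)$ that takes care of the atom at $Z=0$ when $\beta_2=0$.

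There is, however, one genuine gap: the representation $D_f(P\|Q)=\mathbb{E}\bigl[f(Z)\bigr]$ presupposes $P\ll Q$. The theorem only assumes $(\beta_1,\beta_2)\in[0,1)^2$, and $\beta_1=0$ covers the case $P\not\ll Q$; there, by \eqref{dpqverygralalterf}, $D_f(P\|Q)$ acquires the additional term $f^\star(0)\,P(q=0)$ from the part of $P$ singular with respect to $Q$, which $\mathbb{E}\bigl[f(Z)\bigr]$ does not see. You treat the symmetric atom $f(0)\,Q(p=0)$ carefully but never this one. The paper bounds it by $\kappa^\star g^\star(0)\,P(q=0)$: if $\beta_1>0$ then $P(q=0)=0$, while if $\beta_1=0$ the supremum defining $\kappa^\star$ runs over all $t>1$, so dividing $f(t)\le\kappa^\star g(t)$ by $t$ and letting $t\to\infty$ yields $f^\star(0)\le\kappa^\star g^\star(0)$ by \eqref{eq: fstar at 0}---the exact mirror of your $t\downarrow 0$ argument. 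With that one extra line your proof covers the general statement; if you intended to work only under the standing assumption $P\ll Q$, it is already complete as written.
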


\begin{proof}
Defining $g(0)$ and $g^\star (0)$ as in \eqref{eq: f at 0}-\eqref{eq: fstar at 0}, respectively,
note that
\begin{align}\label{mur1}
f(0) \, Q (p=0) \leq g(0) \, \kappa^\star \, Q ( p = 0 )
\end{align}
because if $\beta_2 = 0$ then $f(0) \leq \kappa^\star\, g(0)  $, and
if  $\beta_2 >0$ then $Q (p=0) = 0$. Similarly,
\begin{align}\label{mur2}
f^\star (0) \, P (q=0) \leq g^\star (0) \, \kappa^\star \, P ( q = 0 )
\end{align}
because if $\beta_1 = 0$ then $f^\star(0) \leq \kappa^\star \, g^\star(0)  $
and if  $\beta_1 >0$ then $P (q=0) = 0$.
Moreover, since $f(1)=g(1)=0$, we can substitute $pq >0$ by $\{p q >0\} \cap \{ p \neq q\}$
in the right side of \eqref{dpqverygralalterf} for $D_f(P\|Q)$ and likewise for $D_g(P\|Q)$.
In view of the definition of $\kappa^\star$, \eqref{mur1} and \eqref{mur2},
\begin{align}
D_f (P\,\|\,Q) &\leq
\kappa^\star \int_{pq > 0 , p \neq q } q \, g \left( \frac{p}{q} \right) \,\mathrm{d} \mu \nonumber\\
&+ \kappa^\star \, g(0)  \, Q(p = 0) + \kappa^\star \, g^\star (0)\, P(q = 0) \\
&= \kappa^\star\, D_g(P \| Q).
\end{align}
\end{proof}
Note that if $\beta_1 = \beta_2 = 0$, then Theorem~\ref{thm: fD1} does not improve
upon Theorem~\ref{theorem: tight bound}\ref{theorem: tight bound: parta}).

\begin{remark} \label{remark: play}
In the application of Theorem~\ref{thm: fD1}, it is often convenient to make use of the
freedom afforded by Remark~\ref{remark: equivalence-fD} and choose the corresponding offsets
such that:
\begin{itemize}
\item the positivity property of $g$ required by Theorem~\ref{thm: fD1} is satisfied;
\item the lowest $\kappa^\star$ is obtained.
\end{itemize}
\end{remark}

\begin{remark} \label{remark: tight constants}
Similarly to the proof of Theorem~\ref{theorem: tight bound}\ref{theorem: tight bound: partb}),
under the conditions therein, one can verify that the constants in Theorem~\ref{thm: fD1}
are the best possible among all probability measures $P,Q$ with given $(\beta_1, \beta_2) \in [0,1)^2$.
\end{remark}

\begin{remark}\label{remark:reverse}
Note that if we swap the assumptions on $f$ and $g$ in Theorem~\ref{thm: fD1}, the same result translates into
\begin{align}\label{eq:fD bound1}
\inf_{\beta \in (\beta_2, 1) \cup (1,\beta_1^{-1})} \kappa(\beta) \cdot D_g(P \| Q) \leq D_f(P \| Q).
\end{align}
Furthermore, provided both $f$ and $g$ are positive (except at $t=1$) and $\kappa$ is monotonically increasing,
Theorem~\ref{thm: fD1} and \eqref{eq:fD bound1} result in
\begin{align}
\kappa(\beta_2) \, D_g(P \| Q) & \leq D_f(P \| Q) \label{eq:fD bound1_pc} \\
&\leq \kappa(\beta_1^{-1}) \, D_g(P \| Q). \label{eq:fD bound2_pc}
\end{align}
In this case, if $\beta_1 > 0$, sometimes it is convenient to replace $\beta_1 > 0$ with
$\beta_1^\prime \in (0, \beta_1)$
at the expense of loosening the bound. A similar observation applies to $\beta_2$.
\end{remark}

\begin{example} \label{example: non-monotonic}
If $f(t) = (t-1)^2$ and $g(t)= |t-1|$, we get
\begin{align} \label{eq1: chi square - TV}
\chi^2 (P\|Q) \leq \max\{ \beta_1^{-1}-1, 1 - \beta_2\}  \; |P-Q|.
\end{align}
\end{example}

\subsection{Bounds on $\frac{D(P\|Q)}{D(Q\|P)}$}
\label{subsec: Bounds on RE/dual}

The remaining part of this section is devoted to various
applications of Theorem~\ref{thm: fD1}. From this point,
we make use of the definition of
$r \colon (0, \infty) \to [0, \infty)$ in \eqref{eq: r}.

An illustrative application of Theorem~\ref{thm: fD1} gives
upper and lower bounds on the ratio of relative entropies.

\begin{theorem}  \label{thm: bounds RE and dual}
Let $P \ll \gg Q$, $P \neq Q$, and $(\beta_1, \beta_2) \in (0,1)^2$.
Let $\kappa \colon (0,1)\cup(1,\infty) \to (0, \infty)$ be defined as
\begin{align}
\label{eq: kappa RE and dual}
\kappa(t) = \frac{t \log t + (1-t) \, \log e}{(t-1) \log e - \log t}.
\end{align}
Then,
\begin{align}
\label{eq: bounds RE and dual}
\kappa(\beta_2)  \leq \frac{D(P \| Q)}{D(Q \| P)} \leq \kappa(\beta_1^{-1}).
\end{align}
\end{theorem}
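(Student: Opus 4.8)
The plan is to recognize the two relative entropies as $f$-divergences whose ratio function is exactly the $\kappa$ of \eqref{eq: kappa RE and dual}, and then to invoke the two-sided bound of Remark~\ref{remark:reverse}. Concretely, I would take $f=r$, the function in \eqref{eq: r}, so that $D_f(P\|Q)=D(P\|Q)$, and for the denominator I would take $g(t)=(t-1)\log e-\log t$. The latter arises from the representation $D(Q\|P)=D_f(P\|Q)$ with $f(t)=-\log t$ in \eqref{eq: 2nd KL divergence} by adding the affine term $(t-1)\log e$, which by Remark~\ref{remark: equivalence-fD} leaves the divergence unchanged while enforcing $g(1)=g'(1)=0$; hence $D_g(P\|Q)=D(Q\|P)$. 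Both $r$ and $g$ are strictly convex on $(0,\infty)$ (their second derivatives are $\tfrac{\log e}{t}$ and $\tfrac{\log e}{t^2}$), vanish together with their first derivatives at $t=1$, and are therefore strictly positive on $(0,1)\cup(1,\infty)$; in particular $D(Q\|P)>0$ since $P\neq Q$. With these choices, $\kappa(t)=\tfrac{f(t)}{g(t)}$ coincides with \eqref{eq: kappa RE and dual}.

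Given that the hypotheses of Theorem~\ref{thm: fD1} hold, the asserted inequalities will follow immediately from Remark~\ref{remark:reverse} once I establish that $\kappa$ is monotonically increasing on $(0,\infty)$, with the removable value $\kappa(1)=1$. Since $\beta_2<1<\beta_1^{-1}$, monotonicity makes $\sup$ and $\inf$ of $\kappa$ over $(\beta_2,1)\cup(1,\beta_1^{-1})$ equal to $\kappa(\beta_1^{-1})$ and $\kappa(\beta_2)$, respectively, so Remark~\ref{remark:reverse} gives $\kappa(\beta_2)\,D(Q\|P)\le D(P\|Q)\le\kappa(\beta_1^{-1})\,D(Q\|P)$, and dividing by $D(Q\|P)>0$ yields \eqref{eq: bounds RE and dual}.

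The main work, and the only real obstacle, is the monotonicity of $\kappa$. Because changing the base of the logarithm multiplies both $f$ and $g$ by the same positive constant, $\kappa$ is base-independent, so I may assume natural logarithms and write $\kappa(t)=\frac{t\ln t+1-t}{t-1-\ln t}$. Differentiating and clearing the positive squared denominator, a short computation reduces the sign of $\kappa'(t)$ to that of $\phi(t)=t+\tfrac1t-2-(\ln t)^2$, so it suffices to show $\phi(t)>0$ for $t\neq 1$. Writing $t+\tfrac1t-2=\bigl(\sqrt t-\tfrac1{\sqrt t}\bigr)^2$ and substituting $s=\sqrt t$, this is equivalent to $\bigl|s-\tfrac1s\bigr|\ge 2\,|\ln s|$ for all $s>0$, which I would prove by noting that $h(s)=s-\tfrac1s-2\ln s$ satisfies $h(1)=0$ and $h'(s)=\bigl(1-\tfrac1s\bigr)^2\ge 0$, whence $h(s)\ge 0$ for $s\ge 1$; the range $s<1$ then follows from the antisymmetry $h(1/s)=-h(s)$. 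This gives $\kappa'(t)>0$ on $(0,1)\cup(1,\infty)$, and combined with $\lim_{t\to1}\kappa(t)=1$ it makes $\kappa$ strictly increasing across $t=1$, completing the argument.
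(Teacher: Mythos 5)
Your proposal is correct and follows essentially the same route as the paper: the same choice $f=r$ and $g(t)=(t-1)\log e-\log t$, the same appeal to Theorem~\ref{thm: fD1} (via Remark~\ref{remark:reverse}), and the same key lemma that $\kappa$ is monotonically increasing. Your verification of the monotonicity—reducing $\kappa'>0$ to $t+\tfrac1t-2-(\ln t)^2>0$ and factoring it as a difference of squares so that it follows from $h(s)=s-\tfrac1s-2\ln s$ having derivative $(1-\tfrac1s)^2$—is an equivalent (arguably slightly cleaner) form of the paper's Appendix~\ref{appendix: properties of kappa}, which proves the same inequality after the substitution $t=e^x$.
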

\begin{proof}
For $t>0$, let
\begin{align}
\label{1g}
g(t) &= -\log t + (t-1) \log e
\end{align}
then  $ D_g(P \| Q) = D(Q \| P)$,  $D_r(P\|Q) = D(P\|Q)$ and the conditions of Theorem~\ref{thm: fD1}
are satisfied. The desired result follows from
Theorem~\ref{thm: fD1} and the monotonicity of $\kappa ( \cdot )$
shown in Appendix~\ref{appendix: properties of kappa}.
\end{proof}

\subsection{Reverse Samson's Inequality}
\label{subsec: Reverse Samson's Inequality}
The next result gives a counterpart to Samson's inequality \eqref{eq: Samson}.

\begin{theorem} \label{thm: Samson - refined}
Let  $(\beta_1, \beta_2) \in (0,1)^2$.
Then,
\begin{align}
\label{eq: Samson - refined}
\inf \frac{d_2^2(P,Q) + d_2^2(Q,P)}{ D(P\|Q)} = \min \bigl\{\kappa(\beta_1^{-1}), \, \kappa(\beta_2) \bigr\}
\end{align}
where the infimum is over all $P \ll Q$ with given $(\beta_1, \beta_2)$, and
where $\kappa\colon (0,1)\cup(1,\infty) \to \bigl(0, \tfrac{2}{\log e} \bigr)$
is given in \eqref{eq: kappa - Samson}.
\end{theorem}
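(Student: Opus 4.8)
The plan is to treat \eqref{eq: Samson - refined} as the reverse counterpart of the functional-domination argument behind Samson's inequality and to read off both directions from Theorem~\ref{thm: fD1}. First I would make exactly the identification used in the proof of Theorem~\ref{thm: Samson}: take $f$ to be the convex function in \eqref{eq: f - Samson}, namely $f(t)=\frac{(t-1)^2}{\max\{1,t\}}$, and take $g=r$ with $r$ as in \eqref{eq: r}. By \eqref{eq1: sum of squared d_2}--\eqref{eq2: sum of squared d_2} one has $D_f(P\|Q)=d_2^2(P,Q)+d_2^2(Q,P)$, while $D_g(P\|Q)=D(P\|Q)$, and the ratio $\kappa=f/g$ is precisely the function in \eqref{eq: kappa - Samson}. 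Both $f$ and $r$ are convex, vanish only at $t=1$, and are strictly positive on $(0,1)\cup(1,\infty)$, so the hypotheses needed to run Theorem~\ref{thm: fD1} with the roles of numerator and denominator interchanged are in force; moreover $(\beta_1,\beta_2)\in(0,1)^2$ forces $P\ll\gg Q$ and $P\neq Q$, so $d_2^2(Q,P)$ is well defined and $D(P\|Q)>0$.

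For the lower bound I would invoke the reverse form \eqref{eq:fD bound1} of Theorem~\ref{thm: fD1} recorded in Remark~\ref{remark:reverse}, which gives $\bigl(\inf_{\beta\in(\beta_2,1)\cup(1,\beta_1^{-1})}\kappa(\beta)\bigr)\,D(P\|Q)\le d_2^2(P,Q)+d_2^2(Q,P)$ for every admissible pair with the prescribed $(\beta_1,\beta_2)$. It then remains to evaluate the infimum. By the monotonicity of $\kappa$ established in the proof of Theorem~\ref{thm: Samson} (increasing on $(0,1)$, decreasing on $(1,\infty)$, with $\kappa(1^{-})=\kappa(1^{+})=\tfrac{2}{\log e}=\bar\kappa$, cf.\ \eqref{eq: kappa1 - Samson}), the infimum over $(\beta_2,1)$ is approached as $\beta\downarrow\beta_2$ and equals $\kappa(\beta_2)$, while the infimum over $(1,\beta_1^{-1})$ is approached as $\beta\uparrow\beta_1^{-1}$ and equals $\kappa(\beta_1^{-1})$. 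Hence the infimum over the union is $\min\{\kappa(\beta_1^{-1}),\kappa(\beta_2)\}$, which yields the ``$\ge$'' half of \eqref{eq: Samson - refined}.

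For the achievability direction I would either appeal to the counterpart of Remark~\ref{remark: tight constants} for the reverse bound—asserting that the constant in \eqref{eq:fD bound1} is best possible among all pairs with given $(\beta_1,\beta_2)$—or, to be self-contained, exhibit a sequence attaining the infimum. Let $t^\star\in\{\beta_2,\beta_1^{-1}\}$ be the minimizer of $\kappa$. On a finite alphabet I would place the bulk of the $Q$-mass on an atom where $\tfrac{\mathrm dP}{\mathrm dQ}=t^\star$, keep a vanishingly small $Q$-mass on a second atom where $\tfrac{\mathrm dP}{\mathrm dQ}$ equals the opposite extreme value (so that $\essinf$ and $\esssup$ stay pinned at $\beta_2$ and $\beta_1^{-1}$, preserving $(\beta_1,\beta_2)$), and absorb the residual mass needed to enforce $\mathbb{E}_Q[\tfrac{\mathrm dP}{\mathrm dQ}]=1$ in a ``filler'' atom whose density tends to $1$. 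Because both $f$ and $r$ vanish quadratically at $t=1$, the filler contributes a negligible amount to numerator and denominator relative to the $t^\star$-atom, so the ratio $\frac{D_f(P\|Q)}{D(P\|Q)}$ converges to $\kappa(t^\star)=\min\{\kappa(\beta_1^{-1}),\kappa(\beta_2)\}$.

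The delicate point is precisely this construction, and it is where I expect the main obstacle to lie: a naive two-atom choice using only the densities $\beta_2$ and $\beta_1^{-1}$ is forced by the constraint $\mathbb{E}_Q[\tfrac{\mathrm dP}{\mathrm dQ}]=1$ to fix the ratio of the two masses, which produces a weighted mediant of $\kappa(\beta_2)$ and $\kappa(\beta_1^{-1})$ lying strictly above their minimum. Introducing the filler atom near $1$ decouples the mean constraint from the divergence ratio—its mass can be chosen to restore $\mathbb{E}_Q[\tfrac{\mathrm dP}{\mathrm dQ}]=1$ while its contributions to $D_f$ and $D_r$ remain second order—so that the limit is the endpoint value $\kappa(t^\star)$ rather than the mediant. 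Checking that the filler atom can be taken with density in $(1,\beta_1^{-1})$ (or $(\beta_2,1)$), total mass in $[0,1]$, and $o(\cdot)$ divergence contributions is the only genuine computation, and it is routine via Taylor expansion of $f$ and $r$ about $t=1$.
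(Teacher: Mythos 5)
Your treatment of the lower bound coincides with the paper's: apply Remark~\ref{remark:reverse} (i.e., \eqref{eq:fD bound1}) to the function $f$ in \eqref{eq: f - Samson} and $g=r$, and use the monotonicity of $\kappa$ in \eqref{eq: kappa - Samson} (increasing on $(0,1)$, decreasing on $(1,\infty)$) to identify the infimum of $\kappa$ over $(\beta_2,1)\cup(1,\beta_1^{-1})$ with $\min\{\kappa(\beta_2),\kappa(\beta_1^{-1})\}$. For achievability the paper simply invokes Remark~\ref{remark: tight constants}, which is legitimate here since $f'(1)=r'(1)=0$; that is the first of your two options and is fine.

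Your self-contained construction, however, is wrong, and in a way that matters. If the \emph{bulk} of the $Q$-mass sits on an atom of density $t^\star\in\{\beta_2,\beta_1^{-1}\}$, the normalization $\mathbb{E}_Q[\mathrm{d}P/\mathrm{d}Q]=1$ leaves a deficit (or excess) of order $\Theta(1)$, roughly $|1-t^\star|$ times the bulk mass. A filler atom whose density $t_f$ tends to $1$ contributes at most $q_f\,|t_f-1|=o(1)$ to the mean, so it cannot absorb this deficit; the slack must instead be taken up by the opposite-extreme atom, whose $Q$-mass is then forced to be asymptotically proportional to that of the $t^\star$-atom. The limiting ratio becomes the mediant $\bigl(q_0 f(\beta_2)+q_1 f(\beta_1^{-1})\bigr)/\bigl(q_0 r(\beta_2)+q_1 r(\beta_1^{-1})\bigr)$ with a fixed ratio $q_1/q_0$ --- exactly the obstruction you yourself identified for the two-atom construction, which the filler does not remove. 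The correct allocation is the reverse of yours: put $Q$-mass $\varepsilon\to 0$ at density $t^\star$, $Q$-mass $o(\varepsilon)$ (e.g.\ $\varepsilon^2$) at the opposite extreme to pin $(\beta_1,\beta_2)$, and the \emph{bulk} $1-\varepsilon-o(\varepsilon)$ at the filler, whose density is then $1+O(\varepsilon)$ by normalization. Since $f$ and $r$ vanish to second order at $t=1$, the filler contributes $O(\varepsilon^2)$ to both divergences while the $t^\star$-atom contributes $\Theta(\varepsilon)$, so the ratio tends to $\kappa(t^\star)$. This is precisely the construction behind \eqref{arlington} in the proof of Theorem~\ref{theorem: tight bound}, augmented by a third atom of negligible mass, and it is what Remark~\ref{remark: tight constants} has in mind.
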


\begin{proof}
Applying Remark~\ref{remark:reverse} to the convex and positive (except at $t=1$) function
$f(t)$ given in \eqref{eq: f - Samson}, and  $g(t) =r(t)$, the lower bound on
$\frac{d_2^2(P,Q) + d_2^2(Q,P)}{ D(P\|Q)}$ in the right side of \eqref{eq: Samson - refined}
follows from the fact that \eqref{eq: kappa - Samson} is monotonically increasing on $(0,1)$,
and monotonically decreasing on $(1, \infty)$. To verify that this is the best possible lower
bound, we recall Remark~\ref{remark: tight constants} since in this case $f'(1) = g'(1)=0$.
\end{proof}

\subsection{Bounds on $\frac{D(P\|Q)}{\mathscr{H}_{\alpha}(P\|Q)}$}
\label{subsec:Bounds RE HD}

The following result bounds the ratio of relative entropy to Hellinger
divergence of an arbitrary positive order $\alpha \neq 1$.
Theorem~\ref{thm: improved HausslerO} extends and strengthens a result
for $\alpha \in (0,1)$ by Haussler and Opper \cite[Lemma~4 and (6)]{HausslerO97}
(see also \cite{ChenGZ_arXiv14}), which in turn generalizes the special case for
$\alpha = \tfrac12$  obtained simultaneously and independently by Birg\'{e}
and Massart \cite[(7.6)]{BirgeM98}.

\begin{theorem}  \label{thm: improved HausslerO}
Let $P \ll Q$, $P \neq Q$, $\alpha \in (0,1) \cup (1, \infty)$ and $(\beta_1, \beta_2) \in [0,1)^2$.
Define the continuous function on $[0, \infty]$:
\begin{align}  \label{eq: kappa RE/HD}
\kappa_{\alpha}(t) =
\left\{
\begin{array}{ll}
\log e & t = 0;\\[0.1cm]
\frac{(1-\alpha) \, r(t)}{1-t^{\alpha}+\alpha t - \alpha}& t \in (0,1) \cup (1, \infty);\\[0.1cm]
\alpha^{-1} \log e & t=1;\\
\infty &t = \infty ~\mbox{and}~ \alpha \in (0,1);\\
0 &t = \infty ~\mbox{and}~ \alpha \in (1, \infty).
\end{array}
\right.
\end{align}
Then, for $\alpha \in (0,1)$,
\begin{align}  \label{eq: s1HausslerO}
\kappa_{\alpha}(\beta_2) \leq \frac{D(P\|Q)}{\mathscr{H}_{\alpha}(P\|Q)}
\leq \kappa_{\alpha}(\beta_1^{-1})
\end{align}
and, for $\alpha \in (1, \infty)$,
\begin{align}  \label{eq: s2HausslerO}
\kappa_{\alpha}(\beta_1^{-1}) \leq \frac{D(P\|Q)}{\mathscr{H}_{\alpha}(P\|Q)}
\leq \kappa_{\alpha}(\beta_2).
\end{align}
\end{theorem}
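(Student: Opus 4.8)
The plan is to realize $D(P\|Q)/\mathscr{H}_\alpha(P\|Q)$ as a ratio $D_r(P\|Q)/D_g(P\|Q)$ of two $f$-divergences to which Theorem~\ref{thm: fD1} and its reverse form in Remark~\ref{remark:reverse} apply, and then to identify $\kappa_\alpha$ in \eqref{eq: kappa RE/HD} with the ratio $\kappa=r/g$. First I would keep $r$ from \eqref{eq: r} (so that $D_r(P\|Q)=D(P\|Q)$) and, exploiting the freedom in Remark~\ref{remark: equivalence-fD}, replace $f_\alpha$ from \eqref{eq: H as fD} by
\begin{align}
g(t) = f_\alpha(t) - \frac{\alpha}{\alpha-1}\,(t-1) = \frac{t^\alpha - \alpha t + \alpha - 1}{\alpha - 1},
\end{align}
which differs from $f_\alpha$ only by a linear term, so that $D_g(P\|Q)=\mathscr{H}_\alpha(P\|Q)$. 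Writing $h(t)=t^\alpha-\alpha t+\alpha-1$, one has $h(1)=h'(1)=0$ and $h''(t)=\alpha(\alpha-1)\,t^{\alpha-2}$, so $h$ is convex for $\alpha>1$ and concave for $\alpha\in(0,1)$; in either case $g=h/(\alpha-1)$ is convex, vanishes only at $t=1$, is strictly positive elsewhere, and satisfies $g'(1)=0$. Thus $g$ meets the hypotheses of Theorem~\ref{theorem: tight bound}, and $\kappa(t)=r(t)/g(t)=(\alpha-1)\,r(t)/h(t)$ coincides with $\kappa_\alpha(t)$ on $(0,1)\cup(1,\infty)$; the stipulated values at $t\in\{0,1,\infty\}$ are exactly the continuous extensions, which I would confirm using the expansions $r(t)\sim\tfrac12(t-1)^2\log e$ and $h(t)\sim\tfrac12\alpha(\alpha-1)(t-1)^2$ near $t=1$ together with the leading-order behaviour at $0$ and $\infty$.

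The crux is the monotonicity of $\kappa_\alpha$: increasing on $(0,\infty)$ for $\alpha\in(0,1)$ and decreasing for $\alpha\in(1,\infty)$. I would establish this by the monotone form of L'Hopital's rule rather than by grinding out $\kappa_\alpha'$ directly. Since $r(1)=h(1)=0$ and $r'(1)=h'(1)=0$, while $h'$ and $h''$ vanish nowhere on $(0,1)$ or $(1,\infty)$, the monotonicity of $r/h$ on each of these intervals (anchored at $t=1$, where both functions vanish) follows from that of $r'/h'$, which in turn follows from that of
\begin{align}
\frac{r''(t)}{h''(t)} = \frac{\log e}{\alpha(\alpha-1)}\;t^{\,1-\alpha}.
\end{align}
This last ratio is strictly decreasing on $(0,\infty)$ for every $\alpha\in(0,1)\cup(1,\infty)$, so $r/h$ is strictly decreasing on each of $(0,1)$ and $(1,\infty)$, hence (by continuity at $t=1$) on all of $(0,\infty)$. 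Multiplying by the constant $\alpha-1$ then shows that $\kappa_\alpha=(\alpha-1)\,r/h$ is strictly increasing when $\alpha<1$ and strictly decreasing when $\alpha>1$; the sign flip is precisely the factor $\alpha-1$. I expect this monotonicity step to be the main obstacle, since the naive derivative $\kappa_\alpha'$ splits into two terms of opposite sign whose competition is not resolved by inspection, whereas the iterated L'Hopital reduction collapses everything to a single monotone power function.

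Finally I would assemble the bounds. For $\alpha\in(0,1)$, $\kappa_\alpha$ is increasing and both $r$ and $g$ are positive off $t=1$, so the two-sided conclusion of Remark~\ref{remark:reverse} applies verbatim with this $f=r$ and $g$, yielding $\kappa_\alpha(\beta_2)\le D(P\|Q)/\mathscr{H}_\alpha(P\|Q)\le\kappa_\alpha(\beta_1^{-1})$, which is \eqref{eq: s1HausslerO}. For $\alpha\in(1,\infty)$, $\kappa_\alpha$ is decreasing, so I would invoke Theorem~\ref{thm: fD1} for the upper bound and \eqref{eq:fD bound1} for the lower bound separately: over $(\beta_2,1)\cup(1,\beta_1^{-1})$ the supremum of $\kappa_\alpha$ is attained (in the limit) at the left end $\beta_2$ and the infimum at the right end $\beta_1^{-1}$, giving $\kappa_\alpha(\beta_1^{-1})\le D(P\|Q)/\mathscr{H}_\alpha(P\|Q)\le\kappa_\alpha(\beta_2)$, which is \eqref{eq: s2HausslerO}. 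The degenerate cases $\beta_1=0$ (so $\beta_1^{-1}=\infty$) and $\beta_2=0$ are handled automatically because $\kappa_\alpha$ was defined as a continuous function on $[0,\infty]$, with $\kappa_\alpha(0)=\log e$ and $\kappa_\alpha(\infty)\in\{0,\infty\}$ supplying the corresponding (possibly trivial) endpoint values.
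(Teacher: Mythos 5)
Your proposal is correct, and its overall skeleton coincides with the paper's: the same shifted function $g_\alpha(t)=\frac{t^\alpha-\alpha t+\alpha-1}{\alpha-1}$ obtained from $f_\alpha$ via Remark~\ref{remark: equivalence-fD}, the same verification that the pair $(r,g_\alpha)$ satisfies the hypotheses of Theorem~\ref{thm: fD1} (with Remark~\ref{remark:reverse} supplying the two-sided bound), and the same reduction of \eqref{eq: s1HausslerO}--\eqref{eq: s2HausslerO} to the monotonicity of $\kappa_\alpha$ on $[0,\infty]$. Where you genuinely depart from the paper is in how that monotonicity is established. The paper devotes Appendix~\ref{appendix: monotonicity of kappa_alpha} to a direct computation: it differentiates $r/g_\alpha$, reduces to an inequality $\phi_\alpha(t)>\alpha/(\alpha-1)$ for an auxiliary function $\phi_\alpha$, decomposes $\phi_\alpha(t)=\psi(t)-\tfrac{1}{1-\alpha}\,\psi(t^{1-\alpha})$ with $\psi(t)=\frac{\log_e t}{t-1}$, proves $\psi'(z)+z\,\psi''(z)>0$ via a power-series and sign analysis, and finishes with a limiting argument in $\alpha$. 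You instead apply the monotone form of L'H\^{o}pital's rule twice, anchored at $t=1$ where $r,h$ and $r',h'$ all vanish (with $h(t)=t^\alpha-\alpha t+\alpha-1$), collapsing the entire question to the single observation that $r''(t)/h''(t)=\frac{\log e}{\alpha(\alpha-1)}\,t^{1-\alpha}$ is strictly decreasing on $(0,\infty)$ for every admissible $\alpha$, because the sign of the prefactor and the sign of the exponent $1-\alpha$ flip together. The hypotheses of the monotone rule do hold in both applications ($h''\neq 0$ on $(0,\infty)$, and $h'\neq 0$ on each of $(0,1)$ and $(1,\infty)$), and the gluing at $t=1$ works since both one-sided limits equal $r''(1)/h''(1)=\frac{\log e}{\alpha(\alpha-1)}$, making $\kappa_\alpha(1)=\alpha^{-1}\log e$ as required. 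This is a substantially shorter and more transparent route to the key lemma; its only cost is reliance on the monotone L'H\^{o}pital rule, which is standard but not proved in the paper, so a self-contained write-up should include its short Cauchy-mean-value proof and note explicitly that the rule applies whether the functions vanish at the left or the right endpoint, since you need the right-endpoint version on $(0,1)$ and the left-endpoint version on $(1,\infty)$.
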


\begin{proof}

$D_r(P\|Q) = D(P \| Q)$, and $ D_{g_\alpha}(P\|Q) = \mathscr{H}_{\alpha}(P\|Q)$ with
\begin{align} \label{eq: s3HausslerO}
g_{\alpha}(t) = \frac{1-t^{\alpha}+\alpha t - \alpha}{1-\alpha}, \quad t \in (0, \infty)
\end{align}
in view of Remark~\ref{remark: equivalence-fD}, \eqref{eq: r-divergence}
and \eqref{eq: Hel-divergence}.
Since $g'_{\alpha}(t) = \frac{\alpha (1-t^{\alpha-1})}{1-\alpha}$
 $g_{\alpha}$ is monotonically
decreasing on $(0,1]$ and monotonically increasing on $[1, \infty)$;
hence, $g_{\alpha}(1)=0$ implies that $g_{\alpha}$ is positive except
at $t=1$. The convexity conditions required by Theorem~\ref{thm: fD1}
are also easy to check for both $r(\cdot)$ and $g_{\alpha} (\cdot)$.
The function in \eqref{eq: kappa RE/HD}
is the continuous extension of $\frac{r}{g_{\alpha}}$, which as
shown in Appendix~\ref{appendix: monotonicity of kappa_alpha},
is monotonically increasing on $[0, \infty]$ if $\alpha \in (0,1)$,
and it is monotonically decreasing on $[0, \infty]$ if $\alpha \in (1,\infty)$.
Therefore, Theorem~\ref{thm: fD1} results in \eqref{eq: s1HausslerO} and
\eqref{eq: s2HausslerO}  for $\alpha \in (0,1)$ and $\alpha \in (1, \infty)$, respectively.
\end{proof}

\begin{remark}
Theorem~\ref{thm: improved HausslerO} is  of particular interest for $\alpha = \frac12$.
In this case since, from \eqref{eq: kappa RE/HD},
$\kappa_{\frac12} \colon [0, \infty] \to [0, \infty]$ is monotonically increasing
with $\kappa_{\frac12}(0) = \log e$, the left inequality in \eqref{eq: s2HausslerO} yields
\eqref{eq: HoeffdingW58}.

For large arguments, $\kappa_{\frac12}(\cdot)$ grows logarithmically. For example, if
$\beta_1 > 9.56 \cdot 10^{-9}$, it follows from \eqref{eq: s1HausslerO} that
\begin{align}
D (P \| Q) \leq \left(14 + \frac{2 \log_e 2}{(1-e^{-1})^2}\right) \, \mathscr{H}_{\frac12} (P \| Q) \; \; \mbox{nats}
\end{align}
which is strictly smaller than the upper bound on the relative entropy in \cite[Theorem~5]{wongshen95},
given not in terms of $\beta_1$ but in terms of another more cumbersome quantity that controls the mass
that $\frac{\text{d}P}{\text{d}Q}$ may have at large values.
\end{remark}

As mentioned in Section~\ref{section:fD}, $\chi^2(P\|Q)$ is equal to the Hellinger divergence
of order~2. Specializing Theorem~\ref{thm: improved HausslerO} to the case $\alpha = 2$ results in
\begin{align} \label{eq: RE and chi-square}
\kappa_2(\beta_1^{-1}) \leq \frac{D(P \| Q)}{\chi^2(P \| Q)} \leq \kappa_2(\beta_2),
\end{align}
which improves the upper and lower bounds in \cite[Proposition~2]{Dragomir00a}:
\begin{align} \label{eq: dragomir}
\tfrac12 \, \beta_1 \, \log e \leq \frac{D(P \| Q)}{\chi^2(P \| Q)} \leq \tfrac12 \, \beta_2^{-1} \, \log e.
\end{align}
For example, if $\beta_1 = \beta_2 = \tfrac1{100}$, \eqref{eq: RE and chi-square} gives
a possible range $[0.037, 0.9631]$ nats for the ratio of relative entropy to $\chi^2$-divergence,
while \eqref{eq: dragomir} gives a range of $[0.005, 50]$ nats.
Note that if $\beta_2=0$, then the upper bound in \eqref{eq: RE and chi-square} is
$\kappa_2(0) = \log e$ whereas it is $\infty$ according to \cite[Proposition~2]{Dragomir00a}.
In view of Remark~\ref{remark: tight constants}, the bounds in \eqref{eq: RE and chi-square}
are the best possible among all probability measures $P, Q$ with given $(\beta_1, \beta_2) \in [0,1)^2$.

\subsection{Bounds on $\frac{\mathrm{JS}(P\|Q)}{\Delta(P\|Q)}$,
$\frac{\mathscr{H}_{\alpha}(P\|Q)}{\mathrm{JS}(P\|Q)}$,
$\frac{\Delta(P\|Q)}{|P-Q|}$}
\label{subsec:C/Delta}
Let $P \ll Q$ and $P \neq Q$. \cite[Theorem~2]{Topsoe_IT00}
and \cite[(11)]{Topsoe_IT00} show, respectively,
\begin{align} \label{eq: Topsoe_IT00}
\tfrac12 \log e & \leq \frac{\mathrm{JS}(P\|Q)}{\Delta(P\|Q)} \leq \log 2, \\
\label{eq: LL2}
\tfrac12 \, |P-Q|^2 & \leq \Delta(P\|Q) \leq |P-Q|.
\end{align}

The following result suggests a refinement in the upper bounds of
\eqref{eq: Topsoe_IT00} and \eqref{eq: LL2}.
\begin{theorem} \label{thm: superTopsoeLL}
Let $P \ll Q$, $P \neq Q$, and let $\beta_1, \beta_2 \in [0,1)$. Then,
\begin{align}  \label{eq1: superTopsoe}
\tfrac12 \log e \leq \frac{\mathrm{JS}(P\|Q)}{\Delta(P\|Q)} & \leq
\max \bigl\{\kappa_1(\beta_1^{-1}), \, \kappa_1(\beta_2) \bigr\}, \\[0.1cm]
\label{eq1: superLL2}
\frac{\Delta(P\|Q)}{|P-Q|} & \leq
\max \bigl\{\kappa_2(\beta_1^{-1}), \, \kappa_2(\beta_2) \bigr\}
\end{align}
with $\kappa_1 \colon [0, \infty] \to [0, \log 2]$ and
$\kappa_2 \colon [0, \infty] \to [0, 1]$ defined by
\begin{align}  \label{eq2: superTopsoe}
\kappa_1(t) =
\left\{
\begin{array}{ll}
\log 2 & t = 0;\\[0.1cm]
\frac{t+1}{(t-1)^2} \, \left[t \log t
- (t+1) \log\left(\frac{t+1}{2}\right) \right]
& t \in (0,1) \cup (1, \infty);\\[0.1cm]
\tfrac12 \log e & t=1;\\
\log 2 &t = \infty;
\end{array}
\right.
\end{align}
and
\begin{align}
\label{eq2: superLL2}
\hspace*{-5.4cm}
\kappa_2(t) =
\left\{
\begin{array}{ll}
\frac{|t-1|}{t+1}
& \hspace*{0.5cm} t \in [0,\infty);\\[0.1cm]
1 & \hspace*{0.5cm} t = \infty.
\end{array}
\right.
\end{align}
\end{theorem}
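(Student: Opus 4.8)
The plan is to realize both ratios as instances of Theorem~\ref{thm: fD1} by an appropriate choice of the pair $(f,g)$ of convex generators, and then to reduce the supremum $\kappa^\star$ appearing there to a maximum over the two endpoints $\beta_1^{-1}$ and $\beta_2$ by establishing that the relevant ratio $\kappa(t)=f(t)/g(t)$ is U-shaped with its minimum at $t=1$. For the bound on $\mathrm{JS}/\Delta$ I would take $f$ to be the Jensen--Shannon generator in \eqref{eq:js2} and $g=f_\Delta$ the triangular-discrimination generator in \eqref{eq:tridiv}; for the bound on $\Delta/|P-Q|$ I would take $f=f_\Delta$ and $g$ the total-variation generator $|t-1|$ in \eqref{eq: TV as fD}. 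In each case $f$ and $g$ are convex, vanish at $t=1$, and $g$ is strictly positive on $(0,1)\cup(1,\infty)$, so the hypotheses of Theorem~\ref{thm: fD1} are met.

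The second step is a direct computation of $\kappa=f/g$. For the first pair this gives
\[
\frac{f(t)}{f_\Delta(t)}=\frac{t+1}{(t-1)^2}\,\Bigl[t\log t-(t+1)\log\tfrac{t+1}{2}\Bigr],
\]
which is exactly the expression for $\kappa_1(t)$ on $(0,1)\cup(1,\infty)$ in \eqref{eq2: superTopsoe}; the values at $t=0,1,\infty$ in \eqref{eq2: superTopsoe} are its continuous extensions, checked by L'H\^opital's rule at $t=1$ and by taking limits at the endpoints. For the second pair $\kappa=\tfrac{(t-1)^2}{t+1}\big/|t-1|=\tfrac{|t-1|}{t+1}$, matching $\kappa_2(t)$ in \eqref{eq2: superLL2}, with the stated continuous value $1$ at $t=\infty$.

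The crux is monotonicity. For $\kappa_2$ this is immediate: on $(0,1)$ one has $\kappa_2(t)=\tfrac{1-t}{1+t}$ with derivative $-2/(1+t)^2<0$, and on $(1,\infty)$ one has $\kappa_2(t)=\tfrac{t-1}{t+1}$ with derivative $2/(t+1)^2>0$, so $\kappa_2$ decreases on $(0,1)$ and increases on $(1,\infty)$, with minimum $0$ at $t=1$. The analogous statement for $\kappa_1$ --- decreasing on $(0,1)$, increasing on $(1,\infty)$, with minimum $\kappa_1(1)=\tfrac12\log e$ --- is the main obstacle, since differentiating the logarithmic expression yields a messier sign analysis. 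I expect to carry this out by computing $\kappa_1'(t)$, clearing positive factors, and reducing the sign of $\kappa_1'$ to that of an auxiliary function that can be shown (via its own derivative and its value at $t=1$) to be negative on $(0,1)$ and positive on $(1,\infty)$; the details, being routine but lengthy, would be relegated to an appendix as is done for the other monotonicity lemmas in the paper. Granted this U-shape, each $\kappa_i$ is continuous and attains its supremum over $(\beta_2,1)\cup(1,\beta_1^{-1})$ by approaching one of the two endpoints, so $\kappa^\star=\max\{\kappa_i(\beta_1^{-1}),\kappa_i(\beta_2)\}$; the cases $\beta_2=0$ or $\beta_1=0$ are covered by the continuous values $\kappa_i(0)$ and $\kappa_i(\infty)$ already recorded in \eqref{eq2: superTopsoe}--\eqref{eq2: superLL2}. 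Applying Theorem~\ref{thm: fD1} then yields the upper bounds in \eqref{eq1: superTopsoe} and \eqref{eq1: superLL2}.

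Finally, the lower bound $\tfrac12\log e\le \mathrm{JS}/\Delta$ in \eqref{eq1: superTopsoe} follows from the same U-shape without any boundedness hypothesis: since $\kappa_1(t)\ge \kappa_1(1)=\tfrac12\log e$ for every $t>0$, we have $f(t)\ge \tfrac12\log e\cdot f_\Delta(t)$ pointwise, and integrating this against $Q$ as in the elementary functional-domination bound \eqref{000} gives $\mathrm{JS}(P\|Q)\ge \tfrac12\log e\,\Delta(P\|Q)$, recovering the left inequality of \eqref{eq: Topsoe_IT00}.
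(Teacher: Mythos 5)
Your proposal is correct and follows essentially the same route as the paper: both realize the two ratios via Theorem~\ref{thm: fD1} with the generator pairs $(f_{\mathrm{JS}},f_\Delta)$ and $(f_\Delta,|t-1|)$, establish that each $\kappa_i$ is decreasing on $(0,1)$ and increasing on $(1,\infty)$ with minimum at $t=1$, and conclude that the supremum over $(\beta_2,1)\cup(1,\beta_1^{-1})$ is attained at an endpoint, the lower bound $\tfrac12\log e$ coming from $\kappa_1(t)\ge\kappa_1(1)$. The only substantive difference is that the paper first observes the symmetry $\kappa_i(t)=\kappa_i(1/t)$, which halves the monotonicity verification for $\kappa_1$ that you (like the paper) otherwise leave as a routine but unwritten computation.
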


\begin{proof}
Let $f_{\mathrm{TV}}$, $f_\Delta, f_{\mathrm{JS}}$ denote the
functions $f \colon (0, \infty) \to \Reals$ in \eqref{eq: TV as fD},
\eqref{eq:tridiv} and \eqref{eq:js2}, respectively;
these functions yield $|P-Q|$, $\Delta(P\|Q)$ and
$\mathrm{JS}(P\|Q)$ as $f$-divergences. The functions $\kappa_1$
and $\kappa_2$, as introduced in \eqref{eq2: superTopsoe} and \eqref{eq2: superLL2},
respectively, are the continuous extensions to $[0, \infty]$ of
\begin{align}  \label{eq3: superTopsoe}
& \kappa_1(t) = \frac{f_{\mathrm{JS}}(t) +
(t-1) \; \log 2}{f_{\Delta}(t)}, \\
\label{eq3: superLL2}
& \kappa_2(t) = \frac{f_\Delta(t)}{f_{\mathrm{TV}}(t)}.
\end{align}
It can be verified by \eqref{eq2: superTopsoe} and
\eqref{eq2: superLL2} that $\kappa_i(t) = \kappa_i\left(\frac1{t}\right)$
for $t \in [0, \infty]$ and $i \in \{1, 2\}$; furthermore, $\kappa_1$
and $\kappa_2$ are both monotonically
decreasing on $[0,1]$, and monotonically increasing on
$[1, \infty]$. Consequently, if $\beta \in [\beta_2, \beta_1^{-1}]$
and $i \in \{1, 2\}$,
\begin{align}  \label{eq3b: superTopsoe-LL}
& \kappa_i(1) \leq \kappa_i(\beta) \leq \max\bigl\{\kappa_i(\beta_1^{-1}),
\kappa_i(\beta_2) \bigr\}.
\end{align}
In view of Theorem~\ref{thm: fD1} and Remark~\ref{remark: equivalence-fD},
\eqref{eq1: superTopsoe} follows from
\eqref{eq3: superTopsoe} and \eqref{eq3b: superTopsoe-LL};
\eqref{eq1: superLL2} follows from \eqref{eq3: superLL2} and
\eqref{eq3b: superTopsoe-LL}.
\end{proof}

If $\beta_1=0$, referring to an unbounded relative information
$\imath_{P\|Q}$, the right inequalities in \eqref{eq: Topsoe_IT00}, \eqref{eq: LL2}
and \eqref{eq1: superTopsoe}, \eqref{eq1: superLL2} respectively coincide (due to
\eqref{eq2: superTopsoe}, \eqref{eq2: superLL2}, and since
$\kappa_2(0)=1$); otherwise, the right inequalities in \eqref{eq1: superTopsoe},
\eqref{eq1: superLL2} provide, respectively, sharper upper bounds than those in
\eqref{eq: Topsoe_IT00}, \eqref{eq: LL2}.

\begin{example}  \label{example: superTopsoe}
Suppose that $P \ll Q$, $P \neq Q$ and $\tfrac1{2} \leq \frac{\text{d}P}{\text{d}Q} \, (a) \leq 2$
for all $a \in \set{A}$ (e.g., $P = \bigl(\tfrac12, \tfrac12 \bigr)$ and $Q = \bigl(\tfrac14, \tfrac34\bigr)$);
substituting $\beta_1 = \beta_2 = \frac1{2}$ in
\eqref{eq1: superTopsoe}--\eqref{eq2: superLL2} gives
\begin{align} \label{eq4: superTopsoe}
\tfrac12 \log e \leq  \frac{\mathrm{JS}(P\|Q)}{\Delta(P\|Q)} & \leq 0.510 \, \log e \\[0.1cm]
\label{eq4: superLL2}
\frac{\Delta(P\|Q)}{|P-Q|} & \leq \tfrac13
\end{align}
improving the upper bounds on $\frac{\mathrm{JS}(P\|Q)}{\Delta(P\|Q)}$
and $\frac{\Delta(P\|Q)}{|P-Q|}$ which are $\log 2 \approx 0.693 \, \log e$ and~1,
respectively, according to the right inequalities in \eqref{eq: Topsoe_IT00},
\eqref{eq: LL2}.
\end{example}

For finite alphabets, \cite[Theorem~7]{LL2015} shows
\begin{align} \label{eq: LL1}
& \log 2 \leq \frac{\mathrm{JS}(P\|Q)}{\mathscr{H}_{\frac12}(P\|Q)} \leq \log e.
\end{align}
The following theorem extends the validity of \eqref{eq: LL1} for Hellinger
divergences of an arbitrary order $\alpha \in (0, \infty)$ and for a general alphabet,
while also providing a refinement of both upper and lower bounds in \eqref{eq: LL1}
when the relative information $\imath_{P\|Q}$ is bounded.
\begin{theorem} \label{thm: supperLL}
Let $P \ll Q$, $P \neq Q$, $\alpha \in (0,1) \cup (1,\infty)$, and let
$\beta_1, \beta_2 \in [0,1]$ be given in \eqref{eq: beta1}-\eqref{eq: beta2}. Let
$\kappa_{\alpha}\colon [0,\infty] \to [0, \infty)$ be given by
\begin{align}  \label{eq1: kappaJSHel}
\kappa_{\alpha}(t) =
\left\{
\begin{array}{ll}
\log 2 & t = 0;\\[0.1cm]
\frac{(\alpha-1) \Bigl[t \log t - (t+1)
\log\left(\frac{t+1}{2}\right) \Bigr]}{t^{\alpha}+\alpha-1-\alpha t }
& t \in (0,1) \cup (1, \infty);\\[0.1cm]
\frac{\log e}{2\alpha} & t=1;\\
\left( \frac1{\alpha}-1 \right)^+ \, \log 2 &t = \infty.
\end{array}
\right.
\end{align}
Then, if $\alpha \in (0,1)$,
\begin{align} \label{eq1a: superLL1}
\min \bigl\{\kappa_{\alpha}(\beta_1^{-1}), \kappa_{\alpha}(\beta_2) \bigr\} \leq
\frac{\mathrm{JS}(P\|Q)}{\mathscr{H}_{\alpha}(P\|Q)} \leq
\max_{\beta \in [\beta_2, \beta_1^{-1}]} \kappa_{\alpha}(\beta)
\end{align}
and, if $\alpha \in (1, \infty)$,
\begin{align} \label{eq1b: superLL1}
\kappa_{\alpha}(\beta_1^{-1}) \leq \frac{\mathrm{JS}(P\|Q)}{\mathscr{H}_{\alpha}(P\|Q)} \leq \kappa_{\alpha}(\beta_2).
\end{align}
\end{theorem}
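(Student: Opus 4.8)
The plan is to deploy the bounded–relative–information machinery of Theorem~\ref{thm: fD1} together with its reverse form in Remark~\ref{remark:reverse}, exactly as was done for the relative-entropy-to-Hellinger ratio in Theorem~\ref{thm: improved HausslerO}. First I would take $f$ to be the Jensen--Shannon generator $f_{\mathrm{JS}}$ of \eqref{eq:js2}, for which $D_{f_{\mathrm{JS}}}(P\|Q)=\mathrm{JS}(P\|Q)$, and $g$ to be the offset Hellinger generator
\begin{align}
g_{\alpha}(t)=\frac{1-t^{\alpha}+\alpha t-\alpha}{1-\alpha},
\end{align}
which, by Remark~\ref{remark: equivalence-fD}, satisfies $D_{g_{\alpha}}(P\|Q)=\mathscr{H}_{\alpha}(P\|Q)$ and, as already noted in the proof of Theorem~\ref{thm: improved HausslerO}, is convex and strictly positive for $t\neq1$, with $g_{\alpha}(1)=g_{\alpha}'(1)=0$. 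Both $f_{\mathrm{JS}}$ and $g_{\alpha}$ are convex, vanish only at $t=1$, and are strictly positive for $t\neq1$, so the hypotheses of Theorem~\ref{thm: fD1} hold for \emph{either} assignment of the roles $(f,g)$. Clearing the common factor $(\alpha-1)$ shows that the ratio $f_{\mathrm{JS}}(t)/g_{\alpha}(t)$ coincides with the function $\kappa_{\alpha}$ of \eqref{eq1: kappaJSHel} on $(0,1)\cup(1,\infty)$.

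Applying Theorem~\ref{thm: fD1} for the upper bound and Remark~\ref{remark:reverse} for the lower bound, and recalling that $P\neq Q$ forces $(\beta_1,\beta_2)\in[0,1)^2$ so that $\beta_2<1<\beta_1^{-1}$, I obtain the two-sided sandwich
\begin{align}
\inf_{\beta\in[\beta_2,\,\beta_1^{-1}]}\kappa_{\alpha}(\beta)\leq \frac{\mathrm{JS}(P\|Q)}{\mathscr{H}_{\alpha}(P\|Q)}\leq \sup_{\beta\in[\beta_2,\,\beta_1^{-1}]}\kappa_{\alpha}(\beta),
\end{align}
where the passage from the punctured set $(\beta_2,1)\cup(1,\beta_1^{-1})$ to the closed interval $[\beta_2,\beta_1^{-1}]$ uses continuity of the extension $\kappa_{\alpha}$ on $[0,\infty]$. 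Here I would record the three boundary evaluations that fix this extension: $\kappa_{\alpha}(0)=\log2$ from $t\log t\to0$ and $g_{\alpha}(0)=1$; the indeterminate value $\kappa_{\alpha}(1)=\tfrac{1}{2\alpha}\log e$ obtained from the second-order Taylor coefficients $f_{\mathrm{JS}}''(1)=\tfrac12\log e$ and $g_{\alpha}''(1)=\alpha$; and the growth comparison $\kappa_{\alpha}(\infty)=\bigl(\tfrac1\alpha-1\bigr)^{+}\log2$, since $f_{\mathrm{JS}}(t)\sim t\log2$ while $g_{\alpha}$ grows like $\tfrac{\alpha}{1-\alpha}\,t$ for $\alpha\in(0,1)$ and like $\tfrac{1}{\alpha-1}\,t^{\alpha}$ for $\alpha>1$.

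What remains, and what I expect to be the crux, is to determine the shape of $\kappa_{\alpha}$ so as to locate the two extrema over $[\beta_2,\beta_1^{-1}]$. For $\alpha\in(1,\infty)$ I would prove that $\kappa_{\alpha}$ is monotonically decreasing on all of $[0,\infty]$; then the infimum and supremum above are attained at the right and left endpoints, respectively, which yields \eqref{eq1b: superLL1}. For $\alpha\in(0,1)$ the qualitative picture genuinely depends on $\alpha$ --- comparing the boundary values shows, for instance, that $\kappa_{\alpha}(\infty)>\kappa_{\alpha}(1)$ precisely when $2(1-\alpha)\log_e2>1$ --- so no single endpoint always carries the maximum. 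What I would instead establish is that $\kappa_{\alpha}$ is \emph{quasiconcave} (unimodal, with no interior local minimum); this forces the minimum over any subinterval to lie at an endpoint, giving the lower bound $\min\{\kappa_{\alpha}(\beta_1^{-1}),\kappa_{\alpha}(\beta_2)\}$ while leaving the maximum as the stated $\max_{\beta\in[\beta_2,\beta_1^{-1}]}\kappa_{\alpha}(\beta)$, which is \eqref{eq1a: superLL1}.

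The monotonicity and quasiconcavity claims are the technical heart of the proof. I would reduce them, via the monotone form of L'H\^opital's rule applied separately on $(0,1)$ and on $(1,\infty)$ to $f_{\mathrm{JS}}(t)/g_{\alpha}(t)=\bigl(f_{\mathrm{JS}}(t)-f_{\mathrm{JS}}(1)\bigr)/\bigl(g_{\alpha}(t)-g_{\alpha}(1)\bigr)$, to the monotonicity of the simpler ratio
\begin{align}
\frac{f_{\mathrm{JS}}'(t)}{g_{\alpha}'(t)}=\frac{(1-\alpha)\,\log\frac{2t}{t+1}}{\alpha\,(1-t^{\alpha-1})},
\end{align}
and then control the sign of its derivative. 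This sign analysis, which I would naturally defer to an appendix as is done for $\kappa_\alpha$ in Theorem~\ref{thm: improved HausslerO}, is where the real work lies; everything preceding it is the routine translation between $f$-divergences and generators plus the endpoint computations above.
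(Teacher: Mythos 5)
Your proposal is correct and follows essentially the same route as the paper: the same generators $f_{\mathrm{JS}}$ and the offset Hellinger generator $g_\alpha$, the same appeal to Theorem~\ref{thm: fD1} (together with its reverse form in Remark~\ref{remark:reverse}), and the same two structural claims about $\kappa_\alpha$ --- monotone decreasing on $[0,\infty]$ for $\alpha\in(1,\infty)$ and unimodal for $\alpha\in(0,1)$ --- which the paper likewise asserts without a detailed verification. Your endpoint evaluations of $\kappa_\alpha$ at $0$, $1$ and $\infty$ all check out, and your identification of $\kappa_\alpha$ on $(0,1)\cup(1,\infty)$ as the ratio $f_{\mathrm{JS}}/g_\alpha$ is consistent with \eqref{eq1: kappaJSHel}.
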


\begin{proof}
Let $f_\alpha$ and $f_{\mathrm{JS}}$ denote, respectively, the functions
$f \colon (0, \infty) \to \Reals$ in \eqref{eq: H as fD} and \eqref{eq:js2}
which yield the Hellinger divergence, $\mathscr{H}_{\alpha}$,
and the Jensen-Shannon divergence, $\mathrm{JS}(P\|Q)$, as $f$-divergences.
From \eqref{eq1: kappaJSHel},
$\kappa_{\alpha}$ is the continuous extension to $[0, \infty]$ of
\begin{align} \label{eq2: kappaJSHel}
\kappa_\alpha(t) = \frac{f_{\mathrm{JS}}(t) + (t-1) \, \log 2}{f_{\alpha}(t)
+ \bigl(\frac{\alpha}{1-\alpha}\bigr) (t-1)}.
\end{align}
As shown in the proof of Theorem~\ref{thm: improved HausslerO}, for every
$\alpha \in (0,1) \cup (1, \infty)$ and $t \in (0,1) \cup (1, \infty)$, we have
$g_{\alpha}(t) \triangleq f_{\alpha}(t) + \left(\frac{\alpha}{1-\alpha}\right) (t-1) > 0$.
It can be verified that $\kappa_{\alpha} \colon [0, \infty] \to \Reals$ has the
following monotonicity properties:
\begin{itemize}
\item if $\alpha \in (0,1)$, there exists $t_{\alpha} > 0$ such that $\kappa_\alpha$
is monotonically increasing on $[0,t_{\alpha}]$,
and it is monotonically decreasing on $[t_{\alpha}, \infty]$;
\item if $\alpha \in (1, \infty)$, $\kappa_{\alpha}$ is monotonically decreasing
on $[0, \infty]$.
\end{itemize}
Based on these properties of $\kappa_\alpha$, for $\beta \in [\beta_2, \beta_1^{-1}]$:
\begin{itemize}
\item if $\alpha \in (0,1)$
\begin{align} \label{eq3: kappaJSHel}
\min \bigl\{\kappa_{\alpha}(\beta_1^{-1}), \kappa_{\alpha}(\beta_2) \bigr\} \leq
\kappa_{\alpha}(\beta) \leq \max_{t \in [\beta_2, \beta_1^{-1}]} \kappa_{\alpha}(t);
\end{align}
\item if $\alpha \in (1,\infty)$,
\begin{align} \label{eq4: kappaJSHel}
\kappa_{\alpha}(\beta_1^{-1}) \leq \kappa_{\alpha}(\beta) \leq \kappa_{\alpha}(\beta_2).
\end{align}
\end{itemize}
In view of Theorem~\ref{thm: fD1}, Remark~\ref{remark: equivalence-fD} and
\eqref{eq2: kappaJSHel}, the bounds in \eqref{eq1a: superLL1} and \eqref{eq1b: superLL1}
follow respectively from \eqref{eq3: kappaJSHel} and \eqref{eq4: kappaJSHel}.
\end{proof}

\begin{example}
Specializing Theorem~\ref{thm: supperLL} to $\alpha=\tfrac12$, since
$\kappa_{\frac12}(t) = \kappa_{\frac12}\bigl(\frac1{t}\bigr)$ for $t \in (0, \infty)$
and $\kappa_{\frac12}$ achieves its global maximum at $t=1$ with $\kappa_{\frac12}(1)=\log e$,
\eqref{eq1a: superLL1} implies that
\begin{align} \label{eq1: superLL12}
\log 2 \leq \kappa_{\frac12}\Bigl(\min \{\beta_1^{-1}, \beta_2 \} \Bigr) \leq
\frac{\mathrm{JS}(P\|Q)}{\mathscr{H}_{\frac12}(P\|Q)} \leq \log e.
\end{align}
Under the assumptions in Example~\ref{example: superTopsoe}, it follows
from \eqref{eq1: superLL12} that
\begin{align}  \label{eq2: superLL12}
0.990 \, \log e \leq \frac{\mathrm{JS}(P\|Q)}{\mathscr{H}_{\frac12}(P\|Q)} \leq \log e
\end{align}
which improves the lower bound $\log 2 \approx 0.693 \, \log e$ in the left side of \eqref{eq: LL1}.

Specializing Theorem~\ref{thm: supperLL} to $\alpha=2$ implies that (cf. \eqref{eq1: kappaJSHel}
and \eqref{eq1b: superLL1})
\begin{align} \label{eq1: superLLchi2}
\kappa_2(\beta_1^{-1}) \leq \frac{\mathrm{JS}(P\|Q)}{\chi^2(P\|Q)} \leq \kappa_2(\beta_2).
\end{align}
Under the assumptions in Example~\ref{example: superTopsoe}, it follows
from \eqref{eq1: superLLchi2} that
\begin{align} \label{eq2: superLLchi2}
0.170 \, \log e \leq \frac{\mathrm{JS}(P\|Q)}{\chi^2(P\|Q)} \leq 0.340 \, \log e
\end{align}
while without any boundedness assumption, \eqref{eq1b: superLL1} yields the weaker upper
and lower bounds
\begin{align} \label{eq3: superLLchi2}
0 \leq \frac{\mathrm{JS}(P\|Q)}{\chi^2(P\|Q)} \leq \log 2 \approx 0.693 \, \log e.
\end{align}
\end{example}

\subsection{Local Behavior of $f$-Divergences}
\label{sec:On the Local Behavior of f-divergences}
Another application of Theorem~\ref{thm: fD1} shows that the local behavior of $f$-divergences
differs by only a constant, provided that the first distribution approaches the reference measure
in a certain strong sense.

\begin{theorem} \label{thm: local behavior fD}
Suppose that $\{P_n\}$, a sequence of probability measures defined on a measurable space
$(\set{A}, \mathscr{F})$, converges to $Q$ (another probability measure on the same space)
in the sense that, for $Y \sim Q$,
\begin{align}
\label{eq: 1st condition}
\lim_{n \to \infty} \esssup \frac{\text{d}P_n}{\text{d}Q} \, (Y) = 1
\end{align}
where it is assumed that $P_n \ll Q$ for all sufficiently large $n$.
If $f$ and $g$ are convex on $(0, \infty)$ and they are positive except at $t=1$
(where they are 0), then
\begin{align}  \label{eq: limit fD, Pn-->Q}
\lim_{n \to \infty} D_f(P_n \| Q) = \lim_{n \to \infty} D_g(P_n \| Q) = 0,
\end{align}
and
\begin{align}
\min\{ \kappa ( 1^-) , \kappa (1^+) \} \leq \lim_{n \to \infty}
\frac{D_f(P_n \| Q)}{D_g(P_n \| Q)} \leq \max\{ \kappa ( 1^-) , \kappa (1^+) \}
\label{eq: limit of ratio of f-divergences}
\end{align}
where we have indicated the left and right limits of the function $\kappa (\cdot)$,
defined in \eqref{kappadef-1}, at $1$ by $\kappa ( 1^-)$ and $\kappa (1^+)$, respectively.
\end{theorem}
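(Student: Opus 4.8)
The plan is to sandwich the ratio $D_f(P_n\|Q)/D_g(P_n\|Q)$ between an infimum and a supremum of $\kappa=f/g$ (as in \eqref{kappadef-1}) over a set that shrinks to $\{1\}$, and then to let these two bounds collapse onto the one-sided limits $\kappa(1^-)$ and $\kappa(1^+)$. First I would write $Z_n=\frac{\text{d}P_n}{\text{d}Q}$ and, for each large $n$, introduce $\beta_1^{(n)}=\bigl(\esssup Z_n\bigr)^{-1}$ and $\beta_2^{(n)}=\essinf Z_n$ in the sense of \eqref{eq: beta1-alt}--\eqref{eq: beta2-alt}, so that $Z_n$ lies $Q$-a.e.\ in $[\beta_2^{(n)},(\beta_1^{(n)})^{-1}]$. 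The hypothesis \eqref{eq: 1st condition} says precisely that $(\beta_1^{(n)})^{-1}=\esssup Z_n\to1$, i.e.\ $\beta_1^{(n)}\to1$, and I would use the strong mode of convergence to also force $\beta_2^{(n)}\to1$, so that the essential range of $Z_n$ collapses to the single point $\{1\}$.

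Next, since $f$ and $g$ are convex, vanish only at $t=1$, and are positive elsewhere, both ordered pairs $(f,g)$ and $(g,f)$ satisfy the hypotheses of Theorem~\ref{thm: fD1}. Applying Theorem~\ref{thm: fD1} to $(f,g)$ and its reverse form \eqref{eq:fD bound1} (Remark~\ref{remark:reverse}) to $(g,f)$ yields, for all large $n$,
\[
\inf_{\beta\in R_n}\kappa(\beta)\le\frac{D_f(P_n\|Q)}{D_g(P_n\|Q)}\le\sup_{\beta\in R_n}\kappa(\beta),\qquad R_n=(\beta_2^{(n)},1)\cup\bigl(1,(\beta_1^{(n)})^{-1}\bigr),
\]
where $D_g(P_n\|Q)>0$ because $P_n\neq Q$ and $g$ is positive off $t=1$. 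The first claim \eqref{eq: limit fD, Pn-->Q} follows from the same collapse: since $f\ge0$, $f(1)=0$ and $f$ is continuous, $0\le D_f(P_n\|Q)=\int f(Z_n)\,\text{d}Q\le\max_{t\in[\beta_2^{(n)},(\beta_1^{(n)})^{-1}]}f(t)\to f(1)=0$, and likewise for $g$.

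It then remains to pass to the limit in the sandwich. As $\beta_2^{(n)}\uparrow1$ and $(\beta_1^{(n)})^{-1}\downarrow1$, the left piece $(\beta_2^{(n)},1)$ shrinks onto $1$ from below and the right piece $(1,(\beta_1^{(n)})^{-1})$ from above; invoking the continuity of $\kappa$ on each of $(0,1)$ and $(1,\infty)$ together with the one-sided limits $\kappa(1^-),\kappa(1^+)$, the supremum of $\kappa$ over the left (resp.\ right) piece tends to $\kappa(1^-)$ (resp.\ $\kappa(1^+)$), and analogously for the infimum. Hence $\sup_{R_n}\kappa\to\max\{\kappa(1^-),\kappa(1^+)\}$ and $\inf_{R_n}\kappa\to\min\{\kappa(1^-),\kappa(1^+)\}$, which is exactly \eqref{eq: limit of ratio of f-divergences}.

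The hard part will be the control of the lower end of the essential range, namely establishing $\beta_2^{(n)}\to1$. The hypothesis delivers $\beta_1^{(n)}\to1$ at once, but the mean constraint $\int Z_n\,\text{d}Q=1$ combined with $\esssup Z_n\to1$ only forces $Z_n\to1$ in $Q$-probability; accordingly, the strong sense of convergence is understood to give \emph{uniform} closeness of $\frac{\text{d}P_n}{\text{d}Q}$ to $1$, so that $\essinf Z_n\to1$ as well. This two-sided control is essential: were the lower tail allowed to persist on a set of vanishing $Q$-measure while reaching values far from $1$, it could dominate both $D_f(P_n\|Q)$ and $D_g(P_n\|Q)$, and the limiting ratio would be governed by $\kappa$ near $0$ rather than by $\kappa(1^\pm)$. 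Ruling this out is precisely what the collapse $\beta_2^{(n)}\to1$ accomplishes, after which the sandwich and the one-sided limits complete the argument.
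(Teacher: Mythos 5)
Your argument is the same as the paper's: the bound $0\le D_f(P_n\|Q)\le \sup_{t\in[\beta_{2,n},\,\beta_{1,n}^{-1}]}f(t)$ for \eqref{eq: limit fD, Pn-->Q}, the sandwich obtained from Theorem~\ref{thm: fD1} together with its reversed form \eqref{eq:fD bound1} over $I_n=[\beta_{2,n},1)\cup(1,\beta_{1,n}^{-1}]$, and the passage to the one-sided limits of $\kappa$ at $1$ all coincide with the paper's proof. The only divergence is the step $\beta_{2,n}=\essinf\frac{\text{d}P_n}{\text{d}Q}(Y)\to 1$: the paper claims to deduce it from \eqref{eq: 1st condition} alone (Appendix~\ref{appendix: proof of 2nd condition}), whereas you treat two-sided uniform closeness as part of the meaning of the hypothesis. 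Your caution is in fact warranted: the appendix argument only yields $Q\bigl(\tfrac{\text{d}P_n}{\text{d}Q}\le 1-\rho\bigr)\to 0$ for each $\rho>0$, i.e.\ convergence in $Q$-probability, and that does not control the essential infimum. Indeed, with $Q$ fixed, $Q(\set{F}_n)=\varepsilon_n\downarrow 0$ and $P_n=Q(\cdot\mid\set{F}_n^c)$, one has $\esssup\frac{\text{d}P_n}{\text{d}Q}=\frac1{1-\varepsilon_n}\to 1$ but $\essinf\frac{\text{d}P_n}{\text{d}Q}=0$; taking $f(t)=(t-1)^2$ and $g(t)=|t-1|$ gives $\frac{D_f(P_n\|Q)}{D_g(P_n\|Q)}=\frac{\chi^2(P_n\|Q)}{|P_n-Q|}\to\frac12$ while $\kappa(1^{\pm})=0$. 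So the two-sided collapse you build into the hypothesis (equivalently, $\esssup\bigl|\frac{\text{d}P_n}{\text{d}Q}-1\bigr|\to 0$) is not cosmetic but a genuinely needed strengthening of \eqref{eq: 1st condition}; under that reading your proof is complete and matches the paper's.
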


\begin{proof}
Since $f(1)=0$,
\begin{align} \label{ubDf}
0 \leq D_f(P_n \| Q) & = \int  f\left(\frac{\text{d}P_n}{\text{d}Q}\right) \, \text{d}Q  \\
\label{eq2: ubDf}
& \leq \sup_{\beta \in [\beta_{2,n}, \, \beta_{1,n}^{-1}]} f(\beta)
\end{align}
where we have abbreviated
\begin{align}
\label{eq: beta1n}
\beta_{1,n}^{-1} &\triangleq \esssup \frac{\text{d}P_n}{\text{d}Q} \, (Y), \\
\label{eq: beta2n}
\beta_{2,n} &\triangleq \essinf \frac{\text{d}P_n}{\text{d}Q} \, (Y).
\end{align}
The condition in \eqref{eq: 1st condition} yields
\begin{align}
\label{eq2: 1st condition}
& \lim_{n \to \infty} \beta_{1,n} = 1, \\
\label{eq: 2nd condition}
& \lim_{n \to \infty} \beta_{2,n} = 1
\end{align}
where \eqref{eq2: 1st condition} is a restatement of \eqref{eq: 1st condition}
(see the notation in \eqref{eq: beta1n}), and
Appendix~\ref{appendix: proof of 2nd condition} justifies \eqref{eq: 2nd condition}.
Hence, \eqref{eq: limit fD, Pn-->Q} follows from
\eqref{eq2: ubDf}, \eqref{eq2: 1st condition}, \eqref{eq: 2nd condition},
the continuity of $f$ at~1 (due to its convexity).

Abbreviating $I_n = [\beta_{2,n}, 1) \cup (1,  \beta_{1,n}^{-1}] $,
\eqref{eq:fD bound2} and \eqref{eq:fD bound1} result in
\begin{align}
\inf_{\beta \in I_n} \kappa(\beta)  D_g(P_n \| Q)
\leq D_f(P_n \| Q)
\leq \sup_{\beta \in I_n} \kappa(\beta)  D_g(P_n \| Q).
\end{align}
The right and left continuity of $\kappa (\cdot)$ at 1 together with
\eqref{eq2: 1st condition} and \eqref{eq: 2nd condition} imply that
\begin{align}
& \inf_{\beta \in I_n} \kappa(\beta) \to \min\{ \kappa ( 1^-) , \kappa (1^+) \}, \\
& \sup_{\beta \in I_n} \kappa(\beta) \to \max\{ \kappa ( 1^-) , \kappa (1^+) \}
\end{align}
by letting $n \to \infty$.
\end{proof}
\begin{corollary}\label{cor:ratios}
Let $\{P_n \ll Q \}$ converge to $Q$ in the sense of  \eqref{eq: 1st condition}.
Then,  $D(P_n \| Q)$ and $D(Q \| P_n)$ vanish
as $n \to \infty$ with
\begin{align} \label{eq: limit of RE/dual}
\lim_{n \to \infty} \frac{D(P_n \| Q)}{D(Q \| P_n)} &= 1.
\end{align}
\end{corollary}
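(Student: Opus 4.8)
The plan is to apply Theorem~\ref{thm: local behavior fD} directly, taking $f$ and $g$ to be the convex functions that represent $D(P_n \| Q)$ and $D(Q \| P_n)$, respectively, as $f$-divergences \emph{from} $P_n$ \emph{to} $Q$. First I would set $f = r$, with $r$ the function in \eqref{eq: r}, so that $D_f(P_n \| Q) = D(P_n \| Q)$ by \eqref{eq: r-divergence}; and $g(t) = -\log t + (t-1)\log e$ as in \eqref{1g}, so that $D_g(P_n \| Q) = D(Q \| P_n)$. Both functions are convex on $(0,\infty)$, vanish at $t=1$, and are strictly positive elsewhere, since they are the integrands of non-negative relative entropies that are zero only when the two measures coincide (cf. Proposition~\ref{prop: fGibbs}). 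Hence the hypotheses of Theorem~\ref{thm: local behavior fD} are satisfied.

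The remaining ingredient is the evaluation of the one-sided limits $\kappa(1^-)$ and $\kappa(1^+)$ of the ratio $\kappa(t) = r(t)/g(t)$ defined in \eqref{kappadef-1}. Since $r(1) = g(1) = 0$, and a short computation gives $r'(1) = g'(1) = 0$ while $r''(1) = g''(1) = \log e$, a second-order Taylor expansion about $t = 1$ yields $r(t) = \tfrac12 (t-1)^2 \log e + o\bigl((t-1)^2\bigr)$ and, identically, $g(t) = \tfrac12 (t-1)^2 \log e + o\bigl((t-1)^2\bigr)$. Therefore $\kappa(t) \to 1$ as $t \to 1$ from either side, so that $\kappa(1^-) = \kappa(1^+) = 1$.

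With these limits in hand, Theorem~\ref{thm: local behavior fD} applies: the vanishing of both $D(P_n \| Q)$ and $D(Q \| P_n)$ as $n \to \infty$ is exactly \eqref{eq: limit fD, Pn-->Q}, and the sandwich \eqref{eq: limit of ratio of f-divergences} specializes to $1 \leq \lim_{n \to \infty} \frac{D(P_n \| Q)}{D(Q \| P_n)} \leq 1$, which forces the limit to equal $1$, giving \eqref{eq: limit of RE/dual}. I do not expect a substantive obstacle in this argument; the only point requiring genuine care is the computation of the common second-order behavior of $r$ and $g$ at $t=1$, since it is the coincidence $r''(1) = g''(1)$ (both equal to $\log e$) that makes the two one-sided limits of $\kappa$ agree and equal one, and hence collapses the bounds of Theorem~\ref{thm: local behavior fD} to a single value.
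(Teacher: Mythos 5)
Your proof is correct and is exactly the intended route: the paper states this as an immediate consequence of Theorem~\ref{thm: local behavior fD} applied to the pair $f=r$ and $g$ from \eqref{1g}, whose ratio $\kappa$ is precisely the function in \eqref{eq: kappa RE and dual} with continuous extension $\kappa(1)=1$ (cf. Appendix~\ref{appendix: properties of kappa}). Your second-order Taylor computation showing $r''(1)=g''(1)=\log e$, hence $\kappa(1^-)=\kappa(1^+)=1$, is the right verification and matches the paper's setup.
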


\begin{corollary}\label{cor:ratios:chi}
Let $\{P_n \ll Q \}$ converge to $Q$ in the sense of  \eqref{eq: 1st condition}.
Then, $\chi^2(P_n \| Q)$ and $D(P_n \| Q)$ vanish as $n \to \infty$ with
\begin{align}\label{eq:cor:ratios:chi}
\lim_{n \to \infty} \frac{D(P_n \| Q)}{\chi^2(P_n \| Q)} &= \tfrac12 \log e.
\end{align}
\end{corollary}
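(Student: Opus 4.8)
The plan is to apply Theorem~\ref{thm: local behavior fD} directly. I would take $f = r$, with $r$ the function in \eqref{eq: r}, so that $D_f(P_n\|Q) = D(P_n\|Q)$ by \eqref{eq: r-divergence}, and $g = f_2$ with $f_2(t) = (t-1)^2$, so that $D_g(P_n\|Q) = \chi^2(P_n\|Q)$ by \eqref{eq: chi-square 1}. Both functions are convex on $(0,\infty)$, vanish at $t=1$, and are positive elsewhere, so the hypotheses of the theorem are met. The asserted vanishing of $D(P_n\|Q)$ and $\chi^2(P_n\|Q)$ as $n \to \infty$ then follows at once from \eqref{eq: limit fD, Pn-->Q}.

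The one remaining task is to identify the two one-sided limits of $\kappa(t) = \frac{r(t)}{(t-1)^2}$ at $t=1$. A short second-order Taylor expansion of $r$ about $t=1$ (using $r(1)=r'(1)=0$ and $r''(1) = \log e$) gives $\kappa(1^-) = \kappa(1^+) = \tfrac12 \log e$; this is precisely the continuous extension of $r/f_2$ already encountered in the proof of the theorem on relationships among $D(P\|Q)$, $\chi^2(P\|Q)$ and $|P-Q|$, where the same ratio was shown to be monotonically decreasing on $(0,\infty)$. Because the two one-sided limits coincide, the upper and lower bounds in \eqref{eq: limit of ratio of f-divergences} collapse to the single value $\tfrac12 \log e$, and Theorem~\ref{thm: local behavior fD} then yields \eqref{eq:cor:ratios:chi}.

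Since the corollary is a direct specialization of the local-behavior theorem, I do not expect any genuine obstacle: the structural work is done once the correct $f$ and $g$ are chosen and their positivity-except-at-$1$ is noted. The only computation is the expansion of $r$ at $t=1$, which is routine, and the key qualitative point is simply that $\kappa$ extends continuously across $t=1$, forcing the one-sided limits to agree and thereby turning the two-sided bound of Theorem~\ref{thm: local behavior fD} into an exact limit.
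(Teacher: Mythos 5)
Your proposal is correct and is exactly the argument the paper intends: the corollary is a direct specialization of Theorem~\ref{thm: local behavior fD} with $f=r$ and $g=f_2(t)=(t-1)^2$, and the common one-sided limit $\kappa(1^-)=\kappa(1^+)=\tfrac12\log e$ follows from $r(1)=r'(1)=0$, $r''(1)=\log e$ (equivalently, from the continuous extension of $r/f_2$ already used in Section~\ref{subsec: RE-chi2-TV functional domination}). Nothing is missing.
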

Note that \eqref{eq:cor:ratios:chi} is known in the  finite alphabet case \cite[Theorem~4.1]{CsiszarS_FnT}).
\par
In Example \ref{example:counter}, the ratio in \eqref{eq: limit of ratio of f-divergences} is equal to $\tfrac12$, while the
lower and upper bounds are $\frac13$ and $1$, respectively.

Continuing with Examples~\ref{example: two Laplacians}, \ref{example: two Cramer distributions}
and~\ref{example: two Cauchy distributions}, it is easy to check that
\eqref{eq: 1st condition} is satisfied in the following cases.

\begin{example}
\label{example: convergent Laplacians}
A sequence of Laplacian probability density functions with common variance and converging means:
\begin{align}
p_n(x) &= \frac{\lambda}{2} \cdot \exp \bigl(-\lambda |x - a_n| \bigr) \\
\lim_{n \rightarrow \infty} a_n &= a .
\end{align}
\end{example}

\begin{example}
A sequence of converging Cram\'{e}r probability density functions:
\label{example: convergent Cramer pdfs}
\begin{align}
\label{eq: convergent Cramer pdfs}
& p_n(x) = \frac{\theta_n}{2\bigl(1+\theta_n |x-m_n|\bigr)^2}, \quad x \in \Reals \\
& \lim_{n \to \infty} m_n = m \in \Reals \\
& \lim_{n \to \infty} \theta_n = \theta > 0 .
\end{align}
\end{example}

\begin{example}
\label{example: convergent Cauchy pdfs}
A sequence of converging Cauchy probability density functions:
\begin{align}
\label{eq: convergent Cauchy pdfs}
& p_n(x) = \frac1{\pi \gamma_n} \left[1 + \left(\frac{x-m_n}{\gamma_n}\right)^2\right]^{-1}, \quad x \in \Reals \\
& \lim_{n \to \infty} m_n = m \in \Reals \\
& \lim_{n \to \infty} \gamma_n = \gamma > 0.
\end{align}
\end{example}

\subsection{Strengthened Jensen's inequality} \label{subsec: superjensen}

Bounding away from zero a certain density between two probability measures
enables  the following
strengthened version of Jensen's inequality, which generalizes a result
in \cite[Theorem~1]{Dragomir06}.

\begin{lemma}\label{lemma: superjensen}
Let $f \colon \Reals \to \Reals$ be a convex function, $\PU \ll \PZ$ be probability measures defined on a measurable space $(\set{A}, \mathscr{F})$, and
fix an arbitrary random transformation $P_{Z|X}\colon \set{A} \to \Reals$.
Denote\footnote{We follow the notation in \cite{Verdu_book} where $\PZ \to P_{Z|X} \to P_{Z_\mathtt{0}}$
means that the marginal probability measures of the joint distribution $\PZ  P_{Z|X}$ are
$\PZ$ and $ P_{Z_\mathtt{0}}$.}  $\PZ \to P_{Z|X} \to P_{Z_\mathtt{0}}$,
and $\PU \to P_{Z|X} \to P_{Z_\mathtt{1}}$.
Then,
\begin{align}
\beta \, \bigl(\mathbb{E} \left[ f ( \mathbb{E} [ Z_\mathtt{0} | X_\mathtt{0} ] ) \right]
- f ( \mathbb{E} [ Z_\mathtt{0} ]  ) \bigr)
\label{eq: lemma-superjensen}
\leq \mathbb{E} [ f ( \mathbb{E} [ Z_\mathtt{1} | X_\mathtt{1} ] ) ]
- f ( \mathbb{E} [ Z_\mathtt{1} ]  )
\end{align}
where $X_0 \sim \PZ$, $X_1 \sim \PU$, and
\begin{align}  \label{eq: beta}
\beta \triangleq \essinf \frac{\mathrm{d}\PU}{\mathrm{d}\PZ} \, (X_0).
\end{align}
\end{lemma}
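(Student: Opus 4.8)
The plan is to recognize that both sides of \eqref{eq: lemma-superjensen} are the same \emph{Jensen-gap functional} evaluated at the two input distributions $\PZ$ and $\PU$, and then to exploit the concavity of this functional in the underlying probability measure together with the a.s.\ lower bound $\beta$ on the relative density. First I would introduce the conditional-mean function $m(x) \triangleq \mathbb{E}[Z \mid X = x] = \int_{\Reals} z \, P_{Z|X}(\mathrm{d}z \mid x)$, which depends only on the fixed kernel $P_{Z|X}$ and not on the input distribution. Since $(X_\mathtt{0}, Z_\mathtt{0})$ and $(X_\mathtt{1}, Z_\mathtt{1})$ are governed by the same kernel, one has $\mathbb{E}[Z_\mathtt{0} \mid X_\mathtt{0}] = m(X_\mathtt{0})$, $\mathbb{E}[Z_\mathtt{1} \mid X_\mathtt{1}] = m(X_\mathtt{1})$, together with $\mathbb{E}[Z_\mathtt{0}] = \int m \, \mathrm{d}\PZ$ and $\mathbb{E}[Z_\mathtt{1}] = \int m \, \mathrm{d}\PU$. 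Thus, defining $G(P) \triangleq \int f(m)\,\mathrm{d}P - f\bigl(\int m\,\mathrm{d}P\bigr)$ for probability measures $P$ on $(\set{A},\mathscr{F})$, the left-hand bracket of \eqref{eq: lemma-superjensen} is exactly $G(\PZ)$ and the right-hand side is $G(\PU)$, so the claim reduces to $\beta\, G(\PZ) \leq G(\PU)$.

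The key structural observation is that $G$ is \emph{concave} in $P$: the term $\int f(m)\,\mathrm{d}P$ is affine in $P$, while $-f\bigl(\int m\,\mathrm{d}P\bigr)$ is the negative of the convex function $f$ composed with the affine map $P \mapsto \int m\,\mathrm{d}P$, hence concave; an affine plus a concave functional is concave. Next I would convert the density bound into an additive decomposition. Since $\frac{\mathrm{d}\PU}{\mathrm{d}\PZ} \geq \beta$ holds $\PZ$-a.s.\ by the definition \eqref{eq: beta} of $\beta$, the set function $\PU - \beta\,\PZ$ is a nonnegative measure of total mass $1-\beta$. Excluding the trivial case $\beta=1$ (which forces $\PU = \PZ$ and gives equality), I set $R \triangleq \frac{1}{1-\beta}\bigl(\PU - \beta\,\PZ\bigr)$, a genuine probability measure, so that $\PU = \beta\,\PZ + (1-\beta)\,R$.

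With these two ingredients the conclusion is immediate: concavity of $G$ gives $G(\PU) \geq \beta\,G(\PZ) + (1-\beta)\,G(R)$, and since $G(R) \geq 0$ by the ordinary Jensen inequality applied to the convex $f$ under $R$, the desired bound $G(\PU) \geq \beta\,G(\PZ)$ follows. Note that the case $\beta = 0$ is covered automatically, as then $R = \PU$ and the statement collapses to plain Jensen for $\PU$.

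I expect the main obstacle to be measure-theoretic bookkeeping rather than anything conceptual: one must check that $m$ is measurable and that the integrals $\int m\,\mathrm{d}P$ and $\int f(m)\,\mathrm{d}P$ are well-defined and finite for $P \in \{\PZ, \PU, R\}$, so that $G$ is meaningful and the convexity step $f\bigl(\int m\,\mathrm{d}R\bigr) \leq \int f(m)\,\mathrm{d}R$ is licit. Because $f$ is convex on all of $\Reals$, no domain restriction on $\int m\,\mathrm{d}P$ arises, and the absolute continuity $\PU \ll \PZ$ yields $R \ll \PZ$, so the a.s.\ density bound transfers consistently across the three measures. These verifications are routine once the decomposition $\PU = \beta\,\PZ + (1-\beta)\,R$ and the concavity of $G$ are in place.
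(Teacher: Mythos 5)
Your proposal is correct and is essentially the paper's own argument: the paper uses the identical mixture decomposition $\PU = \beta\,\PZ + (1-\beta)\,P_{\mathtt{2}}$ and then applies Jensen's inequality twice, once to $f\bigl(\mathbb{E}[Z_\mathtt{1}]\bigr)$ as a convex combination and once to show the analogue of $G(R)\geq 0$. Your ``concavity of the Jensen-gap functional'' is just a clean repackaging of those same two steps (the affine term $\int f(m)\,\mathrm{d}P$ cancels across the mixture, leaving exactly the paper's two inequalities), so there is nothing substantively different to compare.
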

\begin{proof}
If $\beta=0$, the claimed result is Jensen's inequality, while if $\beta=1$,  $\PZ = \PU$ and the result is trivial.
Hence, we assume $\beta \in (0,1)$.
Note that $\PU = \beta \PZ + (1 - \beta) P_\mathtt{2}$ where $ P_\mathtt{2}$ is the probability measure
whose density with respect to $\PZ$ is given by
\begin{align}
\frac{\mathrm{d} P_\mathtt{2}}{\mathrm{d}\PZ} =
\frac1{1-\beta} \left(\frac{\mathrm{d}\PU}{\mathrm{d}\PZ} - \beta \right) \geq 0.
\end{align}
Letting $P_\mathtt{2} \to P_{Z|X} \to P_ {Z_\mathtt{2}} $, Jensen's inequality implies
\begin{align}
f (\mathbb{E} [ Z_\mathtt{1} ]) &\leq \beta\,  f ( \mathbb{E} [ Z_\mathtt{0} ] )
+ (1 - \beta ) \, f ( \mathbb{E} [ Z_\mathtt{2} ] ). \label{calfio1}
\end{align}
Furthermore, we can apply  Jensen's inequality again to obtain
\begin{align}
f ( \mathbb{E} [ Z_\mathtt{2} ] ) &=  f ( \mathbb{E}\left[ \mathbb{E} [ Z_\mathtt{2} | X_\mathtt{2} ] \right] )  \\
&\leq \mathbb{E}\left[  f ( \mathbb{E} [ Z_\mathtt{2} | X_\mathtt{2} ] )  \right]  \\
&= \frac{\mathbb{E}\left[  f ( \mathbb{E} [ Z_\mathtt{1} | X_\mathtt{1} ] )  \right] - \beta \;
\mathbb{E}\left[  f ( \mathbb{E} [ Z_\mathtt{0} | X_\mathtt{0} ] )  \right]}{1-\beta}.
\label{calfio2}
\end{align}
Substituting this bound on $f ( \mathbb{E} [ Z_\mathtt{2} ] )$ in \eqref{calfio1} we obtain the desired result.
\end{proof}

\begin{remark}
Letting $Z = X$, and choosing $\PZ$ so that $\beta = 0$ (e.g., $\PU$ is a restriction of $\PZ$
to an event of $\PZ$-probability less than~1), \eqref{eq: lemma-superjensen} becomes Jensen's
inequality $f(\mathbb{E}[X_\mathtt{1}]) \leq \mathbb{E}[f(X_\mathtt{1})]$.
\end{remark}

Lemma~\ref{lemma: superjensen} finds the following application to the derivation of $f$-divergence inequalities.

\begin{theorem} \label{thm: GI fD}
Let $f \colon (0, \infty) \to \Reals$ be a convex function with $f(1)=0$.
Fix $P \ll Q$ on the same space with $(\beta_1, \beta_2) \in [0,1)^2$ and let $X \sim P$. Then,
\begin{align}
\label{juventus}
\beta_2 \, D_f(P\|Q) &\leq \mathbb{E} \left[ f \left( \exp ( \imath_{P\|Q} (X) )
\right) \right]  - f\bigl( 1 + \chi^2(P\|Q) \bigr) \\
\label{eq: GI fD}
&\leq \beta_1^{-1} \, D_f(P \| Q).
\end{align}
\end{theorem}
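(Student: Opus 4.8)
The plan is to read all three quantities in the statement as Jensen gaps of the single convex function $f$ evaluated at the likelihood ratio $W \triangleq \exp\bigl(\imath_{P\|Q}\bigr) = \frac{\text{d}P}{\text{d}Q}$, obtaining the lower bound \eqref{juventus} from the strengthened Jensen inequality of Lemma~\ref{lemma: superjensen} and the upper bound \eqref{eq: GI fD} from an elementary pointwise estimate. First I would record two facts: when the argument of $W$ is distributed according to $Q$ one has $\mathbb{E}[W]=1$, whereas by \eqref{eq: chi-square 4}, when $X \sim P$ one has $\mathbb{E}[W(X)] = 1+\chi^2(P\|Q)$. Since $f(1)=0$, it follows that $D_f(P\|Q)=\mathbb{E}[f(W)]$ (with $Y\sim Q$) equals the Jensen gap $\mathbb{E}[f(W)]-f(\mathbb{E}[W])$ under $Q$, while the central expression $\mathbb{E}[f(W(X))]-f\bigl(1+\chi^2(P\|Q)\bigr)$ (with $X\sim P$) is exactly the Jensen gap $\mathbb{E}[f(W)]-f(\mathbb{E}[W])$ under $P$; thus the asserted inequalities compare the $P$-gap with $\beta_2$ and $\beta_1^{-1}$ times the $Q$-gap. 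I would also note, via Remark~\ref{remark: equivalence-fD} and the invariance of any Jensen gap under the replacement $f(t)\mapsto f(t)+b\,(t-1)$, that neither $D_f(P\|Q)$ nor the central expression changes if an affine term is added to $f$; choosing $b$ to subtract a subgradient line of $f$ at $t=1$, I may therefore assume without loss of generality that $f\geq 0$.

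For the lower bound I would apply Lemma~\ref{lemma: superjensen} with $\PZ=Q$, $\PU=P$ (admissible since $P\ll Q$) and the deterministic random transformation $Z=W(X)$. Then $\mathbb{E}[Z\,|\,X]=W(X)$, the two bracketed Jensen gaps in \eqref{eq: lemma-superjensen} become precisely the $Q$-gap $D_f(P\|Q)$ and the $P$-gap given by the central expression, and the multiplier is $\beta=\essinf \frac{\text{d}P}{\text{d}Q}(Y)=\beta_2$ by \eqref{eq: beta2}. This delivers $\beta_2\,D_f(P\|Q)\leq \mathbb{E}\bigl[f(\exp(\imath_{P\|Q}(X)))\bigr]-f\bigl(1+\chi^2(P\|Q)\bigr)$, which is \eqref{juventus}.

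For the upper bound I would argue directly, since the symmetric use of Lemma~\ref{lemma: superjensen} would require $Q\ll P$, which need not hold when $\beta_2=0$. Because $\beta_1^{-1}=\esssup \frac{\text{d}P}{\text{d}Q}(Y)$ by \eqref{def:dinf} and \eqref{eq: beta1}, we have $W\leq \beta_1^{-1}$ $Q$-almost surely; using $f\geq 0$ I would discard the nonnegative term $f(1+\chi^2(P\|Q))$ and estimate the central expression by $\mathbb{E}[f(W(X))]=\mathbb{E}[W\,f(W)]$ (rewriting $X\sim P$ under $Y\sim Q$ through the change of measure), and finally bound $W\,f(W)\leq \beta_1^{-1}f(W)$ pointwise to obtain $\mathbb{E}[W\,f(W)]\leq \beta_1^{-1}\mathbb{E}[f(W)]=\beta_1^{-1}D_f(P\|Q)$; the case $\beta_1=0$ is vacuous.

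I expect the crux to be conceptual rather than computational: recognizing that the middle expression is a genuine Jensen gap so that Lemma~\ref{lemma: superjensen} fires with a deterministic $Z$, and noticing that the lemma is available only in the lower-bound direction here (the reverse application needs $Q\ll P$, i.e. $\beta_2>0$, which is not assumed). This forces the separate pointwise argument for the upper bound, which in turn is what makes the preliminary reduction to a nonnegative $f$ indispensable.
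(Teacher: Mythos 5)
Your proof of the lower bound \eqref{juventus} is exactly the paper's: both invoke Lemma~\ref{lemma: superjensen} with $\PZ=Q$, $\PU=P$ and the deterministic kernel $Z=\exp\bigl(\imath_{P\|Q}(X)\bigr)$, so that the two Jensen gaps in \eqref{eq: lemma-superjensen} become $D_f(P\|Q)$ and the central expression, with multiplier $\beta=\beta_2$. For the upper bound \eqref{eq: GI fD} you genuinely diverge: the paper simply reapplies Lemma~\ref{lemma: superjensen} with the roles swapped ($\PZ=P$, $\PU=Q$), which via \eqref{eq: beta1-alt} puts the multiplier $\beta_1$ on the $P$-gap and yields \eqref{eq: GI fD} upon dividing; you instead subtract the supporting line of $f$ at $t=1$ to make $f\geq 0$ (legitimate, since both $D_f(P\|Q)$ and the central expression are invariant under $f(t)\mapsto f(t)+b\,(t-1)$), discard the term $-f\bigl(1+\chi^2(P\|Q)\bigr)\leq 0$, and bound $\mathbb{E}\bigl[f(W(X))\bigr]=\mathbb{E}\bigl[W f(W)\bigr]\leq \beta_1^{-1}\,\mathbb{E}\bigl[f(W)\bigr]$ pointwise from $W\leq\beta_1^{-1}$ ($Q$-a.s.), where $W=\frac{\text{d}P}{\text{d}Q}$. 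Both routes are correct and of comparable length, but yours is cleaner on one point you correctly flag: the swapped application of the lemma formally requires $Q\ll P$, which is not among the stated hypotheses ($\beta_1>0$ guarantees $P\ll Q$ but not $Q\ll P$), whereas your pointwise argument needs only $P\ll Q$ and the essential bound on $W$. What the paper's route buys in exchange is symmetry --- both inequalities fall out of a single lemma applied twice --- while yours requires the preliminary affine normalization, which, as you note, is indispensable for the pointwise domination step.
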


\begin{proof}
We invoke Lemma~\ref{lemma: superjensen}
with $P_{Z|X}$ that is given by the deterministic transformation
$\exp \left(\imath_{P\|Q} (\cdot) \right) \colon \set{A} \to \Reals$.
Then, $\mathbb{E} [ Z_\mathtt{0} | X_\mathtt{0} ] = \exp \left(\imath_{P\|Q} ( X_\mathtt{0} )\right) $.
If, moreover, we let $X_\mathtt{0} \sim Q = \PZ$, we obtain
\begin{align}
\mathbb{E} [ Z_\mathtt{0} ]  &= 1, \\
\mathbb{E} \left[ f ( \mathbb{E} [ Z_\mathtt{0} | X_\mathtt{0} ] ) \right] &= D_f ( P \| Q )
\end{align}
and if we let $X_\mathtt{1} \sim P = \PU$, we have (see \eqref{eq: chi-square 4})
\begin{align}
\mathbb{E} [ Z_\mathtt{1} ]  &= 1 + \chi^2 ( P \| Q ), \\
\mathbb{E} \left[ f ( \mathbb{E} [ Z_\mathtt{1} | X_\mathtt{1} ] ) \right] &=
\mathbb{E} \left[ f \left( \exp ( \imath_{P\|Q} (X) )\right) \right].
\end{align}
Therefore, \eqref{juventus} follows from Lemma~\ref{lemma: superjensen}.
Recalling \eqref{eq: beta1-alt},
inequality  \eqref{eq: GI fD} follows from
Lemma~\ref{lemma: superjensen} as well switching the roles $\PZ$ and $\PU$, namely, now we take  $P =\PZ $
and $Q = \PU$.
\end{proof}
Specializing Theorem~\ref{thm: GI fD} to the convex function on $(0, \infty)$ where
$f(t) = -\log t$ sharpens inequality~\eqref{grout425} under the assumption of bounded
relative information.

\begin{theorem} \label{thm:d-chi}
Fix $P \ll \gg Q$ such that $(\beta_1, \beta_2) \in (0,1)^2$. Then,
\begin{align}
\label{lbchi}
\beta_2 \, D(Q\|P) &\leq \log \bigl(1 + \chi^2(P\|Q) \bigr) - D(P\|Q) \\
\label{ubchi}
& \leq \beta_1^{-1} \, D(Q\|P).
\end{align}
\end{theorem}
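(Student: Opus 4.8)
The plan is to specialize Theorem~\ref{thm: GI fD} to the convex function $f \colon (0,\infty) \to \Reals$ given by $f(t) = -\log t$. This $f$ is convex on $(0,\infty)$ (its second derivative is $t^{-2} > 0$) and satisfies $f(1)=0$, so it meets the standing hypotheses of Theorem~\ref{thm: GI fD}. The boundedness assumption $(\beta_1,\beta_2)\in(0,1)^2$ stated here is stronger than the $[0,1)^2$ required there, and the two-sided absolute continuity $P \ll \gg Q$ guarantees that $D(Q\|P)$ is well defined; hence the theorem applies directly, and it only remains to evaluate its three constituent quantities for this particular $f$.

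First I would identify the $f$-divergence itself: with $f(t)=-\log t$, equation~\eqref{eq: 2nd KL divergence} gives $D_f(P\|Q)=D(Q\|P)$, which accounts for the flanking factors $\beta_2\,D(Q\|P)$ and $\beta_1^{-1}\,D(Q\|P)$ in the claim. Next I would evaluate the central expectation, taken with $X\sim P$. Since $\exp\bigl(\imath_{P\|Q}(X)\bigr)=\frac{\mathrm{d}P}{\mathrm{d}Q}(X)$ by Definition~\ref{def:RI}, we have $f\bigl(\exp(\imath_{P\|Q}(X))\bigr)=-\imath_{P\|Q}(X)$, so that $\mathbb{E}\bigl[f(\exp(\imath_{P\|Q}(X)))\bigr]=-\mathbb{E}\bigl[\imath_{P\|Q}(X)\bigr]=-D(P\|Q)$ by \eqref{eq1:RE1}. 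Finally, $f\bigl(1+\chi^2(P\|Q)\bigr)=-\log\bigl(1+\chi^2(P\|Q)\bigr)$ is immediate from the definition of $f$.

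Substituting these three evaluations into the chain \eqref{juventus}--\eqref{eq: GI fD}, the middle expression collapses to $-D(P\|Q)+\log\bigl(1+\chi^2(P\|Q)\bigr)$, i.e. exactly $\log\bigl(1+\chi^2(P\|Q)\bigr)-D(P\|Q)$, while the outer expressions become $\beta_2\,D(Q\|P)$ and $\beta_1^{-1}\,D(Q\|P)$; this is precisely \eqref{lbchi}--\eqref{ubchi}. I do not anticipate a genuine obstacle here: once the function $f(t)=-\log t$ is selected, the argument is a mechanical substitution, and the only point demanding care is tracking the sign that $-\log t$ introduces, since it converts $D_f(P\|Q)$ into the reverse relative entropy $D(Q\|P)$ and turns $\mathbb{E}[f(\cdots)]=-D(P\|Q)$ back into a positive logarithmic term after subtraction. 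All the substantive effort has already been invested in the strengthened Jensen inequality of Lemma~\ref{lemma: superjensen} and its consequence Theorem~\ref{thm: GI fD}.
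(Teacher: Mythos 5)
Your proposal is correct and coincides with the paper's own (very terse) argument: the paper obtains Theorem~\ref{thm:d-chi} precisely by specializing Theorem~\ref{thm: GI fD} to $f(t)=-\log t$, exactly as you do. The three evaluations you carry out — $D_f(P\|Q)=D(Q\|P)$, $\mathbb{E}\bigl[f(\exp(\imath_{P\|Q}(X)))\bigr]=-D(P\|Q)$, and $f\bigl(1+\chi^2(P\|Q)\bigr)=-\log\bigl(1+\chi^2(P\|Q)\bigr)$ — are the intended mechanical substitution, and the sign bookkeeping is handled correctly.
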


\eject
\section{Total Variation Distance, Relative Information Spectrum and Relative Entropy}
\label{sec: TV-RI}
\subsection{Exact Expressions}
\label{subsec: TV-RI}

The following result provides several useful expressions of
the total variation distance in terms of the relative information.

\begin{theorem}  \label{thm: tv}
Let $P \ll Q$, and let $X \sim P$ and $Y \sim Q$ be defined on a measurable
space $(\set{A}, \mathscr{F})$. Then,\footnote{
$ (z)^+ \triangleq z \, 1\{z > 0\} = \max\{z, 0\}$, and
$ (z)^- \triangleq -z \, 1\{z < 0\} = \max\{-z, 0\}.$}
\begin{align} \label{eq: TV1}
|P-Q| & = \mathbb{E} \bigl[ \bigl| 1 - \exp(\imath_{P\|Q}(Y)) \bigr| \bigr] \\
\label{eq: TV2}
& = 2 \, \mathbb{E} \bigl[ \bigl( 1 - \exp(\imath_{P\|Q}(Y)) \bigr)^+ \bigr] \\
\label{eq: TV3}
& = 2 \, \mathbb{E} \bigl[ \bigl( 1 - \exp(\imath_{P\|Q}(Y)) \bigr)^- \bigr] \\
\label{eq: TV4}
& = 2 \, \mathbb{E} \bigl[ \bigl( 1 - \exp(-\imath_{P\|Q}(X)) \bigr)^+ \bigr] \\[0.1cm]
\label{eq: TV4a}
& = 2 \left( \mathbb{P}\bigl[\imath_{P\|Q}(X) > 0\bigr] - \mathbb{P}\bigl[\imath_{P\|Q}(Y) > 0\bigr] \right) \\
\label{eq: TV4b}
& = 2 \left( \mathbb{P}\bigl[\imath_{P\|Q}(Y) \leq 0\bigr] - \mathbb{P}\bigl[\imath_{P\|Q}(X) \leq 0\bigr] \right) \\
\label{eq: TV5}
& = 2 \, \int_0^1 \mathbb{P}\bigl[\imath_{P \| Q}(Y) < \log \beta\bigr] \, \text{d}\beta \\
\label{eq: TV6}
& = 2 \, \int_0^1 \mathbb{P}\Bigl[\imath_{P \| Q}(X) > \log \frac1{\beta} \Bigr] \, \text{d}\beta \\
\label{eq: TV100}
& = 2 \, \int_1^{\beta_1^{-1}} \beta^{-2} \left[ 1 -\mathds{F}_{P\|Q}(\log \beta) \right] \, \text{d}\beta.
\end{align}
Furthermore, if $P \ll \gg Q$, then
\begin{align}
\label{eq: TV7}
|P-Q| & = 2 \, \mathbb{E} \bigl[ \bigl( 1 - \exp(-\imath_{P\|Q}(X)) \bigr)^- \bigr] \\
\label{eq: TV8}
& = \mathbb{E} \bigl[ \bigl| 1 - \exp(-\imath_{P\|Q}(X)) \bigr| \bigr].
\end{align}
\end{theorem}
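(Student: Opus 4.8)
The plan is to reduce all ten identities to the defining expression \eqref{eq2: TV distance}, namely $|P-Q| = \int \bigl|\tfrac{\text{d}P}{\text{d}Q} - 1\bigr| \, \text{d}Q$, and to exploit systematically the mean-one properties of the Radon--Nikodym derivative $L \triangleq \tfrac{\text{d}P}{\text{d}Q} = \exp(\imath_{P\|Q})$: under $P \ll Q$ we have $\mathbb{E}_Q[L] = 1$, and under $P \ll \gg Q$ we have in addition $\mathbb{E}_P[1/L] = 1$. The structural observation driving everything is that the signed integrand $1 - L$ has zero $Q$-mean, so its positive and negative parts carry equal mass, each therefore equal to $\tfrac12 |P-Q|$; this accounts for the ubiquitous factor of $2$.

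First I would dispatch the expectation forms. Identity \eqref{eq: TV1} is merely \eqref{eq2: TV distance} rewritten with $L(Y) = \exp(\imath_{P\|Q}(Y))$, $Y \sim Q$. Writing $|x| = x^+ + x^-$ and using $\mathbb{E}_Q[1-L] = 1 - \mathbb{E}_Q[L] = 0$ to equate $\mathbb{E}_Q[(1-L)^+] = \mathbb{E}_Q[(1-L)^-]$ yields \eqref{eq: TV2} and \eqref{eq: TV3} at once. For \eqref{eq: TV4} I would change measure via $\mathbb{E}_P[\varphi] = \mathbb{E}_Q[\varphi \, L]$ (valid since $P \ll Q$), noting that $L > 0$ holds $P$-a.s.\ so that $\exp(-\imath_{P\|Q}(X)) = 1/L(X)$ is finite $P$-a.s.; this turns $\mathbb{E}_P[(1 - 1/L)^+]$ into $\int_{\{L>1\}} (L-1)\,\text{d}Q = \tfrac12 |P-Q|$. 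The same set decomposition gives \eqref{eq: TV4a} and \eqref{eq: TV4b}, since $P(\{L>1\}) - Q(\{L>1\}) = \int_{\{L>1\}} (L-1)\,\text{d}Q$ and $\{\imath_{P\|Q} > 0\} = \{L > 1\}$, the complementary form following from $\mathbb{P}[\imath_{P\|Q}>0] + \mathbb{P}[\imath_{P\|Q} \leq 0] = 1$.

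Next come the integral representations. For \eqref{eq: TV5} and \eqref{eq: TV6} I would apply the layer-cake identity $\mathbb{E}[W] = \int_0^\infty \mathbb{P}[W > s]\,\text{d}s$ to the bounded nonnegative variables $W = (1 - L(Y))^+ \in [0,1]$ and $W = (1 - 1/L(X))^+ \in [0,1)$, respectively, then substitute $\beta = 1-s$ and rewrite the events $\{L < \beta\}$ and $\{1/L < \beta\} = \{L > 1/\beta\}$ in terms of $\imath_{P\|Q}$. Identity \eqref{eq: TV100} then follows from \eqref{eq: TV6} by the change of variable $\gamma = 1/\beta$, recognizing $1 - \mathds{F}_{P\|Q}(\log\gamma) = \mathbb{P}[\imath_{P\|Q}(X) > \log\gamma]$, and truncating the upper limit at $\beta_1^{-1}$, the essential supremum of $L$ under $Q$: for $\gamma > \beta_1^{-1}$ the set $\{L > \gamma\}$ is $Q$-null, hence $P$-null, so the integrand vanishes beyond $\beta_1^{-1}$.

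Finally, the two identities valid under $P \ll \gg Q$ use the second mean-one property. Here $1/L = \tfrac{\text{d}Q}{\text{d}P}$ is genuinely a density, so $\mathbb{E}_P[1 - 1/L] = 1 - \mathbb{E}_P[1/L] = 0$; equating the positive and negative parts as before gives \eqref{eq: TV7}, and summing them through $|x| = x^+ + x^-$ gives \eqref{eq: TV8}. The only genuine delicacy throughout is the measure-theoretic bookkeeping on the sets where the density degenerates --- that $\{L = 0\}$ is $P$-null, that $Q$-null sets are $P$-null when integrating against $L$, and that the essential-supremum truncation in \eqref{eq: TV100} is licit. These are the points I would verify most carefully, since the algebra itself is routine.
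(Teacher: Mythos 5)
Your proposal is correct and follows essentially the same route as the paper's proof in Appendix~\ref{appendix: tv}: reduction to \eqref{eq2: TV distance}, the zero-mean property of $1-\exp(\imath_{P\|Q}(Y))$ to equate positive and negative parts, change of measure for the $X$-expectations, the layer-cake representation for \eqref{eq: TV5}--\eqref{eq: TV6}, and the substitution $\beta \mapsto 1/\beta$ with truncation at $\beta_1^{-1}$ for \eqref{eq: TV100}. The only cosmetic deviations are that you obtain \eqref{eq: TV5} by applying the layer-cake formula directly to $(1-Z)^+$ rather than to $(Z-1)1\{Z>1\}$ followed by the $\mathbb{E}[Z]=1$ complement step, and you derive \eqref{eq: TV7}--\eqref{eq: TV8} from $\mathbb{E}_P[\exp(-\imath_{P\|Q}(X))]=1$ instead of from the symmetry $|P-Q|=|Q-P|$ together with $\imath_{Q\|P}=-\imath_{P\|Q}$; both variants are equally valid.
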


\begin{proof}
See Appendix~\ref{appendix: tv}.
\end{proof}

\begin{remark}
In view of \eqref{eq: TV4a}, if $P \ll Q$, the supremum in \eqref{eq3: TV distance}
is a maximum which is achieved by the event
\begin{align}
\label{eq: max-achieving TV}
\set{F}^* = \bigl\{ a \in \set{A} \colon \imath_{P \| Q}(a) > 0 \bigr\} \in \mathscr{F}.
\end{align}
\end{remark}

\par
Similarly to Theorem~\ref{thm: tv}, the following theorem provides several expressions of
the relative entropy in terms of the relative information spectrum.

\begin{theorem} \label{theorem: relative entropy - exact expressions}
If $D(P\|Q) < \infty$, then
\begin{align}
D (P \| Q )
\label{eq1: RE}
&= \int_0^\infty \left(1 - \mathds{F}_{P\|Q} (\alpha) \right) \,
\mathrm{d} \alpha - \int_{-\infty}^0  \mathds{F}_{P\|Q} (\alpha)  \, \mathrm{d} \alpha \\[0.1cm]
\label{eq2: RE}
&= \int_1^\infty \frac{1 - \mathds{F}_{P\|Q}(\log \beta)}{\beta} \, \mathrm{d}\beta
- \int_0^1 \frac{\mathds{F}_{P\|Q}(\log \beta)}{\beta} \, \mathrm{d}\beta \\[0.1cm]
\label{eq3: RE}
&= \int_0^\infty \mathbb{P}\bigl[\imath_{P\|Q}(Y) > \alpha \bigr] \, \alpha e^{\alpha} \, \mathrm{d}\alpha
- \int_{-\infty}^0 \mathbb{P}\bigl[\imath_{P\|Q}(Y) < \alpha \bigr] \, \alpha e^{\alpha} \, \mathrm{d}\alpha
\end{align}
where $Y \sim Q$, and for convenience \eqref{eq2: RE} and \eqref{eq3: RE} assume that the relative
information and the resulting relative entropy are in nats.
\end{theorem}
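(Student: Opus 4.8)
The plan is to treat the three identities in turn, reducing each to a standard tail-integral (``layer-cake'') representation of an expectation.

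First I would prove \eqref{eq1: RE}. Set $W = \imath_{P\|Q}(X)$ with $X \sim P$; by \eqref{eq1:RE1} its mean is $\mathbb{E}[W] = D(P\|Q)$, finite by hypothesis, and by \eqref{eq:RIS} its cumulative distribution function is $\mathds{F}_{P\|Q}$. The classical identity $\mathbb{E}[W] = \int_0^\infty \mathbb{P}[W > \alpha]\,\mathrm{d}\alpha - \int_0^\infty \mathbb{P}[W < -\alpha]\,\mathrm{d}\alpha$, valid whenever $\mathbb{E}[W]$ is finite, then yields \eqref{eq1: RE} upon writing $\mathbb{P}[W > \alpha] = 1 - \mathds{F}_{P\|Q}(\alpha)$ and substituting $\beta = -\alpha$ in the second integral, using that $\mathbb{P}[W < \beta]$ and $\mathds{F}_{P\|Q}(\beta) = \mathbb{P}[W \leq \beta]$ differ only on the at most countable set of atoms of $W$, which has zero Lebesgue measure and hence does not affect the integral. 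Equation \eqref{eq2: RE} then follows from \eqref{eq1: RE} by the change of variable $\beta = e^\alpha$ (natural logarithms), under which $\mathrm{d}\alpha = \mathrm{d}\beta/\beta$, $\alpha \in (0,\infty)$ maps to $\beta \in (1, \infty)$, and $\alpha \in (-\infty, 0)$ maps to $\beta \in (0,1)$.

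The substantive identity is \eqref{eq3: RE}, for which I would switch to the representation under $Q$. By \eqref{eq2:RE1}, writing $V = \imath_{P\|Q}(Y)$ with $Y \sim Q$, we have $D(P\|Q) = \mathbb{E}[V e^V]$. Introduce $\phi(v) = (v-1)e^v + 1$, so that $\phi(0)=0$ and $\phi'(v) = v e^v$. Writing $\phi(V) = \int_0^V \phi'(\alpha)\,\mathrm{d}\alpha$ and splitting according to the sign of $V$ gives the signed layer-cake representation $\phi(V) = \int_0^\infty \phi'(\alpha)\,1\{V > \alpha\}\,\mathrm{d}\alpha - \int_{-\infty}^0 \phi'(\alpha)\,1\{V < \alpha\}\,\mathrm{d}\alpha$; taking expectations and interchanging expectation and integration produces exactly the right side of \eqref{eq3: RE}, since $\phi'(\alpha) = \alpha e^\alpha$. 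It then remains to identify $\mathbb{E}[\phi(V)]$ with $D(P\|Q)$: expanding, $\mathbb{E}[\phi(V)] = \mathbb{E}[V e^V] - \mathbb{E}[e^V] + 1$, and $\mathbb{E}[e^V] = \mathbb{E}_Q\bigl[\tfrac{\mathrm{d}P}{\mathrm{d}Q}(Y)\bigr] = 1$ because $P \ll Q$, so $\mathbb{E}[\phi(V)] = \mathbb{E}[V e^V] = D(P\|Q)$. (Alternatively, \eqref{eq3: RE} can be obtained from \eqref{eq2: RE} via the change-of-measure relation linking the $P$- and $Q$-laws of the relative information, but the direct layer-cake route avoids a Stieltjes integration by parts.)

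The main point requiring care is the interchange of expectation and integral in \eqref{eq3: RE}. The clean way through is the elementary bound $x e^x \geq -e^{-1}$ valid for all real $x$: on $\alpha > 0$ the integrand $\phi'(\alpha)\,\mathbb{P}[V>\alpha]$ is nonnegative, while on $\alpha < 0$ the integrand $-\phi'(\alpha)\,\mathbb{P}[V<\alpha]$ is nonnegative, so Tonelli's theorem applies to each piece separately. The same bound gives $0 \le V^- e^V \le e^{-1}$ pointwise, whence $\mathbb{E}[V^- e^V] \le e^{-1} < \infty$; combined with the hypothesis $D(P\|Q) = \mathbb{E}[V e^V] < \infty$ this yields $\mathbb{E}[V^+ e^V] = D(P\|Q) + \mathbb{E}[V^- e^V] < \infty$, so $V e^V$ is absolutely integrable and every manipulation is legitimate. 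The analogous, but easier, integrability needed in \eqref{eq1: RE} is supplied directly by the finiteness of $D(P\|Q)$.
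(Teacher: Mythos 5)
Your argument is correct. For \eqref{eq1: RE} and \eqref{eq2: RE} you do exactly what the paper does: apply the tail-integral identity \eqref{eq: expectation} to $\imath_{P\|Q}(X)$ with $X\sim P$ and then substitute $\beta=e^{\alpha}$; your remark that $\mathbb{P}[W<\beta]$ and $\mathbb{P}[W\leq\beta]$ differ only on a Lebesgue-null set is the same observation the paper makes when it notes one is free to replace $<$ by $\leq$. For \eqref{eq3: RE}, however, your route is genuinely different. The paper applies \eqref{eq: expectation} to the \emph{non-negative} random variable $r(Z)$ with $Z=\tfrac{\mathrm{d}P}{\mathrm{d}Q}(Y)$ and $r$ as in \eqref{eq: r}, which forces it to split according to $Z\gtrless 1$, invert $r$ separately on each of its two monotone branches, and then undo the substitution with two changes of variable. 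You instead write $D(P\|Q)=\mathbb{E}[Ve^{V}]=\mathbb{E}[\phi(V)]$ with $\phi(v)=(v-1)e^{v}+1$, $\phi(0)=0$, $\phi'(\alpha)=\alpha e^{\alpha}$, and use the signed layer-cake identity $\phi(V)=\int_0^\infty \phi'(\alpha)1\{V>\alpha\}\,\mathrm{d}\alpha-\int_{-\infty}^0 \phi'(\alpha)1\{V<\alpha\}\,\mathrm{d}\alpha$ followed by Tonelli on each single-signed piece. This buys you three things: there is no need to discuss the inverses of $r$ on its monotone branches; the case $V=-\infty$ (i.e., $\tfrac{\mathrm{d}P}{\mathrm{d}Q}=0$ on a set of positive $Q$-measure) is handled automatically since $\phi(-\infty)=1=\mathbb{E}[\phi(V)]-\mathbb{E}[Ve^V]$ is absorbed by $\mathbb{E}[e^V]=1$; and the justification of the interchange is made explicit via the bound $\alpha e^{\alpha}\geq -e^{-1}$, a point the paper leaves implicit. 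The two derivations are of course computationally equivalent (the paper's substitutions $t=r(v)$, $v=e^{\alpha}$ reconstruct exactly your $\phi'$), but yours is the cleaner organization of the same calculation, and your integrability discussion is more careful than the paper's.
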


\begin{proof}
The expectation of a real-valued random variable $V$ is equal to
\begin{align} \label{eq: expectation}
\mathbb{E}[V] = \int_0^\infty \mathbb{P}[V > t] \, \mathrm{d}t
- \int_{-\infty}^0 \mathbb{P}[V < t] \, \mathrm{d}t
\end{align}
where we are free to substitute $>$ by $\geq$, and $<$ by $\leq$. If we let $V = \imath_{P\|Q}(X)$
with $X \sim P$, then \eqref{eq: expectation} yields \eqref{eq1: RE} provided that
$D(P\|Q) = \mathbb{E}[V] < \infty$.

Eq.~\eqref{eq2: RE} follows from \eqref{eq1: RE} by the substitution $\alpha = \log \beta$
when the relative entropy is expressed in nats.

To prove \eqref{eq3: RE}, let $Z = \imath_{P \| Q}(Y)$ with $Y \sim Q$, and let $V = r(Z)$ where
$r \colon (0, \infty) \to [0, \infty)$ is given in \eqref{eq: r} with natural logarithm.
The function $r$ is strictly monotonically increasing on $[1, \infty)$, on which interval
we define its inverse by $s_1 \colon [0, \infty) \to [1, \infty)$; it is also strictly monotonically
decreasing on $(0,1]$, on which interval we define its inverse by $s_2 \colon [0,1] \to (0,1]$. Then,
only the first integral on the right side of \eqref{eq: expectation} can be non-zero, and we decompose
it as
\begin{align}
D(P\|Q)
& = \int_0^{\infty} \hspace*{-0.05cm} \mathbb{P}\bigl[Z \geq 1, r(Z) > t\bigr] \mathrm{d}t
+ \int_0^{\infty} \hspace*{-0.05cm} \mathbb{P}\bigl[Z < 1, r(Z) > t\bigr] \mathrm{d}t \nonumber\\
& = \int_0^{\infty} \mathbb{P}\bigl[Z > s_1(t)\bigr] \, \mathrm{d}t +
\int_0^1 \mathbb{P}\bigl[Z < s_2(t) \bigr] \, \mathrm{d}t \\
\label{eq: change of var}
& = \int_1^{\infty} \mathbb{P}[Z > v] \,
\log_e v \, \mathrm{d}v - \int_0^1 \mathbb{P}[Z < v] \, \log_e v \, \mathrm{d}v
\end{align}
where \eqref{eq: change of var} follows from the change of variable of integration $t = r(v)$. Upon taking
$\log_e$ on both sides of the inequalities inside the probabilities in \eqref{eq: change of var}, and a further
change of the variable of integration $v = e^{\alpha}$, \eqref{eq: change of var} is seen to be
equal to \eqref{eq3: RE}.
\end{proof}

\subsection{Upper Bounds on $|P-Q|$} \label{subsec: ub-tv}
In this section, we provide three upper bounds on $|P-Q|$ which
complement \eqref{eq: Pinsker}.

\begin{theorem} \label{thm: ubtv}
If $P \ll Q$ and $X \sim P$, then
\begin{align} \label{eq: ubtv}
|P-Q| \, \log e \leq D(P \| Q) + \mathbb{E} \bigl[|\imath_{P\|Q}(X)| \bigr].
\end{align}
\end{theorem}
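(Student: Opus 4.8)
The plan is to reduce the claimed inequality to an elementary pointwise estimate, after rewriting both sides as expectations under $P$ of the random variable $Z = \imath_{P\|Q}(X)$, $X \sim P$. For the right-hand side, recall from \eqref{eq1:RE1} that $D(P\|Q) = \mathbb{E}[Z]$; combining this with the algebraic identity $z + |z| = 2(z)^+$ gives
\begin{align}
D(P\|Q) + \mathbb{E}\bigl[|\imath_{P\|Q}(X)|\bigr] = \mathbb{E}\bigl[Z + |Z|\bigr] = 2\,\mathbb{E}\bigl[(Z)^+\bigr]. \nonumber
\end{align}
For the left-hand side, I would invoke the exact expression \eqref{eq: TV4} for the total variation distance, namely $|P-Q| = 2\,\mathbb{E}\bigl[(1 - \exp(-Z))^+\bigr]$ with $X \sim P$. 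Thus the whole inequality is equivalent to
\begin{align}
\log e \cdot \mathbb{E}\bigl[(1 - \exp(-Z))^+\bigr] \leq \mathbb{E}\bigl[(Z)^+\bigr]. \nonumber
\end{align}

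Next I would prove this by a pointwise comparison of the integrands: for every $z \in \Reals$,
\begin{align}
\log e \cdot \bigl(1 - \exp(-z)\bigr)^+ \leq (z)^+. \nonumber
\end{align}
When $z \leq 0$ both sides vanish, since then $\exp(-z) \geq 1$ forces $(1-\exp(-z))^+ = 0$ while $(z)^+ = 0$. When $z > 0$ the inequality reduces to $\log e\,(1 - \exp(-z)) \leq z$; setting $h(z) = z - \log e\,(1-\exp(-z))$, one has $h(0)=0$ and, using that, with $b>1$ denoting the base of $\log$, the identity $\log e \cdot \ln b = 1$ holds, the derivative simplifies to $h'(z) = 1 - \exp(-z) \geq 0$ for $z \geq 0$, so $h \geq 0$ on $[0,\infty)$. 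Taking expectations of the pointwise bound and multiplying by $2$ then yields \eqref{eq: ubtv}.

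Finally, I would dispatch the degenerate case: if $D(P\|Q) = \infty$ the right-hand side is infinite and the bound is trivial (the left-hand side never exceeds $2\log e$), so there is no loss in assuming $D(P\|Q) < \infty$, under which all expectations above are finite (indeed $\mathbb{E}[(Z)^-] \leq \log e$ always, since $\log w \leq (w-1)\log e$ applied to $w = \tfrac{\mathrm{d}Q}{\mathrm{d}P}$). The only real bookkeeping is the logarithm-base conversion in the one-variable inequality; everything else is the identity $z+|z|=2(z)^+$ together with the ready-made representation \eqref{eq: TV4}, so I expect no genuine obstacle beyond being careful that $\exp$ is the inverse of $\log$ in the chosen base.
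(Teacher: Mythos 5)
Your proposal is correct and follows essentially the same route as the paper's proof: both reduce \eqref{eq: ubtv} via the representation \eqref{eq: TV4} to the pointwise inequality $\log e \, \bigl(1-\exp(-z)\bigr)^+ \leq (z)^+$ (stated in the paper with the indicator $1\{z>0\}$ in place of $(\cdot)^+$, which is the same thing here), and then conclude with the identity $2(z)^+ = z + |z|$ after taking expectations under $P$. Your extra care with the logarithm base and the degenerate case $D(P\|Q)=\infty$ is fine but does not change the argument.
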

\begin{proof}
For every $z \in [-\infty, \infty]$,
\begin{align} \label{eq: SV-ITA 14}
\bigl( 1 - \exp(-z) \bigr) \, 1\{z>0\} \leq \left(\frac{z}{\log e}\right) 1\{z>0\}.
\end{align}
Substituting $z = \imath_{P\|Q}(X)$, taking expectation of both sides of
\eqref{eq: SV-ITA 14}, and using \eqref{eq: TV4} give
\begin{align}
|P-Q| \, \log e & \leq 2 \, \mathbb{E} \bigl[\imath_{P\|Q}(X)
\; 1\{\imath_{P\|Q}(X)>0\} \bigr] \\
& = \mathbb{E} \bigl[ \imath_{P\|Q}(X) + | \imath_{P\|Q}(X) | \bigr] \\
& = D(P \| Q) + \mathbb{E} \bigl[|\imath_{P\|Q}(X)| \bigr].
\end{align}
\end{proof}

\begin{remark}
Theorem~\ref{thm: ubtv} is tighter than Pinsker's bound in
\cite[(2.3.14)]{Pinsker60}:
\begin{align} \label{eq: Pinsker60}
|P-Q| \, \log e \leq 2\, \mathbb{E} \bigl[|\imath_{P\|Q}(X)| \bigr].
\end{align}
\end{remark}

The second upper bound on $|P-Q|$ is a consequence of Theorem~\ref{thm: tv}.
\begin{theorem} \label{thm: ubtv-ita14th5}
Let $P \ll Q$ with $(\beta_1, \beta_2) \in [0,1)^2$.
Then, for every $\beta_0 \in [\beta_1, 1]$,
\begin{align}
\tfrac12 \, |P-Q| \leq & (1-\beta_0) \, \prob[\imath_{P\|Q}(X) > 0] \nonumber \\
& + (\beta_0 - \beta_1) \, \prob\left[\imath_{P\|Q}(X) > \log \tfrac1{\beta_0}\right]
 \label{eq: ubtv1-ita14th5}
\end{align}
where $X \sim P$,
and, for every $\beta_0 \in [\beta_2, 1]$,
\begin{align}
\tfrac12 \, |P-Q| \leq & (1-\beta_0) \, \prob[\imath_{P\|Q}(Y) < 0] \nonumber \\
& + (\beta_0 - \beta_2) \, \prob[\imath_{P\|Q}(Y) < \log \beta_0]
\label{eq: ubtv2-ita14th5}
\end{align}
where $Y \sim Q$.
Furthermore, both upper bounds on $|P-Q|$ in \eqref{eq: ubtv1-ita14th5} and
\eqref{eq: ubtv2-ita14th5} are tight in the sense that they are achievable
by suitable pairs of probability measures defined on a binary alphabet.
\end{theorem}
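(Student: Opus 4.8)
The plan is to derive both inequalities directly from the integral representations of the total variation distance furnished by Theorem~\ref{thm: tv}, and then to establish tightness by explicit constructions on a binary alphabet. The whole argument reduces to splitting a one-dimensional integral and invoking monotonicity, so the only genuine subtlety is recognizing that $\beta_1^{-1}$ and $\beta_2$ are precisely the essential supremum and infimum of the density $\frac{\text{d}P}{\text{d}Q}$; this is what truncates the support of the integrand and produces the coefficients $(\beta_0-\beta_1)$ and $(\beta_0-\beta_2)$ rather than $\beta_0$.

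For \eqref{eq: ubtv1-ita14th5} I would start from \eqref{eq: TV6}, writing $\tfrac12 |P-Q| = \int_0^1 G\bigl(\log\tfrac1{\beta}\bigr)\,\text{d}\beta$, where $G(u) \triangleq \prob\bigl[\imath_{P\|Q}(X) > u\bigr]$ is the non-increasing complementary distribution function of $\imath_{P\|Q}(X)$ with $X \sim P$. The crucial input is that, by \eqref{eq: beta1} and \eqref{def:dinf}, $\beta_1^{-1} = \esssup_{Y\sim Q} \frac{\text{d}P}{\text{d}Q}(Y)$, so $\frac{\text{d}P}{\text{d}Q} \leq \beta_1^{-1}$ holds $Q$-a.s.\ and hence, since $P \ll Q$, also $P$-a.s.; consequently $\imath_{P\|Q}(X) \leq \log\tfrac1{\beta_1}$ almost surely, i.e.\ $G(u)=0$ for $u \geq \log\tfrac1{\beta_1}$. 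I would then split the integral at $\beta_0 \in [\beta_1,1]$: on $[\beta_0,1]$ monotonicity gives $G\bigl(\log\tfrac1{\beta}\bigr) \leq G(0) = \prob\bigl[\imath_{P\|Q}(X) > 0\bigr]$, contributing at most $(1-\beta_0)\,\prob\bigl[\imath_{P\|Q}(X) > 0\bigr]$, while on $[0,\beta_0]$ the integrand vanishes for $\beta \leq \beta_1$ and, on $[\beta_1,\beta_0]$, is bounded by $G\bigl(\log\tfrac1{\beta_0}\bigr) = \prob\bigl[\imath_{P\|Q}(X) > \log\tfrac1{\beta_0}\bigr]$, contributing at most $(\beta_0-\beta_1)\,\prob\bigl[\imath_{P\|Q}(X) > \log\tfrac1{\beta_0}\bigr]$. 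Summing the two pieces yields \eqref{eq: ubtv1-ita14th5}.

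The bound \eqref{eq: ubtv2-ita14th5} is proved the same way, now starting from \eqref{eq: TV5}, $\tfrac12|P-Q| = \int_0^1 H(\log\beta)\,\text{d}\beta$ with $H(x) \triangleq \prob\bigl[\imath_{P\|Q}(Y) < x\bigr]$ non-decreasing and $Y \sim Q$. The support constraint is the dual one: $\beta_2 = \essinf_{Y\sim Q}\frac{\text{d}P}{\text{d}Q}(Y)$ (this is \eqref{eq: beta2-alt} when $P \ll\gg Q$, and holds with $\beta_2=0$ when $Q\not\ll P$, since then $\frac{\text{d}P}{\text{d}Q}=0$ on a set of positive $Q$-measure), whence $\imath_{P\|Q}(Y) \geq \log\beta_2$ $Q$-a.s.\ and $H(x)=0$ for $x \leq \log\beta_2$. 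Splitting $\int_0^1$ at $\beta_0 \in [\beta_2,1]$ and using monotonicity of $H$ on $[\beta_0,1]$ and on $[\beta_2,\beta_0]$ mirrors the previous argument exactly. I expect the main (minor) obstacle to be bookkeeping in the degenerate cases $\beta_1=0$ or $\beta_2=0$, where the relevant support constraint is vacuous; but the inequalities still go through because then $\beta_0-\beta_i=\beta_0$, which is exactly the coefficient the weaker monotonicity bound produces.

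Finally, for tightness I would exhibit binary pairs on $\set{A}=\{0,1\}$ making every monotonicity step an equality. For \eqref{eq: ubtv1-ita14th5}, take $\frac{\text{d}P}{\text{d}Q}(0)=\beta_1^{-1}$ and $\frac{\text{d}P}{\text{d}Q}(1)\leq 1$, so that $\imath_{P\|Q}$ assumes only the value $\log\tfrac1{\beta_1}>0$ (on atom $0$) and a value $\leq 0$ (on atom $1$); then $G$ is constant and equal to $P(0)$ on $\bigl(0,\log\tfrac1{\beta_1}\bigr)$, both probabilities on the right side of \eqref{eq: ubtv1-ita14th5} collapse to $P(0)$, and a direct computation gives $\tfrac12|P-Q| = (\beta_1^{-1}-1)\,Q(0) = (1-\beta_1)\,\beta_1^{-1}Q(0)$, which matches the right side for every admissible $\beta_0$. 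An entirely analogous construction with $\frac{\text{d}P}{\text{d}Q}(0)=\beta_2$ and $\frac{\text{d}P}{\text{d}Q}(1)\geq 1$ settles \eqref{eq: ubtv2-ita14th5}. The one point worth noting is that on a binary alphabet $\imath_{P\|Q}$ cannot be positive on both atoms (its $Q$-mean equals $1$ after exponentiation), so the two endpoint probabilities necessarily coincide; this is consistent with, rather than an obstruction to, achieving equality.
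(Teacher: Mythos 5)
Your proof is correct and follows essentially the same route as the paper's: both arguments split the integrals in \eqref{eq: TV6} and \eqref{eq: TV5} at $\beta_0$, invoke monotonicity of the integrand together with the vanishing of the integrand on $[0,\beta_1)$ (resp. $[0,\beta_2)$) coming from the essential bounds on $\frac{\text{d}P}{\text{d}Q}$, and establish tightness via a binary pair whose density takes the value $\beta_1^{-1}$ on one atom. Your write-up is somewhat more explicit than the paper's (which only verifies equality at $\beta_0=\beta_1$), but there is no substantive difference.
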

\begin{proof}
Since the integrand in the right side of \eqref{eq: TV6} is monotonically
increasing in $\beta$, we may upper bound it by
$\prob\left[\imath_{P\|Q}(X) > \log \frac1{\beta_0}\right]$ when
$\beta \in [\beta_1, \beta_0]$, and by $\prob[\imath_{P\|Q}(X) > 0]$ when
$\beta \in (\beta_0, 1]$.
The same reasoning applied to \eqref{eq: TV5} yields \eqref{eq: ubtv2-ita14th5}.

To check the tightness of \eqref{eq: ubtv1-ita14th5} for any $\beta_1 \in (0, 1]$, choose
an arbitrary $\eta \in (0,1)$ and a pair of probability measures
$P$ and $Q$ defined on the binary alphabet $\set{A} = \{0,1\}$ with
\begin{align}
P(0)&= \frac{1-\eta}{1-\eta \beta_1}, \\[0.1cm]
Q(0)&= \beta_1 P(0).
\end{align}
Then, we have
$\imath_{P\|Q}(0) = \log \tfrac1{\beta_1}$,
$\imath_{P\|Q}(1) = \log \eta < 0 $, and both sides of \eqref{eq: ubtv1-ita14th5}
are readily seen to be equal when $\beta_0 = \beta_1$.
The tightness of \eqref{eq: ubtv2-ita14th5} can be shown in a similar way.
\end{proof}

The third upper bound on $|P-Q|$ is a classical inequality \cite[(99)]{Kailath67},
usually given in the context of bounding the error probability of Bayesian binary
hypothesis testing in terms of the Bhattacharyya distance.

\begin{theorem}
\label{thm: Guntuboyina11}
\begin{align}
\label{eq: Guntuboyina11}
\tfrac14 |P-Q|^2 \leq 1-\exp\bigl(-D_{\frac12}(P\|Q)\bigr).
\end{align}
\end{theorem}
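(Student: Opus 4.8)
The plan is to reduce the claim to the upper bound in \eqref{eq: lecam73} by exploiting the one-to-one correspondence between the R\'enyi divergence $D_{\frac12}(P\|Q)$ and the Hellinger divergence $\mathscr{H}_{\frac12}(P\|Q)$. First I would specialize \eqref{renyimeetshellinger} to $\alpha = \tfrac12$ (equivalently, combine \eqref{eq1: B distance} and \eqref{eq2: B distance}), which gives the identity $\exp\bigl(-D_{\frac12}(P\|Q)\bigr) = \bigl(1 - \tfrac12\,\mathscr{H}_{\frac12}(P\|Q)\bigr)^2$. Introducing the abbreviation $u \triangleq \tfrac12\,\mathscr{H}_{\frac12}(P\|Q) \in [0,1]$, the right-hand side of \eqref{eq: Guntuboyina11} becomes $1 - (1-u)^2 = u(2-u)$.

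Next I would invoke the upper bound in \eqref{eq: lecam73}, namely $|P-Q|^2 \le \mathscr{H}_{\frac12}(P\|Q)\bigl(4 - \mathscr{H}_{\frac12}(P\|Q)\bigr)$. Substituting $\mathscr{H}_{\frac12}(P\|Q) = 2u$ turns this into $|P-Q|^2 \le 2u\,(4 - 2u) = 4\,u(2-u)$, and dividing through by $4$ yields $\tfrac14\,|P-Q|^2 \le u(2-u) = 1 - \exp\bigl(-D_{\frac12}(P\|Q)\bigr)$, which is precisely \eqref{eq: Guntuboyina11}.

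Since both ingredients are already established in the excerpt, I do not expect a serious obstacle; the only point requiring care is the algebraic identity $\exp(-D_{\frac12}) = (1 - \tfrac12\,\mathscr{H}_{\frac12})^2$, where one must track the sign of $\alpha-1$ when specializing \eqref{renyimeetshellinger} at $\alpha = \tfrac12$ (so that the exponent $\tfrac1{\alpha-1} = -2$ comes out correctly). Should a self-contained argument be preferred that avoids citing \eqref{eq: lecam73}, I would instead prove the required upper bound on $|P-Q|^2$ directly: writing $|P-Q| = \int |\sqrt{p}-\sqrt{q}|\,(\sqrt{p}+\sqrt{q})\,\mathrm{d}\mu$ in terms of densities $p,q$ with respect to a dominating $\sigma$-finite measure $\mu$, the Cauchy--Schwarz inequality gives $|P-Q|^2 \le \bigl(\int (\sqrt{p}-\sqrt{q})^2\,\mathrm{d}\mu\bigr)\bigl(\int (\sqrt{p}+\sqrt{q})^2\,\mathrm{d}\mu\bigr)$. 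The first factor equals $\mathscr{H}_{\frac12}(P\|Q)$ and the second equals $4 - \mathscr{H}_{\frac12}(P\|Q)$, recovering the same bound and hence, after combining with the $D_{\frac12}$-to-$\mathscr{H}_{\frac12}$ identity, the theorem.
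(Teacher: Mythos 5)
Your proposal is correct. Note that the paper itself offers no proof of Theorem~\ref{thm: Guntuboyina11}: it simply cites it as a classical inequality from Kailath \cite[(99)]{Kailath67}, so there is no in-paper argument to compare against line by line. Your route is a clean way to make the statement self-contained within the paper's framework: specializing \eqref{renyimeetshellinger} at $\alpha=\tfrac12$ indeed gives $\exp\bigl(-D_{\frac12}(P\|Q)\bigr)=\bigl(1-\tfrac12\mathscr{H}_{\frac12}(P\|Q)\bigr)^2$ (the exponent $\tfrac1{\alpha-1}=-2$ is handled correctly), and with $u=\tfrac12\mathscr{H}_{\frac12}(P\|Q)\in[0,1]$ the claim becomes exactly the upper bound in \eqref{eq: lecam73} divided by $4$. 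Your fallback Cauchy--Schwarz argument, writing $|P-Q|=\int|\sqrt{p}-\sqrt{q}|\,(\sqrt{p}+\sqrt{q})\,\mathrm{d}\mu$ and identifying the two factors as $\mathscr{H}_{\frac12}(P\|Q)$ and $4-\mathscr{H}_{\frac12}(P\|Q)$, is the standard proof of the Le Cam upper bound itself, so it makes the whole chain self-contained rather than resting on the citation of \eqref{eq: lecam73}. Either version is acceptable; the second is preferable if one wants the theorem proved from first principles rather than from two quoted facts.
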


\begin{remark}
The bound in \eqref{eq: Guntuboyina11} is tight if $P$, $Q$
are defined on $\{0, 1\}$ with $P(0)=Q(0)$ or $P(0)=Q(1)$.
\end{remark}

In view of the monotonicity of $D_{\alpha}(P\|Q)$ in $\alpha$,
Theorem~\ref{thm: Guntuboyina11} yields \eqref{eq: BretagnolleH79},
which is equivalent to the Bretagnole-Huber inequality
\cite[(2.2)]{BretagnolleH79} (see also \cite[pp.~30--31]{Vapnik98}).
Note that \eqref{eq: BretagnolleH79} is tighter than \eqref{eq: Pinsker}
only when $|P-Q| > 1.7853$.

\subsection{Lower Bounds on $|P-Q|$}  \label{subsec: lb-tv}
In this section, we give several lower bounds on $|P-Q|$ in terms of the
relative information spectrum. Furthermore, in Section~\ref{sec:reverseP},
we give lower bounds on $|P-Q|$ in terms of the relative entropy (as well
as other features of $P$ and $Q$).
\par
If, for at least one value of $\beta \in (0,1)$, either
$\mathbb{P} \left[ \imath_{P\|Q}(X) > \log \frac1{\beta} \right]$
or $\mathbb{P} \left[ \imath_{P\|Q}(Y) < \log \beta\right]$
are known then we get the following lower bounds on $|P-Q|$ as a
consequence of Theorem~\ref{thm: tv}:

\begin{theorem} \label{thm: 2 LBs TV}
If $P \ll Q$ then, for every $\beta_0 \in (0,1)$,
\begin{align}
\label{eq: 1lbtv}
& |P-Q| \geq 2 (1-\beta_0) \, \mathbb{P} \bigl[ \imath_{P \| Q}(Y) <
\log \beta_0 \bigr], \\[0.1cm]
\label{eq: 2lbtv}
& |P-Q| \geq 2 (1-\beta_0) \, \mathbb{P} \left[ \imath_{P \| Q}(X) >
\log \frac1{\beta_0} \right]
\end{align}
with $X \sim P$ and $Y \sim Q$.
\end{theorem}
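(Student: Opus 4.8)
The plan is to derive both bounds directly from the integral representations of the total variation distance established in Theorem~\ref{thm: tv}, specifically \eqref{eq: TV5} and \eqref{eq: TV6}, by restricting the range of integration to $[\beta_0,1]$ and exploiting the monotonicity of the integrand.

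For \eqref{eq: 1lbtv}, I would begin from
\begin{align}
|P-Q| = 2 \int_0^1 \mathbb{P}\bigl[\imath_{P\|Q}(Y) < \log\beta\bigr] \, \mathrm{d}\beta,
\end{align}
discard the non-negative contribution of the subinterval $(0,\beta_0)$, and observe that on $[\beta_0,1]$ the map $\beta \mapsto \log\beta$ is increasing, so that the event enlarges with $\beta$: for every $\beta \in [\beta_0,1]$ we have $\{\imath_{P\|Q}(Y) < \log\beta\} \supseteq \{\imath_{P\|Q}(Y) < \log\beta_0\}$, and hence $\mathbb{P}[\imath_{P\|Q}(Y) < \log\beta] \geq \mathbb{P}[\imath_{P\|Q}(Y) < \log\beta_0]$. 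Bounding the integrand below by this constant and integrating over $[\beta_0,1]$ produces the factor $2(1-\beta_0)$ and the claimed inequality.

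For \eqref{eq: 2lbtv}, I would run the identical argument starting instead from \eqref{eq: TV6}. The only point that needs a moment's care is the direction of monotonicity: here the integrand is $\mathbb{P}\bigl[\imath_{P\|Q}(X) > \log\frac1{\beta}\bigr]$, and since $\beta \mapsto \log\frac1{\beta}$ is \emph{decreasing}, for $\beta \geq \beta_0$ one has $\log\frac1{\beta} \leq \log\frac1{\beta_0}$, so again the event grows as $\beta$ increases toward~$1$ and the integrand dominates its value at $\beta_0$. Restricting to $[\beta_0,1]$ and bounding below by the value at the left endpoint yields $2(1-\beta_0)\,\mathbb{P}\bigl[\imath_{P\|Q}(X) > \log\frac1{\beta_0}\bigr]$.

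There is no serious obstacle here, since all the content resides in the exact expressions already supplied by Theorem~\ref{thm: tv}; the argument is simply ``truncate the integral and replace the integrand by its smallest value on the retained interval.'' The one thing to verify cleanly is that in each case the integrand is indeed monotonically non-decreasing in $\beta$ on $[\beta_0,1]$, so that replacing it by its value at $\beta=\beta_0$ is a genuine lower bound. The two cases invoke opposite monotonicities of $\log\beta$ and $\log\frac1{\beta}$, yet both yield events that enlarge as $\beta\uparrow 1$, which is exactly what makes the truncation valid.
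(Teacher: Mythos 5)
Your proof is correct and takes essentially the same route as the paper: both arguments truncate the integral representations \eqref{eq: TV5} and \eqref{eq: TV6} to the interval $[\beta_0,1]$ and lower-bound the monotonically non-decreasing integrand by its value at the left endpoint $\beta=\beta_0$, which produces the factor $2(1-\beta_0)$. Your explicit check of the monotonicity direction in the second case (the threshold $\log\tfrac1{\beta}$ decreases, so the event enlarges) is the same observation the paper relies on when it says the integrand is monotonically increasing in $\beta$.
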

\begin{proof}
The lower bounds in \eqref{eq: 1lbtv} and \eqref{eq: 2lbtv}
follow from \eqref{eq: TV5} and \eqref{eq: TV6} respectively.
For example, from \eqref{eq: TV5}, it follows that for an
arbitrary $\beta_0 \in (0,1)$
\begin{align}
\tfrac12 |P-Q| & = \int_0^1 \mathbb{P}\bigl[\imath_{P \| Q}(Y) <
\log \beta\bigr] \, \mathrm{d}\beta \\
\label{eq: novotel}
& \geq \int_{\beta_0}^1 \mathbb{P}\bigl[\imath_{P \| Q}(Y) <
\log \beta\bigr] \, \mathrm{d}\beta \\
\label{eq: monotonicity}
& \geq (1-\beta_0) \, \mathbb{P}\bigl[\imath_{P \| Q}(Y) <
\log \beta_0\bigr]
\end{align}
where \eqref{eq: monotonicity} holds since the integrand in
\eqref{eq: novotel} is monotonically increasing in $\beta \in (0,1]$.
\end{proof}

\par
Next we exemplify the utility of Theorem~\ref{thm: 2 LBs TV} by giving an
alternative proof to the tight lower bound on the relative information spectrum,
given in \cite[Proposition~2]{LiuCV1_IT15} as a function of the total variation
distance.

\begin{proposition} \label{prop: lbris}
Let $P \ll Q$, then for every $\beta>0$
\begin{align} \label{eq: lbris}
\hspace*{-0.15cm} \mathds{F}_{P\|Q}(\log \beta)
& \geq \left\{ \begin{array}{ll}
\hspace*{-0.05cm} 0,
& \hspace*{-0.15cm} \beta \in \bigl(0, \, \frac{2}{2-|P-Q|} \bigr], \\[0.2cm]
\hspace*{-0.05cm} 1-\frac{\beta \, |P-Q|}{2(\beta-1)},
& \hspace*{-0.15cm} \beta \in \bigl(\frac{2}{2-|P-Q|}, \, \infty \bigr).
\end{array}
\right.
\end{align}
Furthermore, for every $\beta > 0$ and $\delta \in [0,1)$, the lower bound in
\eqref{eq: lbris} is attainable by a pair $(P,Q)$ with $|P-Q| = 2\delta$.
\end{proposition}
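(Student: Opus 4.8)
The plan is to read the bound off Theorem~\ref{thm: 2 LBs TV} and then exhibit extremal pairs. First I would express the relative information spectrum as a tail probability,
\[
\mathds{F}_{P\|Q}(\log \beta) = 1 - \prob\bigl[\imath_{P\|Q}(X) > \log \beta\bigr], \qquad X \sim P,
\]
so that a lower bound on $\mathds{F}_{P\|Q}(\log \beta)$ is equivalent to an upper bound on this tail. For $\beta > 1$ I would then invoke the second inequality \eqref{eq: 2lbtv} of Theorem~\ref{thm: 2 LBs TV} with the admissible choice $\beta_0 = \tfrac1\beta \in (0,1)$; since $\log \tfrac1{\beta_0} = \log \beta$ and $1-\beta_0 = \tfrac{\beta-1}{\beta}$, this gives $\prob[\imath_{P\|Q}(X) > \log \beta] \leq \tfrac{\beta\,|P-Q|}{2(\beta-1)}$, hence $\mathds{F}_{P\|Q}(\log \beta) \geq 1 - \tfrac{\beta\,|P-Q|}{2(\beta-1)}$.

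Next I would separate the two regimes by locating where the right side changes sign: $1 - \tfrac{\beta\,|P-Q|}{2(\beta-1)} > 0$ holds exactly when $\beta > \tfrac{2}{2-|P-Q|}$, and this threshold is $\geq 1$. Consequently, for $\beta \leq \tfrac{2}{2-|P-Q|}$ (in particular for every $\beta \le 1$, where \eqref{eq: 2lbtv} was not applicable) the asserted bound is $0$ and holds trivially because $\mathds{F}_{P\|Q}$ is a cumulative distribution function; for $\beta > \tfrac{2}{2-|P-Q|}$ the displayed quantity is positive and is precisely the second branch of \eqref{eq: lbris}. This settles the inequality.

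For attainability I would construct binary pairs with $|P-Q| = 2\delta$. On the zero branch $\beta \le \tfrac1{1-\delta}$, take $P(0)=1$, $Q(0)=1-\delta$, $Q(1)=\delta$, so that $\imath_{P\|Q}=\log\tfrac1{1-\delta}$ holds $P$-a.s.; then $\mathds{F}_{P\|Q}(\log \beta)=0$ for $\beta<\tfrac1{1-\delta}$ and $|P-Q|=2\delta$, giving exact attainment. On the positive branch $\beta > \tfrac1{1-\delta}$, take $P(0)=\tfrac{\beta\delta}{\beta-1}$ and $Q(0)=\tfrac{\delta}{\beta-1}$, putting the remaining mass on the symbol~$1$; one checks these are valid probability vectors, that $\imath_{P\|Q}(1)<0$, that $|P-Q| = (P(0)-Q(0)) + (Q(1)-P(1)) = 2\delta$, and that $\imath_{P\|Q}(0)=\log \beta$ with $\prob[\imath_{P\|Q}(X) < \log\beta] = P(1) = 1-\tfrac{\beta\delta}{\beta-1}$.

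The step I expect to be delicate is exactly this last one. Because $\imath_{P\|Q}(0)=\log \beta$ sits on the threshold, the right-continuous spectrum counts that atom, so the pair yields $\mathds{F}_{P\|Q}(\log \beta)=1$ rather than the bound $1-\tfrac{\beta\delta}{\beta-1}$, which equals $\prob[\imath_{P\|Q}(X)<\log \beta]$. The budget argument behind \eqref{eq: 2lbtv} (equivalently, behind \eqref{eq: TV6}) shows this gap is unavoidable for a single fixed pair: raising $\imath_{P\|Q}(0)$ strictly above $\log\beta$ while keeping $|P-Q|=2\delta$ forces $\prob[\imath_{P\|Q}(X)>\log\beta]$ strictly below $\tfrac{\beta\delta}{\beta-1}$, hence the spectrum strictly above the bound. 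I would therefore establish the positive branch as a tight limit, taking a sequence of binary pairs with $\imath_{P\|Q}(0)\downarrow\log \beta$ along which $|P-Q|=2\delta$ is preserved and $\mathds{F}_{P\|Q}(\log \beta)\to 1-\tfrac{\beta\delta}{\beta-1}$, thereby confirming that the lower bound in \eqref{eq: lbris} cannot be improved.
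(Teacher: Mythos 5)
Your argument is correct and follows essentially the same route as the paper's own proof: the inequality is read off from \eqref{eq: 2lbtv} of Theorem~\ref{thm: 2 LBs TV} via the substitution $\beta_0 = \tfrac1\beta$, the zero branch is handled trivially by the sign of the bound, and tightness is established with binary pairs, approaching the positive branch through a sequence with $\imath_{P\|Q}(0)\downarrow\log\beta$ (the paper's parameter $\tau\downarrow\beta$). Your explicit discussion of the atom at $\log\beta$ correctly identifies why the paper, too, only attains the positive branch as a limit.
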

\begin{proof}
Since $P \ll Q$, they cannot be mutually singular and therefore $|P-Q|<2$.
From \eqref{eq:RIS} and \eqref{eq: 2lbtv} (see Theorem~\ref{thm: 2 LBs TV}),
it follows that for every $\beta_0 \in (0,1)$
\begin{align}
\tfrac12 |P-Q| \geq  (1-\beta_0) \,
\Bigl[1-\mathds{F}_{P\|Q}\Bigl(\log \frac1{\beta_0}\Bigr) \Bigr].
\end{align}
Consequently, the substitution $\beta = \frac1{\beta_0} > 1$ yields
\begin{align}
\mathds{F}_{P\|Q}(\log \beta) \geq 1 - \tfrac{\beta \, |P-Q|}{2(\beta-1)}
\end{align}
which provides a non-negative lower bound on the relative information
spectrum provided that $\beta \geq \frac{2}{2-|P-Q|}$. Having shown
\eqref{eq: lbris}, we proceed to argue that it is tight.
Fix $\delta \in [0,1)$ and let  $|P-Q| = 2\delta$, which yields
$\frac{2}{2-|P-Q|} = \frac1{1-\delta}$ in the right side of \eqref{eq: lbris}.
\begin{itemize}
\item If $\beta < \frac1{1-\delta}$,  let the pair $(P,Q)$ be defined
on the binary alphabet $\{0,1\}$ with $P(1)=1$ and $Q(1)=1-\delta$
(thereby ensuring $2\delta = |P-Q| $). Then, from \eqref{eq:RIS},
\begin{align}
\mathds{F}_{P\|Q}(\log \beta) = P(0) = 0.
\end{align}
\item If $\beta \geq \frac1{1-\delta}$, let $\tau > \beta$ and consider
the probability measures $P = P_{\tau}$ and $Q = Q_{\tau}$ defined on
the binary alphabet $\{0,1\}$ with $P_{\tau}(1) = \frac{\tau \delta}{\tau-1}$
and $Q_{\tau}(1) = \frac{\delta}{\tau-1}$ (note that indeed
$2 \delta = |P_{\tau}-Q_{\tau}|$). Since $1 < \beta < \tau $ then
\begin{align}
\mathds{F}_{P\|Q}(\log \beta) = P_{\tau}(0) = 1-\tfrac{\tau \delta}{\tau-1}
\end{align}
which tends to $1-\frac{\beta \delta}{\beta-1}$ in the right side of
\eqref{eq: lbris} by letting $\tau \downarrow \beta$.
\end{itemize}
\end{proof}

Attained under certain conditions, the following counterpart
to Theorem~\ref{thm: ubtv-ita14th5} gives a lower bound on the
total variation distance based on the distribution of the relative
information. It strengthens the bound in \cite[Theorem~8]{Verdu_ITA14},
which in turn tightens the lower bounds in \cite[(2.3.18)]{Pinsker60}
and \cite[Lemma~7]{SteinbergS}.

\begin{theorem} \label{thm: LB3 TV}
If $P \ll \gg Q$ then, for any $\eta_1, \eta_2 > 0$,
\begin{align}
|P-Q| \geq \bigl(1-\exp(-\eta_1)\bigr) \;
\prob\bigl[\imath_{P\|Q}(X) \geq \eta_1 \bigr]
+ \bigl(\exp(\eta_2)-1 \bigr) \;
\prob\bigl[\imath_{P\|Q}(X) \leq -\eta_2 \bigr]
\label{eq: LB3 TV}
\end{align}
with $X \sim P$. Equality holds in \eqref{eq: LB3 TV} if $P$ and $Q$
are probability measures defined on $ \{0,1\}$ and, for an arbitrary
$\eta_1, \eta_2>0$,
\begin{align} \label{eq: tlbtv}
P(0) &= \frac{1-\exp(-\eta_2)}{1-\exp(-\eta_1-\eta_2)} \, ,\\
Q(0) &= \exp(-\eta_1) \, P(0).
\end{align}
\end{theorem}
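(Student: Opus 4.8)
The plan is to derive \eqref{eq: LB3 TV} directly from the exact expressions for the total variation distance in Theorem~\ref{thm: tv}, and in particular from the symmetric identity \eqref{eq: TV8}, which is valid under $P \ll \gg Q$. Writing $Z = \imath_{P\|Q}(X)$ with $X \sim P$, identity \eqref{eq: TV8} reads $|P-Q| = \mathbb{E}\bigl[ |1 - \exp(-Z)| \bigr]$. The first step is to split the absolute value according to the sign of $Z$: on $\{Z > 0\}$ we have $\exp(-Z) < 1$ so the integrand equals $1 - \exp(-Z) \geq 0$, whereas on $\{Z < 0\}$ it equals $\exp(-Z) - 1 \geq 0$. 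This yields $|P-Q| = \mathbb{E}\bigl[(1-\exp(-Z))\,1\{Z>0\}\bigr] + \mathbb{E}\bigl[(\exp(-Z)-1)\,1\{Z<0\}\bigr]$, a sum of two nonnegative contributions.

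Next I would lower-bound each contribution separately by discarding part of the range of integration and then exploiting monotonicity. For the first term, restricting the expectation to the smaller event $\{Z \geq \eta_1\}$ only decreases it, and on that event the increasing function $z \mapsto 1 - \exp(-z)$ is bounded below by its threshold value $1 - \exp(-\eta_1)$; this gives $\mathbb{E}\bigl[(1-\exp(-Z))\,1\{Z>0\}\bigr] \geq \bigl(1-\exp(-\eta_1)\bigr)\,\prob[Z \geq \eta_1]$. Symmetrically, restricting the second term to $\{Z \leq -\eta_2\}$ and using that $z \mapsto \exp(-z) - 1$ is decreasing, hence bounded below by $\exp(\eta_2) - 1$ there, gives $\mathbb{E}\bigl[(\exp(-Z)-1)\,1\{Z<0\}\bigr] \geq \bigl(\exp(\eta_2)-1\bigr)\,\prob[Z \leq -\eta_2]$. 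Adding the two inequalities produces exactly \eqref{eq: LB3 TV}.

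For the equality claim, I would observe that both inequalities above are tight precisely when the law of $Z$ places all of its mass on $\{Z > 0\}$ at the single point $\eta_1$ and all of its mass on $\{Z < 0\}$ at the single point $-\eta_2$. The proposed binary pair $(P,Q)$ on $\{0,1\}$ does exactly this: a short computation of $\log \frac{P(0)}{Q(0)}$ and $\log \frac{P(1)}{Q(1)}$ from \eqref{eq: tlbtv} shows $\imath_{P\|Q}(0) = \eta_1$ and $\imath_{P\|Q}(1) = -\eta_2$, so $\prob[Z \geq \eta_1] = P(0)$ and $\prob[Z \leq -\eta_2] = P(1)$, and substituting these into both sides confirms equality. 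The step that requires the most care is the very first one: choosing the symmetric identity \eqref{eq: TV8} rather than \eqref{eq: TV4} or \eqref{eq: TV7} alone. Each of the latter reproduces only one of the two sign contributions, with a compensating factor of two, so the two threshold terms of \eqref{eq: LB3 TV} can be combined without any spurious factor only because \eqref{eq: TV8} accounts for the positive and negative parts simultaneously; everything else is elementary monotonicity.
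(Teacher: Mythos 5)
Your proposal is correct and follows essentially the same route as the paper: both start from the identity \eqref{eq: TV8}, restrict the expectation to the disjoint events $\{\imath_{P\|Q}(X) \geq \eta_1\}$ and $\{\imath_{P\|Q}(X) \leq -\eta_2\}$, lower-bound the integrand by its value at the threshold, and verify equality by checking that the binary pair in \eqref{eq: tlbtv} makes the relative information take only the two values $\eta_1$ and $-\eta_2$. Your closing observation about why \eqref{eq: TV8} (rather than \eqref{eq: TV4} or \eqref{eq: TV7} with their factors of two) is the right starting point is accurate but does not change the substance.
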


\begin{proof}
From \eqref{eq: TV8}, it follows that for arbitrary $\eta_1, \eta_2 > 0$,
\begin{align}
 |P-Q|
& \geq \mathbb{E}\bigl[ \bigl| 1 - \exp\bigl(-\imath_{P\|Q}(X)\bigr) \bigr|
\; 1\bigl\{\imath_{P\|Q}(X) \geq \eta_1 \bigr\} \bigr]  \nonumber \\[0.1cm]
& \hspace*{0.4cm} + \mathbb{E}\bigl[ \bigl| 1 - \exp\bigl(-\imath_{P\|Q}(X)\bigr)
\bigr| \; 1\bigl\{\imath_{P\|Q}(X) \leq -\eta_2 \bigr\} \bigr]
\end{align}
which is readily loosened to obtain \eqref{eq: LB3 TV}. Equality holds in
\eqref{eq: LB3 TV} for $P$ and $Q$ in the theorem statement
since $\imath_{P\|Q}(X)$ only takes the values
$\log \, \frac{P(0)}{Q(0)} = \eta_1$ and
$\log \, \frac{P(1)}{Q(1)} = -\eta_2$.
\end{proof}

The following lower bound on the total variation distance is the counterpart
to Theorem~\ref{thm: ubtv}.

\begin{theorem} \label{thm: ita14-thm6}
If $P \ll \gg Q$, and $X \sim P$ then
\begin{align} \label{eq: ita14-thm6}
|P-Q| \, \log e \geq \mathbb{E} \bigl[ |\imath_{P\|Q}(X)| \bigr] - D(P \| Q).
\end{align}
\end{theorem}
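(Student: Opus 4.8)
The plan is to mirror the proof of Theorem~\ref{thm: ubtv}, exploiting the negative-part representation of the total variation distance that becomes available under two-sided absolute continuity. Writing $z=\imath_{P\|Q}(X)$ with $X\sim P$, recall from \eqref{eq1:RE1} that $D(P\|Q)=\mathbb{E}[z]$, and note the elementary identity $|z|-z=2(z)^-$. Hence the claimed inequality \eqref{eq: ita14-thm6} is equivalent to $|P-Q|\,\log e \geq 2\,\mathbb{E}\bigl[(z)^-\bigr]$, and it is this reformulated version that I would establish.

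First I would invoke the identity \eqref{eq: TV7} of Theorem~\ref{thm: tv}, which holds precisely because $P\ll\gg Q$, to write $|P-Q| = 2\,\mathbb{E}\bigl[\bigl(1-\exp(-z)\bigr)^-\bigr]$, and observe that $\bigl(1-\exp(-z)\bigr)^- = \bigl(\exp(-z)-1\bigr)\,1\{z<0\}$, since $1-\exp(-z)$ is negative exactly when $z<0$. Next I would use the same tangent-line bound that underlies \eqref{eq: SV-ITA 14}, namely $1-\exp(-z)\leq z/\log e$, which is valid for \emph{every} real $z$ by convexity of $\exp$. Restricting this bound to the region $\{z<0\}$ and negating both sides turns it into the pointwise lower bound $(\log e)\bigl(\exp(-z)-1\bigr)\geq -z = (z)^-$ there.

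Taking expectations and combining the two observations then yields $|P-Q|\,\log e = 2\,\mathbb{E}\bigl[(\log e)\bigl(\exp(-z)-1\bigr)1\{z<0\}\bigr] \geq 2\,\mathbb{E}\bigl[(z)^-\bigr] = \mathbb{E}[|z|]-\mathbb{E}[z] = \mathbb{E}\bigl[|\imath_{P\|Q}(X)|\bigr]-D(P\|Q)$, which is the desired bound. The main obstacle is conceptual rather than computational: one must recognize that the very same linear bound which furnishes an \emph{upper} bound on $1-\exp(-z)$ in Theorem~\ref{thm: ubtv} becomes, after restriction to $\{z<0\}$ and negation, a \emph{lower} bound on $\exp(-z)-1$, and that the correct companion identity to pair with it is the negative-part expression \eqref{eq: TV7} (which requires $P\ll\gg Q$) rather than \eqref{eq: TV4} or \eqref{eq: TV8}. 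A minor point worth recording is that this very chain forces $\mathbb{E}[(z)^-]<\infty$ because $|P-Q|\leq 2$, so the right-hand side of \eqref{eq: ita14-thm6} is well defined whenever $D(P\|Q)<\infty$.
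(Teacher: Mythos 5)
Your proposal is correct and follows essentially the same route as the paper: the pointwise bound you derive from the tangent line, namely $(\log e)\bigl(1-\exp(-z)\bigr)^- \geq (z)^-$, is exactly the paper's inequality \eqref{eq: to-prove-ita14-thm6}, and both arguments then take expectations and invoke \eqref{eq: TV7}. No gaps.
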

\begin{proof}
We reason in parallel to the proof of Theorem~\ref{thm: ubtv}. For all $z \in [-\infty, \infty]$,
\begin{align}
\label{eq: to-prove-ita14-thm6}
\bigl[ 1 - \exp(-z) \bigr]^{-} \geq \frac{(z)^-}{\log e}.
\end{align}
Substituting $z = \imath_{P\|Q}(X)$, taking expectation of both sides of
\eqref{eq: to-prove-ita14-thm6}, and using  \eqref{eq: TV7} we obtain
\begin{align}
|P-Q| \, \log e & \geq 2 \, \mathbb{E} \bigl[\bigl( \, \imath_{P\|Q}(X) \, \bigr)^- \bigr] \\
& = \mathbb{E} \bigl[ | \imath_{P\|Q}(X) | - \imath_{P\|Q}(X) \bigr] \\
& = \mathbb{E} \bigl[ | \imath_{P\|Q}(X) | \bigr] - D(P \| Q).
\end{align}
\end{proof}

\begin{remark} \label{remark: Barron's inequality}
The combination of Pinsker's inequality \eqref{eq: Pinsker} and \eqref{eq: ita14-thm6}
yields the following inequality due to Barron (see \cite[p.~339]{Barron86}) which is
useful in establishing convergence results for relative entropy (e.g. \cite{Barron91})
\begin{align} \label{eq: Barron86}
\mathbb{E} \bigl[ | \imath_{P\|Q}(X) | \bigr] \leq D(P \| Q) + \sqrt{2 \, D(P\|Q) \, \log e}
\end{align}
with $X \sim P$.
\end{remark}

\subsection{Relative Entropy and Bhattacharyya Distance} \label{subsec: bounds-RE}
The following result refines \eqref{grout425 - introduction}
by using an approach which relies on moment inequalities \cite{Simic07}--\cite{Simic15}. The
coverage in this section is self-contained.
\begin{theorem}  \label{theorem: refined upper/lower bounds on RE}
If $P \ll \gg Q$, then
\begin{align}
\label{eq: improved chi^2-bound on RE}
D(P \| Q) \leq & \log\bigl(1+\chi^2(P\|Q)\bigr)
- \frac{\frac32 \bigl(\chi^2(P \| Q) \bigr)^2
\, \log e}{\bigl(1 + \chi^2(Q\|P)\bigr) \,
\bigl( 1 + \chi^2(P \| Q) \bigr)^2 - 1}.
\end{align}
Furthermore, if $\{P_n\}$ converges to $Q$ in the sense of \eqref{eq: 1st condition},
then the ratio of $D(P_n \| Q)$ and its upper bound in \eqref{eq: improved chi^2-bound on RE} tends
to~1 as $n \to \infty$.
\end{theorem}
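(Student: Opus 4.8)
The plan is to reduce the claimed inequality to a one–dimensional moment inequality by passing to the order-2 escort of $P$ relative to $Q$, and then to extract the correction term by a single application of the Cauchy--Schwarz inequality together with one elementary convexity fact. First I would set $Z = \exp(\imath_{P\|Q}(Y))$ with $Y\sim Q$, so that $\mathbb{E}[Z]=1$, $\mathbb{E}[Z^2] = 1+\chi^2(P\|Q)$ by \eqref{eq: chi-square 3}, $\mathbb{E}[1/Z] = 1+\chi^2(Q\|P)$ by \eqref{eq: chi-square 5}, and $D(P\|Q) = \log e\cdot\mathbb{E}[Z\ln Z]$ by \eqref{eq2:RE1}. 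Let $R$ be the probability measure with $\frac{\mathrm{d}R}{\mathrm{d}Q} = \frac{Z^2}{1+\chi^2(P\|Q)}$ and put $\rho = \frac{\mathrm{d}P}{\mathrm{d}R}$, so $\mathbb{E}_R[\rho]=1$. A direct computation gives the identity $\log(1+\chi^2(P\|Q)) - D(P\|Q) = \log e\cdot D(P\|R) = \log e\cdot\mathbb{E}_R[r(\rho)]$ (in nats, with $r$ as in \eqref{eq: r}), which is manifestly nonnegative and already recovers \eqref{grout425}. The purpose of this change of measure is that the two quantities entering the theorem become moments of $\rho$ under $R$: one checks $\mathbb{E}_R[(\rho-1)^2] = \mathbb{E}_R[\rho^2]-1 = \chi^2(P\|Q)$ and $\mathbb{E}_R[\rho^3]-1 = (1+\chi^2(Q\|P))(1+\chi^2(P\|Q))^2 - 1$, so the denominator in \eqref{eq: improved chi^2-bound on RE} is exactly $\mathbb{E}_R[\rho^3]-1$.

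With this setup the whole statement reduces to the moment inequality $\mathbb{E}_R[r(\rho)]\cdot(\mathbb{E}_R[\rho^3]-1)\ge\tfrac32(\mathbb{E}_R[\rho^2]-1)^2$. I would prove it as follows. Writing $\mathbb{E}_R[\rho^3]-1 = \mathbb{E}_R[(\rho-1)^2(\rho+2)]$ and $\mathbb{E}_R[\rho^2]-1 = \mathbb{E}_R[(\rho-1)^2]$, the Cauchy--Schwarz inequality applied to the factorization $(\rho-1)^2 = \sqrt{(\rho-1)^2(\rho+2)}\cdot\sqrt{(\rho-1)^2/(\rho+2)}$ gives $(\mathbb{E}_R[(\rho-1)^2])^2\le(\mathbb{E}_R[\rho^3]-1)\cdot\mathbb{E}_R\!\left[\frac{(\rho-1)^2}{\rho+2}\right]$. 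It therefore suffices to establish the pointwise bound $\frac{(t-1)^2}{t+2}\le\tfrac23\,r(t)$ for all $t>0$ (natural logarithm), since taking $\mathbb{E}_R[\cdot]$ then yields $\mathbb{E}_R[(\rho-1)^2/(\rho+2)]\le\tfrac23\,D(P\|R)$, and combining the two displays produces the required inequality.

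The pointwise bound is equivalent, after clearing the positive factor $3(t+2)$, to $\Psi(t)\triangleq 2t(t+2)\ln t - 5t^2 + 4t + 1\ge 0$; here $\Psi(1)=\Psi'(1)=0$ and $\Psi''(t) = 4\bigl(\ln t + \tfrac1t - 1\bigr)\ge 0$, so $\Psi$ is convex with a global minimum value $0$ attained at $t=1$. Unwinding the reduction and restoring the logarithmic base gives $\log(1+\chi^2(P\|Q)) - D(P\|Q)\ge \frac{\tfrac32\,\chi^2(P\|Q)^2\,\log e}{(1+\chi^2(Q\|P))(1+\chi^2(P\|Q))^2-1}$, which rearranges to \eqref{eq: improved chi^2-bound on RE}; the degenerate cases (e.g.\ $\chi^2(Q\|P)=\infty$, where the correction vanishes and the bound collapses to \eqref{grout425}) are routine.

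For the asymptotic statement I would note that the convergence \eqref{eq: 1st condition} forces $\essinf\frac{\mathrm{d}P_n}{\mathrm{d}Q}\to 1$ as well (as in the proof of Theorem~\ref{thm: local behavior fD}), hence $\bigl\|\tfrac{\mathrm{d}P_n}{\mathrm{d}Q}-1\bigr\|_\infty\to 0$. A Taylor expansion in this uniformly small deviation gives $\chi^2(Q\|P_n) = \chi^2(P_n\|Q)(1+o(1))$, so the denominator equals $3\,\chi^2(P_n\|Q)(1+o(1))$ and the upper bound in \eqref{eq: improved chi^2-bound on RE} equals $\log(1+\chi^2(P_n\|Q)) - \tfrac12\chi^2(P_n\|Q)\log e\,(1+o(1)) = \tfrac12\chi^2(P_n\|Q)\log e\,(1+o(1))$, while Corollary~\ref{cor:ratios:chi} gives $D(P_n\|Q) = \tfrac12\chi^2(P_n\|Q)\log e\,(1+o(1))$; dividing shows the ratio tends to $1$. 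The main obstacle is conceptual rather than computational: discovering the escort measure $R$ together with the exact Cauchy--Schwarz split involving the factor $\rho+2$ and the sharp constant $\tfrac23$, which is precisely what makes the lone pointwise inequality collapse to the clean convexity certificate $\Psi''\ge 0$ and, simultaneously, reproduces the exact moment denominator $\mathbb{E}_R[\rho^3]-1$ demanded by the theorem.
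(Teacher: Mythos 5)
Your proof is correct, and it reaches the paper's bound by a route that is different in execution though ultimately equivalent in substance. The paper obtains \eqref{eq: improved chi^2-bound on RE} by invoking, as a black box, Simic's theorem on the log-convexity of $\lambda_{\alpha}$ in \eqref{eq: log-convex lambda} with $W=\tfrac{\mathrm{d}P}{\mathrm{d}Q}(X)$, $X\sim P$, and then applying the single instance $\lambda_0\,\lambda_{-2}\ge\lambda_{-1}^2$ together with the evaluations \eqref{eq1: lambda0}--\eqref{eq1: lambda2}. Your escort-measure reduction is, after unwinding, exactly that same instance: with $\rho=\tfrac{1+\chi^2(P\|Q)}{Z}$ one has $\mathbb{E}_R[r(\rho)]=\lambda_0$, $\mathbb{E}_R[\rho^2]-1=2(1+\chi^2(P\|Q))\lambda_{-1}$ and $\mathbb{E}_R[\rho^3]-1=6(1+\chi^2(P\|Q))^2\lambda_{-2}$ (in nats), so your moment inequality $\mathbb{E}_R[r(\rho)]\,(\mathbb{E}_R[\rho^3]-1)\ge\tfrac32(\mathbb{E}_R[\rho^2]-1)^2$ is precisely $\lambda_0\lambda_{-2}\ge\lambda_{-1}^2$ in disguise. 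What your argument buys is a self-contained and elementary proof of that one needed instance---the Cauchy--Schwarz split $(\rho-1)^2=\sqrt{(\rho-1)^2(\rho+2)}\cdot\sqrt{(\rho-1)^2/(\rho+2)}$ combined with the pointwise certificate $\tfrac{(t-1)^2}{t+2}\le\tfrac23\,r(t)$, verified via $\Psi(1)=\Psi'(1)=0$ and $\Psi''(t)=4(\ln t+\tfrac1t-1)\ge0$ (all of which I checked)---whereas the paper's route buys generality: the full log-convexity of $\lambda_\alpha$ is reused elsewhere (e.g.\ in \eqref{eq2: log-convexity}, Theorem~\ref{thm: bounds on B distance} and the generalization \eqref{eq: generalization - Simic}), which a one-off certificate does not provide. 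Your treatment of the asymptotic claim coincides with the paper's, which assembles \eqref{eq1: local}--\eqref{eq4: local} from Theorem~\ref{thm: local behavior fD} and Corollary~\ref{cor:ratios:chi} to show both $D(P_n\|Q)$ and the upper bound behave as $\tfrac12\chi^2(P_n\|Q)\log e\,(1+o(1))$.
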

\begin{proof}
The derivation of \eqref{eq: improved chi^2-bound on RE} relies on \cite[Theorem~2.1]{Simic07} which
states that if $W$ is a non-negative random variable, then
\begin{align} \label{eq: log-convex lambda}
\lambda_{\alpha} \triangleq \left\{ \begin{array}{ll}
\frac{\bigl(\mathbb{E}[W^\alpha] - \mathbb{E}^\alpha[W] \bigr) \, \log e}{\alpha(\alpha-1)}, &
\; \alpha \neq 0, 1 \\[0.1cm]
\log \bigl( \mathbb{E}[W] \bigr) - \mathbb{E}[\log W], & \; \alpha=0 \\[0.1cm]
\mathbb{E}[W \log W] - \mathbb{E}[W] \, \log \bigl(\mathbb{E}[W]\bigr), &
\; \alpha=1
\end{array}
\right.
\end{align}
is log-convex in $\alpha \in \Reals$.

To prove \eqref{eq: improved chi^2-bound on RE}, let $W = \frac{\mathrm{d}P}{\mathrm{d}Q} \, (X)$
with $X \sim P$, then \eqref{eq: log-convex lambda} yields
\begin{align}
\label{eq1: lambda0}
& \lambda_0 = \log \bigl(1 + \chi^2(P \| Q) \bigr) - D(P \| Q), \\[0.1cm]
& \lambda_{-\alpha} = \frac1{\alpha (\alpha+1)} \Bigl[1+(\alpha-1) \mathscr{H}_{\alpha}(Q \| P)
- \bigl(1 + \chi^2(P \| Q) \bigr)^{-\alpha} \Bigr] \, \log e  \label{eq1: lambda -alpha}
\end{align}
for all $\alpha > 0$, and specializing \eqref{eq1: lambda -alpha} yields
\begin{align}
\label{eq1: lambda-1}
& \lambda_{-1} = \frac{\chi^2(P \| Q) \, \log e}{2 \bigl(1+\chi^2(P \| Q) \bigr)}, \\
\label{eq1: lambda2}
& \lambda_{-2} = \frac16 \left[ 1 + \chi^2(Q \| P) - \frac1{\bigl(1+\chi^2(P\|Q)\bigr)^2} \right] \, \log e.
\end{align}
In view of the log-convexity of $\lambda_{\alpha}$ in $\alpha \in \Reals$, then
\begin{align} \label{eq1: log-convexity ineq.}
\lambda_0 \, \lambda_{-2} \geq \lambda_{-1}^2
\end{align}
which, by assembling \eqref{eq1: lambda0}--\eqref{eq1: log-convexity ineq.},
yields \eqref{eq: improved chi^2-bound on RE}.

Suppose that $\{P_n\}$ converges to $Q$ in the sense of \eqref{eq: 1st condition}.
Then, it follows from Theorem~\ref{thm: local behavior fD} and Corollary~\ref{cor:ratios:chi}
that
\begin{align}
\label{eq1: local}
& \lim_{n \to \infty} D(P_n \| Q) = 0, \\
\label{eq2: local}
& \lim_{n \to \infty} \chi^2(P_n \| Q) = 0, \\
\label{eq3: local}
& \lim_{n \to \infty} \frac{D(P_n \| Q)}{\chi^2(P_n \| Q)} = \tfrac12 \, \log e, \\
\label{eq4: local}
& \lim_{n \to \infty} \frac{\chi^2(Q \| P_n)}{\chi^2(P_n \| Q)} = 1.
\end{align}
Let $U_n$ denote the upper bound on $D(P_n \| Q)$ in \eqref{eq: improved chi^2-bound on RE}.
Assembling \eqref{eq1: local}--\eqref{eq4: local}, it can be verified that
\begin{align}
\lim_{n \to \infty} \frac{U_n}{D(P_n \| Q)} = 1.
\end{align}
\end{proof}

\begin{remark}
In view of \eqref{eq1: local}--\eqref{eq4: local}, while the ratio of the right side of
\eqref{eq: improved chi^2-bound on RE} with $P=P_n$ and $D(P_n \| Q)$ tends to~1, the
ratio of the looser bound in \eqref{grout425 - introduction} and $D(P_n \| Q)$ tends to~2.
\end{remark}

\begin{remark}
If $\{P_n\}$ and $Q$ are defined on a finite set $\set{A}$, then the condition
in \eqref{eq: 1st condition} is equivalent to $|P_n - Q| \to 0$ with $Q(a)>0$
for all $a \in \set{A}$.
\end{remark}

\begin{remark}
An alternative refinement of \eqref{grout425 - introduction} has been recently obtained in \cite{Simic15}
as a function of $\chi^2(P\|Q)$ and the Bhattacharyya distance $B(P \| Q)$ (see Definition~\ref{definition: B distance}):
\begin{align}
\label{eq: Simic's bound}
D(P \| Q)  \leq \log\bigl( 1 + \chi^2(P \| Q) \bigr)
- \frac{32}{9} \, \frac{\left[\exp\bigl(-B(P\|Q)\bigr)
\sqrt{1+\chi^2(P \| Q)} - 1 \right]^2 \, \log e}{\chi^2(P\|Q)}.
\end{align}
Eq.~\eqref{eq: Simic's bound} can be generalized by relying on the log-convexity
of $\lambda_{\alpha}$ in $\alpha \in \Reals$, which yields
\begin{align} \label{eq2: log-convexity}
\lambda_0^{1-\alpha} \, \lambda_{-1}^{\alpha} \geq \lambda_{-\alpha}
\end{align}
for all $\alpha \in (0,1)$; consequently, assembling \eqref{eq1: lambda0}, \eqref{eq1: lambda -alpha},
\eqref{eq1: lambda-1} and \eqref{eq2: log-convexity} yields
\begin{align}
\label{eq: generalization - Simic}
D(P \| Q)  \leq & \log\bigl(1+\chi^2(P\|Q)\bigr) \\
& - \left(\frac{2^\alpha}{\alpha(\alpha+1)}\right)^{\frac1{1-\alpha}} \,
\bigl(\chi^2(P\|Q)\bigr)^{-\frac{\alpha}{1-\alpha}}
\; \Bigl[\bigl(1-(1-\alpha) \mathscr{H}_{\alpha}(Q\|P) \bigr) \,
\bigl(1+\chi^2(P \| Q) \bigr)^{\alpha}-1 \Bigr]^{\frac1{1-\alpha}} \, \log e \nonumber
\end{align}
for all $\alpha \in (0,1)$.
Note that in the special case $\alpha= \tfrac12$, \eqref{eq: generalization - Simic} becomes
\eqref{eq: Simic's bound}, as can be readily verified in view of \eqref{eq2: B distance} and
the symmetry property $\mathscr{H}_{\frac12}(P\|Q)=\mathscr{H}_{\frac12}(Q\|P)$.
\end{remark}

\begin{remark}
The following lower bound on the relative entropy has been derived in \cite{Simic15},
based on the approach of moment inequalities:\footnote{For the derivation of
\eqref{eq: LB, Simic15} for a general alphabet, similarly to \cite{Simic15}, set
$W = \sqrt{\frac{\mathrm{d}Q}{\mathrm{d}P} \, (X)}$ in \eqref{eq: log-convex lambda}
with $X \sim P$, and use the inequality $\lambda_0 \, \lambda_4 \geq \lambda_2^2$
which follows from the log-convexity of $\lambda_\alpha$ in $\alpha$.}
\begin{align}
D(P \| Q) \geq 2 B(P \| Q)
+ \frac{6 \left[1 - \exp\bigl(-2B(P\|Q)\bigr)\right]^2 \, \log e}{1 - \exp\bigl(-4B(P\|Q)\bigr) + \chi^2(Q\|P)}.
\label{eq: LB, Simic15}
\end{align}
Note that from \eqref{eq1: B distance}
\begin{align} \label{eq: upper bound on B}
B(P\|Q) \geq \tfrac12 \log\biggl(\frac1{1-\tfrac14 \, |P-Q|^2}\biggr)
\end{align}
and since the right side of \eqref{eq: LB, Simic15} is monotonically increasing in $B(P\|Q)$,
the replacement of $B(P\|Q)$ in the right side of \eqref{eq: LB, Simic15} with its
lower bound in \eqref{eq: upper bound on B} yields
\begin{align}
D(P \| Q) \geq \log\biggl(\frac1{1-\tfrac14 \, |P-Q|^2} \biggr)
+ \frac{\tfrac34 \, |P-Q|^2 \log e}{1-\tfrac18 \, |P-Q|^2 + \frac{2 \, \chi^2(Q \| P)}{|P-Q|^2}}.
\label{eq: refined Bretagnole-Huber ineq.}
\end{align}
Although \eqref{eq: refined Bretagnole-Huber ineq.} improves
the bound in \eqref{eq: BretagnolleH79}, it is weaker than
\eqref{eq: LB, Simic15}, and it satisfies the tightness property in
Theorem~\ref{theorem: refined upper/lower bounds on RE} only in special
cases such as when $P$ and $Q$ are defined on $\set{A} = \{0,1\}$ with
$P(0)=Q(1) = \tfrac12 - \varepsilon$ and we let $\varepsilon \to 0$.
\end{remark}

Define the binary relative entropy function as the continuous extension to $[0,1]^2$ of
\begin{align} \label{eq: binary RE}
d(x\|y) = x \log \left(\frac{x}{y}\right) + (1-x) \log \left(\frac{1-x}{1-y}\right).
\end{align}
The following result improves the upper bound in \eqref{eq: RE and chi-square}.

\begin{theorem} \label{theorem: bounds RE/chi^2 for given beta1,2}
Let $P \ll \gg Q$ with  $(\beta_1, \beta_2) \in (0,1)^2$. Then,
\begin{enumerate}[a)]
\item
\begin{align} \label{eq: UB on chi^2 for given beta1,2}
\chi^2(P\|Q) \leq \bigl(\beta_1^{-1}-1\bigr) (1-\beta_2),
\end{align}
which is attainable for binary alphabets.
\item
\begin{align}
D(P \| Q)
\label{eq1: improved UB on RE}
& \leq \min \biggl\{ \log(1+c) - \frac{\frac32 \, c
\, \log e}{1 + \bigl(1+\beta_2^{-1} \bigr) \, (1+c)}, \\[0.1cm]
\label{eq3: improved UB on RE}
&  \hspace*{1.2cm} \frac{\bigl(\sqrt{c^2+8\beta_2 c \log_e(1+c)}
- c \bigr) \, \log e}{4 \beta_2}, \\[0.1cm]
\label{eq4: improved UB on RE}
& \hspace*{1.2cm} d\left( \frac{\beta_1^{-1}-1}{\beta_1^{-1}
\beta_2^{-1} - 1} \, \Bigl\| \, \frac{\beta_2^{-1}
(\beta_1^{-1}-1)}{\beta_1^{-1} \beta_2^{-1} - 1} \right) \biggr\}
\end{align}
where we have abbreviated $c=\chi^2(P\|Q)$ for typographical convenience.
\end{enumerate}
\end{theorem}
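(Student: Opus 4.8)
The plan is to work throughout with the likelihood ratio $Z=\frac{\mathrm{d}P}{\mathrm{d}Q}(Y)$, $Y\sim Q$, which by \eqref{eq: beta1-alt}--\eqref{eq: beta2-alt} satisfies $\beta_2\le Z\le\beta_1^{-1}$ $Q$-almost surely together with $\mathbb{E}[Z]=1$. In these terms one has, via \eqref{eq: chi-square 1b}, $\chi^2(P\|Q)=\mathbb{E}[(Z-1)^2]$, $\chi^2(Q\|P)=\mathbb{E}\bigl[\tfrac{(Z-1)^2}{Z}\bigr]$, and $D(P\|Q)=\mathbb{E}[Z\log Z]$. For part~a), I would start from the pointwise inequality $(Z-\beta_2)(\beta_1^{-1}-Z)\ge 0$, i.e. $Z^2\le(\beta_2+\beta_1^{-1})Z-\beta_2\beta_1^{-1}$; taking expectations and using $\mathbb{E}[Z]=1$ gives $\mathbb{E}[Z^2]\le\beta_2+\beta_1^{-1}-\beta_2\beta_1^{-1}$, so by \eqref{eq: chi-square 2} we get $\chi^2(P\|Q)=\mathbb{E}[Z^2]-1\le(\beta_1^{-1}-1)(1-\beta_2)$, which is \eqref{eq: UB on chi^2 for given beta1,2}. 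Attainability over binary alphabets is immediate: a pair with $\frac{\mathrm{d}P}{\mathrm{d}Q}$ taking only the two values $\beta_2$ and $\beta_1^{-1}$ forces $(Z-\beta_2)(\beta_1^{-1}-Z)=0$ everywhere, hence equality.

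For the bound \eqref{eq4: improved UB on RE} I would exploit the convexity of $t\mapsto t\log t$: on $[\beta_2,\beta_1^{-1}]$ the chord through the two endpoints dominates $z\log z$, so $D(P\|Q)=\mathbb{E}[Z\log Z]$ is at most the value of that affine chord at $\mathbb{E}[Z]=1$, i.e. the relative entropy of the extremal two-mass distribution supported on $\{\beta_2,\beta_1^{-1}\}$ with mean~$1$. Realizing this pair on $\{0,1\}$ and rewriting its relative entropy through \eqref{eq: binary RE} produces exactly the argument of $d(\cdot\|\cdot)$ in \eqref{eq4: improved UB on RE}, since for that pair $\frac{\mathrm{d}P}{\mathrm{d}Q}$ equals $\beta_2$ on one letter and $\beta_1^{-1}$ on the other.

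The remaining bounds \eqref{eq1: improved UB on RE} and \eqref{eq3: improved UB on RE} I would obtain from the refined inequality \eqref{eq: improved chi^2-bound on RE} of Theorem~\ref{theorem: refined upper/lower bounds on RE}, whose right side involves $\chi^2(Q\|P)$; as we only control $(\beta_1,\beta_2)$, everything reduces to upper bounding $\chi^2(Q\|P)$. The crude estimate $\chi^2(Q\|P)=\mathbb{E}\bigl[\tfrac{(Z-1)^2}{Z}\bigr]\le\chi^2(P\|Q)/\beta_2$ from $Z\ge\beta_2$ is too weak; the sharper estimate $\chi^2(Q\|P)\le\frac{\chi^2(P\|Q)}{\beta_2\,(1+\chi^2(P\|Q))}$, which additionally uses $\mathbb{E}[Z^2]=1+\chi^2(P\|Q)$, is exactly what collapses the denominator of \eqref{eq: improved chi^2-bound on RE} to $1+(1+\beta_2^{-1})(1+\chi^2(P\|Q))$ and yields \eqref{eq1: improved UB on RE}. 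For \eqref{eq3: improved UB on RE} I would instead bound $\chi^2(Q\|P)$ in terms of $D(P\|Q)$ (again starting from $Z\ge\beta_2$), turning \eqref{eq: improved chi^2-bound on RE} into a quadratic inequality in $D(P\|Q)$ of the shape $2\beta_2\,D^2(P\|Q)+\chi^2(P\|Q)\,D(P\|Q)\le\chi^2(P\|Q)\,\log\bigl(1+\chi^2(P\|Q)\bigr)$ (in nats), whose positive root is precisely \eqref{eq3: improved UB on RE}. The stated bound on $D(P\|Q)$ is then the minimum of these three.

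The hard part is the sharp control of the reverse divergence $\chi^2(Q\|P)$. The pointwise bound $Z\ge\beta_2$ alone gives only $\chi^2(P\|Q)/\beta_2$, whereas \eqref{eq1: improved UB on RE} and \eqref{eq3: improved UB on RE} require the strictly smaller estimates above, which must couple $Z\ge\beta_2$ with the moment identities $\mathbb{E}[Z]=1$ and $\mathbb{E}[Z^2]=1+\chi^2(P\|Q)$. Concretely, the refined $\chi^2(Q\|P)$ bound underlying \eqref{eq1: improved UB on RE} is equivalent to $\mathbb{E}\Bigl[\tfrac{(Z-1)^2\,(Z-\beta_2\,\mathbb{E}[Z^2])}{Z}\Bigr]\ge 0$, whose integrand is \emph{not} pointwise nonnegative (it is negative on $\{\beta_2\le Z<\beta_2\,\mathbb{E}[Z^2]\}$); establishing it rigorously, using the constraints jointly rather than termwise, is where I expect the main difficulty to lie.
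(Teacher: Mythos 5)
Your part~a) and your derivation of \eqref{eq4: improved UB on RE} are sound. For a) you use the same pointwise quadratic $(Z-\beta_2)(\beta_1^{-1}-Z)\ge 0$ combined with $\mathbb{E}[Z]=1$ that the paper uses (there written for the centered variable $Z-1$), and for \eqref{eq4: improved UB on RE} your chord argument --- dominating the convex function $t\log t$ on $[\beta_2,\beta_1^{-1}]$ by the affine chord through its endpoints and evaluating at $\mathbb{E}[Z]=1$ --- is a clean, direct alternative to the paper's route, which instead invokes monotonicity of the maximal $D(P\|Q)$ in $\chi^2(P\|Q)$ together with the equality condition in part~a); both identify the same extremal two-point law on $\{\beta_2,\beta_1^{-1}\}$.

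The gap is in \eqref{eq1: improved UB on RE} and \eqref{eq3: improved UB on RE}, and you have located it yourself without closing it. The estimate $\beta_2\,\chi^2(Q\|P)\le \chi^2(P\|Q)/\bigl(1+\chi^2(P\|Q)\bigr)$ that you need for \eqref{eq1: improved UB on RE} is exactly the left inequality in \eqref{eq: ratio of chi^2 divergences}, and, as you correctly observe, it cannot be obtained termwise from $Z\ge\beta_2$. The paper proves it by applying Theorem~\ref{thm: GI fD} to $f(t)=\tfrac1t-1$; that theorem rests on the strengthened Jensen inequality of Lemma~\ref{lemma: superjensen}, which decomposes $P=\beta_2 Q+(1-\beta_2)P_{\mathtt{2}}$ (possible precisely because $\mathrm{d}P/\mathrm{d}Q\ge\beta_2$) and applies Jensen to each component separately --- this is the mechanism that exploits the constraints ``jointly'' in the way you anticipated, and without it \eqref{eq1: improved UB on RE} remains unproved. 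For \eqref{eq3: improved UB on RE}, the route you describe (upper-bounding $\chi^2(Q\|P)$ by a function of $D(P\|Q)$ and substituting into \eqref{eq: improved chi^2-bound on RE}) is not what produces the stated quadratic, and it is not clear it can; the paper instead combines the moment inequality \eqref{eq1: Simic08}, $D^2(P\|Q)/D(Q\|P)\le\tfrac12\,\chi^2(P\|Q)\log e$, with \eqref{lbchi}, $\beta_2\,D(Q\|P)\le\log\bigl(1+\chi^2(P\|Q)\bigr)-D(P\|Q)$ (itself another instance of Theorem~\ref{thm: GI fD}, with $f(t)=-\log t$), to obtain \eqref{eq: quadratic inequality} directly, with no reference to \eqref{eq: improved chi^2-bound on RE} or to $\chi^2(Q\|P)$. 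Both remaining bounds therefore hinge on the bounded-relative-information Jensen machinery of Section~\ref{subsec: superjensen}, which your proposal does not supply.
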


\begin{proof}
To prove \eqref{eq: UB on chi^2 for given beta1,2}, we first consider the case where
$P,Q$ are defined on $\set{A} = \{0,1\}$ and $\frac{P(0)}{Q(0)} = \beta_2$,
$\frac{P(1)}{Q(1)} = \beta_1^{-1}$. Straightforward calculation yields
\begin{align}
& P(0) = \frac{\beta_1^{-1}-1}{\beta_1^{-1} \, \beta_2^{-1} - 1},
\quad Q(0) = \beta_2^{-1} \, P(0)
\end{align}
and
\begin{align}
& \chi^2(P\|Q) = \bigl(\beta_1^{-1}-1\bigr) (1-\beta_2).
\end{align}
In the case of a general alphabet,
consider the elementary bound with $a <0<b$:
$ \mathbb{E} [ Z^2] \leq -ab$
which holds for any $ Z\in[a,b]$, $\mathbb{E}[Z] = 0$, and follows simply by taking expectations of
\begin{align}
Z^2 &= - a b + Z (a+b) - (Z-a)(b-Z) \\
\label{eq: Z^2}
&\leq - a b + Z (a+b).
\end{align}
Since $ \chi^2(P\|Q) = \mathbb{E}[Z^2] $,  \eqref{eq: UB on chi^2 for given beta1,2}
follows by letting $a = \beta_2 -1$, $b = \beta_1^{-1}-1$ and
\begin{align}
\label{eq1a: Z}
& Z = \frac{\mathrm{d}P}{\mathrm{d}Q} \, (Y) - 1, \quad Y \sim Q.
\end{align}

To prove \eqref{eq1: improved UB on RE}, note that it follows by combining
\eqref{eq: improved chi^2-bound on RE} with the left side of the inequality
\begin{align} \label{eq: ratio of chi^2 divergences}
\beta_2 \, \chi^2(Q \| P) \leq \frac{\chi^2(P \| Q)}{1 + \chi^2(P \| Q)}
\leq \beta_1^{-1} \, \chi^2(Q \| P)
\end{align}
where \eqref{eq: ratio of chi^2 divergences} follows from
Theorem~\ref{thm: GI fD} with $f(t) = \frac1{t} - 1$ for $t > 0$.

To prove \eqref{eq3: improved UB on RE}, note that assembling
\eqref{eq1: Simic08} and \eqref{lbchi} yields
\begin{align} \label{eq: quadratic inequality}
D(P \| Q) \leq \log \bigl(1+\chi^2(P\|Q)\bigr)
- \frac{2 \beta_2 D^2(P\|Q)}{\chi^2(P\|Q) \, \log e}
\end{align}
and solving this quadratic inequality in $D(P\|Q)$,
for fixed $\chi^2(P \| Q)$, yields the bound in
\eqref{eq3: improved UB on RE}.

Bound \eqref{eq4: improved UB on RE} holds since the
maximal $D(P\|Q)$, for fixed $\chi^2(P\|Q)$, is monotonically increasing
in $\chi^2(P\|Q)$. In view of \eqref{eq: UB on chi^2 for given beta1,2},
$D(P\|Q)$ cannot be larger than its maximal value when
$\chi^2(P\|Q) = \bigl(\beta_1^{-1}-1\bigr) (1-\beta_2)$.
In the latter case, the condition of equality in \eqref{eq: Z^2}
(recall that $\mathbb{E}[Z]=0$) is
\begin{align}
\label{eq: Prob1 Z}
& \mathbb{P}[Z=a] = \frac{b}{b-a} = 1 - \mathbb{P}[Z=b]
\end{align}
which implies that the maximal relative entropy $D(P\|Q)$ over all $P \ll \gg Q$
with given $(\beta_1, \beta_2) \in (0,1)^2$ is equal to
\begin{align}
\label{eq11:max RE}
& \mathbb{E}\bigl[(1+Z) \, \log(1+Z)\bigr] \\[0.1cm]
\label{eq12:max RE}
&= \frac{b(1+a) \, \log(1+a) - a(1+b) \log(1+b)}{b-a} \\[0.1cm]
\label{eq13:max RE}
&= d\left( \frac{\beta_1^{-1}-1}{\beta_1^{-1}
\beta_2^{-1} - 1} \, \Bigl\| \, \frac{\beta_2^{-1}
(\beta_1^{-1}-1)}{\beta_1^{-1} \beta_2^{-1} - 1} \right)
\end{align}
where \eqref{eq11:max RE}--\eqref{eq13:max RE} follow from
\eqref{eq: binary RE} and
\eqref{eq: Prob1 Z} with $a = \beta_2 -1$ and $b = \beta_1^{-1}-1$.
\end{proof}

\begin{remark}
The proof of \eqref{eq1: improved UB on RE}
relies on the left side of \eqref{eq: ratio of chi^2 divergences}; this
strengthens the bound which follows from Theorem~\ref{thm: fD1}, given by
$\chi^2(Q \| P) \leq \beta_2^{-1} \, \chi^2(P \| Q)$. The bound
\eqref{eq3: improved UB on RE} is typically of similar tightness as
the bound in \eqref{eq1: improved UB on RE},
although none of them outperforms the other for all $(\beta_1, \beta_2)
\in (0,1)^2$ and $\chi^2(P \| Q) \in \bigl[0, \, \bigl(\beta_1^{-1}-1\bigr) (1-\beta_2) \bigr]$
(see \eqref{eq: UB on chi^2 for given beta1,2}).
\end{remark}

\begin{remark} \label{remark: locus of RE-chi^2}
The left inequality in \eqref{eq: RE and chi-square} and
Theorem~\ref{theorem: bounds RE/chi^2 for given beta1,2} provide an
analytical outer bound on the locus of the points $(\chi^2(P\|Q), D(P\|Q))$
where $P \ll \gg Q$ and
$\beta_2 \leq \frac{\mathrm{d}P}{\mathrm{d}Q} \leq \beta_1^{-1}$ for
given $(\beta_1, \beta_2) \in (0,1)^2$.
\end{remark}

\begin{example}
In continuation to Remark~\ref{remark: locus of RE-chi^2}, for given $(\beta_1, \beta_2) \in (0,1)^2$,
Figure~\ref{figure: locus of RE-chi^2} compares the locus of the points $(\chi^2(P\|Q), D(P\|Q))$ when
$P, Q$ are restricted to binary alphabets, and $\frac{P}{Q}$ is bounded between $\beta_2$ and $\beta_1^{-1}$,
with an outer bound constructed with the left inequality in
\eqref{eq: RE and chi-square} and Theorem~\ref{theorem: bounds RE/chi^2 for given beta1,2} (recall that
the outer bound is valid for an arbitrary alphabet).
\begin{figure}[h]
\includegraphics[width=7.7cm]{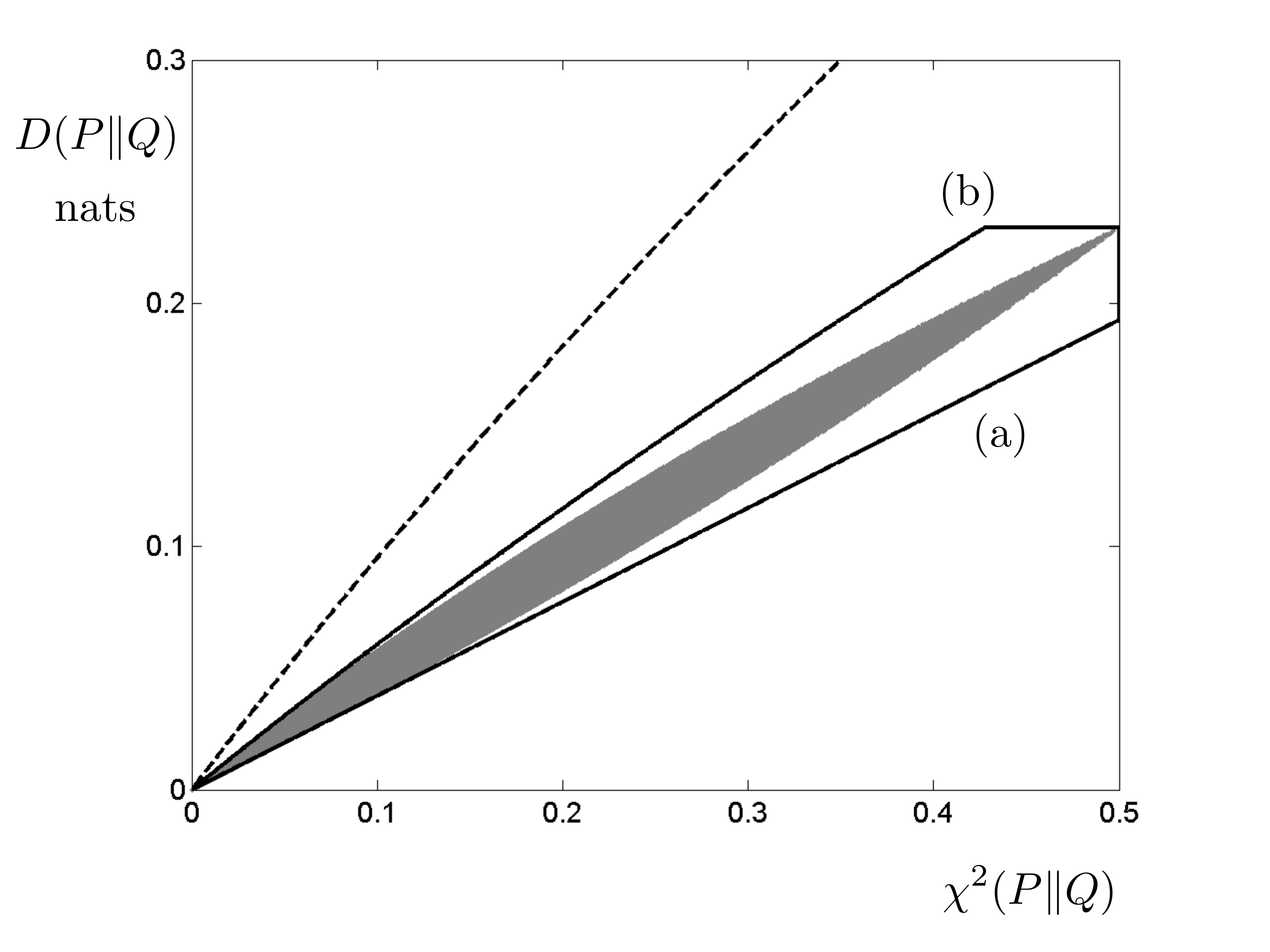}
\hspace*{-0.1cm}
\includegraphics[width=7.7cm]{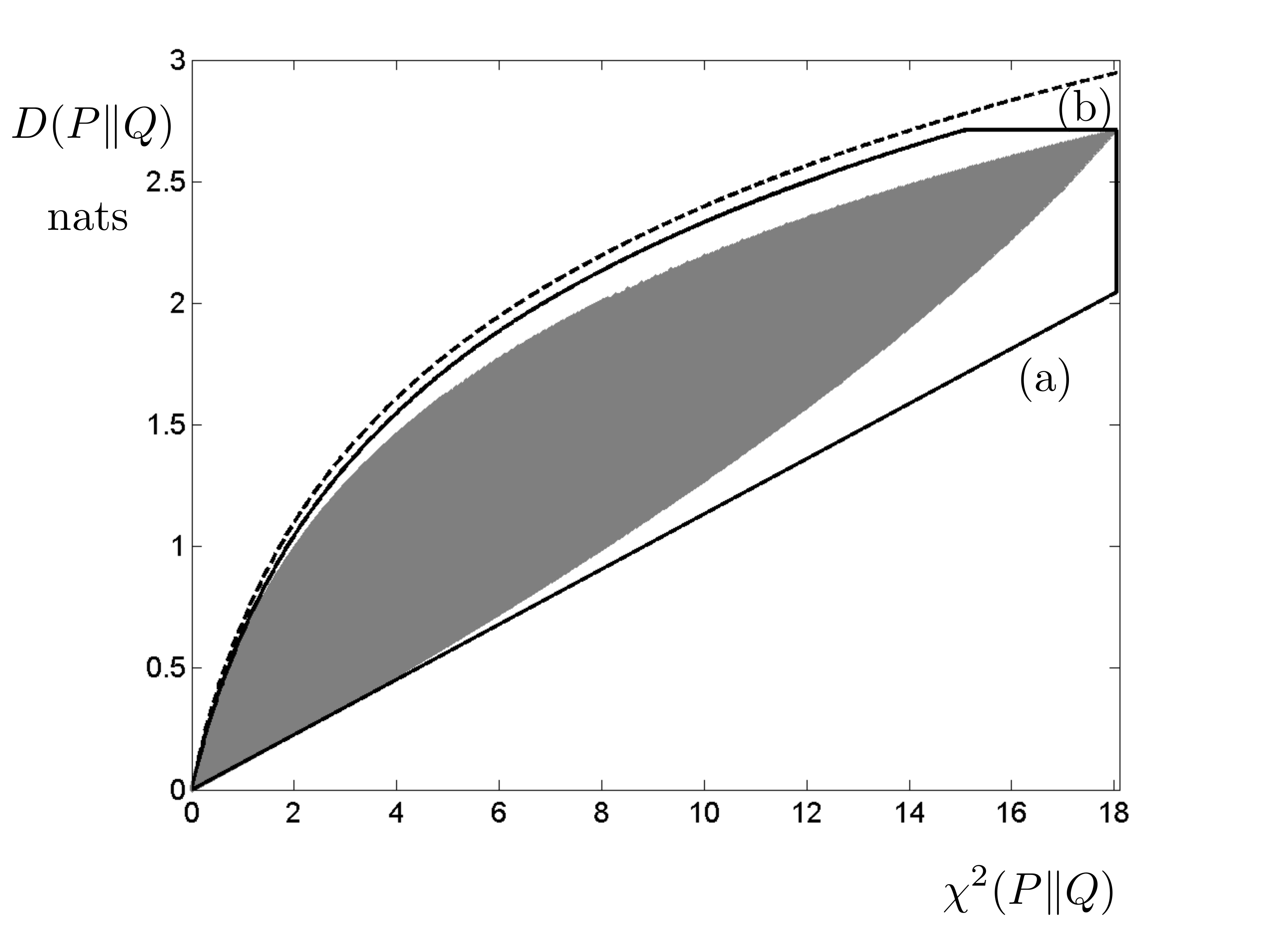}
\vspace*{-0.5cm}
\caption{\label{figure: locus of RE-chi^2}
Comparison of the locus of the points $(\chi^2(P\|Q), D(P\|Q))$ when
$P, Q$ are defined on $\set{A} = \{0,1\}$ with the bounds in the left side
of \eqref{eq: RE and chi-square} (a) and
Theorem~\ref{theorem: bounds RE/chi^2 for given beta1,2} (b);
$(\beta_1, \beta_2) = \bigl(\tfrac12, \tfrac12\bigr)$ and $(\beta_1, \beta_2)
= \bigl(\tfrac1{20}, \tfrac1{20}\bigr)$ in the upper and lower plots, respectively.
The dashed curve in each plot corresponds to the looser bound in \eqref{grout425 - introduction}.}
\end{figure}
\end{example}
\par
The following result relies on the earlier analysis to provide bounds on the
Bhattacharyya distance, expressed in terms of $\chi^2$ divergences
and relative entropy.
\medskip
\begin{theorem} \label{thm: bounds on B distance}
If $P \ll \gg Q$, then the following bounds on the Bhattacharyya distance hold:
\begin{align}
& \tfrac12 \log \bigl(1+\chi^2(P\|Q)\bigr)
- \log \left(1+\tfrac34 \sqrt{\tfrac{\chi^2(P\|Q)}{2\log e}
\Bigl[\log \bigl(1+\chi^2(P\|Q)\bigr)-D(P\|Q) \Bigr]} \right) \nonumber \\[0.1cm]
\label{eq: B2}
& \leq B(P \| Q) \\[0.1cm]
& \leq \tfrac12 \log \bigl(1+\chi^2(P\|Q)\bigr)
-\log \left(1+\frac{\bigl(\tfrac34 \, \chi^2(P\|Q)\bigr)^{\frac32}}
{\sqrt{\bigl(1+\chi^2(P\|Q)\bigr)^2 \bigl(1+\chi^2(Q\|P)\bigr)-1}} \right).
\label{eq: B3}
\end{align}
Furthermore, if $\{P_n\}$ converges to $Q$ in the sense of \eqref{eq: 1st condition},
then the ratio of the bounds on $B(P_n \| Q)$ in \eqref{eq: B2} and \eqref{eq: B3}
tends to~1 as $n \to \infty$.
\end{theorem}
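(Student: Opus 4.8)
The plan is to obtain both bounds from the log-convexity of $\lambda_\alpha$ in \eqref{eq: log-convex lambda} (Simic's Theorem~2.1), applied to the very same random variable $W = \frac{\mathrm{d}P}{\mathrm{d}Q}(X)$ with $X\sim P$ that was used in the proof of Theorem~\ref{theorem: refined upper/lower bounds on RE}. The essential new observation is that the Bhattacharyya coefficient is the moment of $W$ at order $-\tfrac12$: since $\mathbb{E}\bigl[W^{-1/2}\bigr] = \int \sqrt{\mathrm{d}P\,\mathrm{d}Q} = 1 - \tfrac12\mathscr{H}_{\frac12}(P\|Q) = \exp\bigl(-B(P\|Q)\bigr)$ by \eqref{eq2: B distance}, I obtain
\[
\lambda_{-1/2} = \tfrac43\Bigl(\exp\bigl(-B(P\|Q)\bigr) - \bigl(1+\chi^2(P\|Q)\bigr)^{-1/2}\Bigr)\log e,
\]
which complements the already-computed $\lambda_0$, $\lambda_{-1}$ and $\lambda_{-2}$ in \eqref{eq1: lambda0}, \eqref{eq1: lambda-1} and \eqref{eq1: lambda2}. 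Thus all four $\lambda$'s entering the argument are available in closed form in terms of $\chi^2(P\|Q)$, $\chi^2(Q\|P)$, $D(P\|Q)$ and $B(P\|Q)$.

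Two instances of log-convexity then bracket $\lambda_{-1/2}$, and hence $\exp(-B(P\|Q))$. For the lower bound \eqref{eq: B2}, I would use that $-\tfrac12$ is the midpoint of $-1$ and $0$, so $\lambda_{-1/2}^2\leq\lambda_{-1}\lambda_0$; solving $\exp(-B(P\|Q))\leq \bigl(1+\chi^2(P\|Q)\bigr)^{-1/2} + \tfrac{3}{4\log e}\sqrt{\lambda_{-1}\lambda_0}$ for $B(P\|Q)$, factoring $\bigl(1+\chi^2(P\|Q)\bigr)^{-1/2}$ out of the logarithm, and substituting \eqref{eq1: lambda-1} and \eqref{eq1: lambda0} reproduces \eqref{eq: B2}. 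For the upper bound \eqref{eq: B3}, I would use $-1 = \tfrac13(-2)+\tfrac23\bigl(-\tfrac12\bigr)$, which gives $\lambda_{-1}\leq\lambda_{-2}^{1/3}\lambda_{-1/2}^{2/3}$, equivalently $\lambda_{-1/2}\geq\lambda_{-1}^{3/2}\lambda_{-2}^{-1/2}$; this now lower-bounds $\exp(-B(P\|Q))$, and after the same factoring, together with \eqref{eq1: lambda-1}--\eqref{eq1: lambda2} and the identity $\bigl(\tfrac34\chi^2(P\|Q)\bigr)^{3/2}=\tfrac{3\sqrt3}{8}\bigl(\chi^2(P\|Q)\bigr)^{3/2}$, it yields \eqref{eq: B3} (the $(1+\chi^2(P\|Q))$ factor is absorbed into the radical, turning $1+\chi^2(Q\|P)-(1+\chi^2(P\|Q))^{-2}$ into $(1+\chi^2(P\|Q))^2(1+\chi^2(Q\|P))-1$).

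For the asymptotic statement I would invoke the local-behavior results. If $\{P_n\}$ converges to $Q$ in the sense of \eqref{eq: 1st condition}, then Theorem~\ref{thm: local behavior fD} and Corollary~\ref{cor:ratios:chi} give $\chi^2(P_n\|Q)\to 0$, $D(P_n\|Q)/\chi^2(P_n\|Q)\to\tfrac12\log e$, and $\chi^2(Q\|P_n)/\chi^2(P_n\|Q)\to 1$. Writing $c=\chi^2(P_n\|Q)$ and expanding both bounds to first order in $c$ — using $\log(1+c)-D(P_n\|Q)=\tfrac12 c\log e + o(c)$ inside \eqref{eq: B2}, and $(1+c)^2\bigl(1+\chi^2(Q\|P_n)\bigr)-1 = 3c + o(c)$ inside \eqref{eq: B3} — both bounds reduce to $\tfrac18 c\log e + o(c)$, so their ratio tends to $1$ (each is moreover asymptotically equal to $B(P_n\|Q)$, since $\mathscr{H}_{\frac12}/\chi^2\to\tfrac14$ forces $B(P_n\|Q)\sim\tfrac18\,\chi^2(P_n\|Q)\log e$).

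The main obstacle is bookkeeping rather than conceptual: selecting the two correct interpolation triples among the moment orders $\{-2,-1,-\tfrac12,0,1\}$ so that the resulting expressions collapse to exactly the stated closed forms, and carrying out the algebra that reconciles the $\log e$ factors tracking the logarithm base. I should also dispose of degenerate cases: when $\chi^2(Q\|P)=\infty$ the fraction in \eqref{eq: B3} vanishes and the bound degenerates to $B(P\|Q)\leq\tfrac12\log\bigl(1+\chi^2(P\|Q)\bigr)$, which holds directly by Jensen's inequality since $\mathbb{E}\bigl[W^{-1/2}\bigr]\geq\bigl(\mathbb{E}[W]\bigr)^{-1/2}$; finiteness of the relevant $\lambda_\alpha$ needed to apply the log-convexity inequalities should be checked along the way.
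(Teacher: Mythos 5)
Your proposal is correct and follows essentially the same route as the paper: the same two log-convexity inequalities $\lambda_0\,\lambda_{-1}\geq\lambda_{-1/2}^2$ and $\lambda_{-1/2}^2\,\lambda_{-2}\geq\lambda_{-1}^3$ applied to $W=\frac{\mathrm{d}P}{\mathrm{d}Q}(X)$ with $X\sim P$, with $\lambda_{-1/2}$ identified via the Bhattacharyya coefficient through \eqref{eq2: B distance}, and the same first-order expansion showing both bounds behave as $\tfrac18\,\chi^2(P_n\|Q)\log e$ under \eqref{eq: 1st condition}. Your closed form for $\lambda_{-1/2}$ and the algebra recovering \eqref{eq: B2} and \eqref{eq: B3} check out, and your explicit treatment of the degenerate case $\chi^2(Q\|P)=\infty$ is a small bonus the paper leaves implicit.
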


\begin{proof}
In view of the log-convexity of $\lambda_{\alpha}$ in $\alpha \in \Reals$,
\begin{align} \label{eq: lambda inequalities}
\lambda_0 \, \lambda_{-1} \geq \lambda_{-\frac12}^2, \quad
\lambda_{-\frac12}^2 \, \lambda_{-2} \geq \lambda_{-1}^3
\end{align}
for any choice of the random variable $W$ in \eqref{eq: log-convex lambda}.
Consequently, assembling \eqref{eq2: B distance}, \eqref{eq1: lambda0},
\eqref{eq1: lambda -alpha} and \eqref{eq: lambda inequalities} yield
the bounds on $B(P\|Q)$ in \eqref{eq: B2} and \eqref{eq: B3}.

Suppose that $\{P_n\}$ converges to $Q$ in the sense of \eqref{eq: 1st condition}.
Let $L_n$ and $U_n$ denote, respectively, the lower and upper bounds on $B(P_n \| Q)$
in \eqref{eq: B2} and \eqref{eq: B3}. Assembling \eqref{eq1: local}--\eqref{eq4: local},
it can be easily verified that
\begin{align} \label{eq: B bounds are tight}
\lim_{n \to \infty} \frac{L_n}{\chi^2(P_n \| Q)} = \lim_{n \to \infty} \frac{U_n}{\chi^2(P_n \| Q)} = \tfrac18 \, \log e
\end{align}
which yields that $\lim_{n \to \infty} \frac{U_n}{L_n} = 1.$
\end{proof}

\begin{remark}
Note that \eqref{eq: B3} refines the bound
\begin{align}
B(P\|Q) \leq \tfrac12 \log\bigl(1+\chi^2(P\|Q)\bigr)
\end{align}
which is equivalent to $\lambda_{-\frac12} \geq 0$ (in view of Jensen's inequality, \eqref{eq2: B distance}
and \eqref{eq1: lambda -alpha}).
\end{remark}

\begin{remark}
Let $\{P_n\}$ converge to $Q$ in the sense of \eqref{eq: 1st condition}.
In view of \eqref{eq: B bounds are tight}, it follows that
\begin{align} \label{eq: B-chi^2}
\lim_{n \to \infty} \frac{B(P_n \| Q)}{\chi^2(P_n\|Q)} = \tfrac18 \, \log e,
\end{align}
from which we can surmise that both upper bounds in \eqref{eq: improved chi^2-bound on RE} and \eqref{eq: Simic's bound}
are tight under the condition in \eqref{eq: 1st condition} (see Theorem~\ref{theorem: refined upper/lower bounds on RE}),
although \eqref{eq: improved chi^2-bound on RE} only depends on $\chi^2$-divergences.
In view of \eqref{eq: B-chi^2}, the lower bound in \eqref{eq: LB, Simic15} is also tight under the
condition in \eqref{eq: 1st condition}, in the sense that the ratio of $D(P_n \| Q)$ and its lower bound
in \eqref{eq: LB, Simic15} tends to~1 as $n \to \infty$; this sufficient condition for the tightness of
\eqref{eq: LB, Simic15} strengthens the result in \cite[Section~4]{Simic15}.
\end{remark}

\section{Reverse Pinsker Inequalities}
\label{sec:reverseP}
It is not possible to lower bound $|P - Q|$ solely in terms of $D(P\|Q)$ since for any
arbitrarily small $\epsilon >0$ and arbitrarily large $\lambda >0$, we can construct
examples with $|P - Q| < \epsilon$ and $\lambda < D(P\|Q) < \infty$. Therefore, each
of the bounds in this section involves not only $D(P\|Q)$ but another feature of the
pair $(P,Q)$.

\subsection{Bounded Relative Information}
\label{subsec: bounded RI}
As in Section~\ref{sec:bounded}, the following result involves the bounds on the
relative information.
\begin{theorem} \label{thm: improved SV-ITA14}
If $\beta_1 \in (0, 1)$ and $\beta_2 \in [0, 1)$, then,
\begin{align} \label{eq: improved SV-ITA14}
D(P\|Q) \leq \tfrac12 \left( \varphi(\beta_1^{-1}) - \varphi(\beta_2) \right) \, |P-Q|
\end{align}
where $\varphi \colon [0, \infty) \to [0, \infty)$ is given by
\begin{align}
\varphi (t) =
\left\{
\begin{array}{ll}
0 & t=0\\
\frac{t \log t}{t-1} & t \in (0,1) \cup (1,\infty) \\
\log e & t=1.
\end{array}
\right.
\end{align}
\end{theorem}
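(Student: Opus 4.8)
The plan is to work from the representation $D(P\|Q) = \mathbb{E}\bigl[r(Z)\bigr]$, where $Z = \exp\bigl(\imath_{P\|Q}(Y)\bigr) = \frac{\text{d}P}{\text{d}Q}(Y)$ with $Y \sim Q$ and $r$ is the function in \eqref{eq: r}, and to compare $r$ against $|P-Q|$ one side of $t=1$ at a time. The algebraic backbone is the factorization $r(t) = \bigl(\varphi(t) - \log e\bigr)(t-1)$, valid for every $t \neq 1$, which is immediate upon writing $r(t) = t\log t - (t-1)\log e$ and inserting the definition of $\varphi$. Hence $r(t) = \bigl(\varphi(t)-\log e\bigr)\,|t-1|$ for $t>1$ and $r(t) = \bigl(\log e - \varphi(t)\bigr)\,|t-1|$ for $t<1$.

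First I would check that $\varphi$ is monotonically increasing on $(0,\infty)$. Expressing $\varphi$ with natural logarithms and differentiating gives $\varphi'(t) = \frac{(t-1) - \log_e t}{(t-1)^2}$, whose numerator is non-negative by the elementary inequality $\log_e t \leq t-1$; thus $\varphi$ is increasing, with the continuous-extension values $\varphi(0)=0$ and $\varphi(1)=\log e$. In particular $\varphi(t) < \log e$ on $(0,1)$ and $\varphi(t) > \log e$ on $(1,\infty)$, so both bracketed factors above are non-negative.

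Next, since $\beta_1 \in (0,1)$ forces $P \ll Q$ and, by \eqref{def:dinf}--\eqref{eq: beta1}, $Z \leq \beta_1^{-1} = \esssup \frac{\text{d}P}{\text{d}Q}(Y)$ almost surely, while $\beta_2 \in [0,1)$ gives (via \eqref{eq: beta2-alt}, when $\beta_2 > 0$) the lower bound $Z \geq \beta_2$ almost surely, the monotonicity of $\varphi$ yields the pointwise estimates $r(Z) \leq \bigl(\log e - \varphi(\beta_2)\bigr)(1-Z)$ on $\{Z<1\}$ and $r(Z) \leq \bigl(\varphi(\beta_1^{-1}) - \log e\bigr)(Z-1)$ on $\{Z>1\}$. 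Splitting $D(P\|Q) = \mathbb{E}\bigl[r(Z)\,1\{Z<1\}\bigr] + \mathbb{E}\bigl[r(Z)\,1\{Z>1\}\bigr]$ (the set $\{Z=1\}$ contributes nothing since $r(1)=0$) and taking expectations gives
\[
D(P\|Q) \leq \bigl(\log e - \varphi(\beta_2)\bigr) \, \mathbb{E}\bigl[(1-Z)^+\bigr] + \bigl(\varphi(\beta_1^{-1}) - \log e\bigr) \, \mathbb{E}\bigl[(Z-1)^+\bigr].
\]

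The decisive step---and the reason the constant is the \emph{average} $\tfrac12\bigl(\varphi(\beta_1^{-1}) - \varphi(\beta_2)\bigr)$ rather than the larger maximum of the two coefficients---is the symmetry of total variation furnished by Theorem~\ref{thm: tv}: equations \eqref{eq: TV2} and \eqref{eq: TV3} give $\mathbb{E}\bigl[(1-Z)^+\bigr] = \mathbb{E}\bigl[(Z-1)^+\bigr] = \tfrac12 |P-Q|$. Substituting these equal values collapses the bound to $\tfrac12\bigl[(\log e - \varphi(\beta_2)) + (\varphi(\beta_1^{-1}) - \log e)\bigr] \, |P-Q| = \tfrac12 \bigl(\varphi(\beta_1^{-1}) - \varphi(\beta_2)\bigr) \, |P-Q|$, which is the claim. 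I expect the only genuine subtlety to be the edge case $\beta_2 = 0$, where the lower bound on $Z$ is vacuous; there one notes that $\varphi(Z) \geq \varphi(0) = \varphi(\beta_2) = 0$ already holds for all $Z \in (0,1)$ by monotonicity, and that $r(0) = \log e = \bigl(\log e - \varphi(0)\bigr)\cdot 1$ matches the estimate on $\{Z<1\}$ with equality, so the argument goes through unchanged.
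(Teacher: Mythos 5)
Your proposal is correct and follows essentially the same route as the paper's proof: the factorization $r(t)=\bigl(\varphi(t)-\log e\bigr)(t-1)$ differs from the paper's starting identity $t\log t=\varphi(t)(t-1)$ only by the linear term $(1-t)\log e$, which has zero expectation under $Q$, and both arguments then use the monotonicity of $\varphi$, the split at $Z=1$ with the bounds $\beta_2\leq Z\leq\beta_1^{-1}$, and the identities \eqref{eq: TV2}--\eqref{eq: TV3} in exactly the same way. Your handling of the edge case $\beta_2=0$ is also consistent with the paper's argument.
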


\begin{proof}
Let $X \sim P$, $Y \sim Q$, and $Z$ be defined in \eqref{eq: Z}.
The function $\varphi \colon [0, \infty) \to [0, \infty)$ is continuous,
monotonically increasing and non-negative; the monotonicity property holds
since $(t-1)^2 \varphi'(t) = (t-1) \log e - \log t \geq 0$ for all $t > 0$,
and its non-negativity follows from the fact that $\varphi$ is monotonically
increasing on $[0, \infty)$ and $\varphi(0)=0$. Accordingly,
\begin{align}
\label{eq: range of values for phi of Z}
\varphi(\beta_2) \leq \varphi(Z) \leq \varphi(\beta_1^{-1})
\end{align}
since \eqref{eq: Z} and \eqref{eq: beta1-alt}--\eqref{eq: beta2-alt} imply
that $Z \in [\beta_2, \beta_1^{-1}]$ with probability one. The relative entropy satisfies
\begin{align}
D(P \| Q) &= \mathbb{E} \bigl[ Z \log Z \bigr]  \\
& = \mathbb{E} \bigl[ \varphi(Z) \, (Z-1) \bigr]  \\
& = \mathbb{E} \bigl[ \varphi(Z) \, (Z-1) \, 1\{Z>1\} \bigr]
+ \mathbb{E} \bigl[ \varphi(Z) \, (Z-1) \, 1\{Z<1\} \bigr].
\label{eq: relative entropy}
\end{align}
We bound each of the summands in the right side of \eqref{eq: relative entropy} separately.
Invoking \eqref{eq: range of values for phi of Z}, we have
\begin{align}
\mathbb{E} \bigl[ \varphi(Z) \, (Z-1) \, 1\{Z>1\} \bigr]
& \leq \varphi(\beta_1^{-1}) \, \mathbb{E} \bigl[(Z-1) \, 1\{Z>1\} \bigr]  \\[0.1cm]
\label{eq: see footnote 11}
& = \varphi(\beta_1^{-1}) \, \mathbb{E} \bigl[(1-Z)^-\bigr]  \\[0.1cm]
\label{eq: bound on 1st summand}
& = \tfrac12 \, \varphi(\beta_1^{-1}) \, |P-Q|
\end{align}
where \eqref{eq: see footnote 11} holds since $ (x)^- \triangleq -x \, 1\{x < 0\}$,
and \eqref{eq: bound on 1st summand} follows from \eqref{eq: TV3} with $Z$ in \eqref{eq: Z}.
Similarly, \eqref{eq: range of values for phi of Z} yields
\begin{align}
\mathbb{E} \bigl[ \varphi(Z) \, (Z-1) \, 1\{Z<1\} \bigr]
& \leq \varphi(\beta_2) \, \mathbb{E} \bigl[(Z-1) \, 1\{Z<1\} \bigr]  \\[0.1cm]
& = -\varphi(\beta_2) \, \mathbb{E} \bigl[(1-Z)^+\bigr]  \\[0.1cm]
\label{eq: bound on 2nd summand}
& = -\tfrac12 \, \varphi(\beta_2) \, |P-Q|
\end{align}
where \eqref{eq: bound on 2nd summand} follows from \eqref{eq: TV2}.
Assembling \eqref{eq: relative entropy}, \eqref{eq: bound on 1st summand}
and \eqref{eq: bound on 2nd summand}, we obtain \eqref{eq: improved SV-ITA14}.
\end{proof}

\begin{remark}
By dropping the negative term in \eqref{eq: improved SV-ITA14},
we can get the weaker version in \cite[Theorem~7]{Verdu_ITA14}:
\begin{align}
\label{eq: SV-ITA14}
D(P \| Q) \leq \left( \frac{\log \frac1{\beta_1}}{2(1-\beta_1)} \right) |P-Q|.
\end{align}
The coefficient of $|P-Q|$ in the right side of \eqref{eq: SV-ITA14} is monotonically
decreasing in $\beta_1$ and it tends to $\tfrac12 \log e$ by letting $\beta_1 \to 1$.
The improvement over \eqref{eq: SV-ITA14} afforded by \eqref{eq: improved SV-ITA14}  is exemplified
in Appendix~\ref{appendix:improvement}. The bound in \eqref{eq: SV-ITA14} has been recently used in
the context of the optimal quantization of probability measures \cite[Proposition~4]{BochererG_arXiv}.
\end{remark}

\begin{remark} \label{remark: ub RE-TV-chi}
The proof of Theorem~\ref{thm: improved SV-ITA14} hinges on the fact that the function
$\varphi$ is monotonically increasing. It can be verified
that $\varphi$ is also concave and differentiable. Taking into account these additional
properties of $\varphi$, the bound in Theorem~\ref{thm: improved SV-ITA14} can be tightened
as (see Appendix~\ref{appendix: ub RE-TV-chi}):
\begin{align}
\label{eq: ub RE RPI}
D(P\|Q) \leq & \tfrac12 \left(  \varphi(\beta_1^{-1})-\varphi(\beta_2) - \varphi'(\beta_1^{-1})
\; \beta_1^{-1} \right) \; |P-Q|
+ \varphi'(\beta_1^{-1}) \; \mathbb{E} \bigl[Z(Z-1) \, 1\{Z>1\}\bigr]
\end{align}
which is expressed in terms of the distribution of the relative information. The second summand
in the right side of \eqref{eq: ub RE RPI}
satisfies
\begin{align}
\chi^2(P \| Q) + \tfrac{\beta_2}{2} \, |P-Q| &\leq \mathbb{E} \bigl[ Z(Z-1) \, 1\{Z>1\} \bigr]  \\[0.1cm]
& \leq \chi^2(P \| Q) + \tfrac12 |P-Q|.
\label{eq: issv15}
\end{align}
From \eqref{eq: Z}, \eqref{eq: TV3} and
$Z \in [\beta_2, \beta_1^{-1}]$, the gap between the upper and lower bounds
in \eqref{eq: issv15} satisfies
\begin{align}
\tfrac12 (1-\beta_2) \, |P-Q|
& = (1-\beta_2) \, \mathbb{E}\bigl[(1-Z)^+\bigr] \\
& \leq (1-\beta_2)^2
\end{align}
which is upper bounded by~1, and it is close to zero if $\beta_2 \approx 1$. The combination of
\eqref{eq: ub RE RPI} and \eqref{eq: issv15} leads to
\begin{align}
D(P\|Q) \leq  \tfrac12 \, \Bigl(  \varphi(\beta_1^{-1})-\varphi(\beta_2) + \varphi'(\beta_1^{-1})
\; \bigl(1-\beta_1^{-1}\bigr)\Bigr) \; |P-Q| + \varphi'(\beta_1^{-1}) \cdot \chi^2(P \| Q).
\label{eq: ub RE-TV-chi}
\end{align}
\end{remark}

\begin{remark}
A special case of \eqref{eq: SV-ITA14}, where $Q$ is the uniform distribution over
a set of a finite size, was recently rediscovered in \cite[Corollary~13]{Moser_ISIT2015} based
on results from \cite{HoY_IT2010}.
\end{remark}

\begin{remark}
For $ \varepsilon \in [0,2]$ and a fixed probability measure $Q$, define
\begin{align}
D^*(\varepsilon, Q) = \inf_{P \colon |P-Q| \geq \varepsilon} D(P \| Q).
\end{align}
From Sanov's theorem (see \cite[Theorem~11.4.1]{Cover_Thomas}), $D^*(\varepsilon, Q)$
is equal to the asymptotic exponential decay of the probability that the total variation
distance between the empirical distribution of a sequence of i.i.d. random variables
and the true distribution $Q$ is more than a specified value $\varepsilon$. Bounds on
$D^*(\varepsilon, Q)$ have been shown in \cite[Theorem~1]{BerendHK_IT14}, which, locally,
behave  quadratically in $\varepsilon$ . Although this result was classified in
\cite{BerendHK_IT14} as a reverse Pinsker inequality, note that it differs
from the scope of this section which provides, under suitable conditions, lower bounds on
the total variation distance as a function of the relative entropy.
\end{remark}

\subsection{Lipschitz Constraints}
\label{subsec:Lipschitz}

\begin{definition}  \label{def: Lipschitz}
A function $f \colon \set{B} \to \Reals$, where $\set{B} \subseteq \Reals$, is $L$-Lipschitz if
for all $x, y \in \set{B}$
\begin{align}  \label{eq: Lipschitz}
| f(x) - f(y) | \leq L \; |x-y|.
\end{align}
\end{definition}

The following bound generalizes \cite[Theorem~6]{Dragomir00b} to the non-discrete setting.
\begin{theorem}  \label{thm: generalize Dragomir's inequality}
Let $P \ll Q$ with $\beta_1 \in (0,1)$ and $\beta_2 \in [0,1)$, and
$f \colon [0, \infty) \to \Reals$ be continuous and convex with $f(1)=0$,
and $L$-Lipschitz on $[\beta_2, \beta_1^{-1}]$. Then,
\begin{align}  \label{eq: Dragomir's inequality}
D_f(P \| Q) \leq L \, |P-Q|.
\end{align}
\end{theorem}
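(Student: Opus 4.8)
The plan is to reduce \eqref{eq: Dragomir's inequality} to a pointwise inequality that is then integrated against $Q$. With $Z = \exp\bigl(\imath_{P\|Q}(Y)\bigr)$ and $Y \sim Q$ as in \eqref{eq: Z}, Definition~\ref{def:fD} gives $D_f(P\|Q) = \mathbb{E}\bigl[f(Z)\bigr]$. First I would record the range of $Z$: since $P \ll Q$ with $\beta_1 \in (0,1)$ and $\beta_2 \in [0,1)$, the definitions \eqref{eq: beta1}, \eqref{eq: beta2} together with \eqref{def:dinf} (cf.\ \eqref{eq: beta1-alt}--\eqref{eq: beta2-alt}, as already used in the proof of Theorem~\ref{thm: improved SV-ITA14}) yield $\beta_2 = \essinf Z$ and $\beta_1^{-1} = \esssup Z$, so that $Z \in [\beta_2, \beta_1^{-1}]$ with probability one. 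Moreover $\beta_2 < 1 < \beta_1^{-1}$, so the point $t=1$ at which $f$ vanishes lies in the interior of the interval on which $f$ is assumed $L$-Lipschitz.

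The heart of the argument is the observation that, for every $t \in [\beta_2, \beta_1^{-1}]$,
\begin{align}
f(t) &\leq \bigl|f(t)\bigr| = \bigl|f(t) - f(1)\bigr| \leq L \, |t-1|, \nonumber
\end{align}
where the middle equality uses $f(1)=0$ and the last inequality uses $L$-Lipschitzness. Substituting $t = Z$, taking expectations, and invoking \eqref{eq: TV1} (equivalently \eqref{eq2: TV distance}) then gives
\begin{align}
D_f(P\|Q) = \mathbb{E}\bigl[f(Z)\bigr] &\leq L \, \mathbb{E}\bigl[|Z-1|\bigr] = L \, |P-Q|, \nonumber
\end{align}
which is the claimed bound. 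Note that the inequality itself uses only $f(1)=0$ and the Lipschitz property; convexity of $f$ enters merely through the definition of the $f$-divergence.

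I expect the only genuine subtlety to be the boundary case $\beta_2 = 0$, in which $Z$ may place positive mass at $0$ — corresponding to the term $f(0)\,Q(p=0)$ in the representation \eqref{dpqverygralalterf}, with $f(0) = \lim_{t \downarrow 0} f(t)$ as in \eqref{eq: f at 0}. Here the hypothesis that $f$ is continuous (indeed $L$-Lipschitz) on the \emph{closed} interval $[\beta_2, \beta_1^{-1}] = [0, \beta_1^{-1}]$ is exactly what is needed: the displayed pointwise inequality holds at $t=0$ as well, giving $f(0) \leq L$, so the atom contributes at most $L\,Q(p=0)$ to each side, consistent with the bound. Since $P \ll Q$ forces $P(q=0)=0$, the companion term $f^\star(0)\,P(q=0)$ in \eqref{dpqverygralalterf} vanishes and plays no role. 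Thus the expectation manipulation above is valid verbatim, and no separate treatment of the atom is required.
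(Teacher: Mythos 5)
Your proposal is correct and follows essentially the same route as the paper's own proof: both bound $f(Z) \leq |f(Z)-f(1)| \leq L\,|Z-1|$ pointwise and then take expectations, invoking \eqref{eq: TV1} to identify $\mathbb{E}\bigl[|Z-1|\bigr]$ with $|P-Q|$. Your additional remark on the atom at $t=0$ when $\beta_2=0$ is a sound (if not strictly necessary) clarification that the paper leaves implicit.
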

\begin{proof}
If  $Y \sim Q$, and $Z$ is given by \eqref{eq: Z} then $f(1)=0$ yields
\begin{align}
D_f(P \| Q)
&= \mathbb{E} \bigl[ f(Z) \bigr] \\
&\leq \mathbb{E} \bigl[ \bigl| f(Z) - f(1) \bigr| \bigr]\\
&\leq L \; \mathbb{E} \bigl[ \bigl|Z - 1\bigr| \bigr]\\
&= L \; |P-Q| \label{siro}
\end{align}
where \eqref{siro} holds due to \eqref{eq: TV1}.
\end{proof}

Note that if $f$ has a bounded derivative on $[\beta_2, \beta_1^{-1}]$, we can choose
\begin{align} \label{eq: esssup finite}
L = \sup_{t \in [\beta_2, \beta_1^{-1}]} |f'(t)| < \infty.
\end{align}

\begin{remark}
In the case $f(t) = t \log t$, $f(0) =0$, \eqref{eq: esssup finite} particularizes to
\begin{align}
L = \max\bigl\{ \bigl| \log(e \beta_2) \bigr|, \, \log(e \beta_1^{-1}) \bigr\}
\end{align}
resulting in a reverse Pinsker inequality which is weaker than that
in \eqref{eq: improved SV-ITA14} by at least a factor of 2.
\end{remark}

\subsection{Finite Alphabet} \label{subsec: revPinsker-finite}
Throughout this subsection, we assume that $P$ and $Q$ are probability measures defined
on a common finite set $\set{A}$, and $Q$ is strictly positive on $\set{A}$, which has
more than one element.
\par
The bound in \eqref{eq: SV-ITA14} strengthens the finite-alphabet bound in
\cite[Lemma~3.10]{Even-Dar07}:
\begin{align}
\label{eq: lemma by Even-Dar et al.}
D(P \| Q) \leq \log \left(\frac1{Q_{\min}}\right) \cdot |P-Q|
\end{align}
with
\begin{align}\label{qmin}
Q_{\min} = \min_{a \in \set{A}} Q(a) \leq \tfrac12.
\end{align}
To verify this, notice that  $\beta_1 \geq Q_{\min}$.
Let $v \colon (0, 1) \to (0, \infty)$ be defined by
$v(t) = \frac1{1-t} \cdot \log \frac1{t}$; since $v$
is a monotonically decreasing and non-negative function,
we can weaken \eqref{eq: SV-ITA14} to write
\begin{align}
D(P \| Q)
&\leq \left( \frac{\log \frac1{Q_{\min}}}{2(1-Q_{\min})} \right) |P-Q| \\
&\leq \log \left( \frac1{Q_{\min}} \right) \cdot |P-Q|
\label{mira}
\end{align}
where \eqref{mira} follows from \eqref{qmin}.

The main result in this subsection is the following bound.
\begin{theorem} \label{thm: UB-RE-FS}
\begin{align}
\label{eq: UB-RE-FS2}
D(P \| Q) \leq \log \left(1 + \frac{|P-Q|^2}{2 Q_{\min}} \right).
\end{align}
Furthermore, if $Q \ll P$ and $\beta_2$ is defined as in \eqref{eq: beta2},
then the following tightened bound holds:
\begin{align}
\label{eq: UB-RE-FS1}
D(P \| Q) &\leq \log \left(1 + \frac{|P-Q|^2}{2 Q_{\min}} \right)
- \frac{\beta_2 \log e}{2} \cdot |P-Q|^2.
\end{align}
\end{theorem}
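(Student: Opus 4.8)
The plan is to reduce both inequalities to a single sharp finite-alphabet estimate, namely $\chi^2(P\|Q) \le \frac{|P-Q|^2}{2\,Q_{\min}}$, and then to feed this estimate into relations between $D(P\|Q)$, $\chi^2(P\|Q)$ and $|P-Q|$ that are already available in the paper. Since $Q$ is strictly positive on the finite set $\set{A}$, we have $P \ll Q$ automatically, so everything below is well defined.

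First I would prove the $\chi^2$ estimate, which is the crux of the argument and the only place where the factor $\tfrac12$ (rather than the cruder factor $1$) is earned. Writing $\chi^2(P\|Q) = \sum_{a \in \set{A}} \frac{(P(a)-Q(a))^2}{Q(a)}$ via \eqref{eq: chi-square 1b}, I split $\set{A}$ into $\set{A}^+ = \{a \colon P(a) \ge Q(a)\}$ and its complement $\set{A}^-$. On each piece I bound $Q(a) \ge Q_{\min}$ and then use that for reals of one sign $\sum x_a^2 \le \bigl(\sum |x_a|\bigr)^2$, which gives $\sum_{a \in \set{A}^+} (P(a)-Q(a))^2 \le \bigl(\sum_{a \in \set{A}^+} (P(a)-Q(a))\bigr)^2$ and an analogous bound on $\set{A}^-$. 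Because $\sum_{a} (P(a)-Q(a)) = 0$, each of the masses $\sum_{a \in \set{A}^+} (P(a)-Q(a))$ and $\sum_{a \in \set{A}^-} (Q(a)-P(a))$ equals $\tfrac12 |P-Q|$ (cf. \eqref{eq3: TV distance}). Summing the two contributions yields $\frac{|P-Q|^2}{4 Q_{\min}} + \frac{|P-Q|^2}{4 Q_{\min}}$, which is precisely the claimed $\chi^2$ bound. Inequality \eqref{eq: UB-RE-FS2} then follows at once: by \eqref{grout425} we have $D(P\|Q) \le \log\bigl(1 + \chi^2(P\|Q)\bigr)$, and monotonicity of the logarithm together with the $\chi^2$ estimate delivers the result.

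For the tightened bound \eqref{eq: UB-RE-FS1}, I would exploit the lower bound on the gap $\log\bigl(1+\chi^2(P\|Q)\bigr) - D(P\|Q)$ supplied by Theorem~\ref{thm:d-chi}. Under the additional hypothesis $Q \ll P$ we have $P \ll \gg Q$, so (excluding the trivial case $P=Q$, where both sides vanish) we may assume $\beta_1, \beta_2 \in (0,1)$ and apply Theorem~\ref{thm:d-chi} to get $\beta_2 \, D(Q\|P) \le \log\bigl(1+\chi^2(P\|Q)\bigr) - D(P\|Q)$, equivalently $D(P\|Q) \le \log\bigl(1+\chi^2(P\|Q)\bigr) - \beta_2 \, D(Q\|P)$. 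Applying Pinsker's inequality \eqref{eq: Pinsker} with the roles of $P$ and $Q$ interchanged gives $D(Q\|P) \ge \tfrac12 |P-Q|^2 \log e$, and using the $\chi^2$ estimate once more inside the logarithm produces exactly \eqref{eq: UB-RE-FS1}.

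The main obstacle is the sharp $\chi^2$ estimate, whose factor $\tfrac12$ hinges on splitting $\set{A}$ according to the sign of $P-Q$ and on the equal-mass identity for the positive and negative parts; the remaining steps are direct substitutions into \eqref{grout425}, Theorem~\ref{thm:d-chi}, and \eqref{eq: Pinsker}.
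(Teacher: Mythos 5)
Your proposal is correct and follows essentially the same route as the paper: the key estimate $\chi^2(P\|Q)\le \frac{|P-Q|^2}{2Q_{\min}}$ combined with \eqref{grout425 - introduction} gives \eqref{eq: UB-RE-FS2}, and Theorem~\ref{thm:d-chi} together with Pinsker's inequality applied to $D(Q\|P)$ gives \eqref{eq: UB-RE-FS1}, exactly as in the paper's proof. The only (immaterial) difference is how you obtain $\sum_a (P(a)-Q(a))^2 \le \tfrac12|P-Q|^2$: you split $\set{A}$ by the sign of $P-Q$ and use $\sum x_a^2 \le (\sum x_a)^2$ on each part, whereas the paper bounds the sum by $\max_a|P(a)-Q(a)|\cdot|P-Q|$ and then uses $\max_a|P(a)-Q(a)|\le\tfrac12|P-Q|$; both are one-line arguments yielding the same constant.
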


\begin{proof}
Combining \eqref{grout425 - introduction}
and the following finite-alphabet upper bound on $\chi^2(P \| Q)$ yields  \eqref{eq: UB-RE-FS2}:
\begin{align}
Q_{\min} \; \chi^2(P\|Q)
\label{barsa3}
& = \sum_{a \in \set{A}} \frac{(P(a)-Q(a))^2}{Q(a)/Q_{\min}}  \\[0.1cm]
\label{elle}
& \leq \sum_{a \in \set{A}} \bigl(P(a)-Q(a)\bigr)^2 \\[0.1cm]
& \leq \max_{x \in \set{A}} |P(x)-Q(x)| \, \sum_{a \in \set{A}}
\bigl|P(a)-Q(a)\bigr| \nonumber \\[0.1cm]
\label{eq: 1st-ubchi}
& = |P-Q| \; \max_{a \in \set{A}} |P(a)-Q(a)| \\
\label{juventus1}
&\leq \tfrac12 \, |P-Q|^2.
\end{align}

If $P \ll \gg Q$, then \eqref{eq: UB-RE-FS1} follows by combining \eqref{juventus1} and
\begin{align}
\label{eq: lb2chi}
 \chi^2(P\|Q)
& \geq \exp \Bigl( D(P\|Q) + \beta_2 \, D(Q \|P) \Bigr) - 1  \\[0.1cm]
\label{eq2: lb2chi}
& \geq \exp \left( D(P \|Q) + \tfrac12 \, |P-Q|^2 \, \beta_2 \, \log e \right) - 1
\end{align}
where \eqref{eq: lb2chi} follows by rearranging \eqref{lbchi}, and \eqref{eq2: lb2chi}
follows from \eqref{eq: Pinsker}.
\end{proof}

\begin{remark}
It is easy to check that Theorem~\ref{thm: UB-RE-FS} strengthens the bound by Csisz\'ar and Talata~\eqref{eq: CsTa}
by at least a factor of~2 since upper bounding  the logarithm in \eqref{eq: UB-RE-FS2} gives
\begin{align} \label{betterthanCT}
D(P \| Q) \leq \frac{\log e}{2 Q_{\min}} \cdot {|P-Q|^2}.
\end{align}
\end{remark}

\begin{remark}
In the finite-alphabet case, similarly to \eqref{eq: UB-RE-FS2}, one can obtain another upper bound on $D(P\|Q)$
as a function of the $\ell_2$ norm
$\| P - Q \|_2$:
\begin{align}  \label{eq: UB-RE-FS-looser}
D(P \| Q) \leq \frac{1}{Q_{\min}} \cdot \|P-Q\|_2^2 \, \log e
\end{align}
which appears in the proof of Property~4 of \cite[Lemma~7]{TomamichelT_IT13},
and also used in \cite[(174)]{KostinaV15}.
Furthermore, similarly to \eqref{eq: UB-RE-FS1}, the following tightened bound holds if $P \ll \gg Q$:
\begin{align}  \label{eq: UB-RE-FS}
D(P \| Q) \leq \log \left(1 + \frac{\|P-Q\|_2^2}{Q_{\min}} \right) - \frac{\beta_2 \log e}{2} \cdot \|P-Q\|_2^2
\end{align}
which follows by combining \eqref{elle}, \eqref{eq2: lb2chi}, and the inequality
$\|P-Q\|_2 \leq |P-Q|$.
\end{remark}

\begin{remark}
Combining \eqref{eq: Pinsker} and \eqref{juventus1} yields that if $P \neq Q$ are defined on a common finite set, then
\begin{align} \label{eq: lb on RE/chi^2 - finite alphabet}
\frac{D(P \| Q)}{\chi^2(P \| Q)} \geq Q_{\min} \, \log e
\end{align}
which at least doubles the lower bound in \cite[Lemma~6]{Makur15}. This, in turn,
improves the tightened upper bound on the strong data processing inequality constant
in \cite[Theorem~10]{Makur15} by a factor of~2.
\end{remark}

\begin{remark}
Reverse Pinsker inequalities have been also derived in quantum information theory
(\cite{AE1, AE2}), providing upper bounds on the relative entropy of two quantum
states as a function of the trace norm distance when the minimal eigenvalues of the
states are positive (c.f. \cite[Theorem~6]{AE1} and \cite[Theorem~1]{AE2}).
When the variational distance is much smaller than the minimal eigenvalue (see
\cite[Eq.~(57)]{AE1}), the latter bounds have a quadratic scaling in this distance,
similarly to \eqref{eq: UB-RE-FS2}; they are also inversely proportional to the
minimal eigenvalue, similarly to the dependence of \eqref{eq: UB-RE-FS2} in $Q_{\min}$.
\end{remark}

\begin{remark}
Let $P$ and $Q$ be probability distributions defined on an arbitrary alphabet
$\set{A}$. Combining Theorems~\ref{thm: bounds RE and dual}
and~\ref{thm: improved SV-ITA14} leads to a derivation of an upper bound
on the difference $D(P\|Q) - D(Q\|P)$ as a function of $|P-Q|$ as long as
$P \ll \gg Q$ and the relative information $\imath_{P\|Q}$ is bounded away from $- \infty$ and $+\infty$.
Furthermore, another upper bound on the difference of the relative entropies
can be readily obtained by combining Theorems~\ref{thm: bounds RE and dual}
and~\ref{thm: UB-RE-FS} when $P$ and $Q$ are probability measures defined on
a finite alphabet. In the latter case, combining Theorem~\ref{thm: bounds RE and dual}
and \eqref{eq: UB-RE-FS} also yields another upper bound on $D(P\|Q) - D(Q\|P)$
which scales quadratically with $\|P-Q\|_2$. All these bounds form a counterpart
to \cite[Theorem 1]{Audeaert13} and Theorem~\ref{thm: bounds RE and dual},
providing measures of the asymmetry of the relative entropy when the relative
information is bounded.
\end{remark}

\subsection{Distance From the Equiprobable Distribution}
\label{subsec: Distance From the Equiprobable Distribution}
If $P$ is a distribution on a finite set $\set{A}$, $H(P)$ gauges the ``distance" from
$\mathsf{U}$, the equiprobable distribution defined on $\set{A}$, since
$H(P) = \log | \set{A} | -  D(P \| \mathsf{U})$.
Thus, it is of interest to explore the relationship between $H(P) $ and $|P-\mathsf{U}|$.
Next, we determine the exact locus of the points $(H(P), \, |P-\mathsf{U}|)$ among all
probability measures $P$ defined on $\set{A}$, and this region is compared to
upper and lower bounds on $|P-\mathsf{U}|$ as a function of $H(P)$.
As usual,  $h(x)$ denotes the continuous extension of $ -x \log x - (1-x) \log(1-x)$ to $x \in [0,1]$
 and $d(x \| y)$ denotes the binary relative entropy in
\eqref{eq: binary RE}.
\begin{theorem} \label{thm: exact locus}
Let $\mathsf{U}$ be the equiprobable distribution on a
$\{ 1 , \ldots , |\set{A}|\}$, with $1 < |\set{A} | < \infty$.
\begin{enumerate}[a)]
\item  \label{thm: exact locus: parta}
For $\Delta \in (0, 2(1-|\set{A}|^{-1})]$,\footnote{There is no $P$ with $|P -\mathsf{U}| > 2(1-|\set{A}|^{-1})$.}
\begin{align}
\max_{P \colon |P -\mathsf{U}| = \Delta} H(P)
\label{eq: maximal entropy for given TV}
= \log |\set{A}| - \min_{m} \,
d \left( \tfrac{m}{|\set{A}|} + \tfrac12 \Delta \, \big\| \, \tfrac{m}{|\set{A}|} \right)
\end{align}
where the minimum in the right side of \eqref{eq: maximal entropy for given TV}
is over
\begin{align}\label{conditionmdelta}
m \in \{1, \ldots, | \set{A}| - \bigl\lceil \tfrac12 \Delta |\set{A}| \bigr\rceil \}.
\end{align}
Denoting such an integer by $m_\Delta$, the maximum in the left side of
\eqref{eq: maximal entropy for given TV} is attained by
\begin{align} \label{eq: P for maximizing the TV}
P_{\Delta}(\ell) =
\left\{
\begin{array}{ll}
\hspace*{-0.1cm} |\set{A}|^{-1} + \frac{\Delta}{2m_\Delta}
& \hspace*{-0.1cm} \ell \in \{1, \ldots, m_\Delta\}, \\[0.2cm]
\hspace*{-0.1cm} |\set{A}|^{-1} - \frac{\Delta}{2(|\set{A}|-m_\Delta)},
& \hspace*{-0.1cm} \ell \in \{m_\Delta+1, \ldots, |\set{A}|\}.
\end{array}
\right.
\end{align}
\item \label{thm: exact locus: partb}
Let
\begin{align} \label{eq: h_k}
h_k =
\left\{
\begin{array}{ll}
\hspace*{-0.1cm} 0, & \hspace*{-0.2cm} k=0 \\[0.1cm]
\hspace*{-0.1cm} h \bigl(|\set{A}|^{-1} k \bigr)
+ |\set{A}|^{-1} \, k \log k,
& \hspace*{-0.2cm} k \in \{1, \ldots, |\set{A}|-2\} \\[0.1cm]
\hspace*{-0.1cm} \log |\set{A}|,
& \hspace*{-0.2cm} k = |\set{A}|-1.
\end{array}
\right.
\end{align}
If $H \in [h_{k-1}, h_{k})$ for $k \in \{1, \ldots, |\set{A}|-1\}$, then
\begin{align}
\min_{P \colon H(P) = H}  |P-\mathsf{U}| = 2\bigl(1-(k+\theta) \, |\set{A}|^{-1}\bigr)
\end{align}
which is achieved by
\begin{align} \label{eq: P for minimizing the TV}
P^{(k)}_{\theta}(\ell) =
\left\{
\begin{array}{ll}
\hspace*{-0.1cm} 1-(k-1+\theta) \, |\set{A}|^{-1}, & \hspace*{-0.2cm} \ell = 1 \\[0.1cm]
\hspace*{-0.1cm} |\set{A}|^{-1}, & \hspace*{-0.2cm} \ell \in \{2, \ldots, k\}, \\[0.1cm]
\hspace*{-0.1cm} \theta \, |\set{A}|^{-1}, & \hspace*{-0.2cm} \ell = k+1 \\[0.1cm]
\hspace*{-0.1cm} 0, & \hspace*{-0.2cm} \ell \in \{k+2, \ldots, |\set{A}|\}
\end{array}
\right.
\end{align}
where $\theta \in [0, 1)$ is chosen so that $H(P^{(k)}_\theta ) = H$.
\end{enumerate}
\end{theorem}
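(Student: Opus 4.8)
The plan is to handle the two parts by exploiting in both the identity $H(P) = \log|\set{A}| - D(P\|\mathsf{U})$, so that entropy extremization becomes a relative-entropy problem against the uniform reference. Throughout I write $n = |\set{A}|$ and use $D(P\|\mathsf{U}) = \sum_{a} P(a)\log\bigl(n\,P(a)\bigr)$, where $g(p) = p\log(np)$ is convex on $[0,\infty)$.

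For part \ref{thm: exact locus: parta}), maximizing $H(P)$ for fixed $|P-\mathsf{U}|=\Delta$ is equivalent to minimizing $D(P\|\mathsf{U})$ under the same constraint. First I would split $\set{A}$ by the sign of $P(a)-n^{-1}$; letting $m$ be the number of coordinates with $P(a)>n^{-1}$, the excess $\sum_a (P(a)-n^{-1})^+$ and the deficit $\sum_a (n^{-1}-P(a))^+$ both equal $\tfrac12\Delta$. Holding $m$ fixed, convexity of $g$ and Jensen's inequality show that $D(P\|\mathsf{U})$ cannot increase if $P$ is replaced by the distribution equal to $n^{-1}+\tfrac{\Delta}{2m}$ on the $m$ excess coordinates and equal to $n^{-1}-\tfrac{\Delta}{2(n-m)}$ on the other $n-m$ coordinates (the coordinates sitting exactly at $n^{-1}$ are absorbed into the deficit group, which only helps by the same Jensen step). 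A direct computation then matches the relative entropy of this two-level law with $d\bigl(\tfrac{m}{n}+\tfrac{\Delta}{2}\,\big\|\,\tfrac{m}{n}\bigr)$, via $n p_+ = x/y$ and $n p_- = (1-x)/(1-y)$ with $x=\tfrac{m}{n}+\tfrac{\Delta}{2}$, $y=\tfrac{m}{n}$. Finally I would minimize over the integer $m$; the constraint $p_-\ge 0$ forces $m\le n-\lceil \tfrac{\Delta n}{2}\rceil$, which is exactly the stated range and yields both \eqref{eq: maximal entropy for given TV} and the maximizer \eqref{eq: P for maximizing the TV}.

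For part \ref{thm: exact locus: partb}) the conceptual point is that minimizing $|P-\mathsf{U}|$ for fixed entropy lives on the \emph{lower} boundary of the achievable region in the $\bigl(|P-\mathsf{U}|,H(P)\bigr)$ plane, i.e.\ it is the inverse of \emph{minimizing} entropy for fixed $|P-\mathsf{U}|$, rather than the upper boundary of part \ref{thm: exact locus: parta}). So I would first show that, among all $P$ with $|P-\mathsf{U}|=\Delta$, the staircase $P^{(k)}_\theta$ with $k+\theta = n\bigl(1-\tfrac{\Delta}{2}\bigr)$ minimizes entropy. This follows from majorization: writing $P_{(1)}\ge P_{(2)}\ge\cdots$ for the sorted law, bounding the summed excess by the total excess gives $\sum_{i=1}^{j} P_{(i)} \le \min\bigl\{\tfrac{j}{n}+\tfrac{\Delta}{2},\,1\bigr\}$ for every $j$, and a one-line check shows $P^{(k)}_\theta$ saturates this bound for all $j$; hence the staircase majorizes every competitor, and since entropy is Schur-concave it has the least entropy. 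Then, verifying that $H(P^{(k)}_\theta)$ is continuous and strictly increasing in $\theta\in[0,1)$ and sweeps $[h_{k-1},h_k)$ (with the endpoint laws of adjacent $k$ coinciding), the minimal-entropy value becomes a continuous, strictly decreasing function $G(\Delta)$. Inverting: for $H\in[h_{k-1},h_k)$ the least $\Delta$ admitting entropy $H$ is $G^{-1}(H)$, attained by the staircase, giving $\min_{H(P)=H}|P-\mathsf{U}| = 2\bigl(1-(k+\theta)n^{-1}\bigr)$ and the minimizer \eqref{eq: P for minimizing the TV}.

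The routine parts are the Jensen equalization in \ref{thm: exact locus: parta}) and the arithmetic identifying the two-level divergence with $d(\cdot\|\cdot)$. The genuinely delicate points are: (i) bookkeeping the coordinates exactly at $n^{-1}$ and the integrality of $m$ so that the stated index set and maximizer are reproduced precisely; and (ii) in part \ref{thm: exact locus: partb}), establishing the strict monotonicity and continuity of $G$ and matching its breakpoints with the transitions of the active index $k$ (through the quantities $h_k$ of \eqref{eq: h_k}) carefully enough to justify the inversion. The majorization step, though the heart of \ref{thm: exact locus: partb}), is short once the partial-sum bound $\sum_{i=1}^{j} P_{(i)} \le \min\bigl\{\tfrac{j}{n}+\tfrac{\Delta}{2},\,1\bigr\}$ is observed.
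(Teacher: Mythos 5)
Your part~a) is essentially the paper's own argument: the paper performs the same two-group equalization (masses above and below $|\set{A}|^{-1}$) by invoking concavity of entropy on subsets, which is exactly your Jensen step for the convex function $t\mapsto t\log\bigl(|\set{A}|t\bigr)$, followed by the same identification of the two-level law's divergence with $d\bigl(\tfrac{m}{|\set{A}|}+\tfrac12\Delta\,\big\|\,\tfrac{m}{|\set{A}|}\bigr)$ and the same integer optimization over $m$. Part~b) is where you genuinely diverge: the paper does not prove this part at all --- it simply cites Theorem~3 of Ho and Yeung (reference \cite{HoY_IT2010}) for the entropy-minimizing distribution at fixed variational distance --- whereas you supply a self-contained argument. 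Your route is correct: the partial-sum bound $\sum_{i=1}^{j}P_{(i)}\le\min\bigl\{\tfrac{j}{|\set{A}|}+\tfrac12\Delta,\,1\bigr\}$ follows from $\sum_a\bigl(P(a)-|\set{A}|^{-1}\bigr)^+=\tfrac12\Delta$, the staircase $P^{(k)}_\theta$ saturates it for every $j$ (so it majorizes all competitors and, by Schur-concavity, minimizes entropy), and the inversion is legitimate because increasing $\theta$ is a transfer of mass from the strictly largest atom to a strictly smaller one, making $H(P^{(k)}_\theta)$ strictly increasing and continuous across the breakpoints $h_k$, so that the minimal-entropy curve $G(\Delta)$ is strictly decreasing and invertible. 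What your approach buys is a complete, verifiable proof of part~b) inside the paper rather than an external citation, at the cost of the majorization/Schur-concavity machinery and the monotonicity bookkeeping you correctly flag as the delicate step.
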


\begin{proof}
See Appendix~\ref{appendix: exact locus}.
\end{proof}

\begin{remark}
For probability measures defined on a 2-element set $\set{A}$, the maximal and minimal values of $|P-\mathsf{U}|$ in
Theorem~\ref{thm: exact locus} coincide. This can be verified since, if $P(1)=p$ for $p \in [0,1]$, then
$|P-\mathsf{U}| = |1-2p|$ and $H(P) = h(p)$. Hence, if $|\set{A}|=2$ and $H(P)=H \in [0,\log 2]$, then
\begin{align}
|P-\mathsf{U}| = 1 - 2 h^{-1}(H)
\end{align}
where $h^{-1} \colon [0,\log 2] \to [0, \tfrac12]$ denotes the inverse of the binary entropy function.
\end{remark}

\par
Results on the more general problem of finding bounds on $|H( P ) - H(Q)|$ based on $|P-Q|$
can be found in \cite[Theorem~17.3.3]{Cover_Thomas}, \cite{HoY_IT2010}, \cite{Sason_IT2013}
and \cite{Zhang_IT2007}. Most well-known among them is
\begin{align} \label{hphqhv}
| H (P) - H (Q) | \leq | P - Q | \log \left(\frac{| \set{A}|}{|P-Q|}\right)
\end{align}
which holds if $P,Q$ are probability measures defined on a finite set $\set{A}$ with
$\bigl|P(a)-Q(a)\bigr| \leq \frac12$ for all $a \in \set{A}$ (see
\cite{Verdu_book}, and \cite[Lemma~2.7]{Csiszar_Korner} with a stronger sufficient condition).
Particularizing \eqref{hphqhv} to the case where $Q=\mathsf{U}$, and
$\left|P(a)-\tfrac1{|\set{A}|}\right| \leq \tfrac12$ for all $a \in \set{A}$ yields
\begin{align}
\label{hphqhv-U}
H(P) \geq \log|\set{A}| - |P-\mathsf{U}| \, \log\left(\frac{|\set{A}|}{|P-\mathsf{U}|}\right),
\end{align}
a bound which finds use in information-theoretic security \cite{CsiszarN04}.

Particularizing \eqref{eq: Pinsker}, \eqref{eq: BretagnolleH79}, and \eqref{eq: UB-RE-FS2} we obtain
\begin{align}
\label{eq: 1st ubtv-uniform}
H(P) &\leq \log |\set{A}| - \tfrac12 \, |P-\mathsf{U}|^2 \, \log e, \\
\label{eq: 2nd ubtv-uniform}
H(P) &\leq \log |\set{A}| + \log \left( 1 - \tfrac14 \, |P-\mathsf{U}|^2 \right), \\
\label{eq2: lbtv-uniform}
H(P) &\geq \log |\set{A}| - \log\left(1+ \tfrac{|\set{A}|}{2} \; |P-\mathsf{U}|^2\right).
\end{align}
If either $|\set{A}|=2$ or $8 \leq |\set{A}| \leq 102$, it can be checked that the lower bound on
$H(P)$ in \eqref{hphqhv-U} is worse than \eqref{eq2: lbtv-uniform}, irrespectively of $|P-\mathsf{U}|$
(note that $0 \leq |P-\mathsf{U}| \leq 2(1-|\set{A}|^{-1} )$).

The exact locus of $(H(P), |P-\mathsf{U}|)$ among all the probability measures $P$ defined on
a finite set $\set{A}$ (see Theorem~\ref{thm: exact locus}), and the
bounds in \eqref{eq: 1st ubtv-uniform}--\eqref{eq2: lbtv-uniform} are illustrated
in Figure~\ref{figure:equiprobable} for $|\set{A}| = 4$  and $|\set{A}| = 256$.
For $|\set{A}| = 4$, the lower bound in \eqref{eq2: lbtv-uniform} is tighter than
\eqref{hphqhv-U}. For $|\set{A}| = 256$, we only show \eqref{eq2: lbtv-uniform}  in
Figure~\ref{figure:equiprobable} as in this case  \eqref{hphqhv-U} offers a very minor improvement in a small range.
As the cardinality of the set $\set{A}$ increases, the
gap between the exact locus (shaded region) and the upper bound obtained from
\eqref{eq: 1st ubtv-uniform} and \eqref{eq: 2nd ubtv-uniform} (Curves~(a) and (b), respectively)
decreases, whereas the gap between the exact locus and the lower bound in \eqref{eq2: lbtv-uniform}
(Curve~(c)) increases.

\begin{figure}[h]
\begin{center}
\includegraphics[width=7.8cm]{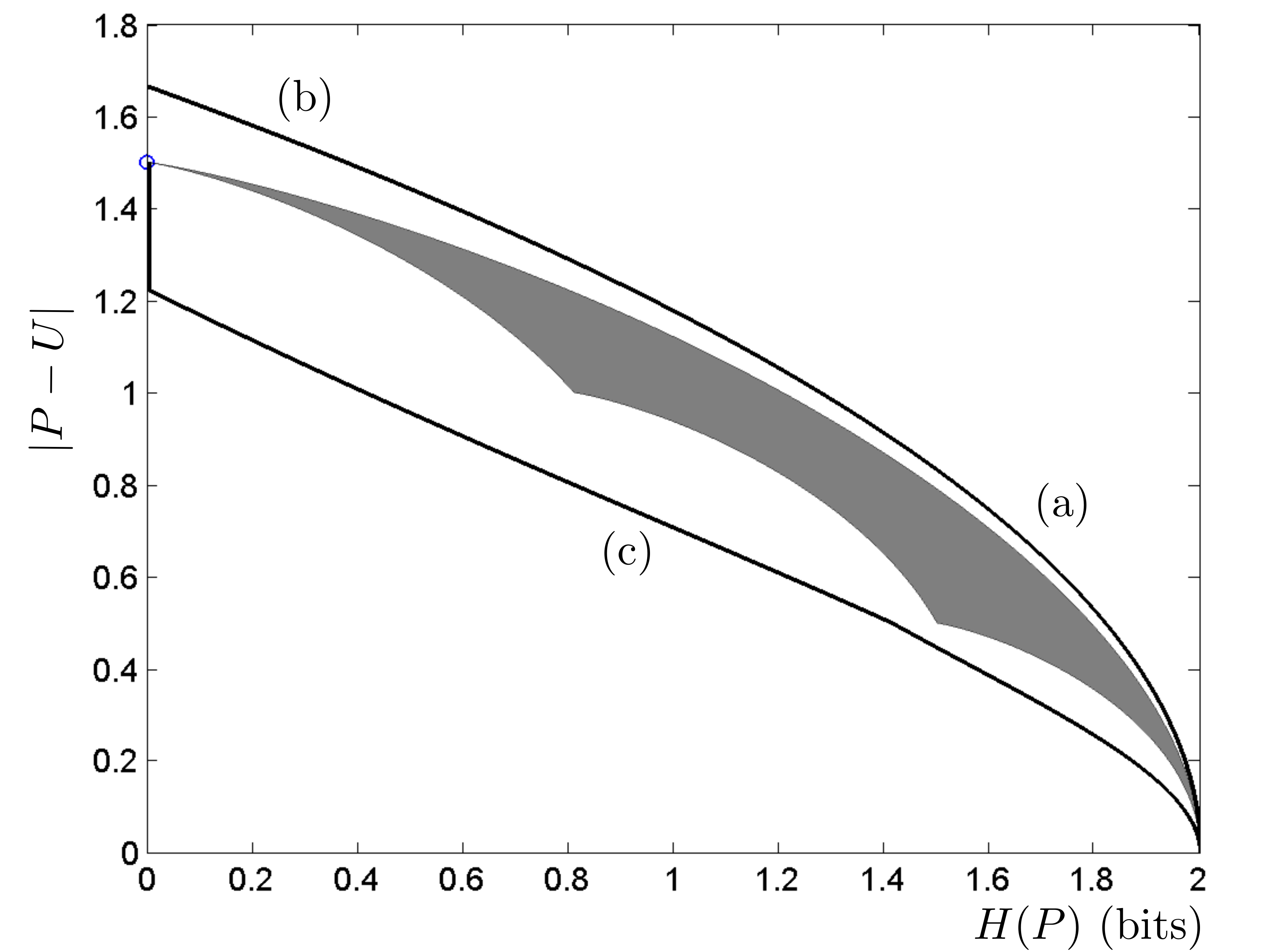} \\[0.2cm]
\includegraphics[width=7.8cm]{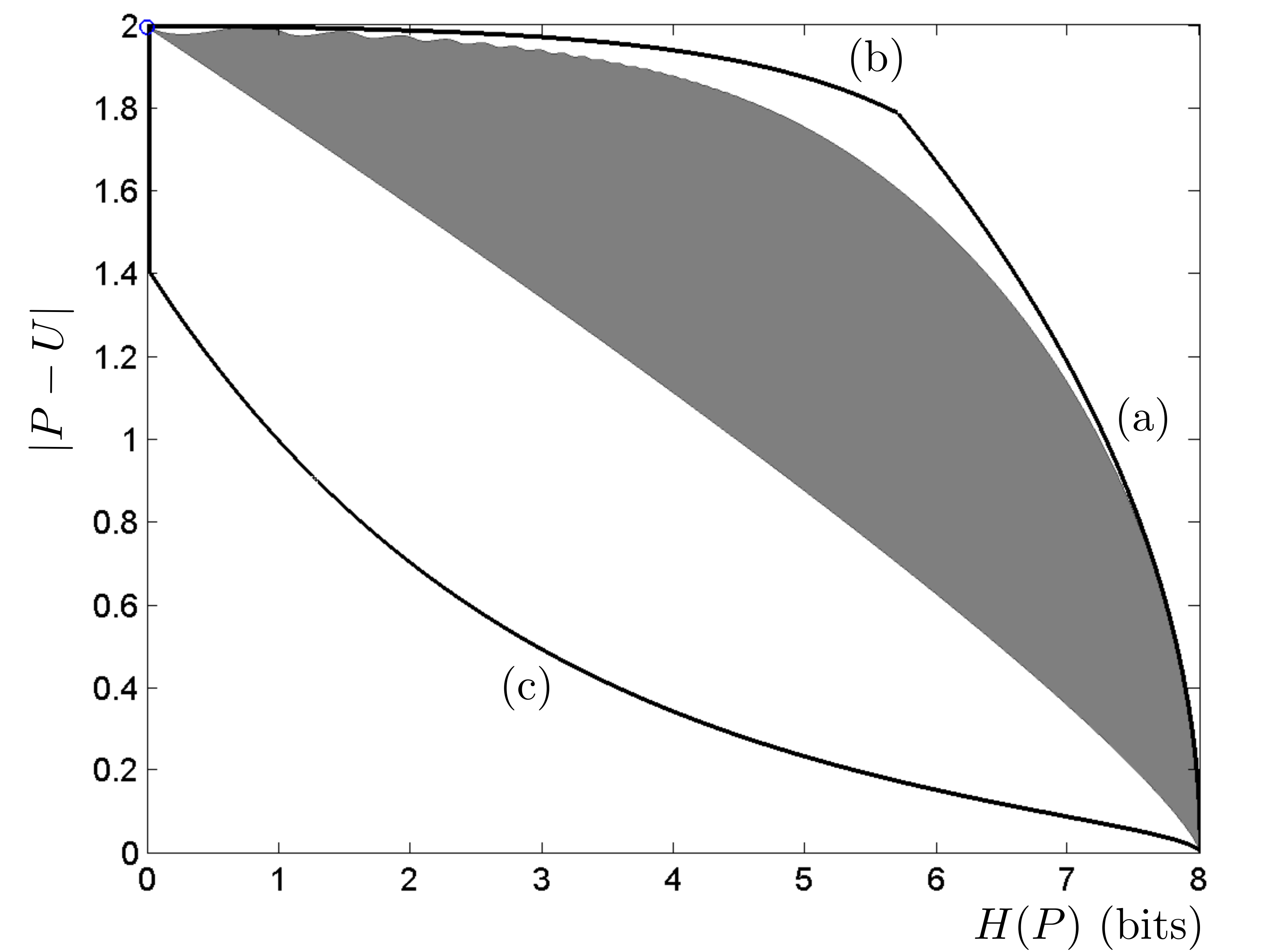}
\end{center}
\caption{\label{figure:equiprobable}
The exact locus of $(H(P), |P-\mathsf{U}|)$ among all the probability
measures $P$ defined on a finite set $\set{A}$,
and bounds on $|P-\mathsf{U}|$ as a function of $H(P)$ for $|\set{A}| = 4$ (left plot),
and $|\set{A}| = 256$ (right plot). The point $(H(P), |P-\mathsf{U}|)=(0, \, 2(1-|\set{A}|^{-1}) )$
is depicted on the $y$-axis. In the two plots, Curves~(a), (b) and (c)
refer, respectively, to \eqref{eq: 1st ubtv-uniform}, \eqref{eq: 2nd ubtv-uniform} and
\eqref{eq2: lbtv-uniform}; the exact locus (shaded region) refers to Theorem~\ref{thm: exact locus}.}
\end{figure}


\subsection{The Exponential Decay of the Probability of Non-Strongly Typical Sequences}
\label{subsec: non-typical sequences}
The objective is to bound the function
\begin{align}\label{def:ldeltaq}
L_{\delta} ( Q ) = \min_{P \not \in \mathcal{T}_\delta (Q )} D(P\|Q)
\end{align}
where the subset of probability measures on $ (\set{A}, \mathscr{F})$
which are $\delta$-close to $Q$ is given by
\begin{align}
\mathcal{T}_\delta (Q) = \Bigl\{P
\colon \forall \, a \in \set{A}, \; \; |P(a)-Q(a)| \leq \delta \, Q(a) \Bigr\}.
\end{align}
Note that $(a_1, \ldots , a_n )$ is strongly $\delta$-typical according to $Q$ if its
empirical distribution belongs to $\mathcal{T}_\delta (Q )$. According to Sanov's theorem
(e.g. \cite[Theorem~11.4.1]{Cover_Thomas}), if the random variables are independent
and distributed according to $Q$, then the probability that
$(Y_1, \ldots , Y_n)$, is not $\delta$-typical
vanishes exponentially with exponent $L_{\delta} ( Q )$.
\par
To state the next result, we invoke the following notions from \cite{OrdentlichW_IT2005}.
Given a probability measure $Q$, its \textit{balance coefficient} is given by
\begin{align} \label{balanceOW}
\beta_Q = \inf_{\set{F} \in \mathscr{F} \colon Q(\set{F}) \geq \frac12} Q(\set{F}).
\end{align}
The function $\phi\colon(0, \tfrac12] \to [\tfrac12 \log e, \infty)$ is a monotonically
decreasing and convex function, which is given by
\begin{align}
\label{eq: phi refined pinsker}
\phi(p) = \left\{
\begin{array}{ll}
\frac1{4(1-2p)} \, \log \left( \frac{1-p}{p} \right), & p \in
\bigl(0, \tfrac12 \bigr), \\[0.1cm]
\tfrac12 \log e, & p=\tfrac12 .
\end{array}
\right.
\end{align}

\begin{theorem}
If $Q_{\min}>0$, then
\begin{align}
\label{potalo}
\phi(1 - \beta_Q) \, Q_{\min}^2 \, \delta^2
&\leq L_{\delta} ( Q ) \\
&\leq \log \left(1 + 2 Q_{\min} \, \delta^2 \right)
\label{potaup}
\end{align}
where \eqref{potaup} holds if $\delta < \frac{1-Q_{\max}}{Q_{\min}}$.
\end{theorem}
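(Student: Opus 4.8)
The plan is to prove the two inequalities separately: the upper bound \eqref{potaup} by exhibiting a nearly-optimal non-typical perturbation of $Q$ and invoking the relative-entropy/$\chi^2$ bound \eqref{grout425 - introduction}, and the lower bound \eqref{potalo} by combining an elementary total-variation estimate for non-typical $P$ with the distribution-dependent (balance-coefficient) refinement of Pinsker's inequality from \cite{OrdentlichW_IT2005}.

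For the upper bound, recall that $L_\delta(Q)$ in \eqref{def:ldeltaq} is an infimum of $D(P\|Q)$ over $P \notin \mathcal{T}_\delta(Q)$, so any admissible $P$ furnishes an upper bound. First I would pick symbols $a_{\min}, a_{\max} \in \set{A}$ with $Q(a_{\min}) = Q_{\min}$ and $Q(a_{\max}) = Q_{\max}$, and define $P$ that agrees with $Q$ except that $P(a_{\min}) = Q_{\min}(1-\delta)$ and $P(a_{\max}) = Q_{\max} + \delta Q_{\min}$; the hypothesis $\delta < \tfrac{1-Q_{\max}}{Q_{\min}}$ is exactly what guarantees $P(a_{\max}) \le 1$ (while $P(a_{\min}) \ge 0$ holds since $\delta \le 1$), so that $P$ is a legitimate probability measure. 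This $P$ lies on the boundary of $\mathcal{T}_\delta(Q)$, so by taking a sequence $P_\eta \notin \mathcal{T}_\delta(Q)$ with $P_\eta \to P$ and using continuity of $D(\cdot\|Q)$, one gets $L_\delta(Q) \le D(P\|Q)$. A direct computation gives $\chi^2(P\|Q) = \delta^2 Q_{\min}\bigl(1 + Q_{\min}/Q_{\max}\bigr) \le 2\,\delta^2 Q_{\min}$, the last inequality because $Q_{\min} \le Q_{\max}$; then \eqref{grout425 - introduction} together with monotonicity of the logarithm yields $D(P\|Q) \le \log\bigl(1+\chi^2(P\|Q)\bigr) \le \log\bigl(1+2Q_{\min}\delta^2\bigr)$, which is \eqref{potaup}.

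For the lower bound, fix any $P \notin \mathcal{T}_\delta(Q)$. By the definition of $\mathcal{T}_\delta(Q)$ there is a symbol $a$ with $|P(a)-Q(a)| > \delta Q(a) \ge \delta Q_{\min}$, whence $|P-Q| = \sum_{b} |P(b)-Q(b)| \ge |P(a)-Q(a)| > \delta Q_{\min}$. It then remains to lower bound $D(P\|Q)$ by a multiple of $|P-Q|^2$ with the sharp distribution-dependent constant. I would reduce to the binary case by the data-processing inequality for relative entropy applied to the maximizing set $\set{F}^\star = \{\imath_{P\|Q} > 0\}$ of \eqref{eq: max-achieving TV}: writing $q = Q(\set{F}^\star)$ and using $P(\set{F}^\star) - Q(\set{F}^\star) = \tfrac12|P-Q|$ from \eqref{eq: TV4a}, data processing gives $D(P\|Q) \ge d\bigl(P(\set{F}^\star)\,\|\,q\bigr)$, and the sharp binary refinement of Pinsker's inequality gives $d(\cdot\|q) \ge 4\,\phi\bigl(\min\{q,1-q\}\bigr)\,\bigl(\tfrac12|P-Q|\bigr)^2 = \phi\bigl(\min\{q,1-q\}\bigr)\,|P-Q|^2$, with $\phi$ as in \eqref{eq: phi refined pinsker}. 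Since one of $\set{F}^\star$, $\set{F}^{\star c}$ has $Q$-measure at most $\tfrac12$, the definition \eqref{balanceOW} yields $\min\{q,1-q\} \le 1-\beta_Q$, and because $\phi$ is monotonically decreasing this gives $D(P\|Q) \ge \phi(1-\beta_Q)\,|P-Q|^2 > \phi(1-\beta_Q)\,Q_{\min}^2\,\delta^2$. Taking the infimum over $P \notin \mathcal{T}_\delta(Q)$ proves \eqref{potalo}.

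The routine parts are the $\chi^2$ computation and the total-variation estimate; the crux is the lower bound's reliance on the sharp balance-coefficient Pinsker inequality. The main obstacle I anticipate is verifying the sharp binary constant in $d(p\|q) \ge 4\,\phi\bigl(\min\{q,1-q\}\bigr)(p-q)^2$ (equivalently, that $\inf_p d(p\|q)/(p-q)^2$ is attained at $p = 1-q$) and confirming that the monotonicity of $\phi$ together with \eqref{balanceOW} replaces the $P$-dependent quantity $\min\{q,1-q\}$ by the uniform bound $1-\beta_Q$ in the correct direction. Provided the refined Pinsker inequality of \cite{OrdentlichW_IT2005} is invoked in this form, the remaining steps are elementary.
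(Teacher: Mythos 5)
Your proof is correct and takes essentially the same route as the paper: the lower bound is exactly the paper's argument (the estimate $|P-Q|>\delta Q_{\min}$ for $P\notin\mathcal{T}_\delta(Q)$ combined with the Ordentlich--Weinberger refinement \eqref{eq: OrdentlichW}, which the paper simply cites rather than re-deriving from its binary form via the partition $\{\set{F}^\star,\set{F}^{\star c}\}$ as you sketch), and the upper bound follows the same $D(P\|Q)\leq\log\bigl(1+\chi^2(P\|Q)\bigr)$ route, with your direct $\chi^2$ computation for the explicit two-point perturbation replacing the paper's combination of \eqref{eq: UB-RE-FS2} with the identity $\inf_{P\notin\mathcal{T}_\delta(Q)}|P-Q|=2\delta Q_{\min}$. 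The boundary issue you handle by a limiting sequence is handled in the paper by the extra $\varepsilon$ in \eqref{eq: P construction}, and both constructions implicitly require $\delta\leq 1$ for nonnegativity of the reduced mass---a point neither you nor the paper dwells on.
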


\begin{proof}
The following refinement of Pinsker's inequality \eqref{eq: Pinsker}
was derived in \cite[Section~4]{OrdentlichW_IT2005}:
\begin{align}  \label{eq: OrdentlichW}
\phi(1-\beta_Q) \; |P-Q|^2 \leq D(P \| Q).
\end{align}
Note that if $Q_{\min} > 0$ then $\beta_Q \leq 1 - Q_{\min} < 1$, and
$\phi(1-\beta_Q)$ is well defined and finite.
If $P \not\in \mathcal{T}_\delta (Q )$, the simple bound
\begin{align}
|P - Q| > \delta Q_{\min}
\end{align}
together with \eqref{def:ldeltaq} and \eqref{eq: OrdentlichW} yields \eqref{potalo}.
\par
The upper bound \eqref{potaup} follows from \eqref{eq: UB-RE-FS2}
and the fact that if $\delta < \frac{1-Q_{\max}}{Q_{\min}}$, then
\begin{align} \label{review2}
\inf_{P \not\in \mathcal{T}_\delta (Q )} | P - Q | = 2 \delta Q_{\min}.
\end{align}
To verify \eqref{review2}, note that for every $P \not\in \mathcal{T}_\delta (Q )$,
there exists $a \in \set{A}$ such that $|P(a)-Q(a)| > \delta \, Q(a)$,
which implies that $| P - Q | > 2 \delta \, Q(a) \geq 2 \delta Q_{\min}$,
thereby establishing $\geq$ in \eqref{review2}. To show equality, let
$a_0 \in \set{A}$ be such that $Q(a_0) = Q_{\min}$, and let $a_1 \neq a_0$;
since by assumption $\delta < \frac{1-Q_{\max}}{Q_{\min}}$, we have
$Q(a_1) + \delta \, Q_{\min} \leq Q_{\max} + \delta \, Q_{\min} < 1$.
Let
\begin{align} \label{eq: P construction}
P(a) =
\left\{
\begin{array}{ll}
(1-\delta - \varepsilon) \, Q_{\min} & a = a_0 \\
Q(a_1) + (\delta + \varepsilon) \, Q_{\min} & a = a_1 \\
Q(a)
& \mbox{otherwise}
\end{array}
\right.
\end{align}
for a sufficiently small $\varepsilon > 0$ so that \eqref{eq: P construction}
is a probability measure. Then, $P \not\in \mathcal{T}_\delta (Q )$ and
$|P - Q| = 2 (\delta + \varepsilon) \, Q_{\min}$, which verifies the equality in \eqref{review2}
by letting $\varepsilon \downarrow 0$.
\end{proof}

\begin{remark}
If $\delta < \frac{1-Q_{\max}}{Q_{\min}}$, the ratio between the upper and lower bounds
in \eqref{potaup}, satisfies
\begin{align}  \label{eq: exponents' ratio}
 \frac1{Q_{\min}} \cdot \frac{\log e}{2 \, \phi(1-\beta_Q)} \cdot
\frac{\log \left(1 + 2 Q_{\min} \, \delta^2 \right)}{\tfrac12
\, Q_{\min} \, \delta^2 \, \log e} \leq \frac{4}{Q_{\min}}
\end{align}
where \eqref{eq: exponents' ratio} follows from the fact that its second
and third factors are less than or equal to~1 and~4, respectively. Note
that both bounds in \eqref{potaup} scale like $\delta^2$ for $\delta \approx 0$.
\end{remark}

\section{The $E_\gamma$ Divergence}
\label{sec:EG}

\subsection{Basic Properties} \label{subsec: basic properties}

Generalizing the total variation distance, the $E_\gamma$ divergence in
\eqref{eq:Eg f-div} is an $f$-divergence whose utility in information theory
has been exemplified in \cite{CZK98}, \cite{LiuCV1_IT15}, \cite{LiuCV1_ISIT15},
\cite{LiuCV2_ISIT15}, \cite{PPV10},\cite{PV-Allerton10}, \cite{PW15}.

In this subsection, we provide some basic properties of the $E_\gamma$ divergence,
 which are essential to Sections~\ref{saens}--\ref{subsec: re-ris}.
The reader is referred to \cite[Sections~2.B, 2.C]{LiuCV1_IT15} for some additional
basic properties of the $E_{\gamma}$ divergence. We assume throughout this section
that $\gamma \geq 1$.

Let $P \ll Q$. The $E_\gamma$ divergence in \eqref{eq:Eg f-div}
can be expressed in the form
\begin{align}
\label{eq:EG}
E_{\gamma}(P \| Q)
&= \prob \bigl[ \imath_{P\|Q}(X) > \log \gamma \bigr]
- \gamma \, \prob \bigl[ \imath_{P\|Q}(Y) > \log \gamma \bigr] \\[0.1cm]
&= \max_{\set{F} \in \mathscr{F}} \bigl( P(\set{F}) - \gamma \, Q(\set{F}) \bigr)
\label{eq2: EG}
\end{align}
where $X \sim P$ and $Y \sim Q$, and \eqref{eq2: EG} follows from the Neyman-Pearson lemma.

Although the $E_\gamma$ divergence generalizes the total variation distance, $E_\gamma(P \| Q) =0$
for $\gamma > 1$ does not imply $P = Q$ since in that case \eqref{eq: f for EG} is not strictly
convex at $t=1$ (see Proposition~\ref{prop: fGibbs}). This is illustrated in the following example.
\begin{example}\label{example:zeroEg}
Let $\gamma > 1$, and let $P$ and $Q$ be probability measures defined on $\set{A} = \{0, 1\}$:
\begin{align}
& P(0) = \frac{1+\gamma}{2\gamma}, \quad Q(0) = \frac1{\gamma}.
\end{align}
Since $\imath_{P\|Q}(x) = \log \gamma \; 1\{x=0\} - \log 2 < \log \gamma$
for all $x \in \set{A}$, \eqref{eq:EG} implies that $E_{\gamma}(P\|Q) = 0$.
\end{example}

\par
The monotonicity of the $E_\gamma$ divergence in $\gamma \in [1, \infty)$ holds since
$f_{\gamma_1}(t) \leq f_{\gamma_2}(t)$ for all $t>0$ with $f_\gamma(t) = (t-\gamma)^+$
and $\gamma_1 \geq \gamma_2 \geq 1$. Therefore,
\begin{align} \label{eq: monotonicity E_gamma}
\frac{E_{\gamma_1}(P \| Q)}{ E_{\gamma_2}(P \| Q)} \leq 1.
\end{align}
Although Theorem~\ref{theorem: tight bound}\ref{theorem: tight bound: partb})
does not apply in order to prove that $1$ is the best constant in
\eqref{eq: monotonicity E_gamma}, we can verify it by defining $P$ and $Q$
on $\set{A} = \{0,1\}$ with $P(0) = \tfrac12$ and $Q(0) = \varepsilon > 0$.
This yields that if $\gamma_1 \geq \gamma_2 \geq 1$, then for all
$\varepsilon \in (0, \frac1{2\gamma_1})$,
\begin{align}
\label{leoneck}
\frac{E_{\gamma_1}(P \| Q)}{E_{\gamma_2}(P \| Q)} =
\frac{1- 2 \varepsilon \gamma_1}{1- 2 \varepsilon \gamma_2},
\end{align}
yielding the optimality of the constant in the right side of
\eqref{eq: monotonicity E_gamma} by letting $\varepsilon \downarrow 0$
in \eqref{leoneck}.

\par
From \eqref{eq2: EG}, the following inequality holds:
If $P \ll R \ll Q$, and $\gamma_1, \gamma_2 \geq 1$ then
\begin{align} \label{eq1: triangleEG}
E_{\gamma_1 \gamma_2}(P\|Q) \leq E_{\gamma_1}(P\|R) + \gamma_1 \, E_{\gamma_2}(R\|Q).
\end{align}
Letting $\gamma_1=1$ in \eqref{eq1: triangleEG} (see \eqref{eq:EG-TV}) and $\gamma_2 = \gamma$ yield
\begin{align} \label{eq2: triangleEG}
E_{\gamma}(P\|Q) - E_{\gamma}(R\|Q) \leq \tfrac12 |P-R|.
\end{align}

Generalizing the fact that $E_1(P \| Q) = \tfrac12 | P - Q |$, the following identity is
a special case of \cite[Corollary~2.3]{Guntuboyina11}:
\begin{align} \label{eq:EG-TV3}
\min_{R: P, Q \ll R} \Bigl\{ E_\gamma(P \| R) + E_\gamma(Q \| R) \Bigr\}
= \left(1 -\gamma  + \tfrac12 {|P-Q|}\right)^+
\end{align}
while \cite[(21)]{LiuCV1_IT15} states that
\begin{align} \label{eq: Jingbo}
\min_{R \ll P, Q} \Bigl\{ E_{\gamma}(R\|P) + E_{\gamma}(R\|Q) \Bigr\}
\geq \left( 1-\gamma + \tfrac{\gamma}{2} \, |P-Q |  \right)^+,
\end{align}
which implies, by taking $R = P$,
\begin{align}
\label{eq10b: EG}
& \left(1-\gamma +\tfrac{\gamma}{2} \, |P-Q| \right)^+ \leq E_\gamma(P \| Q).
\end{align}

We end this subsection with the following result.
\begin{theorem}
If $P \ll Q$ and $Y \sim Q$, then
\begin{align}
\label{eq1: abs}
 \mathbb{E} \left[ \bigl| \exp\bigl(\imath_{P\|Q}(Y)\bigr) - \gamma \bigr| \right] &=
2 E_\gamma(P \| Q) + \gamma - 1, \\[0.1cm]
\label{eq1: max}
 \mathbb{E} \left[ \max \bigl\{\gamma, \, \exp\bigl(\imath_{P\|Q}(Y)\bigr) \bigr\} \right]
&= \gamma + E_\gamma(P \| Q), \\[0.1cm]
\label{eq1: min}
 \mathbb{E} \left[ \min \bigl\{\gamma, \, \exp\bigl(\imath_{P\|Q}(Y)\bigr) \bigr\} \right]
&= 1 - E_\gamma(P \| Q).
\end{align}
\end{theorem}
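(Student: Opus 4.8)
The plan is to reduce all three identities to the single definitional fact that
$E_\gamma(P\|Q) = \mathbb{E}\bigl[(Z-\gamma)^+\bigr]$, where throughout I abbreviate
$Z = \exp\bigl(\imath_{P\|Q}(Y)\bigr)$ with $Y \sim Q$, as in \eqref{eq: Z}. This
identity is immediate from \eqref{eq:Eg f-div}, \eqref{eq: f for EG} and
Definition~\ref{def:fD}, which give $E_\gamma(P\|Q) = D_{f_\gamma}(P\|Q) = \mathbb{E}[f_\gamma(Z)]$
with $f_\gamma(t) = (t-\gamma)^+$. The only other ingredient I need is the normalization
$\mathbb{E}[Z] = 1$, which holds because $P \ll Q$ makes $\int \frac{\mathrm{d}P}{\mathrm{d}Q} \, \mathrm{d}Q = 1$;
this is the change-of-measure identity underlying \eqref{eq2:RE1}. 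I will also note that all
expectations in sight are finite: indeed $0 \le (Z-\gamma)^+ \le Z$, $|Z-\gamma| \le Z+\gamma$,
and $\max\{\gamma,Z\}, \min\{\gamma,Z\} \le Z + \gamma$, each of which is $Q$-integrable since
$\mathbb{E}[Z]=1$.

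For \eqref{eq1: abs}, I would start from the pointwise identity $|t-\gamma| = 2(t-\gamma)^+ - (t-\gamma)$,
valid for every real $t$. Substituting $t = Z$ and using linearity of expectation gives
$\mathbb{E}[|Z-\gamma|] = 2\,\mathbb{E}[(Z-\gamma)^+] - \mathbb{E}[Z] + \gamma = 2 E_\gamma(P\|Q) + \gamma - 1$,
where the last step invokes $\mathbb{E}[Z]=1$. For \eqref{eq1: max}, I would use the elementary
decomposition $\max\{\gamma,t\} = \gamma + (t-\gamma)^+$, so that taking $t = Z$ yields
$\mathbb{E}[\max\{\gamma,Z\}] = \gamma + E_\gamma(P\|Q)$ directly. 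For \eqref{eq1: min}, I would
combine $\min\{\gamma,t\} = t - (t-\gamma)^+$ (equivalently, $\min\{\gamma,t\} + \max\{\gamma,t\} = \gamma + t$)
with $\mathbb{E}[Z]=1$ to obtain $\mathbb{E}[\min\{\gamma,Z\}] = \mathbb{E}[Z] - E_\gamma(P\|Q) = 1 - E_\gamma(P\|Q)$.

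I do not anticipate any genuine obstacle here: each of the three claims is a one-line consequence
of an elementary algebraic split of $|\cdot|$, $\max$, or $\min$ into the positive-part function
$(\,\cdot-\gamma)^+$, followed by linearity of expectation. The only point that deserves a word of
care is the appeal to $\mathbb{E}[Z]=1$, together with the implicit integrability of the random
variables involved; both rest solely on the standing assumption $P \ll Q$ and require no further
hypotheses.
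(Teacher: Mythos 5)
Your proof is correct and follows essentially the same route as the paper: reduce everything to $\mathbb{E}[(Z-\gamma)^+]=E_\gamma(P\|Q)$ via pointwise algebraic identities, then use linearity and $\mathbb{E}[Z]=1$. The only (immaterial) difference is that you decompose $\max$ and $\min$ directly in terms of $(\,\cdot-\gamma)^+$, whereas the paper first proves \eqref{eq1: abs} and then obtains \eqref{eq1: max} and \eqref{eq1: min} from it via $\max\{x_1,x_2\}=\tfrac12[x_1+x_2+|x_1-x_2|]$ and its counterpart for $\min$.
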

\begin{proof}
The identity $|z| = 2 (z)^+ - z$, for all $z \in \Reals$, and \eqref{eq:Eg f-div} are used to prove \eqref{eq1: abs}:
\begin{align}
\mathbb{E} \left[ \bigl| \exp\bigl(\imath_{P\|Q}(Y)\bigr) - \gamma \bigr| \right]
& = 2 \mathbb{E} \left[ \bigl( \exp\bigl(\imath_{P\|Q}(Y)\bigr) - \gamma \bigr)^+ \right]
- \mathbb{E} \left[ \exp\bigl(\imath_{P\|Q}(Y)\bigr) - \gamma \right] \\[0.1cm]
& = 2 E_\gamma(P \| Q) + \gamma - 1.
\end{align}
Eqs.~\eqref{eq1: max} and \eqref{eq1: min} follow from \eqref{eq1: abs}, and the identities
\begin{align}
\label{eq2: max}
& \max\{x_1, x_2\} = \tfrac12 \bigl[x_1+x_2+|x_1-x_2|\bigr], \\
\label{eq2: min}
& \min\{x_1, x_2\} = \, \tfrac12 \bigl[x_1+x_2-|x_1-x_2|\bigr]
\end{align}
for all $x_1, x_2 \in \Reals$.
\end{proof}
\begin{remark}
In view of \eqref{eq:EG-TV}, it follows that \eqref{eq1: abs} and \eqref{eq1: max}
are specialized respectively to \eqref{eq: TV1} and \cite[(20)]{Harremoes_2008} by letting $\gamma=1$.
\end{remark}

\subsection{An Integral Representation of $f$-divergences} \label{saens}

In this subsection we show that
$$\bigl\{(E_\gamma ( P \| Q ), E_\gamma (Q \| P)), \gamma \geq 1\bigr\}$$
uniquely determines $D( P \| Q )$, $\mathscr{H}_{\alpha}(P \| Q)$,
as well as any other $f$-divergence with twice differentiable $f$.

\begin{proposition} \label{prop: integral}
Let $P \ll \gg Q$, and let $f \colon (0, \infty) \to \Reals$ be convex and twice differentiable
with $f(1)=0$. Then,
\begin{align}  \label{eq: integral}
D_f(P\|Q) = \int_1^{\infty} & \bigr( f''(\gamma) \, E_\gamma(P\|Q)
+ \gamma^{-3} f''(\gamma^{-1}) \, E_\gamma(Q\|P) \bigr) \, \text{d}\gamma.
\end{align}
\end{proposition}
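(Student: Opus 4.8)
The plan is to write the convex integrand $f$ as a superposition of the hockey-stick functions $t\mapsto(t-\gamma)^+$ and $t\mapsto(\gamma-t)^+$ that generate the $E_\gamma$ divergences, and then average against the law of $Z=\exp(\imath_{P\|Q}(Y))$, $Y\sim Q$, for which $D_f(P\|Q)=\mathbb{E}[f(Z)]$ by Definition~\ref{def:fD}. By Remark~\ref{remark: equivalence-fD} I may add $-f'(1)\,(t-1)$ to $f$ without changing either $D_f$ or $f''$, so I assume at the outset that $f'(1)=0$; convexity then gives $f\ge 0$. Taylor's theorem with integral remainder, applied to the twice-differentiable $f$ with $f(1)=f'(1)=0$, yields for every $t>0$
\[
f(t)=\int_1^{\infty}(t-\gamma)^+\,f''(\gamma)\,\mathrm{d}\gamma+\int_0^1(\gamma-t)^+\,f''(\gamma)\,\mathrm{d}\gamma,
\]
where for each fixed $t$ exactly one of the two integrals is nonzero, according to the sign of $t-1$.

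Next I would substitute $t=Z$ and take expectations. Since $f''\ge 0$ and the hockey-stick functions are non-negative, Tonelli's theorem lets me interchange $\mathbb{E}$ with both $\gamma$-integrals, so that
\[
D_f(P\|Q)=\int_1^{\infty}f''(\gamma)\,\mathbb{E}\bigl[(Z-\gamma)^+\bigr]\,\mathrm{d}\gamma+\int_0^1 f''(\gamma)\,\mathbb{E}\bigl[(\gamma-Z)^+\bigr]\,\mathrm{d}\gamma.
\]
By \eqref{eq:Eg f-div}--\eqref{eq: f for EG} the first expectation is simply $E_\gamma(P\|Q)$, giving at once the first term of \eqref{eq: integral}. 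The crux is therefore to rewrite the second integral, over $\gamma\in(0,1)$.

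For that I would establish the change-of-measure identity
\[
\mathbb{E}\bigl[(\gamma-Z)^+\bigr]=\gamma\,E_{1/\gamma}(Q\|P),\qquad \gamma\in(0,1),
\]
by decomposing $(\gamma-Z)^+=\gamma\,1\{Z<\gamma\}-Z\,1\{Z<\gamma\}$, evaluating $\mathbb{E}[Z\,1\{Z<\gamma\}]=\mathbb{P}_P[\tfrac{\mathrm{d}P}{\mathrm{d}Q}<\gamma]$ via absolute continuity, and matching the outcome $\gamma\,\mathbb{P}_Q[\tfrac{\mathrm{d}P}{\mathrm{d}Q}<\gamma]-\mathbb{P}_P[\tfrac{\mathrm{d}P}{\mathrm{d}Q}<\gamma]$ with the representation \eqref{eq:EG} of $E_{1/\gamma}(Q\|P)$ after using $\imath_{Q\|P}=-\imath_{P\|Q}$ (so that $\imath_{Q\|P}>\log\tfrac1\gamma$ iff $\tfrac{\mathrm{d}P}{\mathrm{d}Q}<\gamma$). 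This is the one place where two-sided absolute continuity $P \ll \gg Q$ is needed. Finally the change of variable $\gamma\mapsto\gamma^{-1}$ converts $\int_0^1\gamma\,E_{1/\gamma}(Q\|P)\,f''(\gamma)\,\mathrm{d}\gamma$ into $\int_1^{\infty}\gamma^{-3}f''(\gamma^{-1})\,E_\gamma(Q\|P)\,\mathrm{d}\gamma$, which is exactly the second term of \eqref{eq: integral}. The only real obstacle is bookkeeping: justifying the interchange of expectation and integration (clean, by non-negativity and Tonelli once $f'(1)=0$ is arranged) and verifying the measure-change identity; both the Taylor step and the final substitution are routine.
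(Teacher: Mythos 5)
Your proof is correct, and it takes a genuinely different route from the paper's. The paper derives \eqref{eq: integral} from the Liese--Vajda representation $D_f(P\|Q)=\int_0^1 \mathcal{I}_p(P\|Q)\,\text{d}\Gamma_f(p)$ in terms of DeGroot statistical information \cite[Theorem~11]{LieseV_IT2006}, invokes the symmetry $\mathcal{I}_p(P\|Q)=\mathcal{I}_{1-p}(Q\|P)$ (justified operationally via Bayesian hypothesis testing), splits the integral at $p=\tfrac12$, and converts $\mathcal{I}_p$ into $E_\gamma$ through \eqref{eq: DG-EG} before substituting $\gamma=\tfrac{1-p}{p}$. You instead decompose $f$ directly into hockey-stick functions via Taylor's theorem with integral remainder (after normalizing $f'(1)=0$ using Remark~\ref{remark: equivalence-fD}, which changes neither $D_f$ nor $f''$), apply Tonelli (legitimate since $f''\ge 0$ and the hockey-sticks are non-negative), and handle the $\gamma\in(0,1)$ piece with the change-of-measure identity $\mathbb{E}\bigl[(\gamma-Z)^+\bigr]=\gamma\,E_{1/\gamma}(Q\|P)$, which indeed checks out against \eqref{eq:EG} using $\imath_{Q\|P}=-\imath_{P\|Q}$ and $\mathbb{E}_Q[Z\,1\{Z<\gamma\}]=P[\tfrac{\text{d}P}{\text{d}Q}<\gamma]$. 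Your argument is self-contained and more elementary, and it pinpoints exactly where two-sided absolute continuity enters; what it gives up is the connection to the operational DeGroot picture that the paper is simultaneously developing in Section~\ref{saens}. The two proofs share the same mild regularity caveat: writing $f'(t)-f'(1)=\int_1^t f''(\gamma)\,\text{d}\gamma$ in your Taylor step (respectively, writing $\text{d}g_f(p)$ as a density in \eqref{eq4: LieseV06}) tacitly uses that the everywhere-differentiable monotone function $f'$ is recovered by integrating its derivative, so neither proof is more rigorous than the other on this point.
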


\begin{proof}
From \cite[Theorem~11]{LieseV_IT2006}, if $f \colon (0, \infty) \to \Reals$ is a convex function with
$f(1)=0$, then\footnote{See also \cite[Theorem~1]{OV93} for an
earlier representation of $f$-divergence as an averaged DeGroot statistical information.}
\begin{align} \label{eq1: LieseV06}
D_f(P\|Q) = \int_0^1 \mathcal{I}_p(P\|Q) \, \text{d}\Gamma_f(p)
\end{align}
where $\Gamma_f$ is the $\sigma$-finite measure defined on Borel subsets of $(0,1)$ by
\begin{align} \label{eq2: LieseV06}
\Gamma_f\bigl((p_1, p_2]\bigr) = \int_{p_1}^{p_2} \frac1{p} \, \text{d}g_f(p)
\end{align}
for the non-decreasing function
\begin{align}  \label{eq3: LieseV06}
g_f(p) = -f'_{+}\left(\frac{1-p}{p}\right), \quad p \in (0,1)
\end{align}
where $f'_{+}$ denotes the right derivative of $f$.

The DeGroot statistical information in \eqref{eq:DG f-div} has the following operational role
\cite{DeGroot62}, which is used in this proof. Assume hypotheses $H_0$ and $H_1$ have a-priori
probabilities $p$ and $1-p$, respectively, and let $P$ and $Q$ be the conditional
probability measures of an observation $Y$ given $H_0$ or $H_1$. Then, $\mathcal{I}_p(P\|Q)$
is equal to the difference between the minimum error probabilities when the most likely
{\em a-priori} hypothesis is selected, and when the most likely {\em a posteriori} hypothesis
is selected. This measure therefore quantifies the value of the observations for the task of
discriminating between the hypotheses. From the operational role of this measure, it follows
that if $P \ll \gg Q$
\begin{align} \label{eq: symmetry DG}
\mathcal{I}_p(P\|Q) = \mathcal{I}_{1-p}(Q\|P).
\end{align}

The $E_\gamma$ divergence and DeGroot statistical information are related by
\begin{align}
\label{eq: DG-EG}
\mathcal{I}_p(P\|Q) = \left\{
\begin{array}{ll}
p E_{\frac{1-p}{p}}(P\|Q), & \quad \mbox{$p \in \bigl(0, \tfrac12]$} \\[0.2cm]
(1-p) E_{\frac{p}{1-p}}(Q\|P), & \quad \mbox{$p \in \bigl[\tfrac12, 1)$.}
\end{array}
\right.
\end{align}
The expression for $p \in \bigl(0, \tfrac12]$ follows from the fact that the
functions that yield $E_\gamma$ and $\mathcal{I}_p$ in \eqref{eq: f for EG}
and \eqref{eq: f for DG}, respectively, satisfy
\begin{align} \label{eq1: DG-EG}
\phi_p = p f_{\frac{1-p}{p}}.
\end{align}
The remainder of \eqref{eq: DG-EG} follows in view of \eqref{eq: symmetry DG}.

Specializing \eqref{eq1: LieseV06} to a twice differentiable $f$ gives
\begin{align}
\label{eq4: LieseV06}
D_f(P\|Q)
&= \int_0^1  \mathcal{I}_p(P\|Q) \cdot \frac1{p^3}
\, f''\left(\frac{1-p}{p}\right) \, \text{d}p \\[0.1cm]
\label{eq5: LieseV06}
&= \int_0^{\frac12}  \mathcal{I}_p(P\|Q) \cdot \frac1{p^3}
\, f''\left(\frac{1-p}{p}\right) \, \text{d}p
+ \int_{\frac12}^1 \mathcal{I}_p(P\|Q) \cdot \frac1{p^3}
\, f''\left(\frac{1-p}{p}\right) \, \text{d}p \\[0.1cm]
\label{eq6: LieseV06}
&= \int_0^{\frac12} E_{\frac{1-p}{p}}(P\|Q) \cdot \frac1{p^2}
\, f''\left(\frac{1-p}{p}\right) \, \text{d}p +
\int_{\frac12}^1 E_{\frac{p}{1-p}}(Q\|P) \cdot \frac{1-p}{p^3}
\, f''\left(\frac{1-p}{p}\right) \, \text{d}p \\[0.1cm]
\label{eq7: LieseV06}
&= \int_1^{\infty} E_{\gamma}(P\|Q) \, f''(\gamma) \, \text{d}\gamma
+ \int_0^1 \gamma E_{\gamma^{-1}}(Q\|P) \, f''(\gamma) \, \text{d}\gamma \\[0.1cm]
\label{eq8: LieseV06}
&= \int_1^{\infty} \Bigl[f''(\gamma) \, E_{\gamma}(P\|Q) + \gamma^{-3}
\, f''(\gamma^{-1}) \, E_{\gamma}(Q\|P) \Bigr] \, \text{d}\gamma
\end{align}
where \eqref{eq4: LieseV06} follows from \eqref{eq1: LieseV06}--\eqref{eq3: LieseV06};
\eqref{eq5: LieseV06} follows by splitting the interval of integration into two parts;
\eqref{eq6: LieseV06} follows from \eqref{eq: DG-EG};
\eqref{eq7: LieseV06} follows by the substitution $\gamma = \tfrac{1-p}{p}$, and
\eqref{eq8: LieseV06} follows by changing the variable of integration  $t = \frac1{\gamma}$
in the second integral in \eqref{eq7: LieseV06}.
\end{proof}

Particularizing Proposition~\ref{prop: integral} to the most salient $f$-divergences we obtain
(cf. \cite[(84)--(86)]{LieseV_IT2006} for alternative integral representations as a function of
DeGroot statistical information)
\begin{align}
D(P\|Q)
\label{eq: int re}
&= \log e \, \int_1^{\infty} \left( \gamma^{-1} {E_{\gamma}(P\|Q)} +
\gamma^{-2}{E_{\gamma}(Q\|P)} \right) \, \text{d}\gamma, \\[0.2cm]
\label{eq: int H}
\mathscr{H}_{\alpha}(P \| Q)
&= \alpha \int_1^{\infty}
\left( \gamma^{\alpha-2} \, E_{\gamma}(P\|Q) + \gamma^{-\alpha-1}
\, E_{\gamma}(Q\|P) \right) \, \text{d}\gamma,
\end{align}
and specializing \eqref{eq: int H} to $\alpha=2$ yields
\begin{align}
\label{eq: int chi}
\chi^2(P \| Q) &= 2 \int_1^{\infty} \left( \, E_{\gamma}(P\|Q) +
\gamma^{-3} E_{\gamma}(Q\|P) \right) \, \text{d}\gamma.
\end{align}
Accordingly, bounds on the $E_\gamma$ divergence, such as those presented in
Section~\ref{subsec: Pinsker for EG}, directly translate into bounds on other
important $f$-divergences.

\begin{remark}
Proposition~\ref{prop: integral} can be derived also from the integral representation of
$f$-divergences in \cite[Corollary~3.7]{CZK98}.
\end{remark}

\subsection{Extension of Pinsker's Inequality to $E_\gamma$ Divergence}
\label{subsec: Pinsker for EG}
This subsection upper bounds $E_\gamma$ divergence in terms
of the relative entropy.

\begin{theorem}
\begin{align} \label{eq: loose_egre}
E_\gamma(P\|Q) \, {\log \gamma} &\leq D(P\|Q) + 2e^{-1} \log e, \\
 \label{eq: Pinsker3}
2 \, E_\gamma^2(P\|Q) \, \log e &\leq D(P\|Q).
\end{align}
\end{theorem}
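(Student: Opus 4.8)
The plan is to dispatch the two inequalities by separate elementary routes. Since $E_\gamma(P\|Q)\le 1$ always (by \eqref{eq2: EG}) while $D(P\|Q)=\infty$ whenever $P\not\ll Q$, both bounds are trivial in the non-absolutely-continuous case, so I would assume $P\ll Q$ and abbreviate $Z=\exp\bigl(\imath_{P\|Q}(Y)\bigr)$ with $Y\sim Q$. With this notation $E_\gamma(P\|Q)=\mathbb{E}\bigl[(Z-\gamma)^+\bigr]$ (from \eqref{eq:Eg f-div} and \eqref{eq: f for EG}) and $D(P\|Q)=\mathbb{E}[Z\log Z]$ (from \eqref{eq2:RE1}), which are the two representations everything will rest on.

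For \eqref{eq: Pinsker3} I would simply exploit the monotonicity of $E_\gamma$ in $\gamma$ recorded in \eqref{eq: monotonicity E_gamma}: since $\gamma\ge 1$, we have $E_\gamma(P\|Q)\le E_1(P\|Q)=\tfrac12|P-Q|$ by \eqref{eq:EG-TV}. Squaring and invoking Pinsker's inequality \eqref{eq: Pinsker} then yields $2\,E_\gamma^2(P\|Q)\log e\le 2\cdot\tfrac14|P-Q|^2\log e=\tfrac12|P-Q|^2\log e\le D(P\|Q)$, which is exactly \eqref{eq: Pinsker3} and recovers Pinsker's inequality at $\gamma=1$.

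For \eqref{eq: loose_egre} the plan is to establish the pointwise inequality
\begin{align}
\log\gamma\cdot(z-\gamma)^+\;\le\;z\log z+e^{-1}\log e,\qquad z\ge 0,\ \gamma\ge 1,
\end{align}
and then take expectations with respect to $Q$, so that the left side integrates to $\log\gamma\cdot E_\gamma(P\|Q)$ and the right side to $D(P\|Q)+e^{-1}\log e$, which is in fact sharper than the asserted bound $D(P\|Q)+2e^{-1}\log e$. To prove the pointwise inequality I would split into two cases. For $z\in[0,\gamma]$ the left side vanishes and the claim reduces to $z\log z\ge-e^{-1}\log e$, the elementary lower bound on $z\log z$ attained at $z=e^{-1}$. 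For $z>\gamma$ the difference of the two sides equals $z\log(z/\gamma)+\gamma\log\gamma+e^{-1}\log e$, and each of the three summands is non-negative because $z>\gamma\ge 1$.

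The only mildly delicate point is this case analysis, and in particular confirming that the minimum of $z\log z$ controls the regime $z\le\gamma$ uniformly in $\gamma\ge 1$; this works precisely because $e^{-1}<1\le\gamma$, so the minimizer $z=e^{-1}$ lies inside $[0,\gamma]$. Everything else is a substitution of the divergence identities above. As a backup derivation of \eqref{eq: loose_egre} I would instead bound $E_\gamma(P\|Q)\le\prob\bigl[\imath_{P\|Q}(X)>\log\gamma\bigr]$ via \eqref{eq:EG} with $X\sim P$, use $\log\gamma\cdot 1\{\imath_{P\|Q}(X)>\log\gamma\}\le\bigl(\imath_{P\|Q}(X)\bigr)^+$, and control $\mathbb{E}\bigl[(\imath_{P\|Q}(X))^-\bigr]$ by $e^{-1}\log e$ through the inequality $\log w\le e^{-1}w\log e$ applied to $w=\tfrac{\mathrm{d}Q}{\mathrm{d}P}(X)$ together with $\mathbb{E}[w]\le 1$.
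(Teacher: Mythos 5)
Your handling of \eqref{eq: Pinsker3} coincides exactly with the paper's proof: monotonicity of $E_\gamma$ in $\gamma\ge 1$, the identity $E_1(P\|Q)=\tfrac12|P-Q|$ from \eqref{eq:EG-TV}, and Pinsker's inequality \eqref{eq: Pinsker}. For \eqref{eq: loose_egre}, by contrast, the paper gives no proof at all---it simply cites \cite[Proposition~13]{LiuCV1_IT15}---whereas you supply a self-contained argument, and a correct one. Your pointwise inequality $\log\gamma\,(z-\gamma)^+\le z\log z+e^{-1}\log e$ for $z\ge 0$, $\gamma\ge 1$ checks out: when $z\in[0,\gamma]$ the left side vanishes and the right side is nonnegative because $\min_{z\ge 0}z\log z=-e^{-1}\log e$ is attained at $z=e^{-1}\le 1\le\gamma$, and when $z>\gamma$ the difference of the two sides equals $z\log(z/\gamma)+\gamma\log\gamma+e^{-1}\log e$, a sum of nonnegative terms. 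Integrating against $Q$, with $E_\gamma(P\|Q)=\mathbb{E}\bigl[(Z-\gamma)^+\bigr]$ and $D(P\|Q)=\mathbb{E}[Z\log Z]$, is legitimate since the negative part of $Z\log Z$ is uniformly bounded by $e^{-1}\log e$ and $(Z-\gamma)^+\le Z$ is $Q$-integrable. This yields $E_\gamma(P\|Q)\log\gamma\le D(P\|Q)+e^{-1}\log e$, which is strictly sharper than the stated bound: you halve the additive constant $2e^{-1}\log e$. Your backup route via $\log\gamma\cdot 1\{\imath_{P\|Q}(X)>\log\gamma\}\le(\imath_{P\|Q}(X))^+$ and the bound $\mathbb{E}\bigl[(\imath_{P\|Q}(X))^-\bigr]\le e^{-1}\log e$ is also sound and gives the same improved constant. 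In short: identical to the paper on the second inequality, and on the first an elementary proof where the paper has only a citation, with a better constant as a bonus.
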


\begin{proof}
The bound in \eqref{eq: loose_egre} appears in \cite[Proposition~13]{LiuCV1_IT15}.
For $\gamma = 1$, \eqref{eq: Pinsker3} reduces to \eqref{eq: Pinsker}.
Since $E_\gamma$ is monotonically decreasing in $\gamma$, \eqref{eq: Pinsker3} also
holds for $\gamma > 1$.
\end{proof}
\par
For $\gamma = 1$, \eqref{eq: Pinsker3} becomes Pinsker’s inequality \eqref{eq: Pinsker},
for which there is no tighter constant. Moreover, in view of \eqref{eq:EG-TV}, for small
$E_1 (P\|Q)$, the minimum achievable $D(P\|Q)$ is indeed quadratic in $E_1 (P\|Q)$
\cite{FedotovHT_IT03}. This ceases to be the case for $\gamma > 1$, in which case it
is possible to upper bound $E_\gamma (P\|Q)$ as a constant times $D(P\|Q)$.

\begin{theorem}  \label{thm:EG vs. RE}
For every $\gamma > 1$,
\begin{align}  \label{eq:sup-EG and RE}
\sup \frac{E_{\gamma}(P\|Q)}{D(P\|Q)} = c_{\gamma}
\end{align}
where the supremum is over $P \ll Q,  P \neq Q$, and $c_{\gamma}$ is a universal
function (independent of $P$ and $Q$), given by
\begin{align}
\label{eq: c_gamma}
& c_{\gamma} = \frac{t_\gamma-\gamma}{r(t_\gamma)},
\\[0.1cm]
\label{eq: t_gamma}
& t_\gamma = - \gamma \, W_{-1}\left(-\tfrac1{\gamma} \, e^{-\frac1{\gamma}} \right)
\end{align}
with $r$ in \eqref{eq: c_gamma} is given in \eqref{eq: r}, and $W_{-1}$ in \eqref{eq: t_gamma}
denotes the secondary real branch of the Lambert $W$ function \cite{Corless96}.
\end{theorem}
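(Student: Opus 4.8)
The plan is to apply Theorem~\ref{theorem: tight bound}\ref{theorem: tight bound: partb}) with $f=f_\gamma$ the hockey-stick function of \eqref{eq: f for EG} and $g=r$ the function of \eqref{eq: r}, so that by \eqref{eq:Eg f-div} and \eqref{eq: r-divergence} we have $D_f(P\|Q)=E_\gamma(P\|Q)$ and $D_g(P\|Q)=D(P\|Q)$. First I would verify the hypotheses. Both $f_\gamma$ and $r$ are convex on $(0,\infty)$ with $f_\gamma(1)=(1-\gamma)^+=0$ (using $\gamma>1$) and $r(1)=0$; moreover $r$ is strictly positive on $(0,1)\cup(1,\infty)$, since $r'(t)=\log t$ shows $r$ is strictly decreasing on $(0,1)$ and strictly increasing on $(1,\infty)$, with unique minimum $0$ at $t=1$. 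The derivative conditions hold as well: $r'(1)=\log 1=0$, while $f_\gamma$ vanishes identically on the neighbourhood $(0,\gamma)$ of $t=1$, so $f_\gamma'(1)=0$. Hence Theorem~\ref{theorem: tight bound}\ref{theorem: tight bound: partb}) gives
\[
\sup_{P\neq Q}\frac{E_\gamma(P\|Q)}{D(P\|Q)}=\bar{\kappa}=\sup_{t\in(0,1)\cup(1,\infty)}\frac{(t-\gamma)^+}{r(t)}.
\]

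Next I would reduce the one-dimensional optimization. Because $(t-\gamma)^+=0$ for $t\leq\gamma$ and $\gamma>1$, only $t>\gamma$ contributes, so $\bar{\kappa}=\sup_{t>\gamma}h(t)$ with $h(t)=\frac{t-\gamma}{r(t)}$. Since $h(t)\to 0$ both as $t\downarrow\gamma$ and as $t\to\infty$ (the latter because $r(t)\sim t\log t$), the supremum is attained at an interior stationary point. The stationarity condition $h'(t)=0$ is equivalent to $r(t)=(t-\gamma)\,r'(t)$; substituting $r(t)=t\log t+(1-t)\log e$ and $r'(t)=\log t$ and cancelling $t\log t$ collapses this to $(1-t)\log e=-\gamma\log t$, i.e.
\[
\gamma\,\ln t=t-1.
\]

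Then I would solve this transcendental equation in closed form. Rewriting it as $t\,e^{-t/\gamma}=e^{-1/\gamma}$ and multiplying by $-\tfrac1\gamma$ puts it in Lambert form $w\,e^{w}=-\tfrac1\gamma e^{-1/\gamma}$ with $w=-t/\gamma$. For $\gamma>1$ the argument $-\tfrac1\gamma e^{-1/\gamma}$ lies in $(-e^{-1},0)$, so the equation has exactly two real solutions, one per real branch of $W$. The spurious root $t=1$ (which always solves $\gamma\ln t=t-1$) has $w=-1/\gamma\in(-1,0)$ and belongs to the principal branch $W_0$; the genuine maximizer lies in $(\gamma,\infty)$ and has $w=-t/\gamma<-1$, hence selects $W_{-1}$, yielding $t_\gamma=-\gamma\,W_{-1}\!\bigl(-\tfrac1\gamma e^{-1/\gamma}\bigr)$ as in \eqref{eq: t_gamma}. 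That the nontrivial root exceeds $\gamma$ follows from a sign analysis of $g(t)=\gamma\ln t-(t-1)$: one has $g(1)=0$, $g'(t)=\gamma/t-1$ vanishes only at $t=\gamma$, so $g$ increases on $(0,\gamma)$ and decreases on $(\gamma,\infty)$, forcing a second zero strictly beyond $\gamma$; the same monotonicity shows the interior stationary point of $h$ is unique and is a maximum. Evaluating $h$ at $t_\gamma$ gives $c_\gamma=\frac{t_\gamma-\gamma}{r(t_\gamma)}$ as in \eqref{eq: c_gamma}.

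The main obstacle I anticipate is the bookkeeping around the Lambert $W$ branch: one must discard the trivial root $t=1$ and confirm that the true maximizer satisfies $t_\gamma>\gamma$, so that it lies in the admissible region $t>\gamma$ and correctly picks out $W_{-1}$ rather than $W_0$. The sign analysis of $g$ above resolves exactly this point, simultaneously establishing uniqueness of the critical point and that it is a global maximum of $h$ on $(\gamma,\infty)$.
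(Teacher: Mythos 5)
Your proof is correct and follows essentially the same route as the paper: the same ratio $\kappa_\gamma = f_\gamma/r$, the same stationarity equation $\gamma \ln t = t-1$, and the same Lambert-$W_{-1}$ branch selection together with the verification that the nontrivial root exceeds $\gamma$ and is the unique interior maximizer. The only difference is in the achievability step: you invoke Theorem~\ref{theorem: tight bound}\ref{theorem: tight bound: partb}) directly (whose hypotheses, including $f_\gamma'(1)=r'(1)=0$ and the positivity of $r$ off $t=1$, you correctly check), whereas the paper re-derives tightness via the explicit binary pair $P_\varepsilon(0)=\varepsilon$, $Q_\varepsilon(0)=\varepsilon/t_\gamma$ with a limit computation in an appendix --- but that construction is exactly the one inside the proof of Theorem~\ref{theorem: tight bound}\ref{theorem: tight bound: partb}) specialized to $\nu=t_\gamma$, so your shortcut is legitimate.
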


\begin{proof}
The functions $f_\gamma(t) = (t-\gamma)^+$  and $r$ (see \eqref{eq: r}) satisfy
the sufficient conditions of Theorem~\ref{theorem: tight bound}. Their ratio is
\begin{align} \label{eq: kappa-EG,RE}
\kappa_\gamma(t) &=
\left\{
\begin{array}{ll}
\frac{t-\gamma}{r(t)} &  t \in [\gamma, \infty)\\
0 & t \in (0, \gamma].
\end{array}
\right.
\end{align}
For $t > \gamma$
\begin{align} \label{eq: diff kappa-EG,RE}
\kappa_\gamma'(t) = \frac{\gamma \log t + (1-t) \log e}{r^2(t)}.
\end{align}
Since $\gamma > 1$, it follows from \eqref{eq: diff kappa-EG,RE} that there exists
$t_\gamma \in (\gamma, \infty)$ such that $\kappa_\gamma$ is monotonically increasing
on $[\gamma, t_\gamma]$, and it is monotonically decreasing on $[t_\gamma, \infty)$.
The value $t_\gamma$ is the unique solution of the equation $\kappa_\gamma'(t)=0$ in
$(\gamma, \infty)$. From \eqref{eq: diff kappa-EG,RE}, $t_\gamma \in (\gamma, \infty)$
solves the equation
\begin{align} \label{equation1 for t_gamma}
\gamma \log t = (t-1) \, \log e
\end{align}
which, after exponentiating both sides of \eqref{equation1 for t_gamma} and making the
substitution $x = -\frac{t}{\gamma}$, gives
\begin{align}  \label{equation2 for t_gamma}
x e^x = -\tfrac1{\gamma} \, e^{-\frac1{\gamma}}.
\end{align}
The trivial solution of \eqref{equation2 for t_gamma} $x = -\frac1{\gamma}$  corresponds to
$t=1$, which is an improper solution of \eqref{equation1 for t_gamma} since $t < \gamma$.
The proper solution of \eqref{equation2 for t_gamma} is its second real solution given by
\begin{align} \label{eq: solution}
x = W_{-1}\left(-\tfrac1{\gamma} \, e^{-\frac1{\gamma}}\right);
\end{align}
consequently, $t = -\gamma x$ and \eqref{eq: solution} give \eqref{eq: t_gamma}.
In conclusion, for $t \geq 0$ and $\gamma > 1$,
\begin{align} \label{eq: bounds on kappa}
0 \leq \kappa_\gamma(t) \leq \kappa_\gamma(t_\gamma) = c_{\gamma}
\end{align}
where the equality in \eqref{eq: bounds on kappa} follows from \eqref{eq: c_gamma},
\eqref{eq: kappa-EG,RE}, and $t_\gamma > \gamma$.
Theorem~\ref{theorem: tight bound}\ref{theorem: tight bound: parta})  yields
\begin{align} \label{eq:EG and RE}
E_{\gamma}(P\|Q) \leq c_{\gamma} D(P\|Q).
\end{align}
To show \eqref{eq:sup-EG and RE}, or in other words that there is no better constant in \eqref{eq:EG and RE}
than $c_\gamma$, it is enough to restrict to binary alphabets:
Let $\set{A} = \{0, 1\}$, $\varepsilon \in (0,1)$, and $P_\varepsilon(0) = \varepsilon$,
$Q_\varepsilon(0) = \frac{\varepsilon}{t_\gamma}$.
Since $t_\gamma > 1$, we have
\begin{align} \label{less than gamma}
\frac{P_\varepsilon(1)}{Q_\varepsilon(1)} = \frac{1-\varepsilon}{1-\frac{\varepsilon}{t_\gamma}} < 1 < \gamma
\end{align}
and
\begin{align} \label{eq: kappa is zero}
\kappa_\gamma\left(\frac{P_\varepsilon(1)}{Q_\varepsilon(1)}\right) = 0
\end{align}
where \eqref{eq: kappa is zero} follows from \eqref{eq: kappa-EG,RE} and \eqref{less than gamma}.
We show in the following that $\frac{E_\gamma(P_\varepsilon\|Q_\varepsilon)}{D(P_\varepsilon\|Q_\varepsilon)}$
can come arbitrarily close (from below) to $c_\gamma$ by choosing a sufficiently small $\varepsilon > 0$.
To that end, for all $\varepsilon \in (0,1)$,
\begin{align}
\label{EG1} E_\gamma(P_\varepsilon\|Q_\varepsilon)
& = Q_\varepsilon(0) \; f_\gamma\left(\frac{P_\varepsilon(0)}{Q_\varepsilon(0)}\right)
+ Q_\varepsilon(1) \; f_\gamma\left(\frac{P_\varepsilon(1)}{Q_\varepsilon(1)}\right) \\[0.1cm]
\label{EG2}
& = Q_\varepsilon(0) \; r\left(\frac{P_\varepsilon(0)}{Q_\varepsilon(0)}\right) \;
\kappa_\gamma\left(\frac{P_\varepsilon(0)}{Q_\varepsilon(0)}\right)
+ Q_\varepsilon(1) \; r\left(\frac{P_\varepsilon(1)}{Q_\varepsilon(1)}\right) \;
\kappa_\gamma\left(\frac{P_\varepsilon(1)}{Q_\varepsilon(1)}\right) \\[0.1cm]
\label{EG3}
& = Q_\varepsilon(0) \; r\left(\frac{P_\varepsilon(0)}{Q_\varepsilon(0)}\right) \;
\kappa_\gamma\left(\frac{P_\varepsilon(0)}{Q_\varepsilon(0)}\right)\\[0.1cm]
\label{EG4}
& = c_\gamma \; Q_\varepsilon(0) \; r\left(\frac{P_\varepsilon(0)}{Q_\varepsilon(0)}\right) \\[0.1cm]
\label{EG5}
& = c_\gamma \left[ D(P_\varepsilon\|Q_\varepsilon)
- Q_\varepsilon(1) \; r\left(\frac{P_\varepsilon(1)}{Q_\varepsilon(1)}\right) \right]
\end{align}
where \eqref{EG1} holds due to \eqref{eq:Eg f-div};
\eqref{EG2} follows
from the definition of $\kappa_\gamma$ as the continuous extension of $\frac{f_\gamma}{r}$ with
$r$ in \eqref{eq: r};
\eqref{EG3} holds due to \eqref{eq: kappa is zero};
\eqref{EG4} follows from \eqref{eq: c_gamma}, \eqref{eq: kappa-EG,RE};
\eqref{EG5} follows from \eqref{eq: r-divergence} which implies that
$D_r(P_{\varepsilon} \| Q_{\varepsilon}) = D(P_{\varepsilon} \| Q_{\varepsilon})$.
From \eqref{EG5},
\begin{align} \label{eq1: EG/RE}
c_\gamma \left[1 - \frac1{D(P_\varepsilon\|Q_\varepsilon)}
\cdot r\left(\frac{P_\varepsilon(1)}{Q_\varepsilon(1)}\right) \right]
\leq \frac{E_\gamma(P_\varepsilon\|Q_\varepsilon)}{D(P_\varepsilon\|Q_\varepsilon)} < c_\gamma.
\end{align}
Appendix~\ref{appendix: limit} shows that
\begin{align} \label{eq: limit}
\lim_{\varepsilon \to 0} \left\{\frac1{D(P_\varepsilon\|Q_\varepsilon)} \cdot
r\left(\frac{P_\varepsilon(1)}{Q_\varepsilon(1)}\right) \right\} = 0,
\end{align}
which implies from \eqref{eq1: EG/RE} that
\begin{align} \label{eq2: limit of EG/RE}
\lim_{\varepsilon \to 0} \frac{E_\gamma(P_\varepsilon\|Q_\varepsilon)}{D(P_\varepsilon\|Q_\varepsilon)} = c_\gamma.
\end{align}
\end{proof}

\begin{remark} \label{remark: approx c_gamma}
The value of $c_\gamma$ given in \eqref{eq: c_gamma}
can be approximated by
\begin{align} \label{eq1: approx c_gamma}
& c_\gamma \approx \frac{\delta}{(\delta+\gamma) \log\left(\frac{\delta+\gamma}{e}\right)+\log e} \\
\label{eq2: approx c_gamma}
& \delta = \frac{\alpha \gamma \log \gamma}{\log e}, \quad \alpha = 1.1791
\end{align}
with a relative error of less than $1 \%$ for all $\gamma > 1$, and no more
than $10^{-3}$ for $\gamma \geq 2$.
\end{remark}

\begin{figure}[h]
\centerline{\includegraphics[width=9cm]{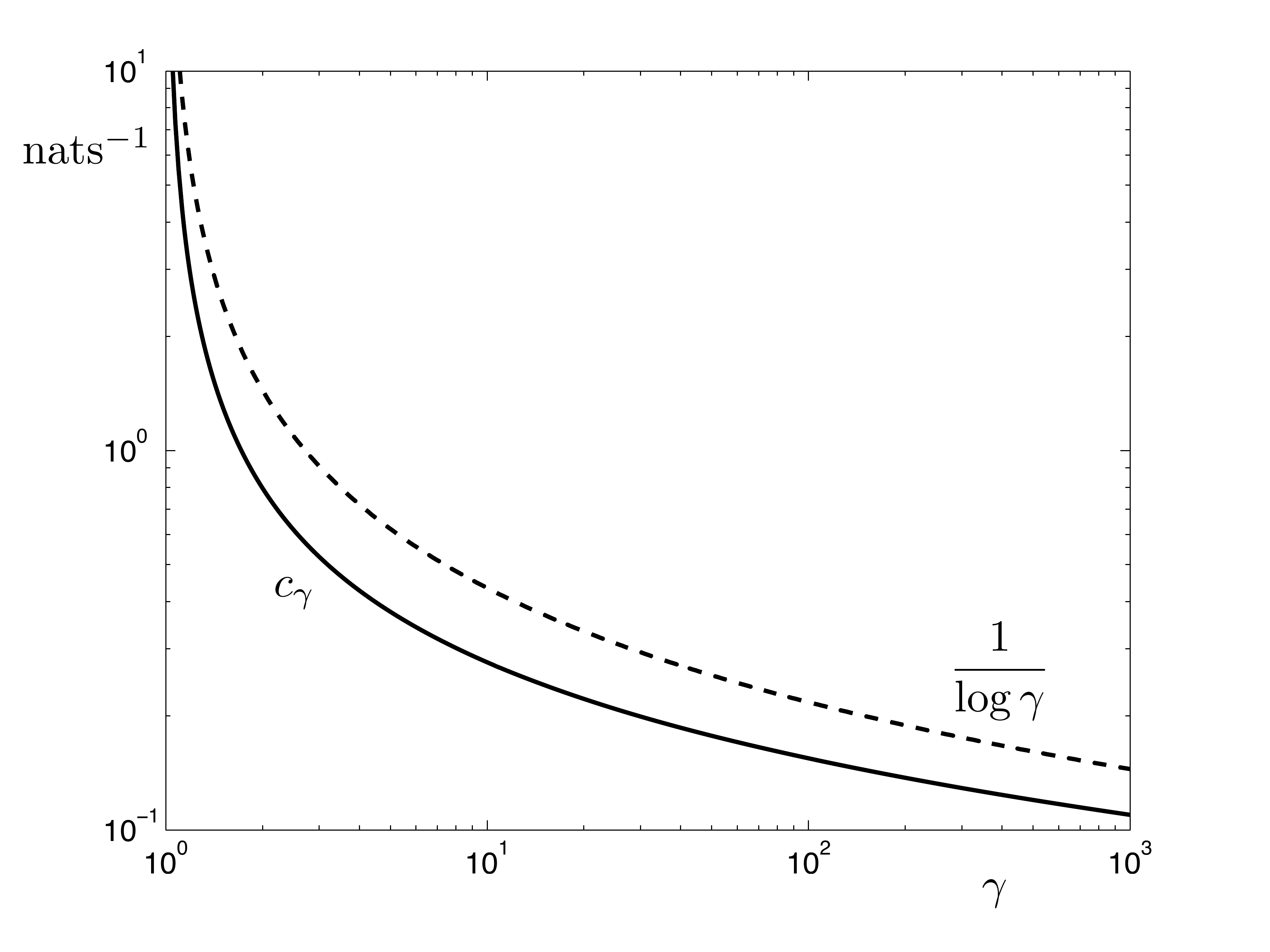}}
\caption{\label{figure:bound on EG}
The coefficient $c_{\gamma}$ in \eqref{eq: c_gamma} (solid line) compared to
$\frac1{\log \gamma}$ (cf. \eqref{eq: loose_egre}) (dashed line).}
\end{figure}
It can be verified that the bound in
Theorem~\ref{thm:EG vs. RE} is tighter than \eqref{eq: loose_egre} since
$c_\gamma < \frac1{\log \gamma}$ for $\gamma > 1$, and the additional
positive summand $\frac{2 \log e}{e \log \gamma}$ in the right side of
\eqref{eq: loose_egre} further loosens the bound \eqref{eq: loose_egre} in
comparison to \eqref{eq:sup-EG and RE}. According to the approximation of $c_\gamma$
in \eqref{eq1: approx c_gamma} and \eqref{eq2: approx c_gamma}, we have for large
values of $\gamma$
\begin{align} \label{eq: asym_approx_c}
c_\gamma \approx \frac1{\log\left(\frac{\alpha \gamma \log \gamma}{e \log e}\right)}.
\end{align}

\begin{remark}
The impossibility of a general lower bound on $E_\gamma(P\|Q)$, for $\gamma > 1$,
in terms of the relative entropy $D(P\|Q)$ is evident from Example \ref{example:zeroEg}.
\end{remark}

\begin{remark}
The fact that $\{c_{\gamma}\}_{\gamma \geq 1}$ in \eqref{eq: c_gamma} is monotonically decreasing
in $\gamma$ (see Figure~\ref{figure:bound on EG}) is consistent with \eqref{eq:sup-EG and RE} and
the fact that the $E_{\gamma}$ divergence is monotonically decreasing in $\gamma$.
\end{remark}

\begin{remark}
The fact that  the behavior of $D(P\|Q)$ for small $|P-Q|$ is quadratic
rather than linear does not contradict Theorem~\ref{thm:EG vs. RE} because
$\lim_{\gamma \downarrow 1} c_{\gamma} = +\infty$ (see
Figure~\ref{figure:bound on EG}).
\end{remark}

In view of  \eqref{eq2: EG} and
\eqref{eq:sup-EG and RE} we obtain
\par
\begin{corollary} \label{corollary: egtore}
If $P \ll Q$, $\gamma > 1$ and $\set{F} \in \mathscr{F}$, then
\begin{align} \label{eq: egtore1}
P(\set{F}) \leq \gamma \, Q(\set{F}) + c_\gamma \, D(P\|Q).
\end{align}
\end{corollary}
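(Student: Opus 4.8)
The plan is to read the desired bound directly off the variational characterization of the $E_\gamma$ divergence, combined with the sharp comparison between $E_\gamma$ and relative entropy established just above in Theorem~\ref{thm:EG vs. RE}. First I would invoke \eqref{eq2: EG}, which expresses the divergence as a maximum over events,
\begin{align}
E_\gamma(P\|Q) = \max_{\set{G} \in \mathscr{F}} \bigl( P(\set{G}) - \gamma \, Q(\set{G}) \bigr).
\end{align}
The point is that this is a supremum taken over \emph{all} measurable sets, so in particular it dominates the value attained at the specific event $\set{F}$ appearing in the statement; that is, $P(\set{F}) - \gamma \, Q(\set{F}) \leq E_\gamma(P\|Q)$.

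Next I would chain this with the upper bound \eqref{eq:EG and RE} supplied by Theorem~\ref{thm:EG vs. RE}, namely $E_\gamma(P\|Q) \leq c_\gamma \, D(P\|Q)$ for $\gamma > 1$, where $c_\gamma$ is the explicit constant in \eqref{eq: c_gamma}. Concatenating the two inequalities yields
\begin{align}
P(\set{F}) - \gamma \, Q(\set{F}) \leq E_\gamma(P\|Q) \leq c_\gamma \, D(P\|Q),
\end{align}
and a trivial rearrangement gives the claimed inequality $P(\set{F}) \leq \gamma \, Q(\set{F}) + c_\gamma \, D(P\|Q)$.

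Frankly, there is no genuine obstacle here: the corollary is an immediate consequence of the two cited results, and the entire content is the observation that bounding the \emph{worst-case} event (which is what $E_\gamma$ measures) automatically bounds every \emph{individual} event. The only points worth stating carefully are that $P \ll Q$ is exactly the hypothesis under which both \eqref{eq2: EG} and Theorem~\ref{thm:EG vs. RE} were proved, and that the constant $c_\gamma$ transferred here is literally the same universal function of $\gamma$ defined in \eqref{eq: c_gamma}--\eqref{eq: t_gamma}, so no new optimization is needed.
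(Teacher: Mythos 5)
Your proposal is correct and is exactly the paper's own argument: the corollary is stated as an immediate consequence of the variational characterization \eqref{eq2: EG} together with the bound $E_\gamma(P\|Q) \leq c_\gamma \, D(P\|Q)$ from Theorem~\ref{thm:EG vs. RE}. Nothing further is needed.
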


\begin{corollary} \label{corollary3: EG}
If $P \ll Q$ and  $\gamma > 1$, then
\begin{align} \label{eq: EG, TV, RE}
E_{\gamma}(P\|Q)
\leq \min_{\lambda \in [0,1]} \left\{\tfrac{\lambda}{2} \, |P-Q| +
c_\gamma \, D\bigl((1-\lambda)P + \lambda Q \, \| \, Q \bigr) \right\}.
\end{align}
\end{corollary}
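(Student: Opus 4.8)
The plan is to interpolate between $P$ and $Q$ through the mixture $R_\lambda \triangleq (1-\lambda)P + \lambda Q$, and to split $E_\gamma(P\|Q)$ into a total-variation term measuring the cost of passing from $P$ to $R_\lambda$ and a relative-entropy term controlling $E_\gamma(R_\lambda\|Q)$. First I would fix $\lambda \in [0,1)$ and record the absolute-continuity chain $P \ll R_\lambda \ll Q$: since $P \ll Q$ we have $R_\lambda \ll Q$, and since $R_\lambda \geq (1-\lambda) P$ as measures with $1-\lambda > 0$, any $R_\lambda$-null set is $P$-null, so $P \ll R_\lambda$ as well. This chain is exactly the hypothesis needed to invoke the triangle-type inequality \eqref{eq2: triangleEG} with $R = R_\lambda$, which yields $E_\gamma(P\|Q) \leq \tfrac12 |P - R_\lambda| + E_\gamma(R_\lambda\|Q)$.

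The next step is to evaluate the two terms on the right. Since $P - R_\lambda = \lambda (P - Q)$, we get $|P - R_\lambda| = \lambda\,|P-Q|$, so the first term equals $\tfrac{\lambda}{2}\,|P-Q|$. For the second term I would apply Theorem~\ref{thm:EG vs. RE} with $R_\lambda$ playing the role of $P$ (legitimate because $R_\lambda \ll Q$), giving $E_\gamma(R_\lambda\|Q) \leq c_\gamma\, D(R_\lambda\|Q)$. Combining the two bounds produces, for every $\lambda \in [0,1)$, the per-$\lambda$ estimate $E_\gamma(P\|Q) \leq \tfrac{\lambda}{2}\,|P-Q| + c_\gamma\, D\bigl((1-\lambda)P + \lambda Q \,\|\, Q\bigr)$, and taking the infimum over $\lambda$ delivers the claimed bound.

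It remains to dispose of the endpoint $\lambda = 1$, where $R_1 = Q$ and the absolute-continuity argument $P \ll R_\lambda$ breaks down. Here the right-hand side of the corollary reduces to $\tfrac12\,|P-Q|$ since $D(Q\|Q)=0$, and the required inequality $E_\gamma(P\|Q) \leq \tfrac12\,|P-Q|$ follows directly from the monotonicity \eqref{eq: monotonicity E_gamma} of $E_\gamma$ in $\gamma$ together with $E_1(P\|Q) = \tfrac12\,|P-Q|$ in \eqref{eq:EG-TV}. Thus the minimization over the closed interval $[0,1]$ is justified.

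I do not expect a genuine analytic obstacle: the entire argument is a decomposition via the $E_\gamma$ triangle inequality followed by the already-established linear bound of $E_\gamma$ by relative entropy. The only points requiring care are the verification of the absolute-continuity conditions that license \eqref{eq2: triangleEG} (and hence the separate treatment of $\lambda=1$), and the clean computation $|P-R_\lambda| = \lambda|P-Q|$; both are routine.
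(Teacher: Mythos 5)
Your proposal is correct and follows essentially the same route as the paper: decompose via the mixture $R_\lambda = (1-\lambda)P + \lambda Q$, apply the triangle-type inequality \eqref{eq2: triangleEG}, compute $|P-R_\lambda| = \lambda|P-Q|$, and bound $E_\gamma(R_\lambda\|Q)$ by $c_\gamma D(R_\lambda\|Q)$ via \eqref{eq:sup-EG and RE}. Your extra care with the absolute-continuity chain $P \ll R_\lambda \ll Q$ and the separate treatment of the endpoint $\lambda=1$ is a welcome tightening of details the paper leaves implicit.
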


\begin{proof}
For $\lambda \in [0,1]$, let $R = (1-\lambda) P + \lambda Q$.
Then, we have for $\gamma > 1$,
\begin{align}
\label{eq: ub001}
E_{\gamma}(P\|Q) & \leq \tfrac12 \, |P-R| + E_{\gamma}(R\|Q) \\[0.1cm]
\label{eq: ub002}
& =  \tfrac{\lambda}{2} \, |P-Q| + E_{\gamma}\bigl((1-\lambda)P + \lambda Q \, \|Q \bigr) \\[0.1cm]
\label{eq: ub003}
& \leq\tfrac{\lambda}{2} \, |P-Q| + c_\gamma \, D\bigl((1-\lambda)P + \lambda Q \, \| \, Q \bigr)
\end{align}
where \eqref{eq: ub001} is \eqref{eq2: triangleEG};
and \eqref{eq: ub003} follows from \eqref{eq:sup-EG and RE}.
\end{proof}

\begin{remark}
Note that the upper bounds in the right sides of \eqref{eq10b: EG} and \eqref{eq:EG and RE}
follow from \eqref{eq: EG, TV, RE} by setting $\lambda = 1$ or $\lambda = 0$, respectively.
\end{remark}

\begin{remark}
Further upper bounding the right side of \eqref{eq: ub003} by invoking Pinsker's
inequality and the convexity of relative entropy, followed by an optimization over
the free parameter $\lambda \in [0,1]$, does not lead to an improvement
beyond the minimum of the bounds in \eqref{eq: Pinsker3} and \eqref{eq:EG and RE}.
\end{remark}

\subsection{Lower Bound on $\mathds{F}_{P\|Q}$ as a Function of $D(P\|Q)$}
\label{subsec: re-ris}
The $E_\gamma$ divergence proves to be instrumental in the proof of the
following bound on the complementary relative information spectrum
for positive arguments.
\begin{theorem} \label{thm: reris}
If $P \ll Q$, $P \neq Q$, and $\beta > 1$, then
\begin{align}
\frac{1-\mathds{F}_{P\|Q}(\log \beta)}{D(P\|Q)} &\leq u(\beta)
\label{eq:u}
\triangleq \min_{\gamma \in (1, \beta)}
\left( \frac{\beta \, c_\gamma}{\beta-\gamma} \right),
\end{align}
where $c_\gamma$ is given in \eqref{eq: c_gamma}. Furthermore,
the function $u \colon (1, \infty) \to \Reals^+$  is  monotonically
decreasing with
\begin{align} \label{eq: u-bound}
u(\beta) \leq \frac{2}{\log \left(\frac{\beta}{2e}\right)}, \quad \forall \, \beta > 2e.
\end{align}
\end{theorem}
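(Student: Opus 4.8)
The plan is to exploit the variational characterization of the $E_\gamma$ divergence in \eqref{eq2: EG} together with the bound $E_\gamma(P\|Q) \leq c_\gamma \, D(P\|Q)$ supplied by Theorem~\ref{thm:EG vs. RE} (see \eqref{eq:EG and RE}). First I would fix $\beta > 1$ and introduce the threshold event $\set{F}_\beta = \{a \in \set{A} \colon \imath_{P\|Q}(a) > \log \beta\} \in \mathscr{F}$, so that by \eqref{eq:RIS} one has $P(\set{F}_\beta) = 1 - \mathds{F}_{P\|Q}(\log \beta)$. On $\set{F}_\beta$ the density obeys $\frac{\text{d}P}{\text{d}Q} > \beta$, whence $P(\set{F}_\beta) = \int_{\set{F}_\beta} \frac{\text{d}P}{\text{d}Q} \, \text{d}Q \geq \beta \, Q(\set{F}_\beta)$, i.e. $Q(\set{F}_\beta) \leq \beta^{-1} P(\set{F}_\beta)$; note this uses only $P \ll Q$.

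Next, for any $\gamma \in (1, \beta)$ I would feed the particular event $\set{F}_\beta$ into the maximization \eqref{eq2: EG}, obtaining $E_\gamma(P\|Q) \geq P(\set{F}_\beta) - \gamma \, Q(\set{F}_\beta) \geq \bigl(1 - \gamma \beta^{-1}\bigr) \, P(\set{F}_\beta)$, where the last step invokes the density bound just derived together with $\gamma < \beta$. Rearranging gives $P(\set{F}_\beta) \leq \frac{\beta}{\beta - \gamma} \, E_\gamma(P\|Q)$, and combining with $E_\gamma(P\|Q) \leq c_\gamma \, D(P\|Q)$ produces $1 - \mathds{F}_{P\|Q}(\log \beta) \leq \frac{\beta \, c_\gamma}{\beta - \gamma} \, D(P\|Q)$. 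Since $P \neq Q$ forces $D(P\|Q) > 0$ (the case $D(P\|Q) = \infty$ being trivial because then the left side of \eqref{eq:u} vanishes), dividing by $D(P\|Q)$ and taking the minimum of the right-hand side over $\gamma \in (1, \beta)$ establishes \eqref{eq:u}.

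For the monotonicity of $u$, I would write $g_\gamma(\beta) = \frac{\beta c_\gamma}{\beta - \gamma} = \frac{c_\gamma}{1 - \gamma/\beta}$ and observe that, for each fixed $\gamma$, the map $g_\gamma(\cdot)$ is decreasing on $(\gamma, \infty)$. Then for $1 < \beta_1 < \beta_2$ I would chain $u(\beta_2) = \min_{\gamma \in (1, \beta_2)} g_\gamma(\beta_2) \leq \min_{\gamma \in (1, \beta_1)} g_\gamma(\beta_2) \leq \min_{\gamma \in (1, \beta_1)} g_\gamma(\beta_1) = u(\beta_1)$, where the first inequality holds because the minimum is restricted to the smaller feasible set $(1, \beta_1) \subset (1, \beta_2)$, and the second holds termwise since $g_\gamma(\beta_2) \leq g_\gamma(\beta_1)$.

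Finally, for the explicit bound \eqref{eq: u-bound} I would select the admissible value $\gamma = \tfrac{\beta}{2}$ (legitimate since $\beta > 2e > 2$), which yields $u(\beta) \leq g_{\beta/2}(\beta) = 2 c_{\beta/2}$. To bound $c_{\beta/2}$ I would use $c_\gamma < \tfrac1{\log \gamma}$ for $\gamma > 1$; this follows from the closed form $c_\gamma = \tfrac1{\log t_\gamma}$, obtained by substituting the defining relation $\gamma \log t_\gamma = (t_\gamma - 1) \log e$ of \eqref{equation1 for t_gamma} into \eqref{eq: c_gamma} (with $r$ as in \eqref{eq: r}), together with $t_\gamma > \gamma$. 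Hence $c_{\beta/2} < \tfrac1{\log(\beta/2)}$, and since $\log(\beta/2) > \log\bigl(\tfrac{\beta}{2e}\bigr) > 0$ for $\beta > 2e$, I conclude $u(\beta) \leq 2 c_{\beta/2} < \tfrac{2}{\log(\beta/2)} \leq \tfrac{2}{\log(\beta/(2e))}$. The only mildly delicate step in the whole argument is the first one, namely choosing the threshold event $\set{F}_\beta$ and applying the density bound in the correct direction so that the variational form of $E_\gamma$ manufactures the crucial factor $1 - \gamma/\beta$; everything thereafter is routine bookkeeping.
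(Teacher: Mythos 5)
Your proof is correct and follows essentially the same route as the paper: the same threshold event $\set{F}_\beta$ fed into the variational form \eqref{eq2: EG} to manufacture the factor $1-\gamma/\beta$, combined with $E_\gamma(P\|Q)\leq c_\gamma D(P\|Q)$, and the same choice $\gamma=\beta/2$ for \eqref{eq: u-bound}. The one place you improve on the paper's Appendix argument is the observation that substituting the defining relation of $t_\gamma$ into \eqref{eq: c_gamma} gives the exact identity $c_\gamma = 1/\log t_\gamma$, hence $c_\gamma < 1/\log\gamma$; the paper instead derives the weaker $c_\gamma < 1/\log(\gamma/e)$ via monotonicity of $t/r(t)$, so your bound reaches \eqref{eq: u-bound} with room to spare (and your monotonicity argument for $u$, using only that $g_\gamma(\beta)=c_\gamma/(1-\gamma/\beta)$ decreases in $\beta$, avoids invoking the monotonicity of $c_\gamma$ altogether).
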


\begin{proof}
For $\beta>1$, denote the event
\begin{align} \label{eq: set}
\set{F}_\beta \triangleq \left\{x \in \set{A} \colon
\imath_{P\|Q}(x) > \log \beta \right\}
\end{align}
which satisfies
\begin{align} \label{eq: PQ_ineq}
P(\set{F}_\beta) > \beta \, Q(\set{F}_\beta).
\end{align}
Then,
\begin{align}
\label{eq: reris1}
1-\mathds{F}_{P\|Q}(\log \beta)
&= P(\set{F}_\beta) \\
\label{eq: reris3}
& \leq \inf_{\gamma \in (1, \beta)}\frac{P(\set{F}_\beta)
- \gamma \, Q(\set{F}_\beta)}{1-\frac{\gamma}{\beta}} \\
\label{eq: reris4}
& \leq \inf_{\gamma \in (1, \beta)}
\frac{\beta \, E_\gamma(P\|Q)}{\beta-\gamma} \\
\label{eq: reris5}
& \leq \inf_{\gamma \in (1, \beta)} \left(\frac{\beta \,
c_\gamma}{\beta-\gamma}\right) \, D(P\|Q)
\end{align}
where \eqref{eq: reris1} holds by Definition~\ref{def:RIS};
\eqref{eq: reris3} follows from \eqref{eq: PQ_ineq};
\eqref{eq: reris4} is satisfied by \eqref{eq2: EG}, and
\eqref{eq: reris5} is due to \eqref{eq:sup-EG and RE}.
Note that the infimum in \eqref{eq: reris5}
is attained because $c_\gamma$ is continuous and for $\beta>1$,
$\frac{\beta \, c_\gamma}{\beta-\gamma}$ tends to $+ \infty$
at both extremes of the interval $(1, \beta)$.
The monotonicity of $u(\beta)$ and the bound in
\eqref{eq: u-bound} are proved in Appendix~\ref{appendix:u}.
\end{proof}

\section{R\'{e}nyi Divergence}
\label{sec:RD}

The R\'{e}nyi divergence (Definition~\ref{def:RD}) admits a variational representation
in terms of the relative entropy \cite[Theorem~1]{Shayevitz_ISIT11}. Let $\PU \ll \PZ$
then, for $\alpha > 0$,
\begin{align}
& (1-\alpha) \, D_{\alpha}(\PU \| \PZ)
= \min_{P \ll \PU} \bigl\{\alpha \, D(P \| \PU) + (1-\alpha) \, D(P \| \PZ) \bigr\}.
\label{eq: variational representation RD}
\end{align}

In this section, integral expressions for the R\'{e}nyi divergence
are derived in terms of the relative information spectrum (Definition~\ref{def:RIS}).
These expressions are used to obtain bounds on the R\'{e}nyi divergence as a function of
the variational distance under the assumption of bounded relative information.

\subsection{Expressions in Terms of the Relative Information Spectrum} \label{subsec: RD-RIS}
To state the results in this section, it is convenient to introduce
$\zeta_\alpha \colon (0,\infty) \to [0, \infty)$
\begin{align} \label{eq: zeta}
\zeta_{\alpha}(\beta) = \beta^{\alpha-2} \; \left( 1-\mathds{F}_{P\|Q}\bigl(\log \beta \bigr) \right).
\end{align}
The R\'{e}nyi divergence admits the following representation in terms of the relative information
spectrum and the relative information bounds $(\beta_1, \beta_2) \in [0,1]^2$ in \eqref{eq: beta1}--\eqref{eq: beta2}.
\begin{theorem}\label{thm:RDRIS}
Let $P \ll Q$.
\begin{itemize}
\item
If $\beta_1 > 0$ and $\alpha \in (0,1)\cup (1, \infty)$, then
\begin{align}
D_{\alpha}(P\|Q)
& =  \frac1{\alpha-1} \; \log \biggl( \beta_1^{1 - \alpha} + (1 - \alpha) \int_{\beta_2}^{\beta_1^{-1}}
\bigl(\beta^{\alpha-2} - \zeta_{\alpha}(\beta)\bigr) \text{d}\beta \biggr).
\label{dalpha--ub}
\end{align}
\item
If $\beta_1 = 0$ and $\alpha \in (0,1)$, then
\begin{align}
D_{\alpha}(P\|Q) & = \frac1{\alpha-1} \; \log \left( (1-\alpha) \dint_{\beta_2}^{\infty}
\bigl(\beta^{\alpha-2} - \zeta_{\alpha}(\beta)\bigr) \; \text{d}\beta \right).
\label{dalpha--ud}
\end{align}
\item
If  $\alpha \in (1,\infty)$, then
\begin{align}
D_{\alpha}(P\|Q) & = \frac1{\alpha-1} \; \log \left( (\alpha-1) \dint_0^{\infty}
\zeta_{\alpha}(\beta) \; \text{d}\beta \right)
\label{ssd}\\
& = \frac1{\alpha-1} \; \log \left(  \beta_2^{\alpha-1}+ (\alpha-1 ) \dint_{\beta_2}^{\infty}
\zeta_{\alpha}(\beta) \; \text{d}\beta \right).
\label{dalpha--uf}
\end{align}
\end{itemize}
\end{theorem}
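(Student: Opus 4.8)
The plan is to reduce everything to a single fractional moment of the density and then convert that moment into an integral of the relative information spectrum by a change of variable. Concretely, write $T=\exp\bigl(\imath_{P\|Q}(X)\bigr)=\tfrac{\mathrm dP}{\mathrm dQ}(X)$ with $X\sim P$; since $P\ll Q$, we have $T>0$ almost surely, and by \eqref{eq:RIS} its law satisfies $\mathbb{P}[T\le\beta]=\mathds{F}_{P\|Q}(\log\beta)$, while \eqref{eq: beta1-alt}--\eqref{eq: beta2-alt} give $\beta_2\le T\le\beta_1^{-1}$ a.s. (with $\beta_1^{-1}=\infty$ when $\beta_1=0$). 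By \eqref{eq:RD3}, for $\alpha\in(0,1)\cup(1,\infty)$ we have $D_\alpha(P\|Q)=\tfrac1{\alpha-1}\log\mathbb{E}\bigl[T^{\alpha-1}\bigr]$, so it suffices to express $\mathbb{E}[T^{\alpha-1}]$ through $\mathds{F}_{P\|Q}$, keeping track of the sign of $\alpha-1$.

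The key step is to apply the tail formula \eqref{eq: expectation} to the nonnegative variable $V=T^{\alpha-1}$, obtaining $\mathbb{E}[T^{\alpha-1}]=\int_0^\infty\mathbb{P}[T^{\alpha-1}>t]\,\mathrm dt$, and then to substitute $t=\beta^{\alpha-1}$, $\mathrm dt=(\alpha-1)\beta^{\alpha-2}\,\mathrm d\beta$. Here the monotonicity of $t\mapsto t^{1/(\alpha-1)}$ reverses between the two regimes: for $\alpha>1$ the map is increasing, so $\mathbb{P}[T^{\alpha-1}>t]=\mathbb{P}[T>\beta]=1-\mathds{F}_{P\|Q}(\log\beta)$ and one gets the master formula $\mathbb{E}[T^{\alpha-1}]=(\alpha-1)\int_0^\infty\zeta_\alpha(\beta)\,\mathrm d\beta$ with $\zeta_\alpha$ as in \eqref{eq: zeta}; substituting into the logarithm yields \eqref{ssd} at once. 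For $\alpha\in(0,1)$ the map is decreasing, the limits of integration flip, and after using $\mathbb{P}[T<\beta]=\mathds{F}_{P\|Q}(\log\beta)$ (equality of the two up to atoms, which are irrelevant under integration) together with the identity $\beta^{\alpha-2}\mathds{F}_{P\|Q}(\log\beta)=\beta^{\alpha-2}-\zeta_\alpha(\beta)$, I obtain $\mathbb{E}[T^{\alpha-1}]=(1-\alpha)\int_0^\infty\bigl(\beta^{\alpha-2}-\zeta_\alpha(\beta)\bigr)\,\mathrm d\beta$. Since the integrand here vanishes on $(0,\beta_2)$ because $\mathds{F}_{P\|Q}(\log\beta)=0$ there, the lower limit may be raised to $\beta_2$, which is exactly \eqref{dalpha--ud} in the case $\beta_1=0$.

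It then remains to truncate at the upper end when $\beta_1>0$ and to reconcile the two algebraic shapes. When $\beta_1>0$, $\zeta_\alpha(\beta)=0$ for $\beta>\beta_1^{-1}$ and $\mathds{F}_{P\|Q}(\log\beta)=1$ there, so splitting each master integral at $\beta_2$ and $\beta_1^{-1}$ and evaluating the elementary pieces $\int_0^{\beta_2}\beta^{\alpha-2}\,\mathrm d\beta=\tfrac{\beta_2^{\alpha-1}}{\alpha-1}$ and $\int_{\beta_1^{-1}}^{\infty}\beta^{\alpha-2}\,\mathrm d\beta=\tfrac{\beta_1^{1-\alpha}}{1-\alpha}$ (each convergent in its respective regime) produces the bracketed quantities inside the logarithms. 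For $\alpha>1$ this gives \eqref{dalpha--uf} directly; to land on the common form \eqref{dalpha--ub} for both regimes, I would invoke the elementary identity
\begin{align}
\beta_1^{1-\alpha}+(1-\alpha)\int_{\beta_2}^{\beta_1^{-1}}\beta^{\alpha-2}\,\mathrm d\beta=\beta_2^{\alpha-1},
\end{align}
which transforms the ``tail'' representation into the ``CDF'' representation and shows the $\alpha<1$ and $\alpha>1$ computations agree.

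I expect the main obstacle to be bookkeeping rather than depth: keeping the sign of $\alpha-1$ and the direction of the monotone substitution straight, verifying that the boundary terms produced by the change of variable vanish, and confirming that all integrals converge. The latter is controlled by noting that $\mathbb{E}[T^{\alpha-1}]$ is finite precisely when $D_\alpha(P\|Q)<\infty$; for $\alpha\in(0,1)$ this is automatic (the integral near $\infty$ behaves like $\int^\infty\beta^{\alpha-2}\,\mathrm d\beta$, and finiteness near $0$ follows from $\mathbb{E}[T^{\alpha-1}]=\mathbb{E}_Q\bigl[(\tfrac{\mathrm dP}{\mathrm dQ})^\alpha\bigr]\le 1$ via Jensen, consistent with \eqref{renyimeetshellinger}), whereas for $\alpha>1$ with $\beta_1=0$ both sides of \eqref{ssd}--\eqref{dalpha--uf} may equal $+\infty$ in agreement. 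The only genuinely delicate point is the interchange of $\mathbb{P}[T<\beta]$ with $\mathbb{P}[T\le\beta]=\mathds{F}_{P\|Q}(\log\beta)$, which is harmless since the two differ only on the at most countable set of atoms of $T$.
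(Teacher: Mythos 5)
Your proposal is correct and follows essentially the same route as the paper: both express $D_\alpha$ via $\mathbb{E}[T^{\alpha-1}]$ with $T=\exp(\imath_{P\|Q}(X))$, $X\sim P$, apply the tail formula \eqref{eq: expectation}, substitute $t=\beta^{\alpha-1}$ with the monotonicity (and hence the CDF versus complementary-CDF) flipping according to the sign of $\alpha-1$, and then split the integral at $\beta_2$ and $\beta_1^{-1}$ where $\mathds{F}_{P\|Q}(\log\beta)$ equals $0$ and $1$, respectively. Your explicit statement of the elementary identity reconciling \eqref{dalpha--ub} with \eqref{dalpha--uf}, and your remark on atoms of $T$, only make explicit steps the paper leaves implicit.
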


\begin{proof}
If $\alpha > 1$,  \eqref{eq:RD3} implies that $D_{\alpha}(P\|Q)$ is given by
\begin{align}
\frac1{\alpha-1} \; \log \Bigl( \mathbb{E} \bigl[ \exp \bigl((\alpha-1) \,
\imath_{P\|Q}(X)\bigr)\bigr] \Bigr)
& = \frac1{\alpha-1} \; \log \left( \int_0^{\infty} \hspace*{-0.2cm}
\prob\left[\exp\bigl((\alpha-1) \, \imath_{P\|Q}(X)\bigr) > t \right] \text{d}t \right)
\label{ssa2}\\[0.1cm]
& = \frac1{\alpha-1} \; \log \left( \int_0^{\infty} \prob \left[\imath_{P\|Q}(X) >
\frac{\log t}{\alpha-1} \right] \, \text{d}t \right)
\label{ssb}
\end{align}
where \eqref{ssa2} follows from \eqref{eq: expectation}
for an arbitrary non-negative random variable $V$, and we use
$\alpha > 1$ to write \eqref{ssb}. Then, \eqref{ssd} holds by
the definition of the relative information spectrum in
\eqref{eq:RIS} and by changing the integration variable
$t = \beta^{\alpha-1}$. If $\beta_1 >0$, the integrand in the
right side of \eqref{ssd} is zero in $[\beta_1^{-1},\infty)$
and the expression in \eqref{dalpha--ub} is readily verified
(for $\alpha >1$). More generally (without requiring $\beta_1 >0$),
we split the integral in the right side of \eqref{ssd} into
$[0, \beta_2) \cup [\beta_2, \infty)$, and \eqref{dalpha--uf}
follows since the integral over the leftmost interval is
$\beta_2^{\alpha -1 }$ considering that
$\mathds{F}_{P\|Q}(\log \beta) = 0$ therein.

If $\alpha \in (0, 1)$, we write $D_{\alpha}(P\|Q) $ as
\begin{align}
\frac1{\alpha-1} \; \log \Bigl( \mathbb{E} \bigl[ \exp \bigl((\alpha-1) \,
\imath_{P\|Q}(X)\bigr)\bigr] \Bigr)
& = \frac1{\alpha-1} \; \log \left( \int_0^{\infty} \hspace*{-0.2cm} \prob\left[ \exp\bigl((\alpha-1)
\, \imath_{P\|Q}(X)\bigr) \geq t \right] \text{d}t \right) \\[0.1cm]
& = \frac1{\alpha-1} \; \log \left( \int_0^{\infty} \prob\left[ \imath_{P\|Q}(X) \leq
\frac{\log t}{\alpha-1} \right] \, \text{d}t \right) \\[0.1cm]
& = \frac1{\alpha-1} \; \log \left( \int_{\infty}^0 \prob\left[ \imath_{P\|Q}(X) \leq
\log \beta \right] \, (\alpha-1) \, \beta^{\alpha-2} \, \text{d}\beta \right) \nonumber \\[0.1cm]
& = \frac1{\alpha-1} \; \log \left( (1-\alpha) \int_{\beta_2}^{\infty} \beta^{\alpha-2} \;
\mathds{F}_{P\|Q}\bigl(\log \beta\bigr) \; \text{d}\beta \right)
\label{manon}
\end{align}
which is the expression in \eqref{dalpha--ud}. If $\beta_1 >0$, then we can further
split the integral in the right side of \eqref{manon} into the intervals
$[\beta_2, \beta_1^{-1}) \cup [\beta_1^{-1}, \infty)$. Over the rightmost interval,
$\mathds{F}_{P\|Q}\bigl(\log \beta\bigr)  = 1$ and the integral is seen to be
$\beta_1^{1 - \alpha}$, thereby verifying \eqref{dalpha--ub} for $\alpha \in (0,1)$.
\end{proof}

The close relationship between the R\'enyi and Hellinger divergences in \eqref{renyimeetshellinger}
results is the following integral representations for the Hellinger divergence.
\begin{corollary} \label{cor: Hellinger-RIS}
Let $P \ll Q$.
\begin{itemize}
\item
If $\beta_1 > 0$ and $\alpha \in (0,1)\cup (1, \infty)$, then
\begin{align}  \label{eq1: Hellinger-RIS}
\mathscr{H}_{\alpha}(P\|Q) &= \frac{\beta_1^{1-\alpha}-1}{\alpha-1}
- \int_{\beta_2}^{\beta_1^{-1}} \bigl(\beta^{\alpha-2} - \zeta_{\alpha}(\beta)\bigr) \text{d}\beta.
\end{align}
\item
If $\beta_1 = 0$ and $\alpha \in (0,1)$, then
\begin{align}  \label{eq2: Hellinger-RIS}
\mathscr{H}_{\alpha}(P\|Q) &= \frac1{1-\alpha} - \int_{\beta_2}^{\infty}
\bigl(\beta^{\alpha-2} - \zeta_{\alpha}(\beta)\bigr) \text{d}\beta.
\end{align}
\item
If $\alpha \in (1,\infty)$, then
\begin{align}  \label{eq3: Hellinger-RIS}
\mathscr{H}_{\alpha}(P\|Q)
&= \int_0^{\infty} \zeta_\alpha (\beta)
\; \text{d}\beta - \frac1{\alpha-1} \\[0.1cm]
\label{eq4: Hellinger-RIS}
&= \frac{\beta_2^{\alpha-1}-1}{\alpha-1} + \dint_{\beta_2}^{\infty} \zeta_\alpha (\beta)\; \text{d}\beta.
\end{align}
\end{itemize}
\end{corollary}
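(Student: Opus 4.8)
The plan is to deduce every representation directly from Theorem~\ref{thm:RDRIS} by inverting the one-to-one correspondence \eqref{renyimeetshellinger} between R\'enyi and Hellinger divergences and substituting the integral expressions already obtained for $D_\alpha(P\|Q)$. Rearranging \eqref{renyimeetshellinger} gives, for every $\alpha \in (0,1) \cup (1,\infty)$,
\[
\mathscr{H}_\alpha(P\|Q) = \frac{\exp\bigl((\alpha-1) \, D_\alpha(P\|Q)\bigr)-1}{\alpha-1},
\]
an exact algebraic identity in which the sign of $\alpha-1$ is harmlessly absorbed into the exponential. Hence it suffices, in each regime, to exponentiate the logarithm on the right side of the corresponding display in Theorem~\ref{thm:RDRIS}, subtract~$1$, and divide by $\alpha-1$.

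First I would treat the case $\beta_1>0$. Exponentiating \eqref{dalpha--ub} yields
\[
\exp\bigl((\alpha-1) \, D_\alpha(P\|Q)\bigr) = \beta_1^{1-\alpha} + (1-\alpha) \int_{\beta_2}^{\beta_1^{-1}} \bigl(\beta^{\alpha-2}-\zeta_\alpha(\beta)\bigr) \, \text{d}\beta .
\]
Inserting this into the inversion formula and using $\tfrac{1-\alpha}{\alpha-1}=-1$ turns the integral coefficient into $-1$ and produces the leading term $\tfrac{\beta_1^{1-\alpha}-1}{\alpha-1}$, which is exactly \eqref{eq1: Hellinger-RIS}.

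The remaining cases follow the identical mechanical recipe. For $\beta_1=0$ and $\alpha\in(0,1)$, exponentiating \eqref{dalpha--ud} and applying the inversion formula gives \eqref{eq2: Hellinger-RIS}, the free term $\tfrac{-1}{\alpha-1}=\tfrac1{1-\alpha}$ supplying the leading constant. For $\alpha\in(1,\infty)$, the two displays \eqref{ssd} and \eqref{dalpha--uf} deliver \eqref{eq3: Hellinger-RIS} and \eqref{eq4: Hellinger-RIS} respectively; in the latter the additive $\beta_2^{\alpha-1}$ appearing inside the logarithm becomes, after subtracting~$1$ and dividing, the leading term $\tfrac{\beta_2^{\alpha-1}-1}{\alpha-1}$. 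I expect no genuine analytic obstacle: all the substance already resides in Theorem~\ref{thm:RDRIS}, and since \eqref{renyimeetshellinger} is an exact identity, no new convergence or limit-interchange questions arise. The only step demanding care is the sign bookkeeping --- verifying that the factor $(1-\alpha)$ or $(\alpha-1)$ multiplying each integrand combines correctly with the $\tfrac1{\alpha-1}$ prefactor, and that the additive constants $\beta_1^{1-\alpha}$, $1$, and $\beta_2^{\alpha-1}$ are redistributed properly between the leading term and the integral.
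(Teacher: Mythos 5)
Your proposal is correct and coincides with the paper's own proof, which likewise obtains \eqref{eq1: Hellinger-RIS}--\eqref{eq4: Hellinger-RIS} by combining \eqref{renyimeetshellinger} with \eqref{dalpha--ub}, \eqref{dalpha--ud}, \eqref{ssd} and \eqref{dalpha--uf}, respectively. Your sign bookkeeping checks out in all four cases.
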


\begin{proof}
Combining \eqref{renyimeetshellinger} with \eqref{dalpha--ub}, \eqref{dalpha--ud},
\eqref{ssd}, \eqref{dalpha--uf} yields \eqref{eq1: Hellinger-RIS}--\eqref{eq4: Hellinger-RIS}, respectively.
\end{proof}

Particularizing \eqref{eq: zeta}, \eqref{ssd} and \eqref{eq3: Hellinger-RIS} to $\alpha=2$, we obtain
\begin{align}
\label{eq: RD2}
D_2(P\|Q) &= \log \left( \int_0^{\infty} \left( 1-\mathds{F}_{P\|Q}(\log \beta) \right)
\; \text{d}\beta \right),
\end{align}
\begin{align}
\label{eq: chi RES}
\chi^2(P \| Q) &= \int_0^{\infty} \left( 1 - \mathds{F}_{P\|Q}(\log \beta) \right)
\, \text{d}\beta - 1 \\
\label{eq2: chi RES}
& = \int_1^{\infty} \left( 1 - \mathds{F}_{P\|Q}(\log \beta) \right)
\, \text{d}\beta - \int_0^1 \mathds{F}_{P\|Q}(\log \beta) \, \text{d}\beta.
\end{align}
Note the resemblance of the integral expressions in \eqref{eq2: RE} and \eqref{eq2: chi RES}
for $D(P\|Q)$ and $\chi^2(P\|Q)$, respectively.

We conclude this subsection by proving three properties of
the Hellinger divergence as a function of its order. The first
two monotonicity properties are analogous to
\cite[Theorems~3 and~16]{ErvenH14} for the R\'{e}nyi divergence;
these monotonicity properties have been originally stated in
\cite[Proposition~2.7]{LieseV_book87}, though the following
alternative proof is more transparent.
\begin{theorem} \label{thm: monotonicity of Hel div in alpha}
The Hellinger divergence satisfies the following properties:
\begin{enumerate}[a)]
\item
$\mathscr{H}_{\alpha}(P \| Q)$ is monotonically
increasing in $\alpha \in (0, \infty)$;
\item
$\left(\frac1\alpha-1\right) \mathscr{H}_{\alpha}(P \| Q)$
is monotonically decreasing in $\alpha \in (0, 1)$;
\item $\frac1{\alpha} \, \mathscr{H}_{\alpha}(P \| Q)$ is log-convex in $\alpha \in (0, \infty)$,
which implies that for every $\alpha, \beta > 0$
\begin{align} \label{eq: log-convexity}
\mathscr{H}^2_{\frac{\alpha+\beta}{2}}(P \| Q) \leq \frac{(\alpha+\beta)^2}{4 \alpha \beta}
\; \mathscr{H}_{\alpha}(P \| Q) \, \mathscr{H}_{\beta}(P \| Q).
\end{align}
\end{enumerate}
\end{theorem}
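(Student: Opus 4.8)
The plan is to reduce all three properties to elementary convexity facts about the single real function $m(\alpha) \triangleq \mathbb{E}\bigl[W^\alpha\bigr]$, where $W = \exp\bigl(\imath_{P\|Q}(Y)\bigr) = \tfrac{\mathrm dP}{\mathrm dQ}(Y)$ with $Y \sim Q$ (as in \eqref{eq: Z}). Since $P \ll Q$ is a probability measure, $m(1) = \mathbb{E}[W] = 1$, while $m(0^+) = Q(W>0) \le 1$. From \eqref{eq: Hel-divergence}--\eqref{eq: H as fD} one reads off, for every $\alpha \in (0,1)\cup(1,\infty)$,
\[
\mathscr{H}_\alpha(P\|Q) = \frac{m(\alpha) - 1}{\alpha - 1},
\]
the set $\{W=0\}$ being absorbed correctly because $W^\alpha = 0$ there for $\alpha>0$, with the value at $\alpha=1$ fixed by the continuous extension $\mathscr{H}_1(P\|Q)\log e = D(P\|Q)$ of \eqref{eq: Hel-divergence order 1}. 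The single structural input I would use throughout is that $m$ is convex on $(0,\infty)$: for each fixed $W \ge 0$ the map $\alpha \mapsto W^\alpha = e^{\alpha \log W}$ is convex, and convexity survives the expectation. (If $m(\alpha) = +\infty$ for some $\alpha$, the claimed inequalities are trivial, so I may assume finiteness.)

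For part a) I would write $\mathscr{H}_\alpha = \frac{m(\alpha) - m(1)}{\alpha - 1}$ and recognize it as the slope of the chord of the convex function $m$ joining the fixed abscissa $1$ to $\alpha$; by the standard monotonicity of difference quotients of a convex function about a fixed point, this slope is nondecreasing in $\alpha$, which is precisely the assertion, the case $\alpha=1$ following by continuity. For part b) I would compute
\[
\Bigl(\tfrac1\alpha - 1\Bigr)\,\mathscr{H}_\alpha(P\|Q) = \frac{1 - m(\alpha)}{\alpha}, \qquad \alpha \in (0,1),
\]
and prove it nonincreasing directly from convexity: for $0 < \alpha_1 < \alpha_2 < 1$, writing $\alpha_1 = \tfrac{\alpha_1}{\alpha_2}\,\alpha_2 + \bigl(1 - \tfrac{\alpha_1}{\alpha_2}\bigr)\cdot 0$ and applying convexity of $m$ together with $m(0^+) \le 1$ gives $\alpha_2\, m(\alpha_1) - \alpha_1\, m(\alpha_2) \le \alpha_2 - \alpha_1$, which rearranges to $\frac{1-m(\alpha_1)}{\alpha_1} \ge \frac{1-m(\alpha_2)}{\alpha_2}$.

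For part c) I would connect $\mathscr{H}_\alpha$ to the log-convex quantity $\lambda_\alpha$ of \eqref{eq: log-convex lambda} (see \cite{Simic07}) evaluated at the present $W$. Because $\mathbb{E}[W] = 1$ gives $\mathbb{E}^\alpha[W] = 1$, that definition collapses to $\lambda_\alpha = \frac{(m(\alpha)-1)\log e}{\alpha(\alpha-1)} = \frac{\log e}{\alpha}\,\mathscr{H}_\alpha(P\|Q)$. Hence $\tfrac1\alpha \mathscr{H}_\alpha$ is a positive constant multiple of $\lambda_\alpha$ and inherits its log-convexity on $(0,\infty)$. Midpoint log-convexity applied to the orders $\alpha$ and $\beta$,
\[
\Bigl(\tfrac{2}{\alpha+\beta}\,\mathscr{H}_{\frac{\alpha+\beta}{2}}(P\|Q)\Bigr)^2 \le \Bigl(\tfrac1\alpha\,\mathscr{H}_\alpha(P\|Q)\Bigr)\Bigl(\tfrac1\beta\,\mathscr{H}_\beta(P\|Q)\Bigr),
\]
rearranges at once to \eqref{eq: log-convexity}.

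The only genuinely nontrivial ingredient is part c): plain convexity of $m$ does not transfer to the ratio $\frac{m(\alpha)-1}{\alpha(\alpha-1)}$, so I expect the main obstacle to be exactly the log-convexity of this ratio, which is why I would invoke \cite{Simic07} rather than attempt it from scratch. The remaining care is bookkeeping at the endpoints $\alpha \in \{0,1\}$ — where $m$ need not be continuous at $0$ and $\mathscr{H}_1$ is defined by analytic extension — all of which is absorbed by the continuity of $\lambda_\alpha$ and of $\mathscr{H}_\alpha$ in $\alpha$.
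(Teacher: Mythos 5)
Your proof is correct, and for parts a) and b) it takes a genuinely different route from the paper. The paper works at the level of the integrand: using Remark~\ref{remark: equivalence-fD} it recenters $f_\alpha$ to $f_\alpha(t) = \frac{t^\alpha-\alpha(t-1)-1}{\alpha-1}$ (resp.\ $g_\alpha(t)=t-1-\frac{t^\alpha-1}{\alpha}$) and shows $\frac{\partial}{\partial\alpha}f_\alpha(t) = \frac{t\,r(t^{\alpha-1})}{(\alpha-1)^2\log e}>0$ pointwise in $t$, so that monotonicity of the divergence follows from functional domination; this is a stronger pointwise statement than is strictly needed. You instead integrate first and exploit convexity of the moment function $m(\alpha)=\mathbb{E}[W^\alpha]$: part a) becomes the three-chord lemma applied at the anchor $m(1)=1$, and part b) follows from the same convexity together with $m(0^+)\le 1$ via the decomposition $\alpha_1 = \frac{\alpha_1}{\alpha_2}\alpha_2 + \bigl(1-\frac{\alpha_1}{\alpha_2}\bigr)\cdot 0$. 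Your route is more economical in that a single structural fact (convexity of $m$) drives both parts, and it dovetails naturally with part c), where $\lambda_\alpha = \frac{\log e}{\alpha}\,\mathscr{H}_\alpha(P\|Q)$ because $\mathbb{E}[W]=1$; there your argument coincides with the paper's (both defer the log-convexity to \cite{Simic07}). Two points worth tightening if you write this up: the appeal to convexity at the endpoint $0$ in part b) should be phrased as a limit $\epsilon\downarrow 0$ of convex combinations on $(0,\infty)$ (using $m(\epsilon)\le 1$ for $\epsilon\in(0,1)$, which follows from Jensen), and the ``trivial when $m(\alpha)=+\infty$'' remark in part a) deserves the one-line observation that $m(\alpha_0)=\infty$ forces $m(\alpha')=\infty$ for all $\alpha'\ge\alpha_0$, so the ordering is preserved in the extended reals.
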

\begin{proof}
\begin{enumerate}[a)]
\item From \eqref{eq: fD4} and \eqref{eq: Hel-divergence}, we have
\begin{align} \label{eq2: Hel-divergence}
\mathscr{H}_{\alpha}(P \| Q) = D_{f_\alpha}(P \| Q)
\end{align}
with
\begin{align} \label{eq2: H as fD}
f_\alpha(t) = \frac{t^\alpha-\alpha(t-1)-1}{\alpha-1}, \quad t > 0,
\end{align}
whose derivative is
\begin{align} \label{eq: partial derivative of f}
\frac{\partial}{\partial \alpha} \, f_\alpha(t) = \frac{t \, r(t^{\alpha-1})}{(\alpha-1)^2 \, \log e} > 0
\end{align}
where the function $r \colon (0, \infty) \to \Reals$ is defined in \eqref{eq: r}.
Since it is strictly positive except at $t=1$,  $f_\alpha$ is monotonically
increasing in $\alpha \in (0, \infty)$. Hence, Part~a) follows from \eqref{eq2: Hel-divergence}.
\item
From \eqref{eq2: Hel-divergence}, for $\alpha \in (0,1)$, we have
\begin{align} \label{eq: scaled Hel}
\left(\frac1\alpha-1\right) \, \mathscr{H}_{\alpha}(P \| Q) = D_{g_\alpha}(P \| Q)
\end{align}
where $g_\alpha \colon (0, \infty) \to \Reals$ is the convex function
\begin{align}
g_\alpha(t) &= t-1 - \frac{t^\alpha-1}{\alpha}, \quad t>0.
\end{align}
with derivative
\begin{align} \label{eq: partial derivative of g}
\frac{\partial}{\partial \alpha} \, g_\alpha(t) = -\frac{r(t^\alpha)}{\alpha^2 \, \log e} < 0,
\end{align}
so $g_\alpha$ is monotonically decreasing
in $\alpha \in (0,1)$. Hence, Part~b) follows from \eqref{eq: scaled Hel}.
\item To prove the log-convexity of $\frac1{\alpha} \, \mathscr{H}_{\alpha}(P \| Q)$
in $\alpha \in (0, \infty)$, we rely on \cite[Theorem~2.1]{Simic07} which states
that if $W$ is a non-negative random variable, then
$\lambda_\alpha$ in \eqref{eq: log-convex lambda} is log-convex in
$\alpha$. The claim now follows from \eqref{eq: log-convex lambda}
by setting $W = \frac{\mathrm{d}P}{\mathrm{d}Q} \, (Y)$ with $Y \sim Q$,
which yields that $\lambda_{\alpha} = \frac1{\alpha} \, \mathscr{H}_{\alpha}(P \| Q) \log e$
for $\alpha \in (0, \infty)$.
\end{enumerate}
\end{proof}

\subsection{Bounds as a Function of the Total Variation Distance} \label{subsec: RD-TV}

Just as with Pinsker's inequality,  for any $\varepsilon \in (0,2]$, the minimum
value of $D_{\alpha}(P\|Q)$ compatible with $|P - Q | \geq \varepsilon$, is achieved with
distributions on a binary alphabet \cite[Proposition~1]{Sason_IT2016}:
\begin{align} \label{eq: min RD s.t. TV}
\min_{P,Q \colon |P-Q| \geq \varepsilon} D_\alpha(P\|Q)
= \min_{p, q \colon |p-q| \geq \frac{\varepsilon}{2}} d_{\alpha}(p\|q)
\end{align}
where  the binary order-$\alpha$  R\'enyi divergence is defined as
\begin{align}
\label{eq: binary RD}
d_{\alpha}(p \| q) \triangleq
\left\{ \begin{array}{ll}
\frac1{\alpha-1}
\log \Bigl(p^\alpha q^{1-\alpha}+(1-p)^\alpha (1-q)^{1-\alpha} \Bigr),
& \mbox{if $\alpha \neq 1$}  \\[0.2cm]
p \, \log\frac{p}{q} + (1-p) \, \log\frac{1-p}{1-q},
& \mbox{if $\alpha = 1$.}
\end{array}
\right.
\end{align}

We proceed to use Theorem~\ref{thm:RDRIS} to get an upper bound
on $D_{\alpha}(P\|Q)$ expressed in terms of $|P-Q|$.
\begin{theorem} \label{thm: ub-Rd}
If $\beta_1 \in (0,1)$ and $\alpha \in (0,1) \cup (1, \infty)$, then
\begin{align}  \label{eq: ub-Rd}
D_{\alpha}(P\|Q) \leq \frac1{\alpha-1} \; \log\left(1 + \frac{|P-Q|}{2} \,
\frac{\beta_1^{1-\alpha}-1 }{1-\beta_1} \right).
\end{align}
\end{theorem}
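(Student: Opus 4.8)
The plan is to reduce \eqref{eq: ub-Rd} to an elementary pointwise inequality for the map $t\mapsto t^\alpha$, working with the raw moment behind Theorem~\ref{thm:RDRIS} rather than its integrated form. First I would set $Y\sim Q$ and $U=\exp\bigl(\imath_{P\|Q}(Y)\bigr)=\frac{\mathrm{d}P}{\mathrm{d}Q}(Y)$. Since $\beta_1\in(0,1)$ forces $P\ll Q$ with $P\neq Q$, we have $\mathbb{E}[U]=1$ and, by \eqref{def:dinf} and \eqref{eq: beta1-alt}, $0\le U\le \beta_1^{-1}$ with $Q$-probability one. By \eqref{eq:RD1}, $D_\alpha(P\|Q)=\tfrac1{\alpha-1}\log\mathbb{E}[U^\alpha]$, and $\mathbb{E}[U^\alpha]$ is exactly the quantity whose integral form is recorded in \eqref{dalpha--ub}; it therefore suffices to bound $\mathbb{E}[U^\alpha]$.

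Next I would introduce the piecewise-linear function $L(t)=t+2c\,(t-1)^+$ on $[0,\beta_1^{-1}]$, where $c=\frac{\beta_1^{1-\alpha}-1}{2(1-\beta_1)}$. A direct check gives $L(1)=1=1^\alpha$ and $L(\beta_1^{-1})=\beta_1^{-\alpha}$, so that $L$ coincides on $[0,1]$ with the chord of $t\mapsto t^\alpha$ joining $(0,0)$ and $(1,1)$ (slope $1$), and on $[1,\beta_1^{-1}]$ with the chord joining $(1,1)$ and $(\beta_1^{-1},\beta_1^{-\alpha})$ (slope $m_+=\frac{\beta_1^{-\alpha}-1}{\beta_1^{-1}-1}$); indeed $2c=m_+-1$ is precisely the excess of the latter chord's slope over the former's. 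The key step is then the observation that, for $\alpha>1$, $t\mapsto t^\alpha$ is convex and hence lies below each of its chords, so $t^\alpha\le L(t)$ for all $t\in[0,\beta_1^{-1}]$, whereas for $\alpha\in(0,1)$ the same map is concave, lies above its chords, and yields the reverse inequality $t^\alpha\ge L(t)$.

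Taking expectations and using $\mathbb{E}[U]=1$ together with $\mathbb{E}\bigl[(U-1)^+\bigr]=\tfrac12|P-Q|$, which is \eqref{eq: TV3} rewritten for $U$, I would obtain $\mathbb{E}[L(U)]=1+c\,|P-Q|=1+\frac{|P-Q|}{2}\cdot\frac{\beta_1^{1-\alpha}-1}{1-\beta_1}$. Hence $\mathbb{E}[U^\alpha]\le 1+\frac{|P-Q|}{2}\frac{\beta_1^{1-\alpha}-1}{1-\beta_1}$ when $\alpha>1$, and $\mathbb{E}[U^\alpha]\ge 1+\frac{|P-Q|}{2}\frac{\beta_1^{1-\alpha}-1}{1-\beta_1}$ when $\alpha\in(0,1)$. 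Applying $\frac1{\alpha-1}\log(\cdot)$ finishes the argument: the factor $\frac1{\alpha-1}$ is positive for $\alpha>1$ and negative for $\alpha\in(0,1)$, so in both regimes the direction of the moment inequality is reversed in exactly the way needed to produce the single bound \eqref{eq: ub-Rd}.

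I expect the delicate points to be bookkeeping rather than conceptual: verifying that the slope computation collapses to $2c=\frac{\beta_1^{1-\alpha}-1}{1-\beta_1}$, and confirming that the convexity/concavity dichotomy of $t^\alpha$ aligns with the sign change of $\frac1{\alpha-1}$ so that the two opposite orderings of $\mathbb{E}[U^\alpha]$ collapse to the same final inequality. I would also remark that $\beta_2$ never enters, consistent with its absence from \eqref{eq: ub-Rd}, and that the proof uses only $\mathbb{E}[U]=1$ and the almost-sure bound $U\le\beta_1^{-1}$.
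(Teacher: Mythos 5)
Your proof is correct, and it takes a genuinely different route from the paper. The paper proves \eqref{eq: ub-Rd} by starting from the relative-information-spectrum representation of Theorem~\ref{thm:RDRIS}, discarding the contribution of $[\beta_2,1]$, rewriting the result as $\tfrac1{\alpha-1}\log\bigl(1+(\alpha-1)\delta\,\mathbb{E}[W^\alpha]\bigr)$ for an auxiliary random variable $W$ with density $p_1$ on $[1,\beta_1^{-1}]$, and then invoking the single-crossing comparison lemma of Appendix~\ref{appendix: two pdfs} against the density $p_2(\beta)=\tfrac1{(1-\beta_1)\beta^2}$. You instead dominate $t\mapsto t^\alpha$ pointwise on $[0,\beta_1^{-1}]$ by the piecewise-linear function $L(t)=t+2c\,(t-1)^+$ built from the two chords through $(0,0)$, $(1,1)$ and $(\beta_1^{-1},\beta_1^{-\alpha})$, and integrate against $Q$ using only $\mathbb{E}[U]=1$, $\mathbb{E}[(U-1)^+]=\tfrac12|P-Q|$ (which is \eqref{eq: TV3}), and $U\le\beta_1^{-1}$ a.s.; the slope bookkeeping $2c=m_+-1=\frac{\beta_1^{1-\alpha}-1}{1-\beta_1}$ checks out, and the convex/concave dichotomy does align with the sign of $\tfrac1{\alpha-1}$ exactly as you claim, so both regimes collapse to the single bound. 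Your argument is shorter and entirely self-contained, bypassing both Theorem~\ref{thm:RDRIS} and Lemma~\ref{lemma: 2 pdfs}, and it makes transparent that the extremal configuration concentrates $\frac{\text{d}P}{\text{d}Q}$ on the chord endpoints; what the paper's route buys is the reusable intermediate form \eqref{rd4} in terms of $\mathbb{E}[W^\alpha]$, which is exploited again in the subsequent remarks to derive the alternative bounds \eqref{eq: ref1 ub RD} and \eqref{eq: ref2 ub RD}. The only point worth adding for completeness in the case $\alpha\in(0,1)$, where $c<0$, is that the argument of the logarithm stays positive: since $\tfrac12|P-Q|\le 1$ one has $1+c|P-Q|\ge 1+2c=\frac{\beta_1^{1-\alpha}-\beta_1}{1-\beta_1}>0$, so the right side of \eqref{eq: ub-Rd} is well defined.
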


\begin{proof}
Regardless of whether $\alpha <1$ or $\alpha >1$, we can only get an upper bound
if, in view of \eqref{eq: zeta}, in the integral in \eqref{dalpha--ub} we drop
the interval $[\beta_2, 1]$:
\begin{align}
D_{\alpha}(P\|Q)
& \leq \frac1{\alpha-1} \; \log \left( \beta_1^{1 - \alpha} + (1 - \alpha) \int_1^{\beta_1^{-1}}
\beta^{\alpha-2} \; \mathds{F}_{P\|Q}(\log \beta) \; \text{d}\beta \right) \nonumber \\
& = \frac1{\alpha-1} \; \log \left( 1 - (1 - \alpha) \int_1^{\beta_1^{-1}}
\beta^{\alpha-2} \; \bigl(1 - \mathds{F}_{P\|Q}(\log \beta) \bigr) \; \text{d}\beta \right) \nonumber \\
\label{rd4}
& = \frac1{\alpha-1} \; \log \Bigl(1 + (\alpha-1) \delta \, \mathbb{E}[W^\alpha] \Bigr)
\end{align}
where  \eqref{rd4} holds with $\delta = \tfrac12 |P-Q|$ and  $W \sim p_1$ where $p_1$ is the probability
density function supported on $[1, \beta_1^{-1}]$:
\begin{align} \label{eq: pdf p1}
p_1(\beta) =
\frac1{\delta \beta^2} \left(1 - \mathds{F}_{P\|Q}(\log \beta)\right).
\end{align}
Note that $p_1$ is indeed a probability density function due to \eqref{eq: TV100}.
In order to proceed, we derive an upper bound on $\mathbb{E}[W^\alpha]$
expressed in terms of $|P-Q|$ by invoking
Lemma~\ref{lemma: 2 pdfs} in Appendix~\ref{appendix: two pdfs}. To that end, denote
the monotonically increasing and non-negative  function  $g(x) = x^\alpha$ for
$x \geq 0$, and let $p_2$ be the probability density function supported
on $[1, \beta_1^{-1}]$:
\begin{align} \label{eq: pdf p2}
p_2(\beta) = \frac1{1-\beta_1} \, \frac1{\beta^2}.
\end{align}
Note that, on their support,
$\beta^2 p_1( \beta) $ is monotonically decreasing while $\beta^2 p_2( \beta) $ is constant.
Therefore, we can apply Lemma~\ref{lemma: 2 pdfs} to $W \sim p_1$ and $V \sim p_2$ to obtain
\begin{align}
\label{expectations1}
\mathbb{E}\bigl[W^\alpha\bigr] & \leq \mathbb{E}\bigl[V^\alpha\bigr]
 = \frac{\beta_1^{1-\alpha}-1}{(1-\beta_1)(\alpha-1)}.
\end{align}
which
gives the desired result upon substituting in \eqref{rd4}.
\end{proof}

\begin{corollary} \label{cor: Hellinger-TV}
If $\beta_1 \in (0,1)$ and $\alpha \in (0,1) \cup (1, \infty)$, then
\begin{align} \label{eq: Hellinger-TV}
\mathscr{H}_{\alpha}(P\|Q) \leq \frac{\beta_1^{1-\alpha}-1}{2(\alpha-1)(1-\beta_1)} \cdot |P-Q|.
\end{align}
\end{corollary}

\begin{proof}
Combining \eqref{renyimeetshellinger} and
\eqref{eq: ub-Rd} yields  \eqref{eq: Hellinger-TV}.
\end{proof}

Particularizing \eqref{eq: Hellinger-TV} to $\alpha=2$ yields
\begin{align} \label{eq2: chi square - TV}
\chi^2(P\|Q) \leq \tfrac12 \, \beta_1^{-1} \, |P-Q|
\end{align}
which improves the bound in \eqref{eq1: chi square - TV} if either
$\beta_1 \leq \tfrac12$ or $\beta_2=0$.

The combination of \eqref{eq: Pinsker}, \eqref{eq: BretagnolleH79}
and \eqref{eq2: chi square - TV} yields the following bound:
\begin{corollary}
If  $\beta_1 > 0$, then
\begin{align} \label{eq: chi square - RE}
\chi^2(P \| Q) \leq \frac1{\beta_1} \sqrt{ \min \left\{\frac{D(P\|Q)}{2\log e},
\, 1-\exp\bigl(-D(P\|Q)\bigr)\right\} }.
\end{align}
\end{corollary}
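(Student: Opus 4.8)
The plan is to chain the bound \eqref{eq2: chi square - TV} with the two available upper bounds on $|P-Q|$ in terms of $D(P\|Q)$, and then to consolidate the two resulting estimates via monotonicity of the square root. Since \eqref{eq2: chi square - TV} already supplies $\chi^2(P\|Q) \leq \tfrac12 \beta_1^{-1} |P-Q|$ whenever $\beta_1 > 0$, the entire task reduces to bounding the total variation distance by a function of the relative entropy in two distinct ways and keeping the sharper outcome.

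First I would invoke Pinsker's inequality \eqref{eq: Pinsker}, $\tfrac12 |P-Q|^2 \log e \leq D(P\|Q)$, which rearranges to $|P-Q| \leq \sqrt{2 D(P\|Q)/\log e}$. Substituting this into \eqref{eq2: chi square - TV} and simplifying the constant $\tfrac12 \sqrt{2} = 1/\sqrt{2}$ yields $\chi^2(P\|Q) \leq \beta_1^{-1} \sqrt{D(P\|Q)/(2 \log e)}$, which is exactly $\beta_1^{-1}$ times the first candidate appearing inside the minimum in \eqref{eq: chi square - RE}.

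Next I would invoke the Bretagnolle--Huber inequality \eqref{eq: BretagnolleH79}, $\tfrac14 |P-Q|^2 \leq 1-\exp\bigl(-D(P\|Q)\bigr)$, which gives $|P-Q| \leq 2\sqrt{1-\exp(-D(P\|Q))}$; substituting this into \eqref{eq2: chi square - TV} cancels the factor $\tfrac12$ and produces $\chi^2(P\|Q) \leq \beta_1^{-1}\sqrt{1-\exp(-D(P\|Q))}$, the second candidate. Since both estimates hold simultaneously, the sharper of the two is their minimum, and because $x \mapsto \sqrt{x}$ is increasing we may pull the root outside, $\min\{\sqrt{a}, \sqrt{b}\} = \sqrt{\min\{a,b\}}$, which consolidates the two bounds into the single expression \eqref{eq: chi square - RE}. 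I expect no genuine obstacle here: the argument is a direct concatenation of three previously established inequalities, and the only step that is not purely mechanical is the final observation that the square root commutes with the minimum.
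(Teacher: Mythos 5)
Your proposal is correct and matches the paper's own derivation exactly: the paper obtains \eqref{eq: chi square - RE} precisely by combining \eqref{eq2: chi square - TV} with Pinsker's inequality \eqref{eq: Pinsker} and the Bretagnolle--Huber inequality \eqref{eq: BretagnolleH79}, and your constant bookkeeping ($\tfrac12\sqrt{2D/\log e}=\sqrt{D/(2\log e)}$ and the cancellation of the factor $\tfrac12$ against the $2$ from Bretagnolle--Huber) is accurate.
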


\begin{example}
Let $P$, $Q$ be defined on $\set{A}=\{0,1\}$ with $P(0)=Q(1)=\tfrac1{100}$,
which implies that $\beta_1 = \tfrac1{99}$. Then $\chi^2(P\|Q)=97.01$,
and the bound in \eqref{eq: chi square - RE} is equal to~98.45 in contrast
to the upper bound in \eqref{eq: RE and chi-square} whose value is 121.17.
\end{example}

\begin{remark}  \label{remark: tight ub Dinf}
By letting $\alpha \to \infty$ in \eqref{eq: ub-Rd}, we obtain
$D_{\infty}(P\|Q) \leq \log \frac1{\beta_1}$,
which shows that the bound in \eqref{eq: ub-Rd} is asymptotically tight (cf.
\eqref{def:dinf}).
\end{remark}

\begin{remark} \label{remark: tight ub D1}
By letting $\alpha \to 1$ in \eqref{eq: ub-Rd}, we get \eqref{eq: SV-ITA14}.
Therefore, Theorem~\ref{thm: ub-Rd} generalizes \cite[Theorem~7]{Verdu_ITA14}.
\end{remark}

\begin{remark} \label{remark: tight ub D0}
By letting $\alpha \to 0$, it follows from \eqref{eq: ub-Rd} that
\begin{align} \label{eq: ubD0}
D_0(P\|Q) \leq \log \left(\frac1{1- \tfrac12 |P-Q|}  \right),
\end{align}
a bound which, in view of \eqref{eq: d0}, is achieved
with equality in the case of a finite alphabet with
\begin{align} \label{eq: P}
P(a) = \left\{ \begin{array}{ll}
\frac{Q(a)}{1-\delta},
&  a \in  \set{F} \\
0, &  a \in \set{F}^c
\end{array}
\right.
\end{align}
with the event $\set{F}$ selected to satisfy
$Q(\set{F}) = 1-\delta$.
\end{remark}

\begin{remark}
Another upper bound on the R\'{e}nyi divergence can be obtained by
the simpler bound
\begin{align} \label{eq: ub1 W}
\mathbb{E}[W^\alpha] \leq \beta_1^{-\alpha},
\end{align}
which holds because $W \in [1, \beta_1^{-1}]$.
Combining \eqref{rd4} and \eqref{eq: ub1 W} yields
\begin{align}  \label{eq: ref1 ub RD}
D_\alpha(P\|Q) \leq \frac1{\alpha-1} \; \log\bigl(1+(\alpha-1)\delta \beta_1^{-\alpha}\bigr).
\end{align}
Note that, in the limit $\alpha \to 0$, the bounds in \eqref{eq: ub-Rd} and \eqref{eq: ref1 ub RD}
coincide and are equal to the tight bound $-\log(1-\delta)$.
\end{remark}
\begin{remark}
Alternatively, we have the bound
\begin{align}
D_\alpha(P\|Q)
\leq \frac1{\alpha-1} \, \log\left((1-\delta)^{1-\alpha} + \delta(\alpha-1)
\int_{\frac1{1-\delta}}^{\frac1{\beta_1}} \frac{\beta^{\alpha-1}}{\beta-1} \, \text{d}\beta \right)
\label{eq: ref2 ub RD}
\end{align}
obtained from \eqref{rd4} and
\begin{align}
\mathbb{E}[W^\alpha]
& \label{eq: ub2 Wb} \leq \frac{(1-\delta)^{1-\alpha}-1}{\delta (\alpha-1)} +
\int_{\frac1{1-\delta}}^{\frac1{\beta_1}} \frac{\beta^{\alpha-1}}{\beta-1} \, \text{d}\beta.
\end{align}
which holds since, in view of \eqref{eq: pdf p1} and \eqref{eq: 2lbtv},
\begin{align}  \label{eq: ubpdf p1}
p_1(\beta) \leq \left\{ \begin{array}{ll}
\frac1{\delta \beta^2}, &  \beta \in \bigl[0, \, \frac1{1-\delta} \bigr] \\[0.2cm]
\frac1{\beta (\beta-1)},
& \beta \in \bigl[\frac1{1-\delta}, \, \beta_1^{-1} \bigr] \\[0.2cm]
0 & \mbox{otherwise.}
\end{array}
\right.
\end{align}

The upper bounds in \eqref{eq: ub-Rd} and \eqref{eq: ref2 ub RD} asymptotically coincide in the limit
where $\alpha \to \infty$, giving the common limit of $\log \left(\frac1{\beta_1}\right)$ which
is a tight upper bound (cf. Remark~\ref{remark: tight ub Dinf}).
\end{remark}

\subsection{Bounds as a Function of the Relative Entropy}
\label{subsec: RD-RE}
In this section, we provide upper and lower bounds on the R\'enyi
divergence $D_{\alpha}(P \| Q)$, for an arbitrary order $\alpha \in (0, 1)
\cup (1, \infty)$, expressed in terms of the relative entropy
$D(P \| Q)$ and $\beta_1, \beta_2$.

\begin{theorem} \label{theorem: RD-RE}
Let $(\beta_1, \beta_2) \in [0,1)^2$,
$\alpha \in (0, 1) \cup (1, \infty)$, and
$u_{\alpha} \colon [0, \infty] \to [0, \infty]$ be
\begin{align} \label{eq: u RD/RE}
u_{\alpha} = \frac{\alpha-1}{\kappa_{\alpha} (t)}
\end{align}
with $\kappa_\alpha$ defined in \eqref{eq: kappa RE/HD}.
\begin{enumerate}[a)]
\item\label{RDRE:item:a}
If $\alpha \in (0,1)$, then
\begin{align}
& \hspace*{-0.3cm} \frac{1}{\alpha-1} \, \log\Bigl(1 + u_{\alpha}(\beta_1^{-1}) \, D(P\|Q)\Bigr) \nonumber \\
\label{eq1.2: RD-RE}
& \hspace*{-0.3cm} \leq D_\alpha(P \| Q) \\
\label{eq1.3: RD-RE}
& \hspace*{-0.3cm} \leq \min \left\{D(P \| Q), \; \frac{1}{\alpha-1} \,
\log \Bigl(1 + u_{\alpha}(\beta_2)
\, D(P\|Q)\Bigr)^+ \right\}.
\end{align}
\item \label{RDRE:item:b}
If $\alpha \in (1, \infty)$, then
\begin{align}
& \max \left\{ D(P\|Q), \; \frac{1}{\alpha-1} \,
\log\Bigl(1 + u_{\alpha}(\beta_2) \, D(P\|Q)\Bigr) \right\} \nonumber \\
\label{eq2.2: RD-RE}
& \leq D_\alpha(P \| Q) \\
\label{eq2.3: RD-RE}
& \leq \min \left\{ \log \frac1{\beta_1}, \; \frac{1}{\alpha-1}
\, \log\Bigl(1 + u_{\alpha}(\beta_1^{-1}) \, D(P\|Q)\Bigr) \right\}.
\end{align}
\item \label{RDRE:item:c}
Furthermore, if $\alpha \in (0,1) \cup (1, \infty)$, then
\begin{align}  \label{eq3: RD-RE}
D_{\alpha}(P \| Q) \leq \frac1{\alpha-1} \, \log\left(1+
\frac{\overbar{\delta} \, (\beta_1^{1-\alpha}-1)}{1-\beta_1} \right)
\end{align}
where
\begin{align} \label{eq: delta bar}
\overbar{\delta}^2 = \min \left\{\frac{D(P\|Q)}{2\log e},
\, 1-\exp\bigl(-D(P\|Q)\bigr)\right\} .
\end{align}
\end{enumerate}
\end{theorem}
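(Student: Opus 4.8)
The plan is to derive parts \ref{RDRE:item:a}) and \ref{RDRE:item:b}) by transporting the Hellinger–divergence estimates of Theorem~\ref{thm: improved HausslerO} through the one-to-one correspondence \eqref{renyimeetshellinger}, and to obtain part \ref{RDRE:item:c}) from the total-variation bound of Theorem~\ref{thm: ub-Rd}. First I would write \eqref{renyimeetshellinger} as $D_\alpha(P\|Q)=\frac{1}{\alpha-1}\log\bigl(1+(\alpha-1)\mathscr{H}_\alpha(P\|Q)\bigr)$ and record the key monotonicity fact: the map $x\mapsto\frac{1}{\alpha-1}\log\bigl(1+(\alpha-1)x\bigr)$ has derivative $\frac{1}{1+(\alpha-1)x}>0$ (the argument is always positive, being $\mathbb{E}\bigl[\exp(\alpha\,\imath_{P\|Q}(Y))\bigr]$), so it is increasing in $x$ irrespective of the sign of $\alpha-1$. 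Hence any two-sided bound on $\mathscr{H}_\alpha(P\|Q)$ converts into a two-sided bound on $D_\alpha(P\|Q)$. Rearranging \eqref{eq: s1HausslerO} and \eqref{eq: s2HausslerO} into $\frac{D(P\|Q)}{\kappa_\alpha(\cdot)}\leq\mathscr{H}_\alpha(P\|Q)\leq\frac{D(P\|Q)}{\kappa_\alpha(\cdot)}$ with the appropriate endpoints, and recalling $u_\alpha=\frac{\alpha-1}{\kappa_\alpha}$ from \eqref{eq: u RD/RE}, substitution yields directly the Hellinger-derived entries: for $\alpha\in(0,1)$ the lower bound $\frac{1}{\alpha-1}\log\bigl(1+u_\alpha(\beta_1^{-1})D(P\|Q)\bigr)$ and the companion $\frac{1}{\alpha-1}\log\bigl(1+u_\alpha(\beta_2)D(P\|Q)\bigr)$, and for $\alpha\in(1,\infty)$ the analogous pair with the roles of $\beta_1^{-1}$ and $\beta_2$ interchanged.

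The remaining entries in \eqref{eq1.3: RD-RE}, \eqref{eq2.2: RD-RE} and \eqref{eq2.3: RD-RE} come from two facts external to the Hellinger estimate. Since $D_\alpha(P\|Q)$ is monotonically increasing in $\alpha$ (as recalled in the introduction) and $D_1(P\|Q)=D(P\|Q)$ by \eqref{eq: d1}, we get $D_\alpha\leq D$ for $\alpha<1$ and $D_\alpha\geq D$ for $\alpha>1$; and since $D_\alpha(P\|Q)\leq D_\infty(P\|Q)=\log\tfrac1{\beta_1}$ by \eqref{def:dinf} and \eqref{eq: beta1}, the value $\log\tfrac1{\beta_1}$ caps the upper bound in \eqref{eq2.3: RD-RE}. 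Taking the minimum or maximum of each competing pair then produces \eqref{eq1.2: RD-RE}--\eqref{eq2.3: RD-RE}. The positive-part operation in \eqref{eq1.3: RD-RE} merely keeps the Hellinger-derived term well-defined when $D(P\|Q)$ is large enough that $1+u_\alpha(\beta_2)D(P\|Q)$ would turn non-positive; in that regime the term is vacuous and the minimum safely defers to $D_\alpha(P\|Q)\leq D(P\|Q)$.

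For part \ref{RDRE:item:c}) I would begin from Theorem~\ref{thm: ub-Rd}, namely $D_\alpha(P\|Q)\leq\frac{1}{\alpha-1}\log\bigl(1+\tfrac{|P-Q|}{2}\,\tfrac{\beta_1^{1-\alpha}-1}{1-\beta_1}\bigr)$, and observe that its right side is monotonically increasing in $|P-Q|$ for both $\alpha\in(0,1)$ and $\alpha\in(1,\infty)$: the coefficient $\tfrac{\beta_1^{1-\alpha}-1}{1-\beta_1}$ is negative precisely when $\alpha<1$ and positive when $\alpha>1$, which in each case matches the sign of $\alpha-1$ in the outer factor, so the composite increases with $|P-Q|$. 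It therefore suffices to bound $\tfrac12|P-Q|$ from above by $\overbar{\delta}$: Pinsker's inequality \eqref{eq: Pinsker} gives $\tfrac12|P-Q|\leq\sqrt{D(P\|Q)/(2\log e)}$, while the Bretagnolle–Huber inequality \eqref{eq: BretagnolleH79} gives $\tfrac12|P-Q|\leq\sqrt{1-\exp(-D(P\|Q))}$; combining the two and recalling \eqref{eq: delta bar} yields $\tfrac12|P-Q|\leq\overbar{\delta}$, and substituting this into the monotone bound delivers \eqref{eq3: RD-RE}.

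The step I expect to be the main obstacle is the bookkeeping of signs and monotonicity directions, since $\alpha-1$, the factor $u_\alpha$, and the coefficient $\tfrac{\beta_1^{1-\alpha}-1}{1-\beta_1}$ all change sign across $\alpha=1$, so every inequality must be checked separately on $(0,1)$ and on $(1,\infty)$. A secondary technicality is the boundary value $\beta_1=0$, which is permitted since $(\beta_1,\beta_2)\in[0,1)^2$: there one invokes the continuous extension of $\kappa_\alpha$ to $[0,\infty]$ used in \eqref{eq: kappa RE/HD}, under which $u_\alpha(\beta_1^{-1})$ and $\log\tfrac1{\beta_1}$ take their limiting values, and the bounds remain valid, degenerating to the trivial statements $D_\alpha\geq0$ (for $\alpha<1$) or $D_\alpha\leq\infty$ (for $\alpha>1$).
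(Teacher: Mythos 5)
Your proposal is correct and follows essentially the same route as the paper: parts a) and b) are obtained by transporting the two-sided Hellinger bounds of Theorem~\ref{thm: improved HausslerO} through \eqref{renyimeetshellinger} and combining with the monotonicity of $D_\alpha$ in $\alpha$ and with $D_\infty(P\|Q)=\log\tfrac1{\beta_1}$, while part c) follows from Theorem~\ref{thm: ub-Rd} upon replacing $\tfrac12|P-Q|$ by $\overbar{\delta}$ via \eqref{eq: Pinsker} and \eqref{eq: BretagnolleH79}. Your sign bookkeeping across $\alpha=1$ and the treatment of the boundary case $\beta_1=0$ are correct elaborations of details the paper leaves implicit.
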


\begin{proof}
Parts~\ref{RDRE:item:a}) and~\ref{RDRE:item:b}) follow from
Theorem~\ref{thm: improved HausslerO} in view of \eqref{renyimeetshellinger},
from the fact that $D_{\alpha}(P \| Q)$ is monotonically increasing in $\alpha > 0$, and
from $D_{\infty}(P \| Q) = \log \frac1{\beta_1}$.
\par
Part~\ref{RDRE:item:c}) follows from Theorem~\ref{thm: ub-Rd} replacing
$\delta \triangleq \tfrac12 \, |P-Q|$  by its upper bound
$\overbar{\delta}$ obtained from \eqref{eq: Pinsker} and \eqref{eq: BretagnolleH79}.
\end{proof}

The next three remarks address the tightness of the bounds
\eqref{eq1.2: RD-RE}--\eqref{eq1.3: RD-RE} and
\eqref{eq2.2: RD-RE}--\eqref{eq2.3: RD-RE}.

\begin{remark} \label{remark1: tightness}
The constants $u_{\alpha}(\beta_1^{-1})$ and $u_{\alpha}(\beta_2)$ in
\eqref{eq1.2: RD-RE}--\eqref{eq1.3: RD-RE} and \eqref{eq2.2: RD-RE}--\eqref{eq2.3: RD-RE}
are the best possible among all probability measures $P,Q$ with given
$(\beta_1, \beta_2) \in [0,1)^2$. This follows from \eqref{renyimeetshellinger},
and in view of the tightness of the constants in Theorem~\ref{thm: improved HausslerO}
(Remark~\ref{remark: tight constants}).
\end{remark}

\begin{remark} \label{remark2: tightness}
Let $P$, $Q = Q_\varepsilon$ be defined on a binary alphabet
with $P(0) = \tfrac12$ and $Q(0) = \tfrac12 - \varepsilon$.
Then,  it is easy to verify
that the ratio of the upper to lower bounds in Parts~\ref{RDRE:item:a}) and \ref{RDRE:item:b})
converges to 1 as $\varepsilon \to 0$.
\end{remark}

\begin{remark} \label{remark3: tightness}
Let $P$ and $Q$ be defined on a binary alphabet
with $P(0) = \tfrac12$, $Q(0) = \varepsilon \in (0,1)$.
Then, in the limit $\varepsilon \to 0$, the ratio of $D_\alpha(P \| Q)$
to the left side of \eqref{eq1.2: RD-RE} is equal to
$\frac{\alpha}{\log_2\left(\frac{2}{2-\alpha}\right)}\in (1, \log_e(4))$ for $\alpha \in (0,1)$.
Moreover, if $\varepsilon \to 0$, the ratio of $D_\alpha(P \| Q)$ and the right side of
\eqref{eq2.3: RD-RE} tends to~1 for $\alpha \in (1, \infty)$.

\par To prove the first part of Remark~\ref{remark3: tightness}, in view of
\eqref{eq1.2: RD-RE}, one needs to show that for $\alpha \in (0,1)$
\begin{align} \label{eq: bounded limit}
& \lim_{\varepsilon \downarrow 0} \frac{(\alpha-1) \,
d_{\alpha}(\tfrac12 \| \varepsilon)}{\log \Bigl(
1 + u_\alpha\bigl(\frac{1}{2\varepsilon}\bigr) \,
d(\tfrac12 \| \varepsilon)\Bigr)} =
\frac{\alpha}{\log_2 \left(\frac{2}{2-\alpha}\right)}
\end{align}
where $d(\cdot \| \cdot) \triangleq d_1(\cdot \| \cdot)$ and
$d_{\alpha}(\cdot \| \cdot)$ are given in \eqref{eq: binary RD}.
This can be verified by using \eqref{eq: binary RD} and \eqref{eq: u RD/RE}
to show that if $\alpha \in (0,1)$, then in the limit $\varepsilon \to 0$
\begin{align}
\label{a1}
& d\bigl(\tfrac12 \| \varepsilon\bigr)
= \tfrac12 \, \bigl(1+o(1)\bigr) \, \log\left(\tfrac1{2\varepsilon}\right), \\
\label{a2}
& (\alpha-1) \, d_{\alpha}\bigl(\tfrac12 \| \varepsilon\bigr)
= -\alpha \bigl(1+o(1)\bigr) \, \log 2, \\
\label{a3}
& u_{\alpha}\bigl(\tfrac1{2\varepsilon}\bigr)
= -\frac{\alpha \bigl(1+o(1)\bigr)}{\log\bigl(\frac1{2\varepsilon}\bigr)}.
\end{align}
Assembling \eqref{a1}--\eqref{a3} yields \eqref{eq: bounded limit}
whose right side is monotonically decreasing in $\alpha \in (0,1)$, and bounded
between~1 (by letting $\alpha \to 1$) and $\log_e(4)$ (by letting $\alpha \to 0$).

\par
To prove the second part of Remark~\ref{remark3: tightness}, in view of
\eqref{eq2.3: RD-RE}, one needs to show that for $\alpha \in (1,\infty)$
\begin{align} \label{eq: limit is 1}
\lim_{\varepsilon \downarrow 0} \frac{(\alpha-1)
\, d_{\alpha}(\tfrac12 \| \varepsilon)}{\log
\Bigl(1 + u_\alpha\bigl(\frac{1}{2\varepsilon}\bigr)
\, d(\tfrac12 \| \varepsilon)\Bigr)} = 1.
\end{align}
From \eqref{eq: binary RD} and \eqref{eq: u RD/RE}, it follows that in the limit
where $\varepsilon$ tends to zero
\begin{align}
\label{a4}
& (\alpha-1) \, d_{\alpha}\bigl(\tfrac12 \| \varepsilon\bigr)
= \log\left(2^{-\alpha} \, \varepsilon^{1-\alpha} \right)
\, \bigl(1+o(1)\bigr), \\[0.1cm]
\label{a5}
& u_{\alpha}\bigl(\tfrac1{2\varepsilon}\bigr)
= \frac{1+o(1)}{(2\varepsilon)^{\alpha-1}
\, \log\bigl(\frac1{2\varepsilon}\bigr)}.
\end{align}
Assembling \eqref{a1}, \eqref{a4} and \eqref{a5} yields
\eqref{eq: limit is 1} for $\alpha \in (1, \infty)$.
\end{remark}

\begin{example}
\begin{figure}[h]
\hspace*{0.2cm} \centerline{\includegraphics[width=10cm]{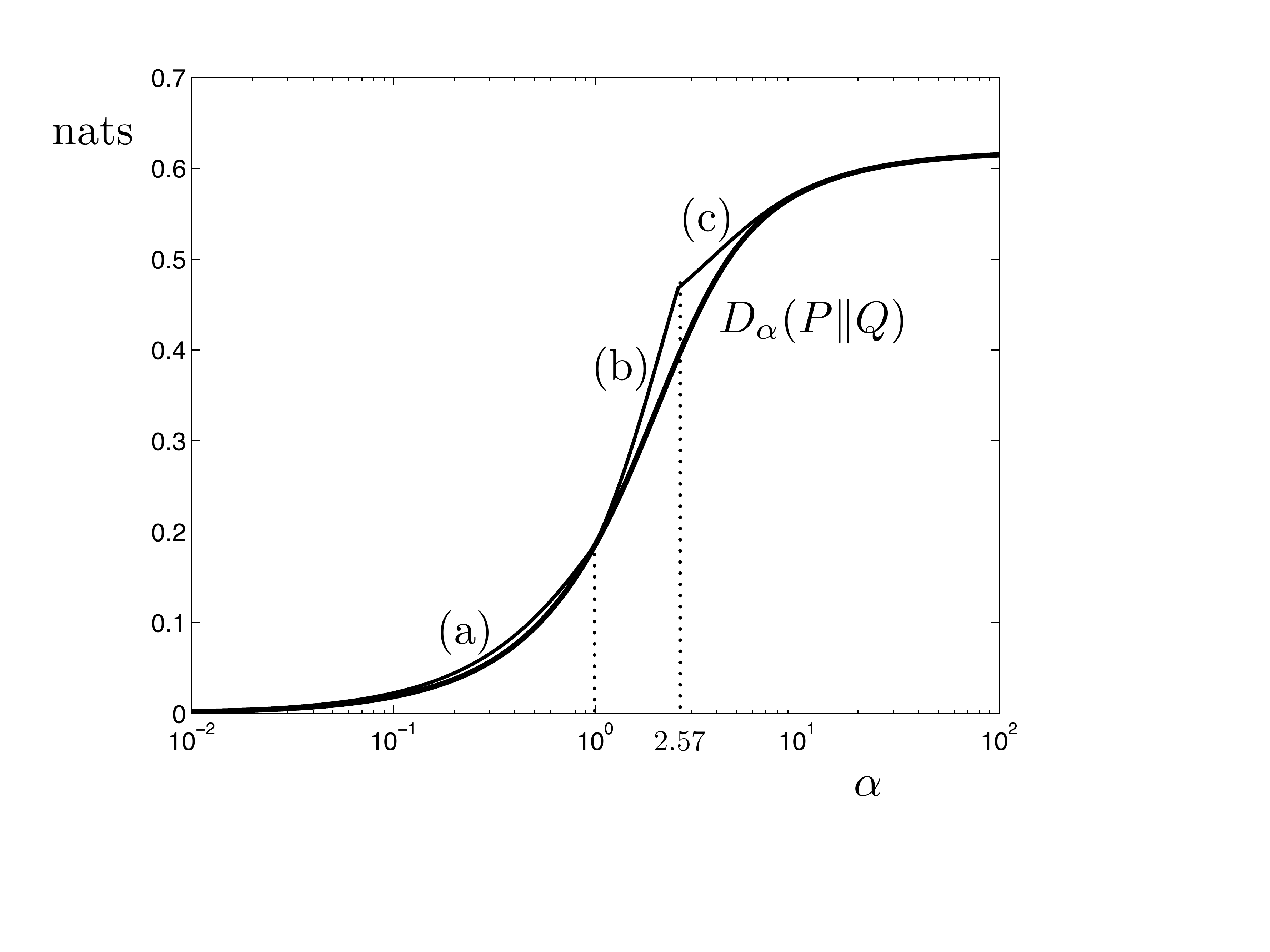}}
\vspace*{-1.5cm}
\caption{\label{figure:RE_RD}
The R\'{e}nyi divergence $D_{\alpha}(P\|Q)$ for $\set{A} = \{0,1\}$
with $P(0)=Q(1)=0.65$, compared to the tightest upper bound in
Theorem~\ref{theorem: RD-RE}: (a) \eqref{eq1.3: RD-RE} for $\alpha \in (0,1)$;
(b): \eqref{eq2.3: RD-RE} for $\alpha \in [1, 2.57]$; (c): \eqref{eq3: RD-RE}
for $\alpha > 2.57$.}
\end{figure}
Figure~\ref{figure:RE_RD} illustrates the upper bounds on
$D_{\alpha}(P \| Q)$ in Theorem~\ref{theorem: RD-RE} for
binary alphabets.
\end{example}

\begin{remark}
\cite[Proposition~11]{FongT_MMNs} shows an upper bound
on $D_\alpha(P\|Q)$ for $\alpha \in [1, \tfrac54]$, which is
expressed in terms of $D(P\|Q)$ and the finite cardinalities
of the alphabets over which $P$ and $Q$ are defined. Although
the bound in \cite[(9)]{FongT_MMNs} is not tight, it leads to
a strong converse for a certain class of discrete memoryless
networks.
\end{remark}

\section{Summary} \label{sec: summary}

Since many distance measures of interest fall under the
common paradigm of an $f$-divergence, it is not surprising that
bounds on the ratios of various $f$-divergences are useful
in many instances such as proving convergence of probability
measures according to various metrics, analysis of rates of
convergence and concentration of measure bounds
\cite{boucheron2013concentration,GibbsSu02,OrdentlichW_IT2005,
raginsky2012concentration,samson2000concentration,wongshen95},
hypothesis testing \cite{DeGroot62}, testing goodness of fit
\cite{HarremoesV_2011, ReadC88}, minimax risk in estimation and modeling
\cite{Guntuboyina11,HausslerO97,ReidW11,Vapnik98}, strong data processing
inequality constants and maximal correlation
\cite{ahlswede1976spreading,PW15,Raginsky16}, transportation-cost inequalities
\cite{boucheron2013concentration,Marton13b,raginsky2012concentration,raginsky2015concentration},
contiguity \cite{lecamyoung,LieseV_book87}, etc.

While the derivation of $f$-divergence inequalities has received considerable
attention in the literature, the proof techniques have been tailored to the
specific instances. In contrast, we have proposed several systematic approaches
to the derivation of $f$-divergence inequalities.
Introduced in Section~\ref{subsec: basic tool of functional domination},
functional domination emerges as a basic tool to obtain $f$-divergence inequalities.
Another basic tool that capitalizes on many cases of interest (including the finite
alphabet one) is introduced in Section~\ref{subsec:bounds among fD}, where
not only one of the distributions is absolutely continuous with respect to
the other but their relative information is almost surely bounded.
\par
Section~\ref{subsec: bounds-RE} illustrates the use of moment inequalities and the
log-convexity property, while the utility of Lipschitz constraints in deriving bounds
is highlighted in Section~\ref{subsec:Lipschitz}.
\par
In addition, new $f$-divergence inequalities (frequently with optimal constants) arise from:

\begin{itemize}
\item
integral representation of $f$-divergences, expressed in terms of the $E_\gamma$ divergence (Section~\ref{saens});
\item
extension of Pinsker's inequality to $E_\gamma$ divergence (Section~\ref{subsec: Pinsker for EG});
\item
a relation between the relative information and the relative entropy (Section~\ref{subsec: re-ris});
\item
exact expressions of R\'enyi divergence in terms of the relative information spectrum (Section~\ref{subsec: RD-RIS});
\item
the exact locus of the entropy and the variational distance from the equiprobable
probability mass function (Section~\ref{subsec: Distance From the Equiprobable Distribution}).
\end{itemize}

\appendices

\section{Completion of the Proof of Theorem~\ref{thm: bounds RE and dual}}
\label{appendix: properties of kappa}
\begin{lemma}  \label{lemma: properties of kappa}
The function $\kappa \colon (0, \infty) \to (0, \infty)$ which is the continuous extension
of the function in \eqref{eq: kappa RE and dual} with $\kappa(1)=1$ is strictly
monotonically increasing.
\end{lemma}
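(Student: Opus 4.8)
The plan is to show that $\kappa'(t)>0$ for every $t\in(0,1)\cup(1,\infty)$ and that the continuous extension to $t=1$ preserves strict monotonicity. First I would observe that $\kappa$ does not depend on the base of the logarithm: writing $\log(\cdot)=\frac{\ln(\cdot)}{\ln b}$ and $\log e=\frac1{\ln b}$, both the numerator and the denominator of \eqref{eq: kappa RE and dual} carry a common factor $\frac1{\ln b}$ which cancels. Hence it suffices to work in nats, with $\kappa(t)=r(t)/g(t)$ where $r(t)=t\ln t+1-t$ and $g(t)=t-1-\ln t$. Both $r$ and $g$ are strictly positive on $(0,1)\cup(1,\infty)$ and vanish at $t=1$ (for $g$ this follows since $g'(t)=1-\tfrac1t$, so $t=1$ is its unique minimum), so $g(t)^2>0$ away from $1$ and the sign of $\kappa'$ equals the sign of $N(t):=r'(t)\,g(t)-r(t)\,g'(t)$.

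Second, I would compute $N$ explicitly. Since $r'(t)=\ln t$ and $g'(t)=\frac{t-1}{t}$, a direct expansion produces a cancellation of the $(t-1)\ln t$ terms and the compact form
\begin{align}
N(t)=\frac{(t-1)^2}{t}-(\ln t)^2 .
\end{align}
The whole problem then reduces to the elementary inequality $\frac{(t-1)^2}{t}>(\ln t)^2$ for $t\neq1$.

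Third, to prove this inequality I would substitute $u=\sqrt t>0$, so that $\frac{(t-1)^2}{t}=\bigl(u-\tfrac1u\bigr)^2$ and $(\ln t)^2=(2\ln u)^2$; the claim becomes $\bigl|u-\tfrac1u\bigr|>2\,|\ln u|$ for $u\neq1$. This I would settle by studying $\phi(u)=u-\tfrac1u-2\ln u$, whose derivative $\phi'(u)=1+\tfrac1{u^2}-\tfrac2u=\frac{(u-1)^2}{u^2}$ is strictly positive for $u\neq1$; hence $\phi$ is strictly increasing with $\phi(1)=0$, so $\phi(u)>0$ for $u>1$ and $\phi(u)<0$ for $u<1$. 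In both cases this gives $\bigl|u-\tfrac1u\bigr|>2\,|\ln u|$, and therefore $N(t)>0$ and $\kappa'(t)>0$ on $(0,1)$ and on $(1,\infty)$.

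Finally, I would handle the endpoint $t=1$: a second-order Taylor expansion yields $r(t)=\tfrac12(t-1)^2+o\bigl((t-1)^2\bigr)$ and $g(t)=\tfrac12(t-1)^2+o\bigl((t-1)^2\bigr)$, so the removable singularity is filled by $\kappa(1)=1$, consistent with the stated value. Strict monotonicity on each of the two open intervals, together with continuity and the equality of the one-sided limits at $1$, yields that $\kappa$ is strictly increasing on all of $(0,\infty)$. I expect the only mildly delicate step to be the algebraic simplification leading to the compact expression for $N(t)$; everything downstream rests on the standard inequality $\frac{(t-1)^2}{t}>(\ln t)^2$, which the substitution $u=\sqrt t$ renders transparent.
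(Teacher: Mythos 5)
Your proof is correct, and it follows the same overall strategy as the paper's: reduce strict monotonicity to the positivity of $\kappa'$, observe that the sign of $\kappa'$ is governed by the single inequality $(t-1)^2/t>(\ln t)^2$ for $t\neq 1$, and check the removable singularity at $t=1$ separately. Your computation of $N(t)=r'(t)g(t)-r(t)g'(t)=\frac{(t-1)^2}{t}-(\ln t)^2$ matches the paper's expression $\kappa'(t)=\frac{(t-1)^2\log^2 e - t\log^2 t}{t\,g^2(t)}$ after clearing the factor $t$.

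Where you genuinely diverge is in the proof of the key inequality. The paper substitutes $t=\exp(x)$, reducing the claim to $s(x)=\exp(2x)-(2+x^2)\exp(x)+1>0$ for $x\neq0$, and establishes this by a two-level argument: $s(0)=0$ and $s'(x)=2\exp(x)\bigl[\exp(x)-\bigl(1+x+\tfrac{x^2}{2}\bigr)\bigr]$ has the right sign on each side of the origin, which itself requires knowing the sign of $e^x$ minus its second-order Taylor polynomial. Your substitution $u=\sqrt{t}$ instead turns the claim into $\bigl|u-\tfrac1u\bigr|>2|\ln u|$, settled in one step because $\phi(u)=u-\tfrac1u-2\ln u$ has the perfect-square derivative $\frac{(u-1)^2}{u^2}$. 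This is slightly more economical and makes the underlying fact (the logarithmic-mean inequality in disguise) transparent; the paper's route buys nothing extra here. Both are complete proofs, and your handling of the endpoint $t=1$ (second-order Taylor expansions of $r$ and $g$ giving the limit $1$, then gluing the two monotone pieces by continuity) is sound.
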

\begin{proof}
From \eqref{eq: kappa RE and dual} and \eqref{1g},
\begin{align}
\kappa'(t) = \frac{(t-1)^2 \log^2 e - t \log^2 t }{t \, g^2(t)},
\end{align}
if $ t \in (0, 1) \cup (1,\infty)$, while $\kappa'(1) = \tfrac13$.
To show
\begin{align}
(t-1)^2 \, \log^2 e -t \log^2 t  > 0, \quad \forall \, t \in (0,1) \cup (1, \infty).
\label{eq: kappa-step 1}
\end{align}
we substitute $t = \exp(x)$ to obtain that if $x \neq 0$, then
\begin{align}
s(x) = \exp(2x)-(2+x^2)\exp(x)+1 > 0, \;
\label{eq: kappa-step 2}
\end{align}
which holds since $s(0)=0$ and the derivative
\begin{align}
s'(x) = 2 \exp(x) \left[ \exp(x) - \left(1+x+\frac{x^2}{2}\right) \right]
\end{align}
is negative on $(-\infty, 0)$ and positive
on $(0, \infty)$.
\end{proof}

\section{Proof of the Monotonicity of $\kappa_{\alpha}$ in \eqref{eq: kappa RE/HD}}
\label{appendix: monotonicity of kappa_alpha}
To show that
 the function $\kappa_\alpha \colon [0, \infty] \to [0, \infty]$ in \eqref{eq: kappa RE/HD}
 is monotonically increasing on $[0, \infty]$ if
$\alpha \in (0,1)$ and monotonically decreasing on $[0, \infty]$ if $\alpha \in (1,\infty)$
it is sufficient to show that
\begin{align} \label{eq1: diff kappa}
\frac{\text{d}}{\text{d}t} \left(\frac{r(t)}{1-t^\alpha + \alpha (t-1)} \right) > 0
\end{align}
for $(\alpha,t)\in \set{F} = ((0,1)\cup(1 , \infty))^2 $.
From \eqref{eq: r}, straightforward calculus gives
\begin{align}
& \left[ 1 - t^\alpha + \alpha (t-1) \right]^2 \; \frac{\text{d}}{\text{d}t}
\left(\frac{r(t)}{1-t^\alpha + \alpha (t-1)} \right)  \\
& = (1-\alpha) (1-t^\alpha) \log t - \alpha (1-t) (1 - t^{\alpha-1}) \log e  \\
\label{eq2: diff kappa}
& \triangleq g_{\alpha}(t)
\end{align}
so the desired result will follow upon showing
\begin{align} \label{eq: g inequality}
& g_{\alpha}(t) > 0,   \quad (\alpha,t)\in \set{F}.
\end{align}
Note that $g_\alpha(1)=0$. For $(\alpha,t)\in \set{F}$,
it is easy to verify that
\begin{align} \label{eq: positive}
(1-\alpha)(t-1)(1-t^{\alpha-1}) > 0.
\end{align}
A division of \eqref{eq: g inequality} by the positive left side of \eqref{eq: positive}
gives the following equivalent inequality:
\begin{align} \label{eq: equiv. ineq.}
& \phi_\alpha(t) > \frac{\alpha}{\alpha-1}, \quad
\forall \, (\alpha,t)\in \set{F}
\end{align}
where, for $\alpha \in (0,1) \cup (1, \infty)$,
\begin{align} \label{eq: phi}
\phi_\alpha(t) \triangleq \left\{ \begin{array}{ll}
               \frac{(1-t^\alpha) \log_e t}{(t-1)(1-t^{\alpha-1})}
               & \mbox{if $t \in (0,1) \cup (1, \infty)$} \\
               \frac{\alpha}{\alpha-1} & \mbox{if $t=1$.}
               \end{array}
               \right.
\end{align}
We aim to prove \eqref{eq: equiv. ineq.}. Note that, for $\alpha \in (0,1) \cup (1, \infty)$,
\begin{align}
& \lim_{t \to 1} \phi_\alpha(t) = \frac{\alpha}{\alpha-1} = \phi_\alpha(1) \label{eq2: phi}
\end{align}
so, \eqref{eq: equiv. ineq.} is implied by proving that $\phi_\alpha \colon (0, \infty) \to \Reals$
is monotonically decreasing on $(0,1)$, and it is monotonically increasing on $(1, \infty)$. For
this purpose, we rely on the following lemmas.

\begin{lemma} \label{lemma: phi/psi}
For every $t>0$ and $\alpha \in (0,1) \cup (1, \infty)$
\begin{align} \label{eq3: phi}
\phi_\alpha(t) = \psi(t) - \left(\frac1{1-\alpha}\right) \, \psi(t^{1-\alpha})
\end{align}
where
\begin{align} \label{eq: psi}
\psi(t) \triangleq \left\{ \begin{array}{ll}
                   \frac{\log_e t}{t-1} & \mbox{if $t \in (0,1) \cup (1, \infty)$} \\
                   1 & \mbox{if $t=1$.}
                   \end{array}
                   \right.
\end{align}
\end{lemma}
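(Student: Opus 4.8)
The plan is to establish the identity \eqref{eq3: phi} by direct computation, splitting into the generic case $t \in (0,1) \cup (1, \infty)$ and the single point $t=1$. For $t \neq 1$ we automatically have $t^{1-\alpha} \neq 1$ (because $1-\alpha \neq 0$), so both occurrences of $\psi$ on the right side of \eqref{eq3: phi} fall under the first branch of the definition \eqref{eq: psi}. First I would substitute $\psi(t^{1-\alpha}) = \frac{(1-\alpha) \log_e t}{t^{1-\alpha}-1}$, which uses $\log_e(t^{1-\alpha}) = (1-\alpha) \log_e t$; the external prefactor $\frac1{1-\alpha}$ then cancels the $(1-\alpha)$, leaving
\[
\psi(t) - \frac1{1-\alpha} \, \psi(t^{1-\alpha})
= \log_e t \left( \frac1{t-1} - \frac1{t^{1-\alpha}-1} \right).
\]

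Next I would combine the two fractions over the common denominator $(t-1)(t^{1-\alpha}-1)$, producing the numerator $t^{1-\alpha}-t$. To reconcile this with $\phi_\alpha(t)$ in \eqref{eq: phi}, the key algebraic step is to clear the negative exponent via $t^{1-\alpha} = t \, t^{-\alpha}$: multiplying numerator and denominator by $t^{\alpha}$ sends $t^{1-\alpha}-t$ to $t(1-t^\alpha)$ and $t^{1-\alpha}-1$ to $t-t^\alpha$, so that
\[
\log_e t \cdot \frac{t^{1-\alpha}-t}{(t-1)(t^{1-\alpha}-1)}
= \frac{t(1-t^\alpha) \log_e t}{(t-1)(t-t^\alpha)}.
\]
Finally I would rewrite the definition of $\phi_\alpha$ into the same shape using $1 - t^{\alpha-1} = (t-t^\alpha)/t$, which shows the first branch of \eqref{eq: phi} equals exactly the displayed expression, settling the generic case.

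For the remaining point $t=1$ I would simply evaluate both sides directly: the left side is $\phi_\alpha(1) = \frac{\alpha}{\alpha-1}$ by \eqref{eq: phi}, while the right side gives $\psi(1) - \frac1{1-\alpha} \, \psi(1) = 1 - \frac1{1-\alpha} = \frac{\alpha}{\alpha-1}$, using $\psi(1)=1$ from \eqref{eq: psi}; the two agree. There is no genuine obstacle here, since the argument is routine algebra; the only points that demand care are the bookkeeping of the $(1-\alpha)$ cancellation and recognizing that the substitution $t^{1-\alpha}=t\,t^{-\alpha}$ is precisely what reconciles the two superficially different rational expressions. One may also observe that \eqref{eq3: phi} is the natural additive decomposition of $\phi_\alpha$ in terms of the single-variable function $\psi$, which is presumably why it is isolated as a lemma before being applied to the monotonicity analysis of $\phi_\alpha$.
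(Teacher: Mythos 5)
Your proposal is correct and follows essentially the same route as the paper: expand both occurrences of $\psi$ via $\log_e(t^{1-\alpha})=(1-\alpha)\log_e t$, combine over a common denominator to get $\frac{(t^{1-\alpha}-t)\log_e t}{(t-1)(t^{1-\alpha}-1)}$, and clear the negative exponent (the paper multiplies by $t^{\alpha-1}$ where you use $t^{\alpha}$, which is the same manipulation). The only cosmetic difference is at $t=1$, where you evaluate both sides directly while the paper appeals to continuity of $\phi_\alpha$ and $\psi$ there; both are valid.
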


\begin{proof}
For $\alpha, t \in (0,1) \cup (1, \infty)$
\begin{align}
\label{eq2: psi}
\psi(t) - \left(\frac1{1-\alpha}\right) \, \psi(t^{1-\alpha})
& = \frac{(t^{1-\alpha}-t) \, \log_e t}{(t-1)(t^{1-\alpha}-1)} \\
\label{eq3: psi}
& = \frac{(1-t^\alpha) \, \log_e t}{(t-1) (1-t^{\alpha-1})} \\
\label{eq3.5: psi}
& = \phi_{\alpha}(t)
\end{align}
where \eqref{eq2: psi} holds due to \eqref{eq: psi},
\eqref{eq3: psi} is justified by multiplying the numerator and denominator
of \eqref{eq2: psi} by $t^{\alpha-1}$, and \eqref{eq3.5: psi} is due
to \eqref{eq: phi}. Note that \eqref{eq3: phi} is also satisfied at
$t=1$ due to the continuity of $\phi_\alpha$ and $\psi$ at this point.
\end{proof}

\begin{lemma} \label{lemma: psi}
The following inequality holds for $z>0$:
\begin{align} \label{eq4: psi}
\psi'(z) + z \psi''(z) > 0,
\end{align}
\end{lemma}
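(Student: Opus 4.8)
The goal is to prove Lemma~\ref{lemma: psi}, namely that $\psi'(z) + z\,\psi''(z) > 0$ for all $z > 0$, where $\psi(z) = \frac{\log_e z}{z-1}$ (with $\psi(1)=1$). The plan is to recognize that the left-hand side is exactly $\frac{\mathrm{d}}{\mathrm{d}z}\bigl(z\,\psi'(z)\bigr)$, so it suffices to show that $z\,\psi'(z)$ is strictly increasing on $(0,\infty)$. First I would compute $\psi'(z)$ explicitly: differentiating $\psi(z)=\frac{\log_e z}{z-1}$ gives
\begin{align}
\psi'(z) = \frac{\frac{z-1}{z} - \log_e z}{(z-1)^2} = \frac{(z-1) - z\log_e z}{z\,(z-1)^2}.
\end{align}
Hence $z\,\psi'(z) = \frac{(z-1) - z\log_e z}{(z-1)^2}$, and the claim reduces to showing that this quantity is strictly monotonically increasing in $z$.

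To carry this out cleanly, I would substitute $z = e^x$ with $x \in \Reals$ (and $x=0$ corresponding to $z=1$), since the map $x\mapsto e^x$ is increasing, monotonicity in $z$ is equivalent to monotonicity in $x$. Writing $w(x) = z\,\psi'(z)$ under this substitution, a short calculation gives a ratio whose numerator and denominator are elementary functions of $x$; I expect the derivative $\frac{\mathrm{d}w}{\mathrm{d}x}$ to have a denominator that is a positive power of $(e^x-1)$ and a numerator $N(x)$ that must be shown positive for $x\neq 0$. The main work is therefore reduced to a single-variable sign analysis of $N(x)$, which I anticipate will satisfy $N(0)=0$ together with a sign pattern in $N'(x)$ (negative for $x<0$, positive for $x>0$, or the reverse), established by repeated differentiation until one reaches a manifestly-signed expression built from $e^x$ and polynomials in $x$.

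The key steps in order are: (i) reduce $\psi'(z)+z\psi''(z)$ to $\frac{\mathrm{d}}{\mathrm{d}z}\bigl(z\psi'(z)\bigr)$; (ii) compute $z\psi'(z) = \frac{(z-1)-z\log_e z}{(z-1)^2}$; (iii) substitute $z=e^x$ to remove the logarithm and obtain an expression whose positivity of derivative is equivalent to the positivity of an auxiliary function $N(x)$ with $N(0)=0$; and (iv) prove $N(x)>0$ for $x\neq 0$ by Taylor-type comparison of $e^x$ against its polynomial partial sums, exactly in the spirit of the argument already used in Appendix~\ref{appendix: properties of kappa} (where $s(x)=e^{2x}-(2+x^2)e^x+1$ was shown positive via the bound $e^x > 1+x+\tfrac{x^2}{2}$ on $(0,\infty)$). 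The main obstacle I foresee is step (iv): controlling the sign of $N(x)$ on both sides of the origin, where $N$ and several of its derivatives vanish at $x=0$, so one must differentiate enough times to expose a cleanly-signed remainder and then integrate the sign information back up while tracking the boundary values at $x=0$. Once the monotonicity of $z\psi'(z)$ is secured, the strict inequality $\frac{\mathrm{d}}{\mathrm{d}z}\bigl(z\psi'(z)\bigr)>0$ for all $z>0$ (including the value at $z=1$ obtained by continuity) is immediate, completing the proof.
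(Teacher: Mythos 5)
Your plan is, at its core, the same as the paper's proof: writing $\psi'(z)+z\psi''(z)=\frac{\mathrm{d}}{\mathrm{d}z}\bigl(z\psi'(z)\bigr)$ and differentiating $z\psi'(z)=\frac{(z-1)-z\log_e z}{(z-1)^2}$ produces exactly the quotient $\frac{m(z)}{(z-1)^3}$ with $m(z)=(z+1)\log_e z-2(z-1)$ that the paper analyzes (the paper writes the denominator as $z(z-1)^3$, which differs only by the harmless positive factor $z$), and your substitution $z=e^x$ turns the required sign analysis of $m$ into showing that $n(x)=(e^x+1)x-2(e^x-1)$ has the sign of $x$, which follows from $n(0)=n'(0)=0$ and $n''(x)=xe^x$ --- morally identical to the paper's observation that $m(1)=0$ and $m'(z)=\frac1z-1+\log_e z>0$ for $z\neq1$. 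Two caveats. First, the reformulation in your step (i) buys you nothing: ``$z\psi'(z)$ is strictly increasing'' is not a weaker intermediate statement --- proving it means proving the lemma --- so the reduction is purely notational. Second, and this is a genuine gap, your treatment of $z=1$ is wrong as stated: strict monotonicity of $z\psi'(z)$ does not force its derivative to be strictly positive at every point (consider $z\mapsto z^3$ at the origin), and continuity combined with strict positivity on a punctured neighborhood yields only $\psi'(1)+\psi''(1)\ge 0$. You must actually evaluate the quantity at $z=1$; the paper does this by expanding $\log_e z$ in powers of $(z-1)$ to get $\psi'(1)=-\tfrac12$ and $\psi''(1)=\tfrac23$, hence $\psi'(1)+\psi''(1)=\tfrac16>0$; equivalently, $m(z)=\tfrac16(z-1)^3+O\bigl((z-1)^4\bigr)$, so the ratio tends to $\tfrac16$. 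With that one computation added, your argument is complete.
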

\begin{proof}
From the power series expansion of $\log_e z$ around $z=1$, we get for $ 0 < z \leq 2$
\begin{align} \label{eq: power series}
\frac{\log_e z}{z-1} = 1 - \tfrac12 (z-1) + \tfrac13 (z-1)^2 - \tfrac14 (z-1)^3 + \ldots
\end{align}
From \eqref{eq: psi} and \eqref{eq: power series}, $\psi'(1) = -\tfrac12$ and
$\psi''(1) = \frac{2}{3}$; at $z=1$, the left side of \eqref{eq4: psi} is equal to
\begin{align} \label{eq4.1: psi}
\psi'(1)+\psi''(1) = \tfrac1{6} > 0.
\end{align}
For $z \in (0,1) \cup (1, \infty)$, the left side of
\eqref{eq4: psi} satisfies
\begin{align} \label{eq5: psi}
\psi'(z) + z \psi''(z) = \frac{m(z)}{z(z-1)^3}
\end{align}
where
\begin{align} \label{eq1: m}
m(z) \triangleq (z+1) \log_e z - 2(z-1), \quad \forall \, z > 0.
\end{align}
From \eqref{eq1: m}
\begin{align}
& m(1) = 0,   \label{eq2: m} \\
& m'(z) = \frac1{z} - 1 - \log_e \left(\frac1{z}\right) > 0,
\quad \forall \, z \in (0,1) \cup (1, \infty)  \label{eq3: m}
\end{align}
which implies that $m \colon (0, \infty) \to \Reals$ is monotonically increasing, positive
for $z>1$ and negative for $z<1$. These facts together with \eqref{eq4.1: psi} and
\eqref{eq5: psi} yield that \eqref{eq4: psi} holds for all $z > 0$.
\end{proof}

We proceed now with the proof of \eqref{eq: equiv. ineq.}.
For $\alpha \in (0,1) \cup (1, \infty)$, we have
\begin{align}
\label{eq6: psi}
\frac{\partial}{\partial \alpha} \phi'_\alpha(t)
& = \frac{\partial^2}{\partial \alpha \; \partial t}
\left[ \psi(t) - \left(\frac1{1-\alpha}\right) \, \psi(t^{1-\alpha}) \right] \\
\label{eq6.1:psi}
& = \frac{\partial}{\partial \alpha} \Bigl( \psi'(t) - t^{-\alpha} \, \psi'(t^{1-\alpha}) \Bigr) \\
\label{eq7: psi}
& = - \frac{\partial}{\partial \alpha} \Bigl( t^{-\alpha} \, \psi'(t^{1-\alpha}) \Bigr) \\
\label{eq8: psi}
& = t^{-\alpha} \; \log_e(t) \; \psi'(t^{1-\alpha}) - t^{-\alpha} \; \frac{\partial}{\partial \alpha}
 \psi'(t^{1-\alpha})  \\
\label{eq9: psi}
& = t^{-\alpha} \; \log_e(t) \Bigl[ \psi'(t^{1-\alpha}) + t^{1-\alpha} \; \psi''(t^{1-\alpha}) \Bigr]
\end{align}
where \eqref{eq6: psi} follows from \eqref{eq3: phi}. From Lemma~\ref{lemma: psi}, for $t>0$,
\begin{align} \label{eq10: psi}
\psi'(t^{1-\alpha}) + t^{1-\alpha} \; \psi''(t^{1-\alpha}) > 0.
\end{align}
From \eqref{eq9: psi} and \eqref{eq10: psi}, for $\alpha \in (0,1) \cup (1, \infty)$, and  $t \in (0,1)$
\begin{align} \label{eq90: phi}
\frac{\partial}{\partial \alpha}  \phi'_\alpha(t) < 0,
\end{align}
and for $ t \in (1,\infty)$,
\begin{align} \label{eq91: phi}
\frac{\partial}{\partial \alpha}  \phi'_\alpha(t)  > 0.
\end{align}
From \eqref{eq3: phi} and the continuity of $\psi'$ on $(0, \infty)$ (see \eqref{eq: psi}
and \eqref{eq: power series}), for all $t > 0$,
\begin{align} \label{eq:phi99}
\lim_{\alpha \downarrow 0} \phi'_\alpha(t)
& = \lim_{\alpha \downarrow 0} \bigl(\psi'(t) - t^{-\alpha} \, \psi'(t^{1-\alpha}) \bigr) = 0.
\end{align}
Combining \eqref{eq90: phi} and \eqref{eq:phi99} gives that, for $\alpha \in (0,1)$ and $t \in (0,1)$
\begin{align}  \label{eq:phi100}
\phi'_{\alpha}(t) < 0,
\end{align}
and, combining \eqref{eq91: phi} and \eqref{eq:phi99} gives that, for $\alpha \in (0,1)$ and $t \in (1,\infty)$,
\begin{align}  \label{eq:phi101}
\phi'_{\alpha}(t) > 0.
\end{align}
Hence, for $\alpha \in (0,1)$,  $\phi_\alpha \colon (0, \infty) \to \Reals$ is monotonically
decreasing on $(0,1)$, and it is monotonically increasing on $(1, \infty)$.

We now consider the case where $\alpha \in (1, \infty)$.
Since, from \eqref{eq3: phi} and \eqref{eq: power series},
\begin{align} \label{eq:phi102}
\lim_{\alpha \to 1} \phi'_\alpha(t) & = \lim_{\alpha \to 1} \bigl(\psi'(t) - t^{-\alpha} \, \psi'(t^{1-\alpha}) \bigr) \\
& = \psi'(t) + \frac{\psi'(1)}{t} \\
& = \psi'(t) + \frac1{2t}
\end{align}
then, the existence of this limit in \eqref{eq:phi102} yields that its one-sided
limits are equal, i.e.,
\begin{align} \label{eq:phi103}
\lim_{\alpha \downarrow 1} \phi'_\alpha(t) = \lim_{\alpha \uparrow 1} \phi'_\alpha(t).
\end{align}
Consequently, \eqref{eq:phi100}, \eqref{eq:phi101} and \eqref{eq:phi103} yield that
\begin{align}
\begin{split}
\label{eq:phi104}
& \lim_{\alpha \downarrow 1} \phi'_\alpha(t) \geq 0, \quad \forall \, t \in (0,1), \\
& \lim_{\alpha \downarrow 1} \phi'_\alpha(t) \leq 0, \quad \forall \, t \in (1,\infty)
\end{split}
\end{align}
and, from \eqref{eq90: phi}, \eqref{eq91: phi} and \eqref{eq:phi104}, we conclude
that \eqref{eq:phi100} and \eqref{eq:phi101} also hold for $\alpha \in (1, \infty)$.
The property that $\phi_\alpha \colon (0, \infty) \to \Reals$ is
monotonically decreasing on $(0,1)$ and monotonically increasing on $(1, \infty)$
is therefore extended also to $\alpha \in (1, \infty)$.
As explained after \eqref{eq2: phi}, this implies
the satisfiability of \eqref{eq: equiv. ineq.}. Consequently, also \eqref{eq: g inequality} holds,
which implies that $\kappa_\alpha \colon [0, \infty] \to [0, \infty]$, defined in
\eqref{eq: kappa RE/HD}, is monotonically increasing for all $\alpha \in (0,1)$,
and it is monotonically decreasing for all $\alpha \in (1,\infty)$.

\section{Proof of \eqref{eq: 2nd condition}}
\label{appendix: proof of 2nd condition}
To verify that \eqref{eq: 2nd condition} follows from \eqref{eq: 1st condition},
fix arbitrarily small $\varepsilon > 0$ and $\rho > 0$.
Consider the partition
$\set{A} = \set{A}_1^{(n)} \cup \set{A}_2^{(n)} \cup \set{A}_3^{(n)}$ with
\begin{align}
\label{eq: A1}
& \set{A}_1^{(n)} \triangleq \left\{ a \in \set{A} \colon
\frac{\text{d}P_n}{\text{d}Q} \, (a) \in [0, 1-\rho] \right\}, \\[0.1cm]
\label{eq: A2}
& \set{A}_2^{(n)} \triangleq \left\{ a \in \set{A} \colon
\frac{\text{d}P_n}{\text{d}Q} \, (a) \in (1-\rho, 1+\varepsilon] \right\}, \\[0.1cm]
\label{eq: A3}
& \set{A}_3^{(n)} \triangleq \left\{ a \in \set{A} \colon
\frac{\text{d}P_n}{\text{d}Q} \, (a) \in (1+\varepsilon, \infty) \right\},
\end{align}
then
\begin{align}
\label{eq2: intergal1}
 I_1^{(n)} + I_2^{(n)} + I_3^{(n)} = 1
\end{align}
where
\begin{align}
\label{eq: I123}
& I_j^{(n)} \triangleq \int_{\set{A}_j^{(n)}}
\frac{\text{d}P_n}{\text{d}Q} \, (a) \; \text{d}Q(a).
\end{align}

From the assumption in \eqref{eq: 1st condition}, $I_3^{(n)} \to 0$
when $n \to \infty$ since for all sufficiently large~$n$
\begin{align} \label{eq: QA3}
Q(\mathcal{A}_3^{(n)})=0.
\end{align}
Let
\begin{align} \label{eq: dn}
d_n \triangleq Q(\set{A}_1^{(n)})
\end{align}
then, from \eqref{eq: QA3}, for all sufficiently large $n$
\begin{align}
\label{eq: 1-dn}
1-d_n = Q(\set{A}_2^{(n)}).
\end{align}
Consequently, from \eqref{eq: A1}, \eqref{eq: A2}, \eqref{eq2: intergal1},
\eqref{eq: QA3}, \eqref{eq: dn} and \eqref{eq: 1-dn}, it follows
that for all sufficiently large $n$,
\begin{align}
\label{eq1: mu}
1 &= I_1^{(n)} + I_2^{(n)} \\
\label{eq2: mu}
& \leq d_n (1-\rho) + (1-d_n) (1+\varepsilon) \triangleq \mu_n.
\end{align}
If $\liminf d_n = 0$ for an arbitrarily small $\rho > 0$ then
\eqref{eq: 2nd condition} holds by the definition in \eqref{eq: dn}.
Assuming otherwise, namely,
\begin{align}
\label{eq: false assumption}
\liminf d_n = \theta \in (0,1)
\end{align}
leads to the following contradiction:
\begin{align}
\label{eq: contradict1}
1 & \leq \liminf \mu_n \\
\label{eq: contradict2}
& \leq (1-\rho) \, \liminf d_n + (1 + \varepsilon) \, \limsup \, (1-d_n) \\
\label{eq: contradict3}
& = \theta (1-\rho) + (1-\theta) (1 + \varepsilon) \\
\label{eq: contradict4}
& = 1 - \frac{\theta \rho}{2}
\end{align}
where $\varepsilon = \frac{\theta \rho}{2(1-\theta)}$;
\eqref{eq: contradict1} follows from \eqref{eq1: mu}, \eqref{eq2: mu};
\eqref{eq: contradict2} holds by \eqref{eq2: mu}; \eqref{eq: contradict3}
is due to \eqref{eq: false assumption}.

\section{Proof of Theorem~\ref{thm: tv}}
\label{appendix: tv}
Eq.~\eqref{eq: TV1} follows from the definitions in \eqref{eq:RI} and
\eqref{eq2: TV distance}. Since $z^+ = \tfrac12 \bigl(|z|+z\bigr)$
and $z^- = \tfrac12 \bigl(|z|-z\bigr)$, for all $z \in \Reals$,
\eqref{eq: TV2} and \eqref{eq: TV3} follow from \eqref{eq: TV1} and
\begin{align}
\mathbb{E} \bigl[ 1 - \exp(\imath_{P\|Q}(Y)) \bigr] =
\int \left( 1 - \frac{\mathrm{d}P}{\mathrm{d}Q} \right) \, \mathrm{d}Q = 0.
\end{align}

By change of measure, for every measurable function $f \colon \set{A} \to \Reals$
with $\mathbb{E} \bigl[f(X)\bigr] < \infty$ and $\mathbb{E} \bigl[f(Y)\bigr] < \infty$,
\begin{align}
\label{eq: cm}
\mathbb{E} \bigl[f(X)\bigr] &= \mathbb{E} \Bigl[\frac{\text{d}P}{\text{d}Q}(Y) \, f(Y)\Bigr]
= \mathbb{E} \bigl[ \exp\bigl(\imath_{P\|Q}(Y)\bigr) \, f(Y)\bigr].
\end{align}
Hence, it follows from \eqref{eq: cm} that
\begin{align}
& \hspace*{-0.2cm} \mathbb{P} \bigl[\imath_{P\|Q}(X) > 0\bigr]
= \mathbb{E} \bigl[1\bigl\{\imath_{P\|Q}(X) > 0\bigr\}\bigr]  \\
\label{eq: cm1}
& \hspace*{2.25cm} = \mathbb{E} \bigl[\exp(\imath_{P\|Q}(Y)) 1\bigl\{\imath_{P\|Q}(Y) \hspace*{-0.05cm} > \hspace*{-0.05cm} 0\bigr\} \bigr]
\end{align}
and
\begin{align}
& \hspace*{-0.2cm} \mathbb{P} \bigl[\imath_{P\|Q}(X) \leq 0\bigr]
\hspace*{-0.01cm} = \mathbb{E} \bigl[1\bigl\{\imath_{P\|Q}(X) \leq 0\bigr\}\bigr]  \\
\label{eq: cm2}
& \hspace*{2.21cm} = \mathbb{E} \bigl[\exp(\imath_{P\|Q}(Y)) 1\bigl\{\imath_{P\|Q}(Y)
\hspace*{-0.08cm} \leq \hspace*{-0.08cm} 0 \bigr\} \bigr].
\end{align}

To show \eqref{eq: TV4}, note that from \eqref{eq: TV3} and the change of measure in
\eqref{eq: cm}, we get
\begin{align}
\tfrac12 \, |P-Q|
& = \mathbb{E} \bigl[ \bigl( 1 - \exp(\imath_{P\|Q}(Y)) \bigr)^- \bigr]  \\
& = \mathbb{E} \bigl[ \bigl( \exp(\imath_{P\|Q}(Y)) - 1 \bigr)
\; 1\bigl\{ \imath_{P \| Q}(Y) > 0 \bigr\} \bigr] \\
& = \mathbb{E} \bigl[ \bigl( 1 - \exp(-\imath_{P\|Q}(X)) \bigr)
\; 1\bigl\{ \imath_{P \| Q}(X) > 0 \bigr\} \bigr] \\
& = \mathbb{E} \bigl[ \bigl( 1 - \exp(-\imath_{P\|Q}(X)) \bigr)^{+} \bigr].
\end{align}

To show \eqref{eq: TV4a} and \eqref{eq: TV4b}, we get from \eqref{eq: TV3} and \eqref{eq: cm1}
\begin{align}
\tfrac12 \, |P-Q|
\label{eq1: TV4a}
& = \mathbb{E} \bigl[ \bigl( 1 - \exp(\imath_{P\|Q}(Y)) \bigr)^- \bigr] \\
\label{eq2: TV4a}
& = \mathbb{E} \Bigl[ \bigl(\exp(\imath_{P\|Q}(Y)) - 1 \bigr) \; 1\bigl\{\imath_{P\|Q}(Y) > 0\bigr\} \Bigr] \\
\label{eq3: TV4a}
& = \mathbb{P} \bigl[\imath_{P\|Q}(X) > 0\bigr] - \mathbb{P} \bigl[\imath_{P\|Q}(Y) > 0\bigr]
\end{align}
where \eqref{eq3: TV4a} is \eqref{eq: TV4a}, and \eqref{eq: TV4b} is equivalent to \eqref{eq: TV4a}.

To show \eqref{eq: TV5}, we use \eqref{eq: TV3} and the notation in
\eqref{eq: Z} in order to write
\begin{align}
\label{zza}
\tfrac12 \, |P-Q| &= \mathbb{E} \bigl[ (1-Z)^- \bigr]  \\
& = \mathbb{E} \bigl[ (Z-1) \, 1\{Z>1\} \bigr] \\
\label{zzb}
& = \int_0^{\infty} \mathbb{P} \bigl[ (Z-1) \, 1\{Z>1\} \geq \beta \bigr] \, \text{d} \beta \\
& = \int_1^{\infty} \mathbb{P} \bigl[ Z \geq \beta \bigr] \, \text{d} \beta \\
\label{zzc}
& = \int_0^1 \mathbb{P} \bigl[ Z < \beta \bigr] \, \text{d} \beta
\end{align}
where \eqref{zza} follows from \eqref{eq: TV3} with $Z$ in \eqref{eq: Z};
\eqref{zzb} exploits the fact that the expectation of a non-negative random
variable is the integral of its complementary cumulative distribution function;
and \eqref{zzc} is satisfied since $Z$ is non-negative with $\mathbb{E}[Z] = 1$.

To show \eqref{eq: TV6}, we use \eqref{eq: TV4} to write
\begin{align}
\tfrac12 |P-Q| \nonumber
& = \mathbb{E} \bigl[\bigl( 1 -
\exp(-\imath_{P\|Q}(X)) \bigr)^{+} \bigr] \\
& = \int_0^\infty \mathbb{P} \bigl[ \bigl( 1 -
\exp(-\imath_{P\|Q}(X)) \bigr)^{+} > \beta \bigr] \,
\text{d} \beta \\
& = \int_0^1 \mathbb{P} \bigl[ \bigl( 1 -
\exp(-\imath_{P\|Q}(X)) \bigr)^{+} > \beta \bigr] \,
\text{d} \beta \\
& = \int_0^1 \mathbb{P} \left[ \imath_{P\|Q}(X) >
\log \frac1{1-\beta} \right] \, \text{d} \beta \\
& = \int_0^1 \mathbb{P} \left[ \imath_{P\|Q}(X) >
\log \frac1{\beta} \right] \, \text{d} \beta.
\end{align}

To prove \eqref{eq: TV100},
a change of variable of integration in \eqref{eq: TV6},
and the fact that $\mathds{F}_{P\|Q} ( \log \beta ) = 1$
for $\beta > \beta_1^{-1}$ give
\begin{align}
\tfrac12 |P-Q| & = \int_0^1 \mathbb{P}\Bigl[\imath_{P \| Q}(X) >
\log \frac1{t} \Bigr] \, \text{d}t \\[0.1cm]
& = \int_0^1 \Bigl[ 1 - \mathds{F}_{P\|Q}\Bigl(\log
\frac1{t}\Bigr) \Bigr] \text{d}t \\[0.1cm]
& = \int_1^{\infty} \frac{1-\mathds{F}_{P\|Q}(\log
\beta)}{\beta^2} \; \text{d}\beta \\[0.1cm]
& = \int_1^{\beta_1^{-1}} \frac{1-\mathds{F}_{P\|Q}(\log
\beta)}{\beta^2} \; \text{d}\beta
\end{align}
with the convention that $\beta_1^{-1} = \infty$ if $\beta_1 = 0$.

Assume that $P \ll \gg Q$. To show \eqref{eq: TV7} simply note that
\eqref{eq: TV1}, the symmetry of the total variation distance,
and the anti-symmetry of the relative information where
$\imath_{Q\|P} = -\imath_{P\|Q}$ enable to conclude that
\begin{align}
|P-Q| & = \mathbb{E} \bigl[ \bigl| 1 - \exp(\imath_{Q\|P}(X)) \bigr| \bigr]  \\
& = \mathbb{E} \bigl[ \bigl| 1 - \exp(-\imath_{P\|Q}(X)) \bigr| \bigr].
\end{align}
Similarly, switching $P$
and $Q$ in \eqref{eq: TV3} results in
\begin{align}
\tfrac12 \, |P-Q| & = \mathbb{E} \bigl[ \bigl( 1 -
\exp\bigl(\imath_{Q \| P}(X)\bigr) \bigr)^{-} \bigr] \\
& = \mathbb{E} \bigl[ \bigl( 1 -
\exp\bigl(-\imath_{P \| Q}(X)\bigr) \bigr)^{-} \bigr]
\end{align}
which proves \eqref{eq: TV8}.

\section{\eqref{eq: improved SV-ITA14} vs. \eqref{eq: SV-ITA14}}
\label{appendix:improvement}

\subsection{Example for the Strengthened Inequality in Theorem~\ref{thm: improved SV-ITA14}}
\label{subsec: example-strengthened RPI}
We exemplify  the improvement obtained by \eqref{eq: improved SV-ITA14},
in comparison to~\eqref{eq: SV-ITA14}, due to the introduction of the additional parameter
$\beta_2$ in \eqref{eq: beta2}. Note that when $\beta_2$ is replaced by zero (i.e., no
information on the infimum of $\frac{\text{d}P}{\text{d}Q}$ is available or $\beta_2=0$),
inequalities \eqref{eq: improved SV-ITA14} and \eqref{eq: SV-ITA14} coincide.

Let $P$ and $Q$ be two probability measures, defined on $(\set{A}, \mathscr{F})$, $P \ll Q$,
and assume that
\begin{align} \label{eq: condition for 1st example}
1-\eta \leq \frac{\text{d}P}{\text{d}Q} \, (a) \, \leq 1+\eta, \quad \forall \, a \in \set{A}
\end{align}
for a fixed $\eta \in (0,1)$.

In \eqref{eq: improved SV-ITA14}, one can replace $\beta_1$ and $\beta_2$ with
lower bounds on these constants. Since
$\beta_1 \geq \frac1{1+\eta}$ and $\beta_2 \geq 1-\eta$ it follows from
\eqref{eq: improved SV-ITA14} that
\begin{align}
D(P\|Q)
& \leq \tfrac12 \left( \frac{(1+\eta) \, \log(1+\eta)}{\eta} + \frac{(1-\eta) \,
\log(1-\eta)}{\eta} \right) \, |P-Q|  \\
& \leq \eta \log e \cdot |P-Q|.
\label{eq: relative entropy - bound1}
\end{align}
From \eqref{eq: condition for 1st example}
\begin{align}
\bigl| \exp\bigl(\imath_{P\|Q}(a)\bigr) - 1 \bigr| \leq \eta, \quad \forall \, a \in \set{A}
\end{align}
so, from \eqref{eq: TV1}, the total variation distance satisfies (recall that $Y \sim Q$)
\begin{align} \label{igalast2}
|P-Q| =  \mathbb{E}\Bigl[ \bigl| \exp\bigl(\imath_{P\|Q}(Y)\bigr) - 1 \bigr| \Bigr]\leq \eta.
\end{align}
Combining \eqref{igalast2} with \eqref{eq: relative entropy - bound1} yields
\begin{align}  \label{eq: ubresv2}
D(P\|Q) \leq \eta^2 \, \log e, \quad \forall \, \eta \in (0,1).
\end{align}
For comparison, it follows from \eqref{eq: SV-ITA14} (see \cite[Theorem~7]{Verdu_ITA14}) that
\begin{align}  \label{eq: ubresv}
D(P\|Q) & \leq \frac{\log \frac1{\beta_1}}{2(1-\beta_1)} \cdot |P-Q|  \\[0.1cm]
& \leq \frac{(1+\eta) \, \log(1+\eta)}{2 \eta} \cdot |P-Q|  \\
& \leq \tfrac12 \, (1+\eta) \log(1+\eta)  \\
& \leq \tfrac12 \, \eta (1+\eta) \log e.
\end{align}
The upper bound on the relative entropy in \eqref{eq: ubresv} scales like $\eta$, for small $\eta$,
whereas the tightened bound in \eqref{eq: ubresv2} scales like $\eta^2$, which is tight according
to Pinsker's inequality \eqref{eq: Pinsker}. For example, consider the probability measures defined
on a two-element set $\set{A} = \{a,b\}$ with
\begin{align}
P(a) = Q(b) = \tfrac12 - \tfrac{\eta}{4}, \quad P(b) = Q(a) = \tfrac12 + \tfrac{\eta}{4}.
\end{align}
Condition \eqref{eq: condition for 1st example} is satisfied for $\eta \approx 0$, and Pinsker's inequality
\eqref{eq: Pinsker} yields
\begin{align} \label{eq: lbrepi}
D(P\|Q) \geq  \tfrac12  \eta^2 \log e
\end{align}
so the ratio of the upper and lower bounds in \eqref{eq: ubresv2} and \eqref{eq: lbrepi}
is~2, and both provide the true quadratic scaling in $\eta$ whereas the weaker upper bound
in \eqref{eq: ubresv} scales linearly in $\eta$ for $\eta \approx 0$.

\section{Derivation of \eqref{eq: ub RE RPI}--\eqref{eq: ub RE-TV-chi}}
\label{appendix: ub RE-TV-chi}
Similarly to the proof of Theorem~\ref{thm: improved SV-ITA14}, let $X \sim P$, $Y \sim Q$,
and $Z = \exp\bigl( \imath_{P\|Q}(Y) \bigr)$.
We rely on the concavity of $\varphi \colon [0, \infty) \to [0, \infty)$, defined to be the continuous
extension of $\frac{t \log t}{t-1}$, for tightening the upper bound in \eqref{eq: bound on 1st summand}.
The combination of this tightened bound with \eqref{eq: relative entropy} and \eqref{eq: bound on 2nd summand}
serves to derive a tighter bound on the relative entropy in comparison to \eqref{eq: improved SV-ITA14}.

Since $Z \leq \beta_1^{-1}$, and $\varphi$ is concave, monotonically increasing
and differentiable, we can write
\begin{align}  \label{eq: iubphi}
\varphi(Z) \leq \varphi(\beta_1^{-1}) - \varphi'(\beta_1^{-1}) \, (\beta_1^{-1}-Z) \leq \varphi(\beta_1^{-1})
\end{align}
which improves the upper bound on $\varphi(Z)$ in \eqref{eq: range of values for phi of Z}. Consequently, from
\eqref{eq: iubphi}, the first summand in the right side of \eqref{eq: relative entropy} is upper bounded as follows:
\begin{align}
\mathbb{E} \bigl[ \varphi(Z) \, (Z-1) \, 1\{Z>1\} \bigr]
& \leq \mathbb{E} \Bigl[ \Bigl( \varphi(\beta_1^{-1}) - \varphi'(\beta_1^{-1}) \, (\beta_1^{-1}-Z) \Bigr) \,
(Z-1) \, 1\{Z>1\} \Bigr]  \\[0.1cm]
& = \Bigl( \varphi(\beta_1^{-1}) - \varphi'(\beta_1^{-1}) \, \beta_1^{-1} \Bigr) \;
\mathbb{E}\bigl[(Z-1) \, 1\{Z>1\} \bigr] \nonumber \\
& \hspace*{0.3cm} + \varphi'(\beta_1^{-1}) \, \mathbb{E} \bigl[Z(Z-1) \, 1\{Z>1\}\bigr]  \\[0.1cm]
& = \tfrac12 \Bigl( \varphi(\beta_1^{-1}) - \varphi'(\beta_1^{-1}) \, \beta_1^{-1} \Bigr) \; |P-Q| \nonumber \\
& \hspace*{0.3cm} + \varphi'(\beta_1^{-1}) \, \mathbb{E} \bigl[Z(Z-1) \, 1\{Z>1\}\bigr]
\label{eq: ib 1st summand}
\end{align}
where \eqref{eq: ib 1st summand} follows from  \eqref{eq: Z} and  \eqref{eq: TV2}.
Combining \eqref{eq: relative entropy}, \eqref{eq: bound on 2nd summand} and \eqref{eq: ib 1st summand}
gives the upper bound on the relative entropy in \eqref{eq: ub RE RPI}.

The second term in the right side of \eqref{eq: ib 1st summand} depends on the distribution of the relative
information. To circumvent this dependence, we derive upper and lower bounds in terms of $f$-divergences.
\begin{align} \label{eq: 2nd term RI}
\mathbb{E} \bigl[ Z(Z-1) \, 1\{Z>1\} \bigr]
& = \mathbb{E} \bigl[(Z-1)^2 \, 1\{Z>1\} \bigr] + \mathbb{E} \bigl[(Z-1) \, 1\{Z>1\} \bigr]  \\
& = \mathbb{E} \bigl[(Z-1)^2 \, 1\{Z>1\} \bigr] + \tfrac12 \, |P-Q|
\label{igalsecond}
\end{align}
where \eqref{igalsecond} follows from \eqref{eq: TV2}, and consequently the following upper and lower
bounds on \eqref{eq: 2nd term RI} are derived:
\begin{align}  \label{eq: ub 2nd term RI}
\mathbb{E} \bigl[ Z(Z-1) \, 1\{Z>1\} \bigr]
&\leq \mathbb{E} \bigl[(Z-1)^2 \bigr] + \tfrac12 \, |P-Q| \\
&= \chi^2(P\|Q) + \tfrac12 \, |P-Q|
\label{igalast}
\end{align}
where \eqref{igalast} follows from \eqref{eq: Z} and \eqref{eq: chi-square 1}. Furthermore,
from \eqref{eq: TV3}, \eqref{eq: range of values for phi of Z} and \eqref{eq: 2nd term RI}
\begin{align}
\mathbb{E} \bigl[ Z(Z-1) \, 1\{Z>1\} \bigr]
& = \mathbb{E} \bigl[(Z-1)^2 \, 1\{Z>1\} \bigr] + \tfrac12 \, |P-Q|  \\
& = \mathbb{E} \bigl[(Z-1)^2 \bigr] - \mathbb{E} \bigl[ (Z-1)^2 \, 1\{\beta_2 \leq Z \leq 1\} \bigr]
+ \tfrac12 \, |P-Q|  \\
& = \chi^2(P\|Q) + \tfrac12 \, |P-Q|
- \mathbb{E} \bigl[ (Z-1)^2 \, 1\{\beta_2 \leq Z \leq 1\} \bigr]  \\
& \geq \chi^2(P\|Q) + \tfrac12 \, |P-Q|
- (1-\beta_2) \, \mathbb{E} \bigl[ (1-Z) \, 1\{\beta_2 \leq Z \leq 1\} \bigr]  \\
& = \chi^2(P\|Q) + \tfrac12 \, |P-Q|  - (1-\beta_2) \, \mathbb{E} \bigl[ (Z-1)^- \bigr]  \\
& = \chi^2(P\|Q) + \tfrac{\beta_2}{2} \, |P-Q|.
\label{eq: lb 2nd term RI}
\end{align}
Combining \eqref{eq: ub 2nd term RI} and \eqref{eq: lb 2nd term RI} gives the inequality in
\eqref{eq: issv15}, and combining \eqref{eq: relative entropy}, \eqref{eq: ib 1st summand}
and \eqref{eq: ub 2nd term RI} gives the upper bound on the relative entropy in \eqref{eq: ub RE-TV-chi}.

\section{Proof of Theorem~\ref{thm: exact locus}} \label{appendix: exact locus}
\subsection{Proof of Theorem~\ref{thm: exact locus}\ref{thm: exact locus: parta})}
The concavity of the entropy functional
implies that given a probability mass function $P$ on a finite set $\{1, \ldots, |\set{A}|\}$,
and given any subset $\set{S} \subset \set{A}$,  $H(P) \leq H (P_{\set{S}} )$
with
\begin{align}
P_{\set{S}} ( k) = \left\{
\begin{array}{ll}
\frac
{P( \set{S} )}
{|\set{S}|}
&k \in \set{S}, \\
\frac
{P( \set{S}^c )}
{|\set{S}^c|}
&k \not\in \set{S}.
\end{array}
\right.
\end{align}
Applying this fact with $\set{S}$ given by the indices
of the masses below $|\set{A}|^{-1}$,
we conclude that $H(P) \leq H(\bar{P})$ with
\begin{align}
\bar{P}( k) = \left\{
\begin{array}{ll}
\frac
{\sum_{a \in \set{A}} P(a) 1\{P(a) \geq |\set{A}|^{-1}\} }
{\sum_{a \in \set{A}}         1\{P(a) \geq |\set{A}|^{-1}\}}
&k\colon P(k) \geq |\set{A}|^{-1},\\
\frac
{\sum_{a \in \set{A}} P(a) 1\{P(a) < |\set{A}|^{-1}\} }
{\sum_{a \in \set{A}}         1\{P(a) < |\set{A}|^{-1}\}}
&k\colon P(k) < |\set{A}|^{-1}.
\end{array}
\right.
\end{align}
Moreover, if $\mathsf{U}$ is the equiprobable distribution on $\set{A}$, then
\begin{align}
| P - \mathsf{U} | = | \bar{P} - \mathsf{U} |.
\end{align}
Consequently, in order to maximize entropy subject to a given (positive) total variation distance
from the equiprobable distribution on $\set{A}$, it is enough to restrict attention to distributions
whose masses take two distinct values only, i.e., of the form \eqref{eq: P for maximizing the TV}.
The only remaining optimization is to determine $m_\Delta$, the number of masses larger than
$|\set{A}|^{-1}$. The requirement that  $m_\Delta$ satisfy \eqref{conditionmdelta}
is made so that \eqref{eq: P for maximizing the TV} is a valid probability distribution.
The solution is as given in Part~\ref{thm: exact locus: parta})
since $H(P) = \log |\set{A}| - D ( P \| \mathsf{U} )$, and
\begin{align}
D( P_{\Delta} \| \mathsf{U} ) =
d \left( \tfrac{m_\Delta}{|\set{A}|} + \tfrac{\Delta}{2} \big\|  \tfrac{m_\Delta}{|\set{A}|} \right).
\end{align}

\subsection{Proof of Theorem~\ref{thm: exact locus}\ref{thm: exact locus: partb})}
The minimizing probability measure in \eqref{eq: P for minimizing the TV} is a special case of
\cite[Theorem~3]{HoY_IT2010}, which gives the general solution of minimizing the entropy subject
to a constraint on the maximal total variation distance from a fixed discrete distribution $Q$
(here, $Q = \mathsf{U}$).

\section{Proof of \eqref{eq: limit}}  \label{appendix: limit}
\begin{align}
& \lim_{\varepsilon \to 0} \left\{\frac1{D(P_\varepsilon\|Q_\varepsilon)} \cdot
r\left(\frac{P_\varepsilon(1)}{Q_\varepsilon(1)}\right) \right\} \nonumber \\[0.15cm]
\label{lim2}
& = \lim_{\varepsilon \to 0} \frac{\left(\frac{1-\varepsilon}{1-\frac{\varepsilon}{t_\gamma}} \right)
\, \log\left(\frac{1-\varepsilon}{1-\frac{\varepsilon}{t_\gamma}} \right)
+ \left(1-\frac{1-\varepsilon}{1-\frac{\varepsilon}{t_\gamma}}\right)
\, \log e}{\varepsilon \, \log(t_\gamma) +
(1-\varepsilon) \log\left(\frac{1-\varepsilon}{1-\frac{\varepsilon}{t_\gamma}} \right)} \\[0.15cm]
\label{lim3}
& = \lim_{\varepsilon \to 0} \frac{(1-\varepsilon) \, \left[\log(1-\varepsilon)
- \log\left(1-\frac{\varepsilon}{t_\gamma}\right) \right]
+ \varepsilon \, \left(1-\frac1{t_\gamma}\right) \, \log e}{\varepsilon \, \log(t_\gamma) +
(1-\varepsilon) \, \left[\log(1-\varepsilon)
- \log\left(1-\frac{\varepsilon}{t_\gamma}\right) \right]} \\[0.15cm]
\label{lim4}
& = \lim_{\varepsilon \to 0} \frac{-\log(1-\varepsilon) - \log(e)
+ \log\left(1-\frac{\varepsilon}{t_\gamma}\right)
+ \frac1{t_\gamma} \frac{1-\varepsilon}{1-\frac{\varepsilon}{t_\gamma}} \, \log(e) +
\left(1-\frac1{t_\gamma}\right) \, \log e}{\log(t_\gamma) - \log(e) - \log(1-\varepsilon)
+ \log\left(1-\frac{\varepsilon}{t_\gamma}\right)
+ \frac1{t_\gamma} \frac{1-\varepsilon}{1-\frac{\varepsilon}{t_\gamma}} \, \log(e)} \\[0.1cm]
\label{lim5} & = 0
\end{align}
where \eqref{lim2} follows from \eqref{eq: r}, and the definition of $P_\varepsilon, Q_\varepsilon$;
\eqref{lim4} is due to L'H\^{o}pital's rule; and \eqref{lim5} holds since the numerator
in \eqref{lim4} converges to zero as $\varepsilon \to 0$ while its denominator converges to
$\log(t_\gamma) - \left(1-\frac1{t_\gamma}\right) \log(e) > 0$ (recall that
$t_\gamma \in (\gamma, \infty)$, for $\gamma > 1$, so $t_\gamma > 1$).


\section{Completion of the Proof of Theorem~\ref{thm: reris}}
\label{appendix:u}

{\em Proof of monotonicity and boundedness of \eqref{eq:u}}:
Substituting $\gamma = \beta x$
into the right side of \eqref{eq:u} gives that, for $\beta > 1$,
\begin{align}
\label{eq:u0}
u(\beta) = \min_{x \in \bigl(\frac1{\beta}, 1\bigr)} \left( \frac{c_{\beta x}}{1-x} \right).
\end{align}
The function $u$ in \eqref{eq:u0} is indeed monotonically decreasing
on $(1, \infty)$ since, if $\beta_2 > \beta_1 > 1$,
\begin{align}
\label{eq:u1}
u(\beta_1) &= \min_{x \in \bigl(\frac1{\beta_1}, 1\bigr)} \frac{c_{\beta_1 x}}{1-x}  \\
\label{eq:u2}
& \geq \min_{x \in \bigl(\frac1{\beta_1}, 1\bigr)}  \frac{c_{\beta_2 x}}{1-x}  \\
\label{eq:u3}
& \geq \min_{x \in \bigl(\frac1{\beta_2}, 1\bigr)}  \frac{c_{\beta_2 x}}{1-x}  \\
\label{eq:u4}
&= u(\beta_2)
\end{align}
where \eqref{eq:u2}
holds since $c_\gamma$ is monotonically decreasing in $\gamma \in (1, \infty)$ (see
Theorem~\ref{thm:EG vs. RE}).

{\em Proof of \eqref{eq: u-bound}}:
From \eqref{eq:sup-EG and RE} and  \eqref{eq: c_gamma}, we obtain
$t_\gamma > \gamma$ for $\gamma > 1$.
Furthermore, since $t/r(t)$ is monotonically decreasing on $(1, \infty)$,
if $\gamma>e$, then
\begin{align}\label{goder}
c_\gamma = \frac{t_\gamma-\gamma}{r(t)}< \frac{\gamma}{r(\gamma)}
= \frac{\gamma}{\log e+\gamma \log\frac{\gamma}{e}} < \frac{1}{\log \frac{\gamma}{e}}.
\end{align}
Hence, for $\beta > 2e$,
\begin{align}
\label{eq:u5}
u(\beta) &= \min_{\gamma \in (1, \beta)}
\frac{\beta \, c_\gamma}{\beta-\gamma} \\
\label{eq:u6}
& \leq 2 c_{\beta/2} \\
\label{eq:u7}
& < \frac{2}{\log\frac{\beta}{2e}}
\end{align}
where  \eqref{eq:u6} follows by choosing $\gamma = \frac{\beta}{2}$ in the minimization, and \eqref{eq:u7}
follows from \eqref{goder}.

\section{A Lemma Used for Proving \eqref{expectations1}}
\label{appendix: two pdfs}
\begin{lemma} \label{lemma: 2 pdfs}
Let $g$ be a monotonically increasing and non-negative function on $[a,b]$, and let $p_1, p_2$
be probability density functions supported on $[a,b]$. Assume that there exists
$c \in (a,b)$ such that
\begin{align} \label{eq: conditions on 2 pdfs}
\begin{split}
& p_1(\beta) \geq p_2(\beta), \quad \forall \, \beta \in [a,c], \\
& p_1(\beta) < p_2(\beta), \quad \forall \, \beta \in (c,b].
\end{split}
\end{align}
Let $W \sim p_1$ and $V \sim p_2$, then
\begin{align} \label{eq: expectations}
\mathbb{E}\bigl[g(W)\bigr] \leq \mathbb{E}\bigl[g(V)\bigr].
\end{align}
\end{lemma}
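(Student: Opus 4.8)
The plan is to recast the desired inequality \eqref{eq: expectations} as a single statement about the signed function $\Delta \triangleq p_1 - p_2$ on $[a,b]$. Since both $p_1$ and $p_2$ are probability density functions supported on $[a,b]$, we have $\int_a^b \Delta(\beta)\,\mathrm{d}\beta = 1-1 = 0$. First I would write
\begin{align}
\mathbb{E}\bigl[g(W)\bigr] - \mathbb{E}\bigl[g(V)\bigr]
= \int_a^b g(\beta)\,\bigl(p_1(\beta)-p_2(\beta)\bigr)\,\mathrm{d}\beta
= \int_a^b g(\beta)\,\Delta(\beta)\,\mathrm{d}\beta,
\label{eq: plan-diff}
\end{align}
so that proving \eqref{eq: expectations} amounts to showing that the right side of \eqref{eq: plan-diff} is non-positive.

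The main idea is to exploit the single-crossing hypothesis \eqref{eq: conditions on 2 pdfs} by re-centering $g$ at the crossing point $c$. Using $\int_a^b \Delta(\beta)\,\mathrm{d}\beta = 0$, I would subtract the constant $g(c)$ without changing the integral:
\begin{align}
\int_a^b g(\beta)\,\Delta(\beta)\,\mathrm{d}\beta
= \int_a^b \bigl(g(\beta)-g(c)\bigr)\,\Delta(\beta)\,\mathrm{d}\beta.
\label{eq: plan-recenter}
\end{align}
Now I would check the sign of the integrand on each of the two subintervals. For $\beta \in [a,c]$, monotonicity of $g$ gives $g(\beta)-g(c)\leq 0$, while \eqref{eq: conditions on 2 pdfs} gives $\Delta(\beta)\geq 0$, so the product is non-positive; for $\beta \in (c,b]$ we have $g(\beta)-g(c)\geq 0$ and $\Delta(\beta)<0$, so again the product is non-positive. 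Hence the integrand in \eqref{eq: plan-recenter} is non-positive throughout $[a,b]$, the integral is $\leq 0$, and \eqref{eq: expectations} follows from \eqref{eq: plan-diff}.

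Honestly, this argument is essentially a stochastic-dominance computation and presents no serious obstacle; the only point requiring a moment's care is ensuring that the quantities are well defined. The non-negativity and monotonicity of $g$ on the bounded interval $[a,b]$, together with the integrability of $p_1,p_2$, guarantee that $\mathbb{E}[g(W)]$ and $\mathbb{E}[g(V)]$ are finite (or at worst both $+\infty$ in pathological cases, which the intended application excludes), so the rearrangements in \eqref{eq: plan-diff}--\eqref{eq: plan-recenter} are legitimate. The crucial structural ingredient is the re-centering trick in \eqref{eq: plan-recenter}, which converts the mass-balance condition $\int \Delta = 0$ into a pointwise sign condition on the integrand.
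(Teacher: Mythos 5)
Your proof is correct and follows essentially the same route as the paper's: the paper splits the integral at $c$ and bounds each piece by $g(c)$ times the corresponding integral of $p_2-p_1$, which is algebraically identical to your re-centering $\int (g-g(c))(p_1-p_2)\le 0$ combined with the mass-balance condition. No substantive difference.
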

\begin{proof}
The function $d \triangleq p_2 - p_1$, defined on $[a,b]$, satisfies
\begin{align}
\label{d1}
& d(\beta) \leq 0, \quad \forall \, \beta \in [a,c] \\
\label{d2}
& d(\beta) \geq 0, \quad \forall \, \beta \in [c,b] \\
\label{d3}
& \int_a^b d(\beta) \, \text{d}\beta = 0.
\end{align}
Consequently, we get
\begin{align}
& \mathbb{E}\bigl[g(V)\bigr] - \mathbb{E}\bigl[g(W)\bigr]  \nonumber\\
\label{d6}
& = \int_a^c d(\beta) \, g(\beta) \, \mathrm{d}\beta + \int_c^b d(\beta) \, g(\beta) \, \mathrm{d}\beta \\
\label{d7}
& \geq g(c) \int_a^c d(\beta) \, \mathrm{d}\beta + g(c) \int_c^b d(\beta) \, \mathrm{d}\beta \\
\label{d8}
& = 0
\end{align}
where \eqref{d7} follows from \eqref{d2}, \eqref{d3} and the monotonicity of $g$, and
\eqref{d8} is due to \eqref{d3}.
\end{proof}

\vspace*{0.2cm}
\subsubsection*{\bf{Acknowledgment}}
Discussions with Jingbo Liu, Vincent Tan and Mark Wilde are gratefully acknowledged.


\end{document}